\newcommand{\R}{\mathbb{R}}
\newcommand{\E}{\mathop{\mathbb{E}}}
\newcommand{\lprod}{\langle}
\newcommand{\rprod}{\rangle}
\renewcommand{\epsilon}{\ensuremath\varepsilon}
\renewcommand{\phi}{\ensuremath{\varphi}}
\theoremstyle{plain}
\newtheorem{theorem}{Theorem}[section]
\newtheorem{lemma}[theorem]{Lemma}
\newtheorem{problem}[theorem]{Problem}
\theoremstyle{definition}
\newtheorem{example}[theorem]{Example}
\newtheorem{definition}[theorem]{Definition}
\newtheorem{algo}[theorem]{Algorithm}
\theoremstyle{remark}
\title{Estimating Rank-One Spikes from Heavy-Tailed Noise via Self-Avoiding Walks}
\author{%
  Jingqiu Ding\thanks{equal contribution}\\
  ETH Zurich\\
  \texttt{jding@ethz.ch}\\
 \And
 Samuel B. Hopkins\footnotemark[1]\\
 UC Berkeley\\
 \texttt{hopkins@berkeley.edu}\\
 \And
 David Steurer\footnotemark[1]\\
 ETH Zurich\\
 \texttt{dsteurer@inf.ethz.ch}\\
 }
\begin{document}

\maketitle

\begin{abstract}
  We study symmetric spiked matrix models with respect to a general class of noise distributions.
  Given a rank-1 deformation of a random noise matrix, whose entries are independently distributed with zero mean and unit variance, the goal is to estimate the rank-1 part.
  For the case of Gaussian noise, the top eigenvector of the given matrix is a widely-studied estimator known to achieve optimal statistical guarantees, e.g., in the sense of the celebrated BBP phase transition.
  However, this estimator can fail completely for heavy-tailed noise.

 In this work, we exhibit an estimator that works for heavy-tailed noise up to the BBP threshold that is optimal even for Gaussian noise.
 We give a non-asymptotic analysis of our estimator which relies only on the variance of each entry remaining constant as the size of the matrix grows: higher moments may grow arbitrarily fast or even fail to exist.
  Previously, it was only known how to achieve these guarantees if higher-order moments of the noises are bounded by a constant independent of the size of the matrix.

  Our estimator can be evaluated in polynomial time by counting self-avoiding walks via a color coding technique.
  Moreover, we extend our estimator to spiked tensor models and establish analogous results.
\end{abstract}

\section{Introduction}

Principal component analysis (PCA) and other spectral methods are ubiquitous in machine learning.
They are useful for dimensionality reduction, denoising, matrix completion, clustering, data visualization, and much more.
However, spectral methods can break down in the face of egregiously-noisy data: a few unusually large entries of an otherwise well-behaved matrix can have an outsized effect on its eigenvectors and eigenvalues.

In this paper, we revisit the \emph{single-spike recovery problem}, a simple and extensively-studied statistical model for the core task addressed by spectral methods, in the setting of \emph{heavy-tailed noise}, where the above shortcomings of PCA and eigenvector-based methods are readily apparent \cite{johnstone2001distribution}.
We develop and analyze algorithms for this problem whose provable guarantees \emph{improve over traditional eigenvector-based methods}.
Our main problem is:

\begin{problem}[Generalized spiked Wigner model, recovery]\label{defWigner}
  Given a realization of a symmetric random matrix of the form $Y = \lambda xx^\top + W$, where $x \in \R^n$ is an unknown fixed vector with $\|x\| = \sqrt{n}$, $\lambda > 0$, and the upper triangular off-diagonal entries of $W \in \R^{n \times n}$ are independently (but not necessarily identically) distributed with zero-mean and unit variance $\E W_{ij}^2 = 1$, estimate $x$.
\end{problem}

The main question about the spiked Wigner model is: how large should the signal-to-noise ratio $\lambda > 0$ be in order to achieve constant correlation with $x$?
The standard algorithmic approach to solve the spiked Wigner recovery problem is PCA, using the top eigenvector of the matrix $Y$ as an estimator for $x$.
This approach has been extensively studied (e.g. in~\cite{baik2005,pizzo2013}), usually under stronger assumptions on the distribution of the entries of $W$.

Assuming boundedness of $5$ moments, i.e. $\E |W_{ij}|^5 \leq O(1)$, a clear picture has emerged: the problem is information-theoretically impossible for $\lambda \sqrt{n}< 1$, and for $\lambda \sqrt{n}>1 $ the top eigenvector of $Y$ is an optimal estimator for $x$ -- this is the celebrated BBP phase transition~\cite{baik2005,pizzo2013}.
If we weaken the assumption to $\E |W_{ij}|^4 \leq O(1)$, it is well known that $\E \|W\| \leq 2 \sqrt n$, so PCA will estimate $x$ nontrivially when $\lambda\sqrt{n} > 2$.
However, many natural random matrices do not satisfy these conditions --
consider for instance random sparse matrices or matrices with heavy-tailed entries.

Our setting allows for much nastier noise distributions: we assume only that the entries of $W$ have unit variance -- $\E |W_{ij}|^{2.01}$ may grow arbitrarily fast with $n$, or even fail to exist.
Under such weak assumptions, the top eigenvector of $Y$ may be completely uncorrelated with the planted spike $x$, for $\lambda\sqrt{n} = O(1)$.
In this paper, we ask:

\begin{quote}
\emph{Main Question: For which $\lambda > 0$ is recovery possible in the spiked Wigner model via an efficient algorithm under heavy-tailed noise distributions?}
\end{quote}

A natural strategy to deal with heavy-tailed noise is to truncate unusually large entries before performing vanilla PCA. However, truncation-based algorithms can fail dramatically if the distributions of the noise entries are adversarially chosen, as our random matrix model allows.
We provide counterexamples to truncation-based algorithms in Section~\ref{truncation}.

\subsection{Our Contributions}
In this work, we develop and analyze computationally-efficient algorithms based on \emph{self-avoiding walks.}
PCA or eigenvector methods can be thought of as computing a power $Y^\ell$ of the input matrix, for $\ell \rightarrow \infty$.
The polynomial $Y^\ell$ in the entries of $Y$ can be expanded in terms of length-$\ell$ walks in the complete graph on $n$ vertices.
Our algorithms, by contrast, are based on a different degree-$\ell$ polynomial in the entries of $Y$, which can be expanded in terms of length-$\ell$ self-avoiding walks.
We describe the main ideas more thoroughly below, turning for now to our results.

\textbf{Spiked Matrices with Heavy-Tailed Noise:} The first result addresses the main question above, demonstrating that our self-avoiding walk algorithm addresses some of the shortcomings of PCA and eigenvector-based methods for the spiked Wigner recovery problem in the heavy-tailed setting.

\begin{theorem}
\label{thm:intro-matrix}
  For every $\delta > 0$, there is a polynomial-time algorithm such that for every $x \in \R^n$ with $\|x\|_2 = \sqrt{n}$ and $\|x\|_\infty \leq n^{1/2 - \delta}$ and every $n^{1/2}\lambda \geq 1 + \delta$, given $Y = \lambda xx^\top + W$ distributed as in the spiked Wigner model, the algorithm returns $\hat{x}$ such that $\E \langle \hat{x},x \rangle^2 \geq \delta^{O(1)} \cdot \|x\|_2^2 \cdot \E \|\hat{x}\|_2^2$.
\end{theorem}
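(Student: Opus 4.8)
The plan is to take as estimator $\hat x = A\,g$, where $g \in \R^n$ is a fresh uniformly random sign vector (or standard Gaussian, so that $\E gg^\top = I$) and $A = A_\ell(Y)$ is the symmetric $n\times n$ matrix whose $(a,b)$ entry is $\sum \prod_{k=1}^{\ell} Y_{i_{k-1}i_k}$, the sum running over self-avoiding walks $a = i_0,i_1,\dots,i_\ell = b$ of length $\ell = \Theta_\delta(\log n)$ in $K_n$. The essential structural point is that a self-avoiding walk meets each vertex---hence each unordered pair---at most once, so each entry $A_{ab}$ is a \emph{multilinear} polynomial in the independent variables $\{W_{ij}\}_{i<j}$. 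Consequently $\E[A_{ab}^2]$ is always finite and equals the sum of the squares of its coefficients: only $\E W_{ij}=0$ and $\E W_{ij}^2=1$ ever enter, so heavy tails are irrelevant---this is exactly why the walks underlying $Y^\ell$ (ordinary PCA) are replaced by self-avoiding walks. I will compute $\E_W\langle\hat x,x\rangle^2 = \E_W\|Ax\|_2^2$ and $\E_W\|\hat x\|_2^2 = \E_W\|A\|_F^2$, split each as $\|\E_W A\,x\|_2^2 + \E_W\|(A-\E_W A)x\|_2^2$ and $\|\E_W A\|_F^2 + \E_W\|A-\E_W A\|_F^2$, and show the first (``signal'') term dominates in each.

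First, $\E_W A$. Since distinct edges of a self-avoiding walk are distinct independent noise entries, $\E_W\prod_k Y_{i_{k-1}i_k}$ retains only the all-signal monomial $\lambda^\ell x_a x_b\prod_{k=1}^{\ell-1}x_{i_k}^2$, so $(\E_W A)_{ab} = \lambda^\ell x_a x_b\cdot S$ with $S = \sum x_{i_1}^2\cdots x_{i_{\ell-1}}^2$ over distinct intermediate indices avoiding $a,b$. The leading term of $S$ is $(\sum_i x_i^2)^{\ell-1} = n^{\ell-1}$, and the corrections from the distinctness/avoidance constraints are at most $\mathrm{poly}(\ell)$ times sums such as $\|x\|_\infty^2\,\|x\|_2^2\,n^{\ell-3}$; by $\|x\|_\infty\le n^{1/2-\delta}$ and $\ell = O_\delta(\log n)$ these amount to an $o_\delta(1)$ fraction of $n^{\ell-1}$. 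Hence $\E_W A = (1-o(1))\,\lambda^\ell n^{\ell-1}\,xx^\top$ up to a diagonal matrix of operator norm $O(\|x\|_\infty^2)$, negligible next to $\|xx^\top\|_{\mathrm{op}} = n$. Using $\lambda \ge (1+\delta)/\sqrt n$ this gives $\|\E_W A\,x\|_2^2 \ge (1-o(1))\,(1+\delta)^{2\ell}n^{\ell+1}$ and $\|\E_W A\|_F^2 = (1\pm o(1))\,(1+\delta)^{2\ell}n^\ell$.

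The crux is the fluctuation bound $\E_W\|A-\E_W A\|_F^2 \le \mathrm{poly}(\ell)\cdot n^{\ell+1}$, together with the companion bound for $\|(A-\E_W A)x\|_2^2$. By multilinearity, $\E_W\|A-\E_W A\|_F^2 = \sum_{a,b}\sum_{\emptyset\ne F}\big(\lambda^{\ell-|F|}\sum_P\prod_{e\in E(P)\setminus F}x_e\big)^2$, where $F$ ranges over nonempty edge-sets (necessarily linear forests), $P$ over length-$\ell$ self-avoiding walks from $a$ to $b$ with $F\subseteq E(P)$, and $E(P)$ is the edge set of $P$. I would organize this by $|F|$ and by the isomorphism type of $F$ as a linear forest: for each of the $\mathrm{poly}(\ell)$ types the number of walks $P$ through a fixed copy of $F$ is $\mathrm{poly}(\ell)$ times a suitable power of $n$, and---after summing over the free vertices---the attached monomial in the $x_i$'s is bounded by products of power sums $\sum_i x_i^{t}$, each at most $\|x\|_\infty^{t-2}\|x\|_2^2 \le n^{(t-2)(1/2-\delta)+1}$. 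The maximal contribution is the all-noise case $|F| = \ell$, worth $\sum_{a\ne b}\#\{\text{length-}\ell\text{ SAWs }a\to b\} = (1+o(1))n^{\ell+1}$, and every other type is smaller by a positive power of $1/((1+\delta)^2 n)$ (from having more signal edges, hence fewer free vertices) or of $\|x\|_\infty^2/n = n^{-2\delta}$ (each coincidence among free vertices replacing a factor $\sum_i x_i^2 = n$ by a factor $\sum_i x_i^{t}$ with $t\ge 3$). Summing the $\mathrm{poly}(\ell)$ types gives $\E_W\|A-\E_W A\|_F^2 = O(\mathrm{poly}(\ell)\,n^{\ell+1})$. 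Carrying out this enumeration cleanly---pinning down exactly which linear-forest types arise, verifying each power saving, and doing so while invoking \emph{only} the first two moments of the $W_{ij}$---is the main technical obstacle, and it is precisely here that $\|x\|_\infty\le n^{1/2-\delta}$ is indispensable: a spike concentrated on a few coordinates would inflate the power sums $\sum_i x_i^t$ and wreck both this step and the previous one.

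Finally, fix $\ell = \big\lceil(\log n + C\log\log n)/\!\big(2\log(1+\delta)\big)\big\rceil = \Theta_\delta(\log n)$, so that $(1+\delta)^{2\ell}\ge \mathrm{poly}(\ell)\cdot n$ and the noise terms are an $o(1)$ fraction of the signal terms; then $\E_W\|Ax\|_2^2 = (1\pm o(1))(1+\delta)^{2\ell}n^{\ell+1}$ and $\E_W\|A\|_F^2 = (1\pm o(1))(1+\delta)^{2\ell}n^\ell$, whence $\E\langle\hat x,x\rangle^2 \ge (1-o(1))\,n\cdot\E\|\hat x\|_2^2 = (1-o(1))\,\|x\|_2^2\,\E\|\hat x\|_2^2 \ge \delta^{O(1)}\,\|x\|_2^2\,\E\|\hat x\|_2^2$. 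For the running time, exact evaluation of $A_\ell(Y)$ counts self-avoiding walks and is hard, so instead I would compute an unbiased estimate of $A_\ell(Y)$ by color coding: color $[n]$ uniformly with $\ell+1$ colors, count colorful (automatically self-avoiding) walks by dynamic programming over (endpoint, used-color-set) in time $n^{O(1)}2^{O(\ell)}$, rescale by $(\ell+1)^{\ell+1}/(\ell+1)!$, and average over $2^{O(\ell)}\mathrm{poly}(n)$ independent colorings to push the estimation error below the noise floor; since $\ell = O_\delta(\log n)$ this is $n^{O_\delta(1)}$ time, and one checks the residual error changes $\E\langle\hat x,x\rangle^2$ and $\E\|\hat x\|_2^2$ only negligibly. (Alternatively one may output the top eigenvector of $A_\ell(Y)$: Markov's inequality on $\|A-\E_W A\|_F$ forces $\|A-\E_W A\|_{\mathrm{op}}\ll\|\E_W A\|_{\mathrm{op}}$ with constant probability, and Davis--Kahan then yields the required correlation on that event.)
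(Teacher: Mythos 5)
Your proposal is, at its core, the paper's own approach: the estimator is the same length-$\ell$ self-avoiding-walk matrix, the multilinearity observation is the same reason two moments of $W$ suffice, your computation of $\E_W A$ is the paper's Lemma~\ref{nominator}, your fluctuation bound is the same second-moment enumeration over pairs of walks organized by shared edges and vertices (Lemmas~\ref{criticalQuantity}, \ref{simpleRelation}, \ref{corollarySegmentExpectation}, \ref{sawCorrelation} and the proof of Lemma~\ref{matrixCorrlation}), and the color-coding evaluation is Lemma~\ref{sawEva}. The one genuinely different step is the rounding: instead of invoking Theorem~\ref{correlationPreservation} (a random vector in the span of the top $\delta^{-O(1)}$ eigenvectors), you output $\hat x = Ag$ for an independent random $g$ with $\E gg^\top = I$, so that $\E\langle\hat x,x\rangle^2 = \E_W\|Ax\|_2^2 \ge \|\E_W A\,x\|_2^2$ and $\E\|\hat x\|_2^2 = \E_W\|A\|_F^2$. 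This converts the matrix-level correlation bound directly into the theorem's expectation guarantee and is a legitimate simplification; note also that for the numerator you only need the lower bound $\E\|Ax\|^2 \ge \|\E_W A\,x\|^2$, so your ``companion bound'' on $\|(A-\E_W A)x\|^2$ is not even needed. Be aware, though, that you have only sketched the part that constitutes the bulk of the paper's proof: pinning down the walk-pair (equivalently, linear-forest) types and verifying the power savings, with only unit variances and the $\ell_\infty$ bound, is exactly the content of the lemmas cited above.

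Two quantitative claims in your sketch are wrong, though neither sinks the argument. Pairs of walks sharing $k\ge 1$ edges contribute on the order of $(1+\delta)^{-2k}\cdot\lambda^{2\ell}n^{2\ell}$ to $\E\|A\|_F^2$ (e.g.\ pairs agreeing on their first edge out of $i$ already give a $(1+\delta)^{-2}$ fraction of the signal), and summing over $k$ gives $\Theta\bigl(1/((1+\delta)^2-1)\bigr)=\Theta(\delta^{-1})$ times the signal for small $\delta$; no choice of $\ell$ removes this, since it is a ratio independent of $\ell$. So $\E\|A\|_F^2=(1\pm o(1))(1+\delta)^{2\ell}n^\ell$ and ``the noise terms are an $o(1)$ fraction of the signal'' are false for small constant $\delta$; the true statement is $\E\|A\|_F^2\le \delta^{-O(1)}(1+\delta)^{2\ell}n^\ell$, exactly as in Lemma~\ref{matrixCorrlation}, which still yields the required $\delta^{O(1)}$ guarantee once you track the $\delta$-dependence. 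The same miscalibration invalidates your parenthetical alternative rounding: since $\bigl(\E\|A-\E_W A\|_F^2\bigr)^{1/2}$ is of order $\delta^{-O(1)}\|\E_W A\|_{\mathrm{op}}$ (not $\ll$), Markov plus Davis--Kahan on the top eigenvector gives nothing, which is precisely why the paper routes the rounding through Theorem~\ref{correlationPreservation} rather than the leading eigenvector. Your primary route via $\hat x = Ag$ does not suffer from this and is fine.
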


To interpret the result, we note that even if the entries of $W$ are Gaussian, when $\lambda\sqrt{n} < 1$ no estimator $\hat{x}$ achieves nontrivial correlation with $x$ \cite{Perry_2018}, so the assumption $\lambda\sqrt{n} \geq 1 + \delta$ is the weakest one can hope for.
Furthermore, under this assumption, when $\delta$ is close to $0$, it is information-theoretically impossible to find $\hat{x}$ such that $\langle x,\hat{x} \rangle^2/(\|x\|^2 \|\hat{x}\|^2) \rightarrow 1$.
The guarantee we achieve, that $\hat{x}$ is nontrivially correlated to $x$, is the best one can hope for.
(For the regime $\lambda\sqrt{n} \rightarrow \infty$, our algorithm does achieve correlation going to $1$.
Improving the $\delta^{O(1)}$ term to be quantitatively optimal is an interesting open question.)

\textbf{Spiked Tensors with Heavy-Tailed Noise:} The self-avoiding walk approach to algorithm design is quite flexible, and in particular is not limited to spiked matrices.
We also study an analogous problem for \emph{spiked tensors}.
The single-spike tensor model is the analogue of the spiked Wigner model above, but for the task of recovering information from noisy multi-modal data, which has many applications across machine learning \cite{anandkumar2014tensor,richard2014statistical}.

\begin{theorem}
\label{thm:intro-tensor}\footnote{The theorem is stated for planted vectors sampled from independent zero mean prior distribution;  for fixed planted vector, similar guarantees can be obtained using nearly the same techniques as in the spiked matrix model}
  For every $c > 0,\delta<1$ there is a polynomial-time algorithm with the following guarantees.
  Let $x \in \R^n$ be a random vector with independent, mean-zero entries having $\E x_i^2 = 1$ and $\Gamma=\E x_i^4 \leq n^{o(1)}$.
  Let $\lambda > 0$.
  Let $Y = \lambda \cdot x^{\otimes 3} + W$, where $W \in \R^{n \times n \times n}$ has independent, mean-zero entries with $\E W_{ijk}^2 = 1$.
  Then if $\lambda \geq c n^{-3/4}$, the algorithm finds $\hat{x} \in \R^n$ such that $\E \langle x, \hat{x} \rangle \geq \delta \cdot (\E \|x\|_2^2)^{1/2} \cdot (\E \|\hat{x}\|_2^2)^{1/2}$.
\end{theorem}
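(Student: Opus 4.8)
The plan is to mimic the matrix construction and take $\hat x$ to be a vector-valued polynomial in the entries of $Y$ obtained by summing over length-$\ell$ self-avoiding walks in the complete $3$-uniform hypergraph on $[n]$, for a parameter $\ell=\ell(c,\delta)=\Theta(\log n)$. Fix an odd integer $\ell$ and let $\mathcal{H}_\ell(a)$ be a suitable family of labelled $3$-uniform hypergraph configurations $H$ on \emph{distinct} vertices of $[n]$ with $\ell$ hyperedges such that the distinguished vertex $a$ is incident to exactly one hyperedge and every other vertex is incident to an even number of hyperedges (say exactly two); set $\hat x_a=(\lambda^\ell\,\abs{\mathcal{H}_\ell(a)})^{-1}\sum_{H\in\mathcal{H}_\ell(a)}\prod_{e\in H}Y_e$. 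The key structural point, exactly as for matrices, is that the hyperedges of each $H$ are distinct triples, so no entry of $W$ occurs more than once in a single monomial $\prod_{e\in H}Y_e$; hence, after writing $Y_e=\lambda\,(x^{\otimes 3})_e+W_e$ and expanding, only $\E W_e=0$ and $\E W_e^2=1$ — never any higher moment of $W$ — ever enter an expectation. This is precisely what buys robustness to heavy tails.

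First I would analyze $\E_W\hat x$. Since distinct entries of $W$ are independent and mean-zero, $\E_W$ annihilates every expansion term containing a $W$-factor, leaving $\E_W\hat x_a=\abs{\mathcal{H}_\ell(a)}^{-1}\sum_{H}x_a\prod_{v\neq a}x_v^{\deg_H(v)}$. Because the non-$a$ vertices of each $H$ are distinct and have even degree, $\E_x$ of the inner sum equals $\abs{\mathcal{H}_\ell(a)}$ via $\E x_v^2=1$, and a second-moment estimate — the only place $\Gamma=\E x_i^4\le n^{o(1)}$ is used — shows the sum concentrates, so $\E_W\hat x=x+\eta$ with $\E\norm{\eta}_2^2=o(\E\norm{x}_2^2)$. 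A routine enumeration gives $\abs{\mathcal{H}_\ell(a)}=n^{(3\ell-1)/2+o(1)}$, so the normalizing scalar is $\lambda^\ell\,n^{(3\ell-1)/2}=n^{-1/2+o(1)}(\lambda n^{3/2})^\ell$, and the hypothesis $\lambda\ge cn^{-3/4}$ enters precisely by forcing the relevant base $\lambda n^{3/2}\ge cn^{3/4}$ to be large.

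Next I would control $\xi=\hat x-\E_W\hat x$. For the stated in-expectation guarantee it suffices to bound $\E_{x,W}\norm{\xi}_2^2=\sum_a\E_x\operatorname{Var}_W(\hat x_a)$: expanding the variance gives a sum over pairs $(H,H')$ of configurations which survives only when every entry of $W$ has even total multiplicity, and since each entry appears at most once in each of the two monomials this forces the edge sets of $H$ and $H'$ to coincide, i.e. only ``diagonal'' overlaps contribute, each weighted by $(\E W^2)^\ell=1$. Enumerating these together with the $x$-moment factors — each vertex collision forced by an overlap replaces a free vertex (factor $n$) by a collided one (factor $\Gamma=n^{o(1)}$), a net loss of $n^{1-o(1)}$ that is comfortably negative — yields $\E_{x,W}\norm{\xi}_2^2\le n^{1/2+o(1)}(\lambda^2 n^{3/2})^{-\ell}\cdot\E\norm{x}_2^2$, which is below $\delta'^2\,\E\norm{x}_2^2$ once $\ell$ is a sufficiently large multiple of $\log n$ in terms of $c$ and $\delta'$ (the same computation with $2q$-th moments, $q=\Theta(\log n)$, promotes this to a high-probability bound if desired). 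Since $\E\langle x,\hat x\rangle=\E\langle x,\E_W\hat x\rangle\ge(1-o(1))\,\E\norm{x}_2^2$ and $\E\norm{\hat x}_2^2=\E\norm{\E_W\hat x}_2^2+\E\norm{\xi}_2^2\le(1+\delta'^2+o(1))\,\E\norm{x}_2^2$, choosing $\delta'$ small in terms of $\delta$ gives $\E\langle x,\hat x\rangle\ge\delta\,(\E\norm{x}_2^2)^{1/2}(\E\norm{\hat x}_2^2)^{1/2}$.

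Finally, evaluating $\hat x$ amounts to counting weighted length-$\ell$ self-avoiding hypergraph walks, which is $\#\mathrm{P}$-hard in general and costs $n^{\Theta(\ell)}=n^{\Theta(\log n)}$ by brute force; instead, as in the matrix case, I would use color coding — randomly color $[n]$ with $O(\ell)$ colors, compute by dynamic programming the sum restricted to configurations whose vertices all receive distinct colors (hence automatically self-avoiding) in time $2^{O(\ell)}\operatorname{poly}(n)=\operatorname{poly}(n)$, average over $2^{O(\ell)}$ colorings, and derandomize via explicit perfect-hash families; colorful configurations are a constant fraction of all self-avoiding ones, which only rescales $\hat x$ by a known factor. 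I expect the main obstacle to be the combinatorial heart of the fluctuation bound: classifying how tuples of self-avoiding hypergraph walks can overlap while surviving $\E_W$ and proving that every nontrivial overlap loses a genuine power of $n$ (even after paying $\Gamma=n^{o(1)}$ per collision), so that the fluctuation base stays strictly below the signal base — the tensor analogue of the walk-counting estimates underlying Theorem~\ref{thm:intro-matrix}, and the one point where heavy tails could conceivably re-enter, avoided only because self-avoidance caps each $W$-multiplicity at one. A related subtlety is choosing the family $\mathcal{H}_\ell$ so that the signal is exactly rank-one aligned with $x$ (which is what forces the even-degree condition on non-root vertices) while keeping the overlap bound tight; pushing the admissible range of $\lambda$ all the way down to every constant multiple of $n^{-3/4}$, rather than merely a sufficiently large one, is where this choice must be made most carefully.
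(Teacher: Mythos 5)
Your overall architecture (multilinear hypergraph-walk polynomials rooted at each coordinate, distinct vertices so that only $\E W_e$ and $\E W_e^2$ ever appear, bias via $\E x_v^2=1$, color coding for evaluation) is the same as the paper's, but there are two genuine gaps in the sketch. First, your variance step is wrong as stated: expanding $\E_W[\chi_H(Y)\chi_{H'}(Y)]$ for $Y_e=\lambda x_e+W_e$, the surviving terms are those where the $W$-part is chosen on a subset of $H\cap H'$, so \emph{every} pair with $H\cap H'\neq\emptyset$ contributes to $\operatorname{Var}_W(\hat x_a)$ — the covariance is $\prod_{e\in H\Delta H'}\lambda x_e\bigl[\prod_{e\in H\cap H'}(1+\lambda^2x_e^2)-\prod_{e\in H\cap H'}\lambda^2x_e^2\bigr]$, not only the diagonal $H=H'$. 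Classifying these partial overlaps is not a side obstacle but the heart of the proof: in the paper's analysis (the counting lemmas for pairs sharing $r$ vertices and $k$ hyperedges, weighted by $\lambda^{-2k}\Gamma^{O(2r-3k)}$) the dominant contribution comes from overlaps that are hyperpaths emanating from the root, giving a geometric series $\sum_k\bigl(c\,n^{-3/2}v^{-1/2}\lambda^{-2}\bigr)^k$; your "diagonal-only, weight $(\E W^2)^\ell=1$" claim skips exactly this.

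Second, and more importantly, your thin construction (root of degree one, every other vertex of degree exactly two, i.e.\ one hyperedge per step) cannot reach every constant $c>0$ as the theorem requires. With your normalization the per-coordinate noise-to-signal ratio is of order $n^{1/2}(\lambda^2 n^{3/2})^{-\ell}=n^{1/2}c^{-2\ell}$ (and the partial-overlap series has the same base $c^{-2}$), so for $c\le 1$ this does not go to zero for any choice of $\ell$; you flag this at the end but offer no mechanism to fix it. The paper's additional idea is a width parameter $v$: each "level" carries $v$ (resp.\ $2v$) vertices joined to the next level by a perfect matching of $v$ hyperedges, and the count of matchings boosts the effective base from $n^{3/2}\lambda^2$ to roughly $n^{3/2}v^{1/2}\lambda^2$, so taking $v=\Theta(c^{-4})$ (a constant) makes the base exceed $1$ for any fixed $c>0$, while color coding still evaluates the estimator in $n^{O(v)}=\mathrm{poly}(n)$ time. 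Without this replication device (or an equivalent one), your proof only covers $\lambda n^{3/4}$ bounded away from $1$ from above, which is a strictly weaker statement than Theorem~\ref{thm:intro-tensor}.
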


Under the additional assumption that all entries in $W$ have bounded $12$-th moments, a slightly modified algorithm finds $\hat{x}$ such that $\langle x, \hat{x} \rangle \geq (1- o(1)) (\E \|x\|_2^2)^{1/2} \cdot (\E \|\hat{x}\|_2^2)^{1/2}$, as shown in  appendix \ref{equivalStrongWeak}.
(We have not made an effort to optimize the constant $12$; some improvement may be possible.)
The results are stated for order-$3$ tensors for simplicity; there is no difficulty in extending them to the higher order case. (See appendix~\ref{HigherOrder}.)

Prior work considers the spiked tensor model only in the case that $W$ has either Gaussian or discrete entries \cite{hopkins2015tensor,Kikuchy,biroli2019iron,hastings2019classical,BGG,bhattiprolu2016sumofsquares,10.1145/3055399.3055417}, whereas our results make much weaker assumptions, in particular allowing the entries of $W$ to be heavy-tailed.
The requirement that $\lambda \geq \Omega(n^{-3/4})$ is widely believed to be necessary for polynomial-time algorithms \cite{hopkins2017power}.
Sub-exponential time algorithms are known recover $x$ successfully for $\lambda \leq n^{-3/4 - \Omega(1)}$ in Gaussian and discrete settings \cite{bhattiprolu2016sumofsquares,Kikuchy,hastings2019classical,10.1145/3055399.3055417} -- we show that a sub-exponential time version of our algorithm achieves many of the same guarantees while still allowing for heavy-tailed noise.
Concretely, we extend Theorem~\ref{thm:intro-tensor} as follows:

\begin{theorem}\label{polyPCA}
  In the same setting as theorem \ref{thm:intro-tensor}, for any $c \geq n^{-1/8}\Gamma^{1/4}\text{polylog}(n)$ and $\delta<1$, there is an $n^{O(1/c^4)}$-time algorithm such that $\E \langle x, \hat{x} \rangle \geq \delta \cdot (\E \|x\|_2^2)^{1/2} \cdot (\E \|\hat{x}\|_2^2)^{1/2}$.
\end{theorem}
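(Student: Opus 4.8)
The plan is to run the self-avoiding walk estimator underlying Theorem~\ref{thm:intro-tensor}, but with its one free parameter---the walk length $\ell$---allowed to grow with $n$ rather than held at a constant. Recall that for each choice of $\ell$ the estimator is a degree-$\ell$ polynomial in the entries of $Y$ whose monomials are indexed by the length-$\ell$ self-avoiding walks underlying it, and that in the constant-$\lambda n^{3/4}$ regime of Theorem~\ref{thm:intro-tensor} it is enough to take $\ell$ to be a constant depending only on $c$ and $\delta$. Here I would instead set $\ell = \Theta(c^{-4})$, with the implied constant depending on $\delta$. Two points then have to be revisited: that the correctness analysis still goes through for this much larger $\ell$, and that the estimator can still be evaluated within the stated time budget.

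The running-time point is the easy one. A length-$\ell$ self-avoiding walk is supported on $O(\ell)$ vertices, so the estimator can be computed by brute-force enumeration over the vertex tuples of such walks in time $n^{O(\ell)}$, or---more efficiently, though that is not needed here---by the color-coding procedure already used in the paper, which estimates the relevant weighted walk counts in time $2^{O(\ell)}\cdot\mathrm{poly}(n) \le n^{O(\ell)}$. With $\ell = \Theta(c^{-4})$ either bound is $n^{O(1/c^4)}$.

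For correctness I would re-run the moment computations from the proof of Theorem~\ref{thm:intro-tensor} with $\ell$ kept symbolic. That analysis bounds $\E\langle x,\hat x\rangle$ from below and $\E\|\hat x\|_2^2$ from above by isolating a main term---coming from genuinely self-avoiding walks, with the planted part and the $W$-pairings treated as before---against an error term. The main term is favorable enough that the correlation ratio $\E\langle x,\hat x\rangle / \big((\E\|x\|_2^2)^{1/2}(\E\|\hat x\|_2^2)^{1/2}\big)$ it yields is at least $\delta$ as soon as $\ell$ is a sufficiently large ($\delta$-dependent) multiple of $c^{-4}$; this is exactly what fixes the choice $\ell = \Theta(c^{-4})$. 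The error comes from walks that revisit a vertex and from pairs of walks that collide beyond the forced overlap: each such collision costs a factor of order $\Gamma/\sqrt n$---the $\Gamma$ from the fourth moment $\E x_i^4$ that is paid when a coordinate of the planted $x$ is used four or more times, and the $1/\sqrt n$ from the vertex-sum that the collision destroys---while there are $O(\ell)$ positions at which a collision may be introduced, so the error is a geometric-type series in $\ell\Gamma/\sqrt n$ and is a small fraction of the main term precisely when $\ell \lesssim \sqrt n/(\Gamma\,\mathrm{polylog}(n))$. The two constraints $c^{-4}\lesssim \ell \lesssim \sqrt n/(\Gamma\,\mathrm{polylog}(n))$ are jointly satisfiable exactly when $c \ge n^{-1/8}\Gamma^{1/4}\,\mathrm{polylog}(n)$, which is the hypothesis of the theorem, and the conclusion $\E\langle x,\hat x\rangle \ge \delta\,(\E\|x\|_2^2)^{1/2}(\E\|\hat x\|_2^2)^{1/2}$ then follows just as in Theorem~\ref{thm:intro-tensor}.

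The main obstacle is the correctness step. In the proof of Theorem~\ref{thm:intro-tensor} the length $\ell$ is a constant, so every $\ell$-dependent factor---multiplicities of edge-sets among walks with a common vertex-trace, combinatorial prefactors enumerating collision patterns, polylogarithmic slack---was harmlessly swallowed by an unspecified $O_\ell(1)$. To get Theorem~\ref{polyPCA} one must track all of these dependencies explicitly and verify that the main term still dominates up to $\ell$ as large as $\sqrt n/(\Gamma\,\mathrm{polylog}(n))$; in particular one must check that the per-collision cost really is $\tilde O(\Gamma/\sqrt n)$ and not, say, $\tilde O\big((\Gamma/\sqrt n)^{1/2}\big)$, since the exponents $\tfrac14$ on $\Gamma$ and $-\tfrac18$ on $n$ in the hypothesis are precisely what this accounting must produce.
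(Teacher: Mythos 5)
There is a genuine gap: you have attached the subexponential tradeoff to the wrong parameter. Your plan is to keep the width-one self-avoiding-walk estimator of Theorem~\ref{thm:intro-tensor} and simply grow its length to $\ell=\Theta(c^{-4})$. But for a width-one walk the relevant per-level signal-to-noise ratio is fixed at $\Theta(\lambda^2 n^{3/2})=\Theta(c^2)$: the squared signal $(\E\,[P_i(Y)x_i])^2\approx \lambda^{4\ell}n^{6\ell}$ must beat (at least) the diagonal term $\alpha=\beta$ in $\E\,[P_i(Y)^2]$, which is $\approx |S_{\ell,i}|\approx n^{3\ell}$, so the ratio scales like $(\lambda^2 n^{3/2})^{2\ell}=c^{4\ell}$. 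When $c\ll 1$ every additional level \emph{multiplies} the correlation bound by a factor smaller than one, so lengthening the walk makes the estimator worse, not better; no choice of $\ell$, constant or growing, pushes the width-one estimator below the $\lambda\gtrsim n^{-3/4}$ threshold. (Relatedly, your premise that Theorem~\ref{thm:intro-tensor} uses constant $\ell$ is not what the paper does: there $\ell=O(\log_\gamma n)$, which is exactly why color coding rather than brute-force enumeration is needed for polynomial time.)

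The paper's actual mechanism is to introduce \emph{width}: the estimator of Definition~\ref{defTensorNetworkEstimator} is a layered hypergraph (tensor-network) polynomial in which each level carries $v$ (resp.\ $2v$) vertices joined by perfect matchings of hyperedges. The combinatorial gain from the $((2v)!)^2/v!$-type matching counts boosts the effective per-level SNR to $\gamma=0.001\,n^{3/2}v^{1/2}\lambda^2$, so taking $v=\Theta(c^{-4})$ restores $\gamma=1+\Omega(1)$ even for $c\ll1$, while $\ell$ stays $O(\log_\gamma n)$ (Lemma~\ref{recEstimator}). The second-moment bookkeeping there requires $n=\omega\bigl(v^2\,\mathrm{poly}(\Gamma\ell)\bigr)$, i.e.\ $v\lesssim \sqrt n/(\Gamma\,\mathrm{polylog}(n))$, which is where the hypothesis $c\ge n^{-1/8}\Gamma^{1/4}\mathrm{polylog}(n)$ comes from --- your numerology $c^{-4}\lesssim\sqrt n/(\Gamma\,\mathrm{polylog}(n))$ is the right shape but must be attached to the width $v$, not the length $\ell$. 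Finally, the $n^{O(1/c^4)}$ running time is obtained by the color-coding evaluation of the width-$v$ estimator in $n^{O(v)}$ time (Lemma~\ref{pevaRec}/\ref{evaRec}), not by brute-force enumeration of short walks. To repair your argument you would need to redo the correlation and variance analysis for the wide layered estimator with $v$ growing, which is precisely the content of appendices~\ref{proofWeakRecovery} and~\ref{colorCodingTensor}; the length-only version you propose cannot be made to work.
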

In particular, the tradeoff we obtain between running time and signal-to-noise ratio $\lambda$ matches lower bounds in the \emph{low-degree model} for the (easier) case of Gaussian noise~\cite{kunisky2019notes}, for $c \geq n^{-1/8 + o(1)}$.

\textbf{Numerical Experiments:}
We test our algorithms on synthetic data -- random matrices (and tensors) with hundreds of rows and columns -- empirically demonstrating the improvement over vanilla PCA.

\subsection{Our Techniques}
We now offer an overview of the self-avoiding walk technique we use to prove Theorems~\ref{thm:intro-matrix} and~\ref{thm:intro-tensor}.
For this exposition, we focus on the case of spiked matrices (Theorem~\ref{thm:intro-matrix}).

Our techniques are inspired by recent literature on \emph{sparse stochastic block models,} in particular the study of nonbacktracking random walks in sparse random graphs \cite{abbe2017community}.
We remark further below on the relationship with this literature, but note for now that a self-avoiding walk algorithm closely related to the one we present here appeared in \cite{8104074} in the context of the sparse stochastic block model with overlapping communities.
In the present work we give a refined analysis of this algorithm to obtain Theorem~\ref{thm:intro-matrix}, and extend the algorithm to spiked tensors to obtain Theorem~\ref{thm:intro-tensor}.

Recall that given a spiked random matrix $Y = \lambda xx^\top + W$, our goal is to estimate the vector $x$.
For simplicity of exposition, we suppose $x \in \{ \pm 1\}^n$.
To estimate $x$ up to sign, we will in fact aim to estimate each entry of the matrix $xx^\top$.
Our starting point is the observation that any sequence $i_0,i_2,\ldots,i_\ell \in [n]$ without repeated indices (i.e. a length-$\ell$ self-avoiding walk in the complete graph on $[n]$) gives an estimator of $x_{i_0} x_{i_\ell}$ as follows:
\begin{equation}\label{eq:intro-1}
\E_W \prod_{j <\ell} Y_{i_j, i_{j+1}} = \lambda^{\ell} x_{i_1} x_{i_2}^2 \ldots x_{i_{\ell-1}}^2 x_{i_\ell} = \lambda^\ell x_{i_1} x_{i_\ell} \, .
\end{equation}

To aggregate these estimators into a single estimator for $xx^\top$, we relate them to self-avoiding walks in the complete graph on $[n]$. We denote by $\textrm{SAW}_{\ell}(i,j)$ the set of length-$\ell$ self-avoiding walks between $i,j$ on the vertex set $[n]$. Then we associate the polynomial $\prod_{j < \ell} Y_{i_j, i_{j+1}}$ to $\alpha=(i_0,i_2,\ldots,i_{\ell})\in \textrm{SAW}_{\ell}(i,j)$, where $i_0=i, i_{\ell}=j$, and we denote this polynomial as $\chi_\alpha(Y)$.

We define the \emph{self-avoiding walk matrix}:
\begin{definition}[Self-avoiding walk matrix]\label{sawMatrix}
 Let $P(Y)\in\mathbb{R}^{n\times n}$ be given by
 \begin{equation*}
     P_{ij}(Y)=\sum_{\alpha\in \textrm{SAW}_{\ell}(i,j)} \chi_\alpha(Y)
 \end{equation*}
\end{definition}
Our estimator for $x_i x_j$ will simply be $\frac{P_{ij}(Y)}{\lambda^{\ell}\lvert\textrm{SAW}_{\ell}(i,j)\rvert}$.
By \eqref{eq:intro-1}, $\frac{P_{ij}(Y)}{\lambda^\ell\lvert\textrm{SAW}_{\ell}(i,j)\rvert}$ is an unbiased estimator for $x_i x_j$.
The crucial step is to bound the variance of $P_{ij}(Y)$.
Our key insight is: because we average only over self-avoiding walks, $P_{ij}(Y)$ is multilinear in the entries of $W$, so $\E P_{ij}^2(Y)$ can be controlled under only the assumption of unit variance for each entry of $W$.
Our technical analysis shows that $\E P^2_{ij}(Y)$ is small enough to provide a nontrivial estimator of $x_i x_j$ when (a) $\lambda\sqrt{n} \geq 1 + \delta$ and (b) $\ell \geq O_{\delta}(\log n)$, for any $\delta > 0$.

\textbf{Rounding algorithm:}
Once we have $P(Y)$ achieving constant correlation with $xx^\top$, the following theorem, proved in \cite{8104074}, gives a polynomial time algorithm for extracting an estimator $\hat{x}$ for $x$.
\begin{theorem}\label{correlationPreservation}

  Let $Y$ be a symmetric random matrix and $x$ a vector.
  Suppose we have a matrix-valued function $P(Y)$ such that
\begin{equation*}
    \frac{\E\lprod P(Y),xx^\top\rprod}{\left(\E\lVert P(Y)\rVert_F^2 \cdot \|x\|^4\right)^{1/2}}=\delta\,.
\end{equation*}
then with probability $\delta^{O(1)}$, a random unit vector $\hat{x}$ in the span of top-$\delta^{-O(1)}$ eigenvectors of $P(Y)$ achieves $\langle x,\hat{x} \rangle^2 \geq \delta^{O(1)} \|x\|^2$.
\end{theorem}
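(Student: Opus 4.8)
The plan is to split the argument into a probabilistic reduction to a deterministic statement about a single matrix, followed by a spectral argument for that statement. First I would note that we may assume $P(Y)$ is symmetric: replacing $P(Y)$ by $\tfrac12\big(P(Y)+P(Y)^\top\big)$ leaves $\langle P(Y),xx^\top\rangle$ unchanged and does not increase $\|P(Y)\|_F$, so it only improves the hypothesis, and it is the eigenvectors of this symmetrization that the conclusion should refer to. (In the application to the self-avoiding walk matrix, $P(Y)$ is already symmetric.)

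For the probabilistic step, set $A=\langle P(Y),xx^\top\rangle$ and $B=\|P(Y)\|_F^2$. Cauchy--Schwarz gives $A\le\|P(Y)\|_F\,\|x\|^2$, so $\E A^2\le(\E B)\,\|x\|^4$, while the hypothesis is exactly $(\E A)^2=\delta^2(\E B)\,\|x\|^4$; in particular $\E A>0$. Applying the Paley--Zygmund inequality to the nonnegative variable $\max(A,0)$ (which has mean $\ge\E A$ and second moment $\le\E A^2$) yields $\Pr\big[A\ge\tfrac12\delta(\E B)^{1/2}\|x\|^2\big]\ge\tfrac14\delta^2$, and Markov's inequality gives $\Pr\big[B\ge 8\delta^{-2}\E B\big]\le\tfrac18\delta^2$. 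On the intersection of these two events --- which has probability at least $\tfrac18\delta^2$ --- the realization $P:=P(Y)$ satisfies $\langle P,xx^\top\rangle\ge\epsilon\,\|P\|_F\,\|x\|^2$ with $\epsilon=\Omega(\delta^2)$.

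It then remains to prove the deterministic claim: if $P$ is symmetric and $u:=x/\|x\|$ satisfies $u^\top P u\ge\epsilon\|P\|_F$, then a uniformly random unit vector $\hat x$ in the span $V$ of the top $k:=O(1/\epsilon^2)$ eigenvectors of $P$ (ordered by eigenvalue magnitude) has $\langle\hat x,u\rangle^2\ge\Omega(\epsilon^3)$ with probability $\Omega(\epsilon^3)$. Write $P=\sum_i\lambda_i v_iv_i^\top$ and set the threshold $\tau=\tfrac\epsilon2\|P\|_F$. In $u^\top P u=\sum_i\lambda_i\langle u,v_i\rangle^2$, the terms with $\lambda_i\le0$ are nonpositive and the terms with $0<\lambda_i<\tau$ contribute less than $\tau\,\|u\|^2=\tau$, hence $\sum_{i:\lambda_i\ge\tau}\lambda_i\langle u,v_i\rangle^2\ge\tfrac\epsilon2\|P\|_F$. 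Because $\sum_i\lambda_i^2=\|P\|_F^2$, at most $4/\epsilon^2$ indices satisfy $|\lambda_i|\ge\tau$, so every index with $\lambda_i\ge\tau$ lies among the top $k$; and using $\lambda_i\le\|P\|_F$ for each such index gives $\|\Pi_V u\|^2\ge\sum_{i:\lambda_i\ge\tau}\langle u,v_i\rangle^2\ge\tfrac\epsilon2$. Finally $\E_{\hat x}\langle\hat x,u\rangle^2=\tfrac1k\|\Pi_V u\|^2=\Omega(\epsilon^3)$, and since $\langle\hat x,u\rangle^2\in[0,1]$ an averaging argument gives $\langle\hat x,u\rangle^2\ge\Omega(\epsilon^3)$ with probability $\Omega(\epsilon^3)$. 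Combining this with the probability from the previous paragraph and substituting $\epsilon=\Omega(\delta^2)$, both the success probability and the squared correlation $\langle\hat x,x\rangle^2/\|x\|^2$ are $\delta^{O(1)}$, while $k=\delta^{-O(1)}$, as required.

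I expect the only mildly delicate point to be calibrating the threshold $\tau$ so that the small-positive-eigenvalue tail, the number of eigenvalues above $\tau$ (which must fit within the top $k$), and the bound $\lambda_i\le\|P\|_F$ are all controlled by a single $k=O(1/\epsilon^2)$; everything else is a routine combination of Cauchy--Schwarz, Paley--Zygmund, Markov, and the spectral theorem. Since the statement is quoted from \cite{8104074}, this proposal essentially reconstructs their argument.
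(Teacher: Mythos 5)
Your argument is correct and complete: the Paley--Zygmund/Markov reduction to a fixed matrix $P$ with $\langle P,xx^\top\rangle\ge\Omega(\delta^2)\|P\|_F\|x\|^2$, followed by the eigenvalue-threshold argument showing $\|\Pi_V u\|^2\ge\Omega(\delta^2)$ for the span $V$ of the top $O(\delta^{-4})$ eigenvectors and then averaging over a random unit vector in $V$, is exactly the standard route. The paper itself does not prove this theorem but imports it from \cite{8104074}, and your proposal faithfully reconstructs that argument with the right quantitative dependencies ($k=\delta^{-O(1)}$, success probability and correlation $\delta^{O(1)}$); the only cosmetic slip is that the "intersection of these two events" should refer to the complement of the Markov bad event, which is clearly what you use.
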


\textbf{Prior-free estimation for general $x$:}
A significant innovation of our work over prior work such as \cite{8104074} investigating estimators based on self-avoiding walks is that we avoid the assumption of a prior distribution on the planted vector $x$; instead we assume only a mild bound on the $\ell_\infty$ norm of $x$.
While in the setting of Gaussian $W$ one can always assume that $x$ is random by applying a random rotation to the input matrix $Y$ (which preserves $W$ if it is Gaussian), in our setting working with fixed $x$ presents technical challenges.

In the foregoing discussion we assumed $x$ to be $\pm 1$-valued -- to drop this assumption, we must forego \eqref{eq:intro-1} and give up on the hope that each self-avoiding walk from $i$ to $j$ is an unbiased estimator of $x_i x_j$.
Instead, we are able to use the weak $\ell_\infty$ bound to control the bias of an \emph{average} self-avoiding walk as an estimator for $x_i x_j$, and hence control the bias of the estimator $P_{ij}(Y)$.
Compared to \cite{8104074}, which studies the cases of random or $\pm 1$-valued $x$, our calculation of the variance of $P_{ij}(Y)$ is also significantly more intricate, again because we cannot rely on either randomness or $\pm 1$-ness of $x$.

\textbf{Polynomial time via color coding:} The techniques described already yield an algorithm for the spiked Wigner model running in quasipolynomial time $n^{O_\delta(\log n)}$, simply by evaluating all of the self-avoiding walk polynomials.
We use the \emph{color coding} technique of \cite{alon1995color} (previously used in the context of the stochastic block model by \cite{8104074}) to improve the running time to $n^{O_\delta(1)}$.
Briefly, color coding speeds up the computation of the self-avoiding walk estimators $P_{ij}(Y)$ with a clever combination of randomization and dynamic programming.

\textbf{Extension to spiked tensors:}
The tensor analogue of the PCA algorithm for spiked matrices is the \emph{tensor unfolding} method, where an $n \times n \times n$ input tensor $Y = \lambda x^{\otimes 3} + W$ is unfolded to an $n^2 \times n$ matrix, and then the top $n$-dimensional singular vector of this matrix is used to estimate $x$.
This strategy is successful in the case of Gaussian noise, for $\lambda \gg n^{-3/4}$.
To prove Theorem~\ref{thm:intro-tensor} we adapt the self-avoiding walk method above to handle this form of rectangular matrix.
To prove Theorem~\ref{polyPCA}, we combine the self-avoiding walk method with higher-order spectral methods previously used to obtain subexponential time algorithms for the spiked tensor model \cite{Raghavendra2018,bhattiprolu2016sumofsquares}.

\textbf{Relationship to PCA and Non-Backtracking Walks}
To provide some further context for our techniques, it is helpful to observe the following relationship to PCA.
Given a symmetric matrix $Y$, PCA will extract the top eigenvector of $Y$.
Often, this is implemented via the power method -- that is, PCA will (implicitly) compute the matrix $Y^\ell$ for $\ell \approx \log n$.
Notice that the entries of $Y^\ell$ can be expanded as
\[
(Y^\ell)_{ij} = \sum_{k_1,\ldots,k_{\ell-1} \in [n]} Y_{i,k_1} \cdot \prod_{a\leq \ell} Y_{k_a,k_{a+1}} \cdot Y_{k_{\ell-1},j}
\]
which is a sum over all length-$\ell$ walks from $i$ to $j$ in the complete graph.
Our estimator $P(Y)$ can be viewed as removing some problematic (high variance) terms from this sum, leaving only the self-avoiding walks.

This approach is inspired by recent developments in the study of sparse random graphs, where vertices of unusually high degree spoil the spectrum of the adjacency matrix (indeed, this is morally a special case the heavy-tailed noise setting we consider).
In particular, inspired by statistical physics, \emph{nonbacktracking} walks were developed as a technique to learn communities in the stochastic block model \cite{mossel2018proof,Abbe2018,Saade2015SpectralDI,decelle2011asymptotic,krzakala2013spectral,bordenave2015non}.
A $k$-nonbacktracking walk $i_1,\ldots,i_\ell$ does not repeat any indices $i \in [n]$ among any consecutive $k$ steps; as $k$ increases from $0$ to $\ell$ this interpolates between na\"ive PCA and our self-avoiding walk estimator.

The $k$-nonbacktracking algorithm for $k < \ell$ is also a natural approach in the setting we study.
(Our approach corresponds to $k = \ell$.)
Indeed, there are some advantages to choosing $k = O(1)$: the $O(1)$-nonbacktracking-based estimator can be computed much more efficiently than the self-avoiding walk-based estimator.
Furthermore, in numerical experiments we observe that even $1$-step non-backtracking gives performance comparable with fully self-avoiding walks.
However, rigorous analysis of the $O(1)$-non-backtracking walk estimator  in our distribution-independent setting appears to be a major technical challenge -- even establishing rigorous guarantees in the stochastic block model was a major breakthrough \cite{bordenave2015non,mossel2018proof}.
An advantage of our estimator is that it comes with a relatively simple and highly adaptable rigorous analysis.

\subsection{Organization}

In section~\ref{spikedMatrix}, we discuss algorithms for the spiked matrix model, proving Theorem~\ref{thm:intro-matrix} and providing counterexamples to na\"ive truncation-based algorithms.
In section~\ref{sec:matrixExperiments} we discuss results of numerical experiments for spiked random matrices.
In section~\ref{sec:tensor} we describe our algorithm for the spiked tensor model, deferring the analysis to supplementary material.

\section{Algorithms for general spiked matrix model}
\label{spikedMatrix}

\subsection{Guarantee of self-avoiding walk estimator in fixed vector case}

Here we prove Theorem \ref{thm:intro-matrix} by analyzing the  self-avoiding walk estimator.
(Some details are deferred to supplementary material.)
We focus for now on the following main lemma, putting together the proof of Theorem~\ref{thm:intro-matrix} at the end of this section.


\begin{lemma}\label{matrixCorrlation}
In spiked matrix model $Y=\lambda xx^{\top}+W$ with $\lVert x\rVert=\sqrt{n}$ and the upper triangular entries in symmetric matrix $W$ independently sampled with zero mean and unit variance, we assume $\lVert x\rVert_{\infty}^2=n^{1-\Omega(1)}$. Then if $\lambda n^{1/2}=1+\delta=1+\Omega(1)$, setting $\ell=O(\log_{1+\delta}n)$, we have:
 \begin{equation*}
     \frac{\E \lprod P(Y),xx^\top\rprod}{n\left(\E \lVert P(Y)\rVert_F^2\right)^{1/2}}=\delta^{O(1)} \, ,
 \end{equation*}
 where $P(Y)$ is the length-$\ell$ self-avoiding walk matrix (Definition~\ref{sawMatrix}).
\end{lemma}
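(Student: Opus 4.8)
The plan is to bound the numerator $\E\lprod P(Y),xx^\top\rprod$ from below and the denominator $\E\lVert P(Y)\rVert_F^2$ from above, then combine with the choice $\ell=\lceil\tfrac12\log_{1+\delta}n\rceil$, so that $n^{1/2}\le(1+\delta)^\ell\le(1+\delta)n^{1/2}$. For the numerator, note that for a self-avoiding walk $\alpha=(i=i_0,\dots,i_\ell=j)$ the edges are distinct, so the $W_{i_ti_{t+1}}$ are independent and mean-zero and $\E\chi_\alpha(Y)=\prod_{t<\ell}\lambda x_{i_t}x_{i_{t+1}}=\lambda^\ell x_ix_j\prod_{t=1}^{\ell-1}x_{i_t}^2$; hence
\[
\E\lprod P(Y),xx^\top\rprod=\lambda^\ell\sum_{(i_0,\dots,i_\ell)\in\textrm{SAW}_\ell}\ \prod_{t=0}^{\ell}x_{i_t}^2 .
\]
I would lower-bound this weighted self-avoiding walk count by the weighted count of \emph{all} length-$\ell$ walks, which equals $v^\top(vv^\top-\Delta)^\ell v$ with $v=(|x_1|,\dots,|x_n|)$ and $\Delta=\mathrm{diag}(x_1^2,\dots,x_n^2)$. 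Since $vv^\top-\Delta\preceq vv^\top$, this matrix has a single positive eigenvalue, of size $n-\tfrac1n\sum_i x_i^4\ge n(1-o(1))$ (using $\|x\|_\infty^2=n^{1-\Omega(1)}$), with the rest in $[-\|x\|_\infty^2,0]$; a routine estimate then gives $v^\top(vv^\top-\Delta)^\ell v\ge(1-o(1))n^{\ell+1}$. The discrepancy with the self-avoiding count is the weighted count of walks that repeat a vertex, which I would control by grouping them by their repeated vertices (``hubs''), each hub contributing a factor $\mathrm{poly}(\ell)\cdot\|x\|_\infty^2/n=o(1)$. Thus $\E\lprod P(Y),xx^\top\rprod\ge(1-o(1))\lambda^\ell n^{\ell+1}$.

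For the denominator, write $Y_e=\lambda x_e+W_e$ with $x_e=x_{e_1}x_{e_2}$ for $e=\{e_1,e_2\}$ and expand the product. The key observation (and the reason only unit variance is needed) is that each $\chi_\alpha$ is \emph{multilinear} in $W$, so $\E[\chi_\alpha(Y)\chi_\beta(Y)]$ involves only the second moments $\E W_e^2=1$; a short computation gives, for $\alpha,\beta\in\textrm{SAW}_\ell(i,j)$,
\[
\E[\chi_\alpha(Y)\chi_\beta(Y)]=\lambda^{\,|E_\alpha\triangle E_\beta|}\Big(\prod_{e\in E_\alpha\triangle E_\beta}x_e\Big)\prod_{e\in E_\alpha\cap E_\beta}\big(1+\lambda^2x_e^2\big).
\]
I would then sum over $i,j$ and over pairs $(\alpha,\beta)$, organizing by the isomorphism type of the union multigraph $\gamma=\alpha\cup\beta$ together with the walk data, and its excess $b:=|E(\gamma)|-|V(\gamma)|+1\ge0$. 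The diagonal $\alpha=\beta$ ($b=0$) is handled exactly: summing the ``activated'' subset of $E_\alpha$ first gives $\sum_{i,j}\sum_{\alpha}\prod_{e\in\alpha}(1+\lambda^2x_e^2)\le\mathbf{1}^\top M^\ell\mathbf{1}$ with $M_{ab}=1+\lambda^2x_a^2x_b^2$ for $a\ne b$ and $M_{aa}=0$; since $\|M\|\le n(1+o(1))$ (because $\lambda^2\sum_i x_i^4=O((1+\delta)^2)\|x\|_\infty^2=o(n)$) and $\ell=O(\log n)$, this is $(1+o(1))n^{\ell+1}$. For $b\ge1$ one shows a shape $\gamma$ contributes at most $(1+\delta)^{2(|E(\gamma)|-\ell)}\,n^{\ell+1-b}$ up to $\|x\|_\infty$-slack (the $x$-weights are products of even powers, absorbed via $\sum_w x_w^{2k}\le\|x\|_\infty^{2(k-1)}n$). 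Since there are super-polynomially many shapes, one must \emph{not} bound by ``number of shapes times maximum'': instead re-sum the embeddings of the non-tree part of $\gamma$ as matrix/walk expressions and sum over the sizes of the cyclic features of $\gamma$ via geometric series like $\sum_{m\le\ell}\mathrm{poly}(\ell-m)(1+\delta)^{2m}=\delta^{-O(1)}(1+\delta)^{2\ell}=\delta^{-O(1)}n$. Summing the resulting series over $b\ge1$ shows these terms total $\le\delta^{-O(1)}n^{\ell+1}$, so $\E\lVert P(Y)\rVert_F^2\le\delta^{-O(1)}n^{\ell+1}$.

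Combining, with the stated $\ell$,
\[
\frac{\E\lprod P(Y),xx^\top\rprod}{n\,(\E\lVert P(Y)\rVert_F^2)^{1/2}}\ \ge\ \frac{(1-o(1))\lambda^\ell n^{\ell+1}}{\delta^{-O(1)}\,n\cdot n^{(\ell+1)/2}}\ =\ (1-o(1))\,\delta^{O(1)}(1+\delta)^\ell n^{-1/2}\ \ge\ \delta^{O(1)},
\]
using $(1+\delta)^\ell\in[n^{1/2},(1+\delta)n^{1/2}]$. I expect the main obstacle to be the second-moment bound for $\alpha\ne\beta$: since $x$ is fixed and not $\pm1$-valued one cannot use rotational invariance, so the $x$-weights must be tracked through every shape using only $\|x\|_2=\sqrt n$ and $\|x\|_\infty=n^{1/2-\Omega(1)}$; and because the number of shapes is super-polynomial, each excess-$b$ stratum must be bounded collectively (via matrix norms and geometric sums with the correct $\delta$-dependence) rather than by any per-shape union bound.
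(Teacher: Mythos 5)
Your proposal is correct and follows essentially the same route as the paper: an unbiased first-moment computation giving $\E\lprod P(Y),xx^\top\rprod=(1-o(1))\lambda^\ell n^{\ell+1}$, then a second-moment bound obtained by expanding $\E[\chi_\alpha(Y)\chi_\beta(Y)]$ via multilinearity into exactly the paper's pair-correlation formula and classifying pairs of self-avoiding walks by their overlap, with the same two exchange rates (a geometric factor $(n\lambda^2)^{-1}=(1+\delta)^{-2}$ per shared edge and a factor $\mathrm{poly}(\ell)\lVert x\rVert_\infty^2/n=n^{-\Omega(1)}$ per excess shared vertex), plus a separate diagonal term that forces $\ell\gtrsim\tfrac12\log_{1+\delta}n$. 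Your bookkeeping differs only cosmetically from the paper's (spectral bound on $vv^\top-\Delta$ plus subtraction of repeated-vertex walks instead of the sequential-conditioning Lemma~\ref{criticalQuantity}; stratification by the excess $b=r-k+1$ with geometric resummation instead of the paper's $\textrm{shape}(k,r,i,j)$ classification, averaged correlation Lemma~\ref{sawCorrelation}, and the count $n^{2(\ell-1)-r}\ell^{O(r-k)}$), and the quantitative claims you state match the paper's.
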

For Lemma~\ref{matrixCorrlation}, we will repeatedly need the following technical bound, which we prove in Appendix~\ref{proofQuantity}.
\begin{lemma}\label{criticalQuantity}
Let $V\subseteq [n]$, $\lVert x\rVert=\sqrt{n}$ and $t_1,t_2 \in \mathbb{N}$. We define the quantity $S_{t_1,t_2,V}$ as the following:
\begin{equation*}
    S_{t_1,t_2,V}=\mathop{\mathbb{E}}_{(v_1,\ldots,v_{t_1+t_2})\subseteq [n]\setminus V} \left[\prod_{i=1}^{t_1}x_{v_i}^2\prod_{i=t_1+1}^{t_1+t_2}x_{v_i}^4\right]
\end{equation*}
where $(v_1,v_2,\ldots,v_{t_1+t_2})$ is uniformly sampled from all size-$(t_1+t_2)$ ordered subsets of $[n]\setminus V$ (without repeating elements).
Then assuming $\lvert V\rvert,t_1,t_2=O(\log n)$ and $\lVert x\rVert_{\infty}^2=n^{1-\Omega(1)}$, we have 
$S_{t_1,t_2,V}\leq (1+n^{-\Omega(1)})\lVert x\rVert_\infty^{2t_2}$. Further if $t_2=0$, we have $S_{t_1,t_2,V}\geq 1-n^{-\Omega(1)}$. 

\end{lemma}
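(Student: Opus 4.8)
The plan is to treat $S_{t_1,t_2,V}$ as an explicit ratio and reduce sampling without replacement to sampling with replacement, exploiting only the two global facts $\sum_{i\notin V} x_i^2 \le \sum_i x_i^2 = n$ and $\sum_{i\notin V} x_i^4 \le \|x\|_\infty^2 \sum_{i\notin V} x_i^2 \le \|x\|_\infty^2 n$. Writing $(m)_k := m(m-1)\cdots(m-k+1)$ and $m := n-|V|$, we have
\[
S_{t_1,t_2,V} = \frac{1}{(m)_{t_1+t_2}} \sum_{\substack{(v_1,\ldots,v_{t_1+t_2}) \\ \text{distinct},\ v_i \notin V}} \prod_{i=1}^{t_1} x_{v_i}^2 \prod_{i=t_1+1}^{t_1+t_2} x_{v_i}^4 .
\]
Everything below is a matter of comparing this restricted sum to the unrestricted product of sums.

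\textbf{Upper bound.} Since every summand is nonnegative, dropping the distinctness constraint only increases the sum, and the resulting sum factorizes: $\sum_{\text{distinct}} \le \big(\sum_{i\notin V} x_i^2\big)^{t_1}\big(\sum_{i\notin V} x_i^4\big)^{t_2} \le n^{t_1}(\|x\|_\infty^2 n)^{t_2}$, so $S_{t_1,t_2,V} \le \tfrac{n^{t_1+t_2}}{(m)_{t_1+t_2}}\|x\|_\infty^{2t_2}$. It remains to check $\tfrac{n^{t_1+t_2}}{(m)_{t_1+t_2}} = 1+n^{-\Omega(1)}$: every factor of $(m)_{t_1+t_2}$ is at least $n-|V|-t_1-t_2 = n(1-n^{-\Omega(1)})$ using $|V|,t_1,t_2=O(\log n)$, hence $\tfrac{n^{t_1+t_2}}{(m)_{t_1+t_2}} \le (1-n^{-\Omega(1)})^{-(t_1+t_2)} \le \exp\!\big(O(\log n)\cdot n^{-\Omega(1)}\big) = 1+n^{-\Omega(1)}$.

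\textbf{Lower bound ($t_2=0$).} Here I would instead write $\sum_{\text{distinct}} = \big(\sum_{i\notin V} x_i^2\big)^{t_1} - \sum_{\text{tuples with a repeated index}}$ and bound the subtracted term by a union bound over which pair of coordinates coincides: for a fixed pair $a<b$, forcing $v_a=v_b$ contributes at most $\big(\sum_{i\notin V} x_i^4\big)\big(\sum_{i\notin V} x_i^2\big)^{t_1-2} \le \|x\|_\infty^2 n \cdot n^{t_1-2}$, so the total bad contribution is at most $\binom{t_1}{2}\|x\|_\infty^2 n^{t_1-1}$. On the other hand $\sum_{i\notin V} x_i^2 \ge n - |V|\|x\|_\infty^2 = n(1-n^{-\Omega(1)})$ by the hypothesis $\|x\|_\infty^2 = n^{1-\Omega(1)}$ and $|V|=O(\log n)$, so $\big(\sum_{i\notin V} x_i^2\big)^{t_1} \ge n^{t_1}(1-n^{-\Omega(1)})$. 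Since $\binom{t_1}{2}\|x\|_\infty^2/n = n^{-\Omega(1)}$, combining gives $\sum_{\text{distinct}} \ge n^{t_1}(1-n^{-\Omega(1)})$, and as $(m)_{t_1} \le n^{t_1}$ we conclude $S_{t_1,0,V} \ge 1-n^{-\Omega(1)}$.

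\textbf{On the difficulty.} There is no single hard step; this is a bookkeeping estimate. The only conceptual point is that sampling without replacement is not a product measure: we sidestep this for the upper bound by nonnegativity (remove the distinctness constraint) and for the lower bound by a first-order inclusion–exclusion (add back only the ``one coincidence'' terms, which already dominate the error). The remaining care is purely quantitative --- repeatedly verifying that factors of the form $(1\pm n^{-\Omega(1)})^{O(\log n)}$ and $\mathrm{polylog}(n)\cdot n^{-\Omega(1)}$ stay $1\pm n^{-\Omega(1)}$, which is fine since the $\Omega(1)$ in the exponent can absorb any logarithmic loss by being taken slightly smaller than $1$.
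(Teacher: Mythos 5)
Your proof is correct; both bounds check out, and the quantitative steps (each factor of the falling factorial being $n(1-n^{-\Omega(1)})$, the collision term being $\binom{t_1}{2}\lVert x\rVert_\infty^2 n^{t_1-1}$, and the absorption of $(1\pm n^{-\Omega(1)})^{O(\log n)}$ losses) are handled properly. The route is genuinely different from the paper's. The paper first pulls out $\lVert x\rVert_\infty^{2t_2}$ exactly as you do, but then argues by induction on the coordinates: it computes the conditional expectation of $x_{v_t}^2$ given the earlier draws, which equals $\bigl(n-\sum_{i<t}x_{v_i}^2-\sum_{i\in V}x_i^2\bigr)/\bigl(n-\lvert V\rvert-(t-1)\bigr)=1\pm O\bigl((\lvert V\rvert+t)\lVert x\rVert_\infty^2/n\bigr)$, and multiplies these factors to get the two-sided estimate $(1\pm n^{-\Omega(1)})^t$ in one stroke. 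You instead compare sampling without replacement to the unrestricted product sums, using monotonicity (dropping distinctness) for the upper bound and a first-order inclusion--exclusion over coinciding pairs for the lower bound. The paper's single conditional-expectation induction delivers the upper and lower bounds simultaneously and extends verbatim to intermediate moments, while your argument splits the two directions but is more elementary bookkeeping with no induction; both rest on the same two facts, $\sum_i x_i^2=n$ and $x_i^4\le\lVert x\rVert_\infty^2 x_i^2$, and give the same $1\pm n^{-\Omega(1)}$ precision.
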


From the case $t_2=0$, one can easily deduce the following bound on $\E\lprod P(Y),xx^\top\rprod$.
\begin{lemma}\label{nominator}
Under the same setting as lemma \ref{matrixCorrlation}, we have  $\E\lprod P(Y),xx^\top\rprod=(1\pm o(1))\lambda^{\ell}n^{\ell+1}$
\end{lemma}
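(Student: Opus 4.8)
The plan is to compute $\E\lprod P(Y),xx^\top\rprod$ directly, expanding it as a sum over self-avoiding walks and then exploiting the multilinearity of $\chi_\alpha(Y)$ in the entries of $W$. First I would write
\[
\lprod P(Y),xx^\top\rprod=\sum_{i,j\in[n]}x_ix_j\!\!\sum_{\alpha\in\textrm{SAW}_\ell(i,j)}\!\!\chi_\alpha(Y)=\sum_{\alpha=(i_0,\dots,i_\ell)}x_{i_0}x_{i_\ell}\,\chi_\alpha(Y),
\]
the last sum ranging over all length-$\ell$ self-avoiding walks on $[n]$, i.e.\ all ordered tuples of $\ell+1$ distinct vertices. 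The structural point is that such a walk traverses $\ell$ pairwise distinct edges, so since the off-diagonal entries of $W$ are independent with zero mean, $\E_W\chi_\alpha(Y)=\prod_{j<\ell}\E_W Y_{i_j,i_{j+1}}=\prod_{j<\ell}\lambda x_{i_j}x_{i_{j+1}}$; this is exactly \eqref{eq:intro-1} and uses nothing about $W$ beyond independence and unit-variance normalization of the mean. Telescoping gives $\E_W\chi_\alpha(Y)=\lambda^\ell x_{i_0}x_{i_\ell}\prod_{j=1}^{\ell-1}x_{i_j}^2$, so multiplying by $x_{i_0}x_{i_\ell}$ produces $\lambda^\ell\prod_{j=0}^{\ell}x_{i_j}^2$.

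Summing over all self-avoiding walks then yields $\E_W\lprod P(Y),xx^\top\rprod=\lambda^\ell\sum_{(i_0,\dots,i_\ell)\ \mathrm{distinct}}\prod_{j=0}^{\ell}x_{i_j}^2$. The number of ordered $(\ell+1)$-tuples of distinct indices is $n(n-1)\cdots(n-\ell)$, so this equals $\lambda^\ell\cdot n(n-1)\cdots(n-\ell)\cdot S_{\ell+1,0,\emptyset}$, where $S_{t_1,t_2,V}$ is the quantity from Lemma~\ref{criticalQuantity}. Now I would invoke the $t_2=0$ case of Lemma~\ref{criticalQuantity}: its hypotheses ($\ell=O(\log n)$ and $\|x\|_\infty^2=n^{1-\Omega(1)}$) are precisely the standing assumptions of Lemma~\ref{matrixCorrlation}, and it gives both $S_{\ell+1,0,\emptyset}\le 1+n^{-\Omega(1)}$ and $S_{\ell+1,0,\emptyset}\ge 1-n^{-\Omega(1)}$, hence $S_{\ell+1,0,\emptyset}=1\pm o(1)$. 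Finally, since $\ell=O(\log n)$ we have $n(n-1)\cdots(n-\ell)=n^{\ell+1}\prod_{k=1}^{\ell}(1-k/n)=n^{\ell+1}(1\pm o(1))$ because $\sum_{k\le\ell}k/n=O(\log^2 n/n)=o(1)$. Combining the two estimates gives $\E\lprod P(Y),xx^\top\rprod=(1\pm o(1))\lambda^\ell n^{\ell+1}$.

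There is no serious obstacle here: once Lemma~\ref{criticalQuantity} is available, this is a short bookkeeping computation. The only step that deserves care is the factorization of $\E_W\chi_\alpha(Y)$ over edges — one must use that self-avoidingness forces every $W$-variable to appear to the first power, which is exactly the feature making the argument insensitive to heavy tails (no higher moments of $W$ are ever touched). All of the genuinely quantitative content, namely that averaging $\prod_j x_{i_j}^2$ over distinct tuples stays close to $1$ rather than being distorted by a few large coordinates of $x$, is exactly what Lemma~\ref{criticalQuantity} with $t_2=0$ provides, so it is cleanly outsourced.
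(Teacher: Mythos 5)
Your proposal is correct and follows essentially the same route as the paper: expand the expectation over self-avoiding walks, use that self-avoidingness plus zero-mean independence of $W$ makes each walk contribute $\lambda^\ell\prod_j x_{i_j}^2$, and control the averaged product via Lemma~\ref{criticalQuantity} with $t_2=0$. The only (immaterial) difference is that you absorb the endpoints into one application of the lemma with $V=\emptyset$ and $t_1=\ell+1$, whereas the paper fixes $i,j$, applies the lemma to the $\ell-1$ interior vertices, and then contracts against $x_ix_j$ using $\|x\|^2=n$.
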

\begin{proof}
  We have
\begin{align*}
    \mathbb{E}P_{ij}(Y) &=  \sum_{\alpha\in  \textrm{SAW}_{\ell}(i,j)} \prod_{t=1}^{\ell-1} \lambda x_{\alpha_t}x_{\alpha_{t+1}}\\ &= \lambda^{\ell} \frac{(n-2)!}{(n-(\ell-1))!} x_ix_j\E_{\alpha \in \textrm{SAW}_{\ell}(i,j)}\left[ \prod_{t=1}^{\ell-1}x_{\alpha_t}^2\right] \\ & = (1+n^{-\Omega(1)}) 
    \lambda^{\ell} x_ix_j n^{\ell-1}\E_{\alpha \in \textrm{SAW}_{\ell}(i,j)}\left[ \prod_{t=1}^{\ell-1} x_{\alpha_t}^2\right] \, ,
\end{align*}
where the expectation is taken uniformly over $\alpha\in \textrm{SAW}_{\ell}(i,j)$. For simplicity of notation, we denote $\E_{\alpha \in \textrm{SAW}_{\ell}(i,j)}\prod_{t=1}^{\ell-1} x_{\alpha_t}^2$ as $S_{ij}$. Then according to lemma \ref{criticalQuantity}, we have $S_{ij}=1\pm o(1)$.
Therefore we have $\lprod P(Y),xx^\top\rprod=(1\pm o(1)) \lambda^{\ell} n^{\ell+1}$.
\end{proof}
To prove Lemma~\ref{matrixCorrlation}, the remaining task is to bound the second moment $\E \|P(Y)\|_F^2$.
We can expand the second moment in terms of pairs of self-avoiding walks.
For $\alpha,\beta\in\textrm{SAW}_{\ell}(i,j)$ and corresponding polynomials $\chi_\alpha(Y),\chi_\beta(Y)$, there is a close relationship between $\E[\chi_\alpha(Y)\chi_\beta(Y)]$ and the number of shared vertices and edges of $\alpha,\beta$.
Specifically,
\begin{align*}
    \E[\chi_\alpha(Y)\chi_\beta(Y)] &=\E\left[\prod_{(u,v)\in \alpha\cap\beta} Y_{uv}^2\prod_{(u,v)\in \alpha\Delta\beta} Y_{uv}\right]\\
    &=
    \prod_{(u,v)\in \alpha\Delta\beta}\lambda x_{u}x_v\prod_{(u,v)\in \alpha\cap \beta}\left(1+\lambda^2 x_u^2x_v^2\right)\\
    &= \lambda^{2\ell-2k} \prod_{u\in\textrm{deg}(\alpha\Delta\beta,2)} x_u^2
    \prod_{u\in \textrm{deg}(\alpha\Delta\beta,4)}x_u^4
    \prod_{(u,v)\in \alpha\cap\beta} (1+\lambda^2x_u^2x_v^2)
\end{align*}
where $k$ is number of shared edges between $\alpha,\beta$ and $\textrm{deg}(\alpha\Delta\beta,j)$ is the set of vertices with degree $j$ in the graph $\alpha\Delta\beta$. The size of $\textrm{deg}(\alpha\Delta\beta,4)$ is equal to the number of shared vertices which are not incident to any shared edge.
Thus for the analysis of $\E P_{ij}^2(Y)=\sum_{\alpha,\beta\in \textrm{SAW}_{\ell}(i,j)}\E [\chi_\alpha(Y)\chi_\beta(Y)]$, we classify pairs $\alpha,\beta\in \textrm{SAW}_{\ell}(i,j)$ according to numbers of shared edges and vertices between $\alpha,\beta$.  The following graph-theoretic lemma is needed for bounding the number of such pairs in each class; we will prove it in appendix \ref{proofQuantity}.

\begin{lemma}\label{simpleRelation}
Let $\alpha =\alpha_0,\alpha_1,\ldots,\alpha_\ell$ and $\beta = \beta_0,\beta_1,\ldots,\beta_\ell$ be two length-$\ell$ self-avoiding walks in the complete graph on $[n]$, with $\alpha_0 = \beta_0 = i$ and $\alpha_\ell = \beta_\ell = j$. Let $k$ be the number of shared edges between $\alpha,\beta$, $r$ be the number of shared vertices between $\alpha,\beta$ excluding $i,j$, and $s$ be the number of shared vertices which are not $i,j$ and not incident to shared edges.
Further we denote the number of connected components in $\alpha\cap\beta$ not containing $i,j$ as $p$. Then for $\alpha \neq \beta$ we have the relation $p\leq r-s-k$, and for $\alpha = \beta$ we have $p = s= 0$ and $r = k-1$.
\end{lemma}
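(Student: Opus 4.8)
The plan is to read $\alpha\cap\beta$ as the graph whose edge set is the set of edges common to $\alpha$ and $\beta$, with vertex set equal to the endpoints of those edges; under this reading a shared vertex lying on no shared edge contributes to $s$ but is not a vertex of $\alpha\cap\beta$, and this is the reading under which the statement holds. The first observation is structural: the edges of $\alpha\cap\beta$ form a subset of the edges of the simple path $\alpha$, so $\alpha\cap\beta$ is a disjoint union of sub-paths of $\alpha$, each carrying at least one edge; in particular it is a forest, so if it has $V'$ vertices and $q$ connected components then $V'=k+q$.

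Next I would count $V'$. A vertex of $\alpha\cap\beta$ is precisely a shared vertex incident to some shared edge. Of the $r$ shared vertices other than $i,j$, exactly $r-s$ are incident to a shared edge by the definition of $s$; moreover $i$ (and likewise $j$) is a vertex of $\alpha\cap\beta$ exactly when it lies on a shared edge. Writing $a\in\{0,1,2\}$ for the number of the two endpoints $i,j$ that lie on a shared edge, this gives $V'=(r-s)+a$ and hence $q=r-s+a-k$.

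The crux is to determine how many components of $\alpha\cap\beta$ meet $\{i,j\}$. Each component of $\alpha\cap\beta$ is a sub-path of the simple path $\alpha=\alpha_0,\ldots,\alpha_\ell$, and the only sub-path of $\alpha$ that contains both of its endpoints $\alpha_0=i$ and $\alpha_\ell=j$ is $\alpha$ itself; hence $i$ and $j$ can lie in a common component only if every edge of $\alpha$ is shared, which---both walks having exactly $\ell$ edges---forces $\alpha=\beta$. So when $\alpha\ne\beta$ the components meeting $\{i,j\}$ consist of the component through $i$ (if $i$ lies on a shared edge) together with the component through $j$ (if $j$ does), and these two are distinct, so their number equals $a$. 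Subtracting, $p=q-a=r-s-k$, which in particular yields $p\le r-s-k$. This last step---showing the $i$-component and the $j$-component are distinct, which is exactly where the hypothesis $\alpha\ne\beta$ is used---is the only real content of the argument; the vertex count is routine bookkeeping. Finally, for $\alpha=\beta$ every edge of $\alpha$ is shared, so $k=\ell$; every vertex of $\alpha$ then lies on a shared edge, so $s=0$; the shared graph is the connected path $\alpha$, which contains both $i$ and $j$, so $p=0$; and the shared vertices other than $i,j$ are exactly the $\ell-1$ internal vertices of $\alpha$, giving $r=\ell-1=k-1$.
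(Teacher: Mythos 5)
Your proof is correct and follows essentially the same route as the paper's: decompose $\alpha\cap\beta$ into path components of the path $\alpha$, count vertices against edges, and use the key fact that when $\alpha\neq\beta$ no component can contain both $i$ and $j$ (which the paper asserts tersely and you justify explicitly via the sub-path argument forcing $\alpha=\beta$). Your bookkeeping via the forest identity $V'=k+q$ yields the same equality $p=r-s-k$ as the paper's per-component contribution count.
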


We note that for self-avoiding walks $\alpha,\beta$, the connected components of $\alpha\cap\beta$ are all self-avoiding walks. A simple corollary of lemma \ref{criticalQuantity} turns out to be helpful, which we prove in appendix \ref{proofQuantity}
\begin{lemma}\label{corollarySegmentExpectation}
Suppose we have $x\in \mathbb{R}^n$ with norm $\sqrt{n}$.
If $V \subseteq [n]$ has $|V| = O(\log n)$ and if we average over size-$h$ directed self-avoiding walks $\xi$ on vertices $[n] \setminus V$, then for $h=O(\log n)$ we have the bounds

\begin{align*}
    \E_{\xi\subseteq [n]\setminus V}  \left[x_{\xi_0}^2 x_{\xi_h}^2 \prod_{(u,v)\in \xi} (1+\lambda^2 x_u^2x_v^2)\right] \leq & (1+n^{-\Omega(1)})\lVert x\rVert_{\infty}^2 (1+\lambda^2 \lVert x\rVert_{\infty}^2)^{h}\\
 \E_{\xi\subseteq [n]\setminus V} \left[x_{\xi_h}^2 \prod_{(u,v)\in \xi} (1+\lambda^2 x_u^2x_v^2)\right]  \leq & (1+n^{-\Omega(1)})(1+\lambda^2 \lVert x\rVert_{\infty}^2)^{h}\\ 
 \E_{\xi\subseteq [n]\setminus V} \left[x_{\xi_0}^2 \prod_{(u,v)\in \xi} (1+\lambda^2 x_u^2x_v^2)\right] \leq &  (1+n^{-\Omega(1)}) (1+\lambda^2 \lVert x\rVert_{\infty}^2)^{h}\\
 \E_{\xi\subseteq [n]\setminus V} \left[\prod_{(u,v)\in \xi} (1+\lambda^2 x_u^2x_v^2)\right] \leq &  (1+n^{-\Omega(1)}) (1+\lambda^2 \lVert x\rVert_{\infty}^2)^{h}
\end{align*}
where $\xi_0$ is the label of starting vertex of $\xi$, and $\xi_{h}$ is the label of the end vertex of $\xi$.
\end{lemma}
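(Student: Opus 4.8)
The plan is to expand the product over the edges of $\xi$ into monomials, average each monomial over the random walk using Lemma~\ref{criticalQuantity}, and recombine via the binomial theorem. Write $\xi=\xi_0,\xi_1,\dots,\xi_h$ with edges $e_i=\{\xi_{i-1},\xi_i\}$. First I would expand
\begin{equation*}
  \prod_{(u,v)\in\xi}\left(1+\lambda^2x_u^2x_v^2\right)=\sum_{E'\subseteq\{e_1,\dots,e_h\}}\lambda^{2\lvert E'\rvert}\prod_{(u,v)\in E'}x_u^2x_v^2 \, ,
\end{equation*}
and for a fixed $E'$ with $\lvert E'\rvert=t$ rewrite $\prod_{(u,v)\in E'}x_u^2x_v^2=\prod_v x_v^{2d_{E'}(v)}$, where $d_{E'}(v)\in\{0,1,2\}$ counts the edges of $E'$ at the walk vertex $v$. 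Because $\xi$ is a path, $E'$ is a vertex-disjoint union of subpaths of $\xi$, each with at least one edge; each such subpath has exactly two endpoints (of $E'$-degree $1$) and these are distinct across subpaths. Hence, writing $m_j$ for the number of vertices with $d_{E'}(v)=j$, we have $m_1+2m_2=2t$ and $m_1\geq2$ whenever $t\geq1$, so $m_2\leq t-1$ for $t\geq1$ (and $m_2=0$ for $t=0$); in particular $m_2\leq t$ always.

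Next I would take the expectation over the uniformly random self-avoiding walk $\xi$ on $[n]\setminus V$. For any fixed set $S$ of positions, $(\xi_s)_{s\in S}$ is a uniformly random ordered $\lvert S\rvert$-subset of $[n]\setminus V$, so $\E_\xi\prod_v x_v^{2d_{E'}(v)}=S_{m_1,m_2,V}$ in the notation of Lemma~\ref{criticalQuantity}. Since $\lvert V\rvert,m_1,m_2=O(\log n)$, Lemma~\ref{criticalQuantity} yields $\E_\xi\prod_v x_v^{2d_{E'}(v)}\leq(1+n^{-\Omega(1)})\lVert x\rVert_\infty^{2m_2}\leq(1+n^{-\Omega(1)})\lVert x\rVert_\infty^{2t}$, where the last step uses $m_2\leq t$ together with $\lVert x\rVert_\infty\geq1$ (a consequence of $\lVert x\rVert=\sqrt n$). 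Summing over all $E'$ and using $\sum_{t=0}^{h}\binom{h}{t}(\lambda^2\lVert x\rVert_\infty^2)^t=(1+\lambda^2\lVert x\rVert_\infty^2)^h$ proves the fourth (last) inequality of the lemma.

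For the first three inequalities the only difference is the extra weights $x_{\xi_0}^2$ and/or $x_{\xi_h}^2$, which merely shift the exponents at the two endpoint vertices. Since $\xi_0$ meets at most the single edge $e_1$ (and $\xi_h$ at most $e_h$), the exponent of $\xi_0$ in $\prod_v x_v^{2d_{E'}(v)}$ is $0$ or $2$, and multiplying by $x_{\xi_0}^2$ makes it $2$ or $4$; similarly for $\xi_h$. A short case analysis — using $m_2\leq t-1$ for $t\geq1$ and the fact that $e_1\in E'$ (resp. $e_h\in E'$) already forces $t\geq1$ — shows that the number of exponent-$4$ vertices in the resulting monomial is at most $t+1$ when both weights are present and at most $t$ when exactly one is; any vertex that newly appears with exponent $2$ only enlarges $m_1$, which does not affect the bound of Lemma~\ref{criticalQuantity}. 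Feeding these counts into Lemma~\ref{criticalQuantity} and summing over $E'$ as before yields the prefactor $\lVert x\rVert_\infty^2$ in front of $(1+\lambda^2\lVert x\rVert_\infty^2)^h$ for the inequality carrying $x_{\xi_0}^2x_{\xi_h}^2$ and no extra prefactor for the two inequalities carrying a single endpoint weight.

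The step I expect to be most delicate is the endpoint bookkeeping in the last paragraph: one must check carefully that the extra $x^2$ weights at $\xi_0$ and $\xi_h$ can raise the count of exponent-$4$ vertices by at most one beyond $t$ (and only when both weights are present), which rests on the structural fact that any nonempty subset of the edges of a path has at least two degree-one vertices. Everything else is a routine combination of the binomial expansion with Lemma~\ref{criticalQuantity}, and the uniform $n^{-\Omega(1)}$ error is harmless because all exponent parameters involved are $O(\log n)$.
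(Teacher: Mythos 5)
Your proposal is correct and follows essentially the same route as the paper's proof: expand the edge product via the binomial theorem, bound the number of exponent-$4$ variables in each monomial (at most $\lvert E'\rvert+1$ with both endpoint weights, at most $\lvert E'\rvert$ otherwise), apply Lemma~\ref{criticalQuantity}, and resum with $\sum_{t}\binom{h}{t}(\lambda^2\lVert x\rVert_\infty^2)^t=(1+\lambda^2\lVert x\rVert_\infty^2)^h$. Your path-structure argument ($m_2\leq t-1$ via the two degree-one endpoints of each run) and the explicit use of $\lVert x\rVert_\infty\geq 1$ simply spell out the bookkeeping that the paper states tersely and defers for the remaining three bounds with ``very similar ways.''
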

These bounds hold since we can expand the product into a sum of monomials and apply lemma \ref{criticalQuantity} for each monomial.

Now for self-avoiding walk pairs $(\alpha,\beta)$ intersecting on a given number of edges and vertices, we bound the correlation of corresponding polynomials and hence the contribution to the variance of $P$.
For simple expressions, we take $\lambda=O(n^{-1/2})$.
\begin{definition}
On the complete graph $K_n$, for pairs of self-avoiding walks $(\alpha,\beta)$ and $(\gamma,\xi)$, we say that $(\alpha,\beta)$ is isomorphic to $(\gamma,\xi)$ if there is a permutation $\pi \, : \, [n] \rightarrow [n]$ fixing $i,j$ such that $\pi(\alpha) = \gamma$ and $\pi(\beta) = \xi$.
We partition all pairs of length-$\ell$ self-avoiding walks between vertices $i,j$ into isomorphism classes.
We denote the set of all isomorphism classes containing pairs length-$\ell$ self-avoiding walks between vertices $i,j$ sharing $r$ vertices and $k$ edges as $\textrm{shape}(k,r,i,j)$.
\end{definition}
We note that $r<k$ is only possible when $r+1=k=\ell$ (that is, the two paths are identical).
\begin{lemma}[Self-avoiding walk polynomial correlation]\label{sawCorrelation}
In the spiked Wigner model $Y=\lambda xx^\top +W$, where $x$ has norm $\sqrt{n}$ and $W$ is symmetric with entries independently sampled with zero mean and unit variance, for any isomorphism class $\mathcal{S}\in \textrm{shape}(k,r,i,j)$ , we have
\begin{equation*}
  \E_{(\alpha,\beta)\sim \mathcal{S}}\E_{W} \left[\chi_\alpha(Y)\chi_\beta(Y)\right]\leq
  \begin{cases*}
  (1+n^{-\Omega(1)}) \lambda^{2\ell-2k}\lVert x\rVert_{\infty}^{2(r-k)}\left(1+\lambda^2 \lVert x\rVert_{\infty}^2\right)^k & if $r\geq k$ \\
   (1+n^{-\Omega(1)}) \left(1+\lambda^2 \lVert x\rVert_{\infty}^2\right)^k    & if $r+1=k=\ell$
  \end{cases*}
\end{equation*}
where $(\alpha,\beta) \sim \mathcal{S}$ is taken uniformly over the isomorphism class $\mathcal{S}$ and $\chi_\alpha(Y)=\prod_{(u,v)\in\alpha} Y_{u,v}$. 
\end{lemma}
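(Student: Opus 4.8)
The plan is to reduce the whole estimate to the per-pair identity for $\E_W\bigl[\chi_\alpha(Y)\chi_\beta(Y)\bigr]$ derived above, and then to average that identity over the isomorphism class $\mathcal S$ by exploiting its product structure. The key observation is that inside a class $\mathcal S\in\textrm{shape}(k,r,i,j)$ the combinatorial type of the pair is frozen: the shared vertex set, the shared edge set, hence the connected components of $\alpha\cap\beta$ (each itself a self-avoiding walk, call it a \emph{segment}), and the detour subpaths along which $\alpha$ and $\beta$ leave and re-enter the shared part, are all fixed up to relabeling. Thus drawing $(\alpha,\beta)\sim\mathcal S$ is the same as drawing a uniformly random injection of all template vertices other than $i,j$ into $[n]\setminus\{i,j\}$.

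Next I would read off the factorization of the per-pair expression. Every vertex of $\alpha\Delta\beta$ has degree $0$, $2$ or $4$, so
\[
\lambda^{2\ell-2k}\prod_{u\in\deg(\alpha\Delta\beta,2)}x_u^2\ \prod_{u\in\deg(\alpha\Delta\beta,4)}x_u^4\ \prod_{(u,v)\in\alpha\cap\beta}\bigl(1+\lambda^2x_u^2x_v^2\bigr)
\]
equals $\lambda^{2\ell-2k}$ times a product of local terms: (i) for each segment avoiding $i,j$, a term $x_a^2x_b^2\prod_{(u,v)}(1+\lambda^2x_u^2x_v^2)$ with $a,b$ its endpoints (interior segment vertices have degree $0$ in $\alpha\Delta\beta$); (ii) for a segment incident to $i$ or $j$, the same term with the factor of that distinguished endpoint removed, since a shared segment-endpoint has degree $0$ in $\alpha\Delta\beta$; (iii) a term $x_u^4$ for each of the $s$ degree-$4$ vertices, i.e. the shared vertices lying on no shared edge; (iv) a term $x_u^2$ for each non-shared detour vertex; and (v) the fixed factors $x_i^2$ resp.\ $x_j^2$ whenever $i$ resp.\ $j$ lies on no shared edge. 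The segments, detour subpaths and isolated shared vertices use disjoint sets of free labels, so I can take $\E_{(\alpha,\beta)\sim\mathcal S}$ group by group via the tower rule: place one group's vertices uniformly at random avoiding the labels (and $i,j$) already used, bound that conditional expectation, and recurse.

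Each conditional expectation is then controlled by the two technical lemmas already proved. Since $\ell=O(\log n)$, every set $V$ of previously-used labels has size $O(\log n)$, so Lemma~\ref{criticalQuantity} and Lemma~\ref{corollarySegmentExpectation} apply. A segment of length $m$ avoiding $i,j$ contributes at most $(1+n^{-\Omega(1)})\|x\|_\infty^{2}(1+\lambda^2\|x\|_\infty^2)^{m}$ by the first line of Lemma~\ref{corollarySegmentExpectation}; a segment of length $m$ incident to a distinguished vertex contributes at most $(1+n^{-\Omega(1)})(1+\lambda^2\|x\|_\infty^2)^{m}$ by its second or third line, with the fixed endpoint treated as an element of $V$; the $s$ isolated shared vertices jointly contribute at most $(1+n^{-\Omega(1)})\|x\|_\infty^{2s}$ by Lemma~\ref{criticalQuantity}; and each detour subpath contributes at most $1+n^{-\Omega(1)}$ by Lemma~\ref{criticalQuantity} with $t_2=0$. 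Multiplying, the $(1+\lambda^2\|x\|_\infty^2)$-factors across all segments combine to $(1+\lambda^2\|x\|_\infty^2)^{k}$ because the segments partition the $k$ shared edges; the $O(\log n)$ many $(1+n^{-\Omega(1)})$-factors combine to $1+n^{-\Omega(1)}$; and the total power of $\|x\|_\infty$ produced is $2(p+s)$, where $p$ is the number of segments avoiding $i,j$. By the counting in Lemma~\ref{simpleRelation}, $p\le r-s-k$, hence $2(p+s)\le 2(r-k)$, and since $\|x\|=\sqrt n$ forces $\|x\|_\infty\ge1$ we may enlarge $\|x\|_\infty^{2(p+s)}$ to $\|x\|_\infty^{2(r-k)}$; together with $\lambda^{2\ell-2k}$ this gives the stated bound for $r\ge k$ (the fixed endpoint factors $x_i^2,x_j^2\le\|x\|_\infty^2$ from item (v) are simply carried along, and are harmless when this estimate is summed over $i,j$ in Lemma~\ref{matrixCorrlation}, since $\sum_{i,j}x_i^2x_j^2=\|x\|^4$). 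The case $r+1=k=\ell$ is exactly $\alpha=\beta$: then $\alpha\cap\beta$ is a single length-$\ell$ segment from $i$ to $j$, there are no detours and $p=s=0$, and the fourth line of Lemma~\ref{corollarySegmentExpectation} with both endpoints fixed yields $(1+n^{-\Omega(1)})(1+\lambda^2\|x\|_\infty^2)^{k}$.

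I expect the main difficulty to be bookkeeping rather than any single hard inequality: extracting the factorization from the degree structure of $\alpha\Delta\beta$, handling the distinguished vertices $i,j$ and the isolated shared vertices, and verifying that the exponent of $\|x\|_\infty$ telescopes to precisely $2(r-k)$ through Lemma~\ref{simpleRelation}. The one conceptual point that must not be fumbled is why averaging over the class is essential: a worst-case per-pair bound would leave factors $(1+\lambda^2\|x\|_\infty^4)^{k}$, which is far too large since $\lambda^2\|x\|_\infty^4$ need not be bounded, whereas averaging a segment over random labels collapses the fourth powers of coordinates that appear when $\prod(1+\lambda^2x_u^2x_v^2)$ is expanded down to a single harmless $\|x\|_\infty^2$ — and this is exactly what produces the $(1+\lambda^2\|x\|_\infty^2)^{k}$ in the statement.
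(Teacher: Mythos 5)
Your proposal is correct and follows essentially the same route as the paper: the same per-pair identity for $\E_W[\chi_\alpha(Y)\chi_\beta(Y)]$, the same decomposition of the pair into shared segments, isolated shared vertices, and detour vertices, the same averaging over labels within the isomorphism class controlled by Lemmas~\ref{criticalQuantity} and~\ref{corollarySegmentExpectation}, and the same use of Lemma~\ref{simpleRelation} (via $p\le r-s-k$ and $\lVert x\rVert_\infty\ge 1$) to collect the exponent $\lVert x\rVert_\infty^{2(r-k)}$, with the $\alpha=\beta$ case treated separately. The only divergence is that you retain the factors $x_i^2,x_j^2$ arising when $i$ or $j$ is not incident to a shared edge, whereas the paper's statement and proof silently drop them; as literally written the entrywise bound can fail when $x_i^2x_j^2\gg 1$ (already for $k=r=0$ the left-hand side is $\approx \lambda^{2\ell}x_i^2x_j^2$), so your version --- carrying these fixed factors and noting that $\sum_{i,j}x_i^2x_j^2=\lVert x\rVert^4$ renders them harmless in Lemma~\ref{matrixCorrlation} --- is the more accurate formulation and the right repair, not a gap in your argument.
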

\begin{proof}
We first consider the case $r\geq k$, where $\alpha \neq \beta$. For each $\alpha,\beta$ intersecting on $k$ edges and $r$ vertices, we have the following bound:
\begin{equation*}
    \E[\chi_\alpha(Y)\chi_\beta(Y)] = \lambda^{2\ell-2k} \prod_{u\in\textrm{deg}(\alpha\Delta\beta,2)} x_u^2
    \prod_{u\in \textrm{deg}(\alpha\Delta\beta,4)}x_u^4
    \prod_{(u,v)\in \alpha\cap\beta} (1+\lambda^2x_u^2x_v^2)
\end{equation*}
where $\textrm{deg}(\alpha\Delta\beta,j)$ is the set of vertices with degree $j$ in the graph $\alpha\Delta\beta$. 

For any subgraph $G$ of $K_n$, we denote by $V(G)$ the set of vertices incident to edges in $G$.
We denote $\lvert\textrm{deg}(\alpha\Delta\beta,4)\rvert$  as $s$ and the number of shared vertices between $\alpha,\beta$ excluding $i,j$ as $r$. We denote the number of connected components in $\alpha\cap\beta$ not containing $i,j$ as $p$. Then we have the relation $p\leq r-s-k$ for $\alpha\neq \beta$ according to lemma \ref{simpleRelation}.
 We note that $\alpha\cap\beta$ can be decomposed into a set of disjoint self-avoiding walks, which we denote as $\textrm{SAW}(\alpha\cap\beta)$.

Now we take the expectation over $(\alpha,\beta)$ on isomorphism class $\mathcal{S}$. This is equivalent to taking uniform expectation over the labeling of the $2(\ell-1)-r$ vertices $\alpha,\beta$ which are not equal to $i$ or $j$.
Then we have
   \begin{align*}
    & \E_{(v_1,v_2,\ldots v_{2(\ell-1)-r})} \left[\prod_{u\in\textrm{deg}(\alpha\Delta\beta,2)} x_{u}^2 
    \prod_{u\in \textrm{deg}(\alpha\Delta\beta,4)}x_{u}^4  \prod_{\xi \in \textrm{SAW}(\alpha\cap\beta)}\left(\prod_{(u,v)\in \xi}(1+\lambda x_u^2 x_v^2)\right)\right]  \\ 
    \leq  & (1+n^{-\Omega(1)})\lVert x\rVert_{\infty}^{2s} \E_{(v_1,v_2,\ldots v_{2(\ell-1)-r})}\left[\prod_{\substack{u\in   V(\alpha\cap\beta)\\ u\in V(\alpha\Delta\beta)}}x_u^2 \prod_{\xi \in \textrm{SAW}(\alpha\cap\beta)}\left(\prod_{(u,v)\in \xi}(1+\lambda x_u^2 x_v^2)\right) \right]
    \\  \leq  & (1+n^{-\Omega(1)})
    \lVert x\rVert_{\infty}^{2p+2s}\left(1+\lambda^2 \lVert x\rVert_{\infty}^2\right)^k \leq (1+n^{-\Omega(1)}) \lVert x\rVert_{\infty}^{2(r-k)}\left(1+\lambda^2 \lVert x\rVert_{\infty}^2\right)^k
\end{align*} 
where we use lemma \ref{criticalQuantity} in the first inequality, lemma \ref{corollarySegmentExpectation} in the second inequality, and lemma \ref{simpleRelation} in the last inequality.This proves the first claim.

For any isomorphism class $\mathcal{S}\in \textrm{shape}(k,r,i,j)$ with $k=r+1=\ell$, and $(\alpha,\beta)\in \mathcal{S}$,  we have $\alpha=\beta$. In this case we have
\begin{equation*}
    \E_{W} \left[\chi_\alpha(Y)\chi_\beta(Y)\right]=\prod_{(u,v)\in \alpha} (1+\lambda^2 x_u^2x_v^2)=\prod_{i\in [\ell]} \left(1+\lambda^2 x_{v_{i-1}}^2x_{v_{i}}^2\right)
\end{equation*}
By lemma \ref{corollarySegmentExpectation}, taking expectation over the labeling of the $\ell-1$ vertices in $\alpha$ which are not equal to $i,j$, we have
\begin{equation*}
    \E_{v_1,v_2,\ldots,v_{\ell-1}} \prod_{i\in [\ell]} \left(1+\lambda^2 x_{v_{i-1}}^2x_{v_{i}}^2\right) \leq \left(1+n^{-\Omega(1)}\right) \left(1+\lambda \lVert x\rVert_{\infty}^2\right)^{\ell} 
\end{equation*}
This proves the second claim.
\end{proof}

Now we finish the proof of lemma \ref{matrixCorrlation}. 
\begin{proof}[Proof of Lemma \ref{matrixCorrlation}]

We bound the variance of the estimator $P_{ij}(Y)$. As stated above,
\begin{equation}
    \E P^2_{ij}(Y) =
    \sum_{\alpha,\beta\in \textrm{SAW}_{\ell}(i,j)} \lambda^{2\ell-2k} \prod_{u\in\textrm{deg}(\alpha\Delta\beta,2)} x_u^2
    \prod_{u\in \textrm{deg}(\alpha\Delta\beta,4)}x_u^4
    \prod_{(u,v)\in \alpha\cap\beta} (1+\lambda^2x_u^2x_v^2)\label{eq:variance},
\end{equation}
where $k$ is number of shared edges between $\alpha,\beta$ and $\textrm{deg}(\alpha\Delta\beta,j)$ is the set of vertices with degree $j$ in the graph $\alpha\Delta\beta$.


We note that for fixed $i,j,r,k$ there are at most  $n^{2(\ell-1)-r}\ell^{O(r-k)}$ pairs of $\alpha,\beta$. For fixed $k,r$, we apply lemma \ref{sawCorrelation}. For $k<r$ the contribution to summation \ref{eq:variance} is bounded by
\begin{align*}\label{eq:contribution}
  & n^{2(\ell-1)-r} \ell^{O(r-k)}\E_{\mathcal{S}\sim  \textrm{shape}(k,r,i,j)} \left[ \lambda^{2\ell-2k}\lVert x\rVert_{\infty}^{2(r-k)}\left(1+\lambda^2 \lVert x\rVert_{\infty}^2\right)^k\right]\\
 & \qquad =  n^{-2} \cdot n^{2\ell} \cdot \lambda^{2\ell} \cdot n^{-r} \ell^{O(r-k)}\lambda^{-2k}\lVert x\rVert_{\infty}^{2(r-k)}\left(1+\lambda^2 \lVert x\rVert_{\infty}^2\right)^k\\
\end{align*}


where $\mathcal{S}$ is sampled with some distribution over all shapes in $\textrm{shape}(k,r,i,j)$. 

For $k=r+1=\ell$, if we take $\ell=C\log_{\lambda^2 n} n$ with constant $C$ large enough, then the contribution to summation \ref{eq:variance} is bounded by
\begin{equation*}
    n^{\ell-1}  \left(1+\lambda^2 \lVert x\rVert_{\infty}^2\right)^\ell\leq n^{-\Omega(1)} n^{2(\ell-1)}\lambda^{2\ell}
\end{equation*}
Combining all possible $k,r$, we have summation \ref{eq:variance} bounded by 
\begin{equation*}
     n^{2(\ell-1)}\lambda^{2\ell}\left[n^{-\Omega(1)} +\sum_{k=0}^{\ell-1} \left(  n^{-k}\lambda^{-2k}\left(1+\lambda^2 \lVert x\rVert_{\infty}^2\right)^k\right)\sum_{r=k}^{\ell-1} \left(\ell^{O(r-k)}\lVert x\rVert_{\infty}^{2(r-k)}n^{  k-r}\right) \right]
\end{equation*}
Since $\lambda\lVert x\rVert_{\infty}=n^{-\Omega(1)}$, we have $n^{-1}\lambda^{-2}\left(1+\lambda^2 \lVert x\rVert_{\infty}^2\right)^k\leq \frac{1}{1-\delta/2}$. Thus $\sum_{k=0}^{\ell-1} \left(  n^{-k}\lambda^{-2k}\left(1+\lambda^2 \lVert x\rVert_{\infty}^2\right)^k\right)\leq \delta^{-O(1)}$. On the other hand,  since $\ell^{O(1)}\lVert x\rVert_{\infty}^2n^{-1}=n^{-\Omega(1)}$ by the assumption on $\lVert x\rVert_{\infty}$, we have $\sum_{r=k}^{\ell-1} \left(\ell^{O(r-k)}\lVert x\rVert_{\infty}^{2(r-k)}n^{  k-r}\right)\leq 1+n^{-\Omega(1)}$. Thus summation \ref{eq:variance} is bounded by $\delta^{-O(1)}\lambda^{2\ell}n^{2\ell-2}$.

Summing over $n^2$ pairs of $i,j$ we have 
\begin{equation*}
    \E \lVert P(Y)\rVert_F^2\leq \delta^{-O(1)}\left(n^{2\ell} \lambda^{2\ell}\right)
\end{equation*}


Combining with lemma \ref{nominator} we have 
$\frac{\E\lprod P(Y),xx^\top\rprod}{n\left(\E\lVert P(Y)\rVert_F^2\right)^{1/2}}=\delta^{O(1)}=\Omega(1)$ and the lemma follows.
\end{proof}

Finally, using color-coding method, the degree $O(\log n)$ polynomial $P(Y)$ can be well approximated in polynomial time, which we prove in appendix~\ref{sawEvaluation}
\begin{lemma}[Formally stated in appendix~\ref{sawEvaluation}]\label{sawEva}
For $\delta=\lambda n^{1/2}-1>0$ and  $\ell=O(\log_{1+\delta} n)$, $P(Y)$ can be accurately evaluated in $n^{\delta^{-O(1)}}$ time.
\end{lemma}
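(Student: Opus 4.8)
The plan is to use the color-coding technique of \cite{alon1995color} — exactly as it was used for self-avoiding-walk estimators in \cite{8104074} — while observing that the second-moment computation behind Lemma~\ref{matrixCorrlation} already certifies that the color-coding estimator is accurate enough for our purposes. Color the vertices $[n]$ independently and uniformly with $\ell+1$ colors, say $c\colon[n]\to[\ell+1]$, and call a length-$\ell$ walk \emph{colorful} if its $\ell+1$ vertices get pairwise distinct colors; every colorful walk is automatically self-avoiding. For a fixed coloring $c$ and a fixed source $i$, a dynamic program over the states $T_t[v,S]$ indexed by a vertex $v\in[n]$, a color set $S\subseteq[\ell+1]$ with $|S|=t+1$, and $t\le\ell$ — with base case $T_0[v,S]=\mathbbm{1}[v=i,\ S=\{c(v)\}]$ and update $T_{t+1}[v,S]=\mathbbm{1}[c(v)\in S]\sum_{u\in[n]}Y_{uv}\,T_t[u,S\setminus\{c(v)\}]$ — computes, after $\ell$ rounds, $Q^{(c)}_{ij}:=\sum_S T_\ell[j,S]=\sum_\alpha \chi_\alpha(Y)$, where the last sum ranges over all length-$\ell$ colorful self-avoiding walks $\alpha$ from $i$ to $j$. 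This DP touches $O(n\cdot 2^{\ell+1})$ states, each updated in time $O(n)$, so producing $Q^{(c)}_{ij}$ for all $i,j$ (one DP per source) costs $O(n^3 2^{\ell+1})$; since $\ell=O(\log_{1+\delta}n)$ we have $2^{\ell+1}=n^{O(1/\log(1+\delta))}=n^{\delta^{-O(1)}}$, so this is $n^{\delta^{-O(1)}}$.

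Next I would pass from colorful walks back to all self-avoiding walks. A fixed length-$\ell$ walk is colorful with probability exactly $q:=(\ell+1)!/(\ell+1)^{\ell+1}\ge e^{-(\ell+1)}$, and since $\ell=O(\log_{1+\delta}n)$ this gives $1/q\le n^{\delta^{-O(1)}}$. Drawing $K:=\lceil 1/q\rceil$ independent colorings $c_1,\dots,c_K$ and setting $\widetilde P_{ij}:=\tfrac1{qK}\sum_{m=1}^K Q^{(c_m)}_{ij}$ — which is symmetric, because reversing a colorful self-avoiding walk from $i$ to $j$ yields a colorful self-avoiding walk from $j$ to $i$ with the same monomial $\chi_\alpha(Y)$ — linearity over the colorings gives $\E_c\widetilde P_{ij}=P_{ij}(Y)$. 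Thus $\widetilde P(Y)$ is an unbiased estimate of the self-avoiding walk matrix; in particular $\E\langle\widetilde P(Y),xx^\top\rangle=\E\langle P(Y),xx^\top\rangle=(1\pm o(1))\lambda^\ell n^{\ell+1}$ by Lemma~\ref{nominator}. The total running time is $K$ times the DP cost, still $n^{\delta^{-O(1)}}$.

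The step I expect to do the real work is bounding $\E\lVert\widetilde P(Y)\rVert_F^2$ by a constant multiple of $\E\lVert P(Y)\rVert_F^2$; this is all that is needed, since combined with the unbiasedness above it makes $\widetilde P$ satisfy the conclusion of Lemma~\ref{matrixCorrlation} and hence a legitimate input to the rounding procedure of Theorem~\ref{correlationPreservation}. Expanding $\widetilde P_{ij}^2$, the $K(K-1)$ cross terms $m\neq m'$ contribute, after $\E_c$, exactly $\tfrac{K-1}{K}\E_W P_{ij}(Y)^2\le\E_W P_{ij}(Y)^2$, while the $K$ diagonal terms contribute $\tfrac1{Kq^2}\sum_{\alpha,\beta\in\textrm{SAW}_\ell(i,j)}\Pr_c[\alpha,\beta\text{ both colorful}]\cdot\E_W[\chi_\alpha(Y)\chi_\beta(Y)]$. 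Here I would invoke the explicit expansion of $\E_W[\chi_\alpha(Y)\chi_\beta(Y)]$ derived above (the one used in \eqref{eq:variance}): it is a product of the manifestly non-negative factors $\lambda^{2\ell-2k}$, the $x_u^2$, the $x_u^4$, and the $1+\lambda^2x_u^2x_v^2$, hence $\E_W[\chi_\alpha(Y)\chi_\beta(Y)]\ge 0$ for every pair $\alpha,\beta$. Combining this with the trivial inequality $\Pr_c[\alpha,\beta\text{ both colorful}]\le\Pr_c[\alpha\text{ colorful}]=q$ collapses the diagonal contribution to at most $\tfrac1{Kq}\sum_{\alpha,\beta}\E_W[\chi_\alpha(Y)\chi_\beta(Y)]=\tfrac1{Kq}\E_W P_{ij}(Y)^2$. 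Summing over $i,j$ and using $K\ge 1/q$ yields $\E\lVert\widetilde P(Y)\rVert_F^2\le(1+\tfrac1{Kq})\E\lVert P(Y)\rVert_F^2\le 2\,\E\lVert P(Y)\rVert_F^2\le\delta^{-O(1)}\lambda^{2\ell}n^{2\ell}$.

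Putting it together, $\frac{\E\langle\widetilde P(Y),xx^\top\rangle}{n(\E\lVert\widetilde P(Y)\rVert_F^2)^{1/2}}\ge\frac1{\sqrt2}\cdot\frac{\E\langle P(Y),xx^\top\rangle}{n(\E\lVert P(Y)\rVert_F^2)^{1/2}}=\delta^{O(1)}$, so the matrix $\widetilde P(Y)$, computable in time $n^{\delta^{-O(1)}}$, is an accurate enough surrogate for $P(Y)$: feeding it to Theorem~\ref{correlationPreservation} (with $\lVert x\rVert^4=n^2$) produces the required estimator $\hat x$. The only genuinely delicate point is the non-negativity observation $\E_W[\chi_\alpha(Y)\chi_\beta(Y)]\ge0$, which is what lets one control the color-coding variance via the crude bound on $\Pr_c[\alpha,\beta\text{ both colorful}]$ \emph{without} passing to absolute values $|\chi_\alpha(Y)|$ — doing the latter would destroy the cancellations responsible for the $\lambda^{2\ell}$ scaling and inflate the bound to quasipolynomial size. (If a deterministic algorithm is desired, one may instead derandomize using the $(n,\ell+1)$-perfect hash families of \cite{alon1995color}, at the mild cost of accounting for the multiplicity with which each self-avoiding walk is colorful across the family.)
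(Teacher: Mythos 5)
Your proposal is correct and follows essentially the same route as the paper's Appendix~\ref{sawEvaluation}: a color-coding dynamic program (the paper's $HM^{\ell-2}N$ product is exactly your DP), unbiasedness over random colorings, and a relative second-moment bound for the averaged estimator that hinges on the same key facts the paper uses, namely $\E_W[\chi_\alpha(Y)\chi_\beta(Y)]\ge 0$ and the $\exp(-O(\ell))$ lower bound on the colorful probability, before feeding the result into Lemma~\ref{matrixCorrlation} and Theorem~\ref{correlationPreservation}. Your bookkeeping (bounding $\Pr_c[\alpha,\beta\text{ both colorful}]$ by $q$ so that $K\approx 1/q$ colorings suffice) is marginally sharper than the paper's $C=\exp(100\ell)$, but this is a cosmetic difference, not a different argument.
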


The evaluation algorithm \ref{algoSelfAvoidingWalkMain} is based on the idea of color-coding method\cite{alon1995color}. Similar algorithm has already appeared and analyzed in the literature \cite{8104074}.  

\begin{algorithm}
\KwData{Given $Y\in \mathbb{R}^{n\times n}$ s.t $Y=\lambda xx^\top+W$}
\KwResult{$P(Y)\in\mathbb{R}^{n\times n}$ where $P_{ij}(Y)$ is the sum of multilinear monomials corresponding to length   $\ell$ self-avoiding walk between $i,j$(up to accuracy $1+n^{-\Omega(1)}$)}
 $C\gets \exp(100\ell)$\;
 \For{$i\gets1$ \KwTo $C$}{
     Sample coloring $c_t:[n]\mapsto [\ell]$ uniformly at random\;
   Construct a $\mathbb{R}^{2^{\ell}n\times 2^{\ell} n}$ matrix $M$, with rows and columns indexed by $(v,S)$, where $v\in [n]$  and $S$ is a subset of $[\ell]$\;
    a matrix $H\in \mathbb{R}^{n\times  2^{\ell} n}$ with rows indexed by $[n]$ and columns indexed by $(v,S)$ where $v\in [n]$  and $S$ is a subset of $[\ell]$\;
    
    a matrix $N\in \mathbb{R}^{2^\ell n\times n}$, with rows  indexed by $(v,S)$ where $S$ is a subset of $[\ell]$ and columns indexed by $[n]$\;
    
    Record matrix $p_{c_i}= HM^{\ell-2}N$\;}
Return  $\sum _{i=1}^C p_{c_i}/C$
 \caption{Algorithm for evaluating self-avoiding walk matrix }\label{algoSelfAvoidingWalkMain}
 \end{algorithm}

We describe how to construct matrices $H,M,N$ used in the algorithm \ref{algoSelfAvoidingWalkMain} given coloring $c:[n]\mapsto [\ell]$. 
 For matrix $M$, the entry $M_{(v_1,S),(v_2,T)}=Y_{v_1,v_2}$ if $S\cup \{c(v_1)\}=T$ and $c(v_1)\not\in S$. Otherwise $M_{(v_1,S),(v_2,T)}=0$. For matrix $H$, the entry $H_{v_1,(v_2,S)}=Y_{v_1,v_2}$ if $S=\{c(v_1)\}$. Otherwise $H_{v_1,(v_2,S)}=0$. For matrix $N$, the entry $N_{(v_1,S),v_2}=Y_{v_1,v_2}$ if $c(v_1)\not\in  S$ and $S\cup \{c(v_1),c(v_2)\}=[\ell]$. Otherwise $N_{(v_1,S),v_2}=0$.
 
 The critical observation is that for coloring $c:[n]\mapsto [\ell]$ sampled uniformly at random, we have $\E_c p_{c}(Y)=P(Y)$. By averaging over lots of such random colorings, we have an unbiased estimator with low variance. The proof is deferred to appendix \ref{sawEvaluation}.


Combining theorem \ref{correlationPreservation}, lemma \ref{matrixCorrlation},\ref{sawEva}, we have theorem \ref{thm:intro-matrix}.

\subsection{Guarantee and failure of Truncation algorithm}\label{truncation}
 In this section we show that while truncating entries  at threshold $\tau(n)$ can help on many occasions, it can fail for some noise distributions  we consider.
 




The class of truncation algorithm we consider can be described as following: 
\begin{algo}\label{truncationAlgorithm}
Given matrix $Y\in \mathbb{R}^{n\times n}$, set truncation threshold $\tau=\tau(n)$.
We first obtain $Y^{\prime}$ by truncating the entries $Y_{ij}$ with magnitude larger than $\tau$ to $\textrm{sgn}(Y_{ij})\tau$.
  Then, we obtain $Y''$ by subtracting the average value of all entries in $Y'$.
  Finally we extract the top eigenvector of $Y^{\prime\prime}$.
 \end{algo}

First we show that for many long tail distributions, PCA algorithm can be saved by such truncation. 
(We defer the proof to appendix \ref{truncation-proof-section}.)

 
  

\begin{theorem}\label{truncationProof}
  Consider problem \ref{defWigner} such that the signal-to-noise ratio satisfies $\epsilon=n^{1/2}\lambda-1=\Omega(1)$, the upper triangular entries of $W$ are identically distributed, the entries of $X=\lambda xx^\top$ are bounded by $o(1)$, and the entries of $x$ sum to $0$.
  Then, for $\tau=\frac{100}{\textrm{min}(\epsilon^2,1)}$, the algorithm \ref{truncationAlgorithm} outputs unit norm estimator $\hat{x}\in \mathbb{R}^n$ s.t $\lprod x,\hat{x}\rprod^2 =\Omega(n)$:
\end{theorem}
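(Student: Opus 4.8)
The plan is to show that, under the stated hypotheses, the matrix $Y''$ produced by Algorithm~\ref{truncationAlgorithm} is --- up to a perturbation of small spectral norm --- a genuine spiked Wigner matrix with the \emph{same} planted direction $x$ and signal-to-noise ratio still $1+\Omega(1)$, so that its top eigenvector is $\Omega(1)$-correlated with $x$ by the BBP transition. The three hypotheses play distinct roles: centering removes the (potentially huge) rank-one artifact $\mu\,\mathbf 1\mathbf 1^\top$ created when the truncated noise has nonzero mean $\mu=\E f_\tau(W_{ij})$; $\|X\|_\infty=o(1)$ keeps the signal in the ``linear regime'' of truncation and keeps the truncated-noise variance close to $1$; and $\sum_i x_i=0$ makes the centering step nearly harmless to the rank-one signal.

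\textbf{Step 1 (decomposition).} Write $f_\tau(t)=\operatorname{sgn}(t)\min(|t|,\tau)$, and set $Z=f_\tau(Y)-\mu\,\mathbf 1\mathbf 1^\top$, so $Y''=Z-(\bar Y'-\mu)\mathbf 1\mathbf 1^\top$ with $\bar Y'$ the mean of the entries of $Y'=f_\tau(Y)$. Split $Z=\E Z+(Z-\E Z)$. Since the a.e.\ derivative of $f_\tau$ is the indicator of $(-\tau,\tau)$, for $a=\lambda x_ix_j$ we get $\E f_\tau(a+W_{ij})-\mu=a\,q_{ij}$ with $q_{ij}=\tfrac1a\int_0^a\Pr[|t+W_{ij}|<\tau]\,dt\in[1-O(\tau^{-2}),1]$, using Chebyshev ($\Pr[|W_{ij}|\ge\tau]\le\tau^{-2}$) and $|a|=o(1)$ to make the integration window negligible. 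Hence $\E Z=\lambda xx^\top+E_1$, where $E_1$ is the sum of the off-diagonal entrywise distortion $-\lambda D_x(\mathbf 1\mathbf 1^\top-Q)D_x$ (here $D_x=\operatorname{diag}(x)$, $Q=(q_{ij})$) and an $O(\tau)$-bounded diagonal matrix; a Cauchy--Schwarz bound on the Frobenius norm gives $\|\lambda D_x(\mathbf 1\mathbf 1^\top-Q)D_x\|\le\lambda\cdot O(\tau^{-2})\|x\|_2^2=O((1+\epsilon)\tau^{-2}\sqrt n)$. Likewise, because $\sum_i x_i=0$ cancels the leading contribution, $|\bar Y'-\mu|\cdot n=O((1+\epsilon)\tau^{-2}\sqrt n)+o(\sqrt n)$ with high probability, by concentration of an average of $\Theta(n^2)$ independent summands bounded by $\tau$. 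Altogether
\[
Y''=\lambda xx^\top+M'+E,\qquad \|E\|\le \frac{C(1+\epsilon)}{\tau^2}\sqrt n+o(\sqrt n),
\]
with $C$ universal and $M'=Z-\E Z$ a symmetric matrix with independent mean-zero entries, each bounded by $2\tau$, of variance $\operatorname{Var}f_\tau(\lambda x_ix_j+W_{ij})\le\E(\lambda x_ix_j+W_{ij})^2=1+o(1)$ --- the only place the analysis touches $W$, and it uses only $\E W_{ij}^2=1$.

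\textbf{Step 2 (BBP and perturbation).} By standard bounds for symmetric random matrices with independent, $O(1)$-bounded, mean-zero entries of variance $1+o(1)$, with high probability $\|M'\|\le 2\sqrt n(1+o(1))$ and the spectrum of $M'$ obeys the semicircle law with edge $2\sqrt n$. Thus $Y''-E=\lambda xx^\top+M'$ is a spiked Wigner matrix with $\|x\|_2=\sqrt n$, delocalized spike ($\|x\|_\infty^2=o(\sqrt n)$ since $\lambda x_i^2=o(1)$), and signal-to-noise ratio $\lambda\sqrt n=1+\epsilon$; by the BBP phase transition for deformed Wigner matrices \cite{baik2005,pizzo2013}, its top unit eigenvector $\hat x_0$ satisfies $\langle\hat x_0,x\rangle^2\ge(1-(1+\epsilon)^{-2}-o(1))\,n=\Omega(n)$ w.h.p., with spectral gap $\Omega(\tfrac{\epsilon^2}{1+\epsilon}\sqrt n)$ between its two largest eigenvalues. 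The choice $\tau=100/\min(\epsilon^2,1)$ makes the Step~1 bound $\|E\|$ at most a small constant multiple of this gap (one checks $\|E\|/(\text{gap})=O(1)$ with a small absolute constant, uniformly over $\epsilon=\Omega(1)$), so the Davis--Kahan $\sin\Theta$ theorem applied to $Y''=(Y''-E)+E$ shows the top eigenvector $\hat x$ of $Y''$ --- the output of Algorithm~\ref{truncationAlgorithm} --- makes a small constant angle with $\hat x_0$. Hence $\langle x,\hat x\rangle^2=\Omega(n)$, proving Theorem~\ref{truncationProof}.

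\textbf{Main obstacle.} The crux is Step~1: truncation does not simply rescale the rank-one signal; it distorts it entrywise (through $q_{ij}$) and perturbs both the diagonal and the global mean, and one must show that all three effects have spectral norm only $O((1+\epsilon)\sqrt n/\tau^2)$. This is where every hypothesis enters --- $\Pr[|W_{ij}|\approx\tau]=O(\tau^{-2})$ from unit variance, $\|X\|_\infty=o(1)$ to stay in the linear regime of $f_\tau$ and keep the truncated-noise variance near $1$, and $\sum_i x_i=0$ so centering barely touches the signal --- after which the margin $\tau=100/\min(\epsilon^2,1)$ renders the perturbation negligible against the $\Omega(\tfrac{\epsilon^2}{1+\epsilon}\sqrt n)$ spectral gap opened by the surviving spike.
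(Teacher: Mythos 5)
Your overall strategy---truncate, strip off the rank-one artifacts created by truncation and centering, show that what remains is a spiked matrix with bounded noise and signal-to-noise ratio still $1+\Omega(1)$, and finish with a spectral-gap argument---is the same as the paper's, and your Step~1 is a genuinely nice variant: by computing $\E f_\tau(\lambda x_ix_j+W_{ij})-\mu$ exactly you get a \emph{deterministic} bias bound of order $\lambda n/\tau^2$, whereas the paper's decomposition $Y'=\lambda xx^\top+T+M+\Delta$ (with $T_{ij}$ a function of $W_{ij}$ alone) only controls the analogous terms in expected Frobenius norm at scale $\lambda n/\tau$ and then invokes Markov, so its guarantee holds with constant probability rather than w.h.p. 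Your use of $\sum_i x_i=0$ to tame the centering term matches the paper's treatment of $H-\E T$.

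The gap is in Step~2. Your noise matrix $M'_{ij}=f_\tau(\lambda x_ix_j+W_{ij})-\E f_\tau(\lambda x_ix_j+W_{ij})$ has independent but \emph{not identically distributed} entries (each entry's law depends on $\lambda x_i x_j$), and its variance need not be $1+o(1)$: for heavy-tailed $W$---exactly the regime this theorem is about---truncation at $\tau=O(1)$ can push $\operatorname{Var} f_\tau(W_{ij})$ far below $1$, so your claim that $\lVert X\rVert_\infty=o(1)$ ``keeps the truncated-noise variance close to $1$'' is false (only the upper bound $\le 1+o(1)$ holds). Consequently the BBP facts you invoke---the outlier eigenvalue location, the $1-(1+\epsilon)^{-2}$ eigenvector overlap, and the $\Omega(\epsilon^2\sqrt n/(1+\epsilon))$ spectral gap---are not delivered by the results you cite: Theorem~\ref{thm:universality} (the restatement of \cite{pizzo2013}) is an eigenvalue statement for i.i.d.\ entries and says nothing about eigenvectors or about this signal-dependent, possibly variance-deficient ensemble. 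The bounds you need are monotone in the noise variance, so they are true, but your write-up asserts rather than establishes them. Two repairs are available: (i) prove a BBP-type lower bound on $\lambda_{\max}(\lambda xx^\top+M')$ for this non-i.i.d.\ ensemble (heavier machinery than you acknowledge); or (ii) do what the paper does---note that $M'$ differs from the genuinely i.i.d.\ matrix $T-\E T$ only on per-entry events of probability $O(\tau^{-2})$, so the swap costs $O(\lambda n/\tau)$ in Frobenius norm, apply the i.i.d.\ eigenvalue theorem to $\lambda xx^\top+(T-\E T)$, and then bypass eigenvector BBP and Davis--Kahan entirely via the elementary deterministic Lemma~\ref{ObservationA11}, which converts an $\Omega(\lambda n)$ eigenvalue gap between $Y''$ and $Y''-\lambda xx^\top$ directly into $\lprod \hat{x},x\rprod^2=\Omega(n)$. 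With either repair your outline goes through; as written, the key random-matrix input is the missing piece.
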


However, as illustrated by the following examples, this truncation strategy can fail inherently when the noise entries are not identically distributed and their distributions are adversarily chosen (depending on the vector $x$). We show that there is no choice of truncation level $\tau$ for which Algorithm~\ref{truncationAlgorithm} outputs a vector whose correlation with the planted vector $x$ is nonvanishing
 for all choices of noise matrix $W$ whose entries are independently sampled with zero mean and unit variance.

 First truncating at $\tau=\Omega(\sqrt{n})$ fails the following example
\begin{example}\label{ egErdosRenyi}
For $d=\omega(1)$,
$W_{ij}$ equals to $-\sqrt{\frac{n-d}{d}}$ with probability $d/n$ and $\sqrt{\frac{d}{n-d}}$ with probability $1-\frac{d}{n}$. \footnote{There is trivial algorithm for  this specific noise distribution, but it  breaks down easily for other noise distribution included in the class we consider.}
\end{example}
 This is just normalized and centralized
 adjacency matrix of Erdos-Renyi random graph, the spectrum of which is well studied in the literature \cite{benaychgeorges2017spectral, Montanari:2016:SPS:2897518.2897548}. For superconstant $d$, the spectral norm is of order $\omega(\sqrt{n})$, much larger than the spectral norm of $\lambda xx^\top$.  
 Therefore, the leading eigenvector will not be correlated with hidden vector $x$ as we desire.
 
Then we only need to consider  $\tau=o(\sqrt{n})$. For simplicity, we analyze an alternative strategy where  entries $Y_{ij}>\tau$ are truncated to $0$. Similar results for truncation to $\tau \textrm{sgn}(Y_{ij})$ are in the appendix \ref{failureExample}. We consider the example below
\begin{example}\label{egMixed}
For $i+j$ even, we let $W_{ij}$ sampled as in example \ref{ egErdosRenyi}. For $i+j$ odd, we let $-W_{ij}$ distributed the same as above.  
\end{example}

For $d=o(n/\tau^2)$, only entries perturbed by noise $\pm \sqrt{\frac{d}{n-d}}$ are preserved. Then $Y^{\prime}_{ij}=\lambda x_ix_j+\sqrt{\frac{d}{n-d}}(-1)^{i+j}$ with probability $1-\frac{d}{n}$ and $0$ with probability $d/n$. Therefore the leading eigenvector of $Y^{
\prime}$ will be well correlated with $h$ rather than $x$.Since $Y^{\prime}$ has zero mean, $Y^{\prime\prime}-Y^{\prime}$ has small Frobenius norm, thus the leading eigenvector of $Y^{\prime\prime}$ is close to $Y^{\prime}$. 





\subsection{Experiments}
\label{sec:matrixExperiments}
For comparing the performance of algorithms proposed, we conduct experiments with several typical distributions of noise: (1) the noise is distributed as example \ref{ egErdosRenyi}. (2) the noise is distributed as example \ref{egMixed} (3) entry $W_{ij}$ is distributed as $N(0,1)$ when $i+j$ is even and as example \ref{ egErdosRenyi} when $i+j$ is odd. In each case planted vector $x$ is randomly sampled from  $N(0,\mathrm{Id}_n)$. In these examples, the smaller parameter $d$ corresponds to the more heavy tailed noise distribution.

In experiments with size $n=10^2 - 10^3$, self-avoiding walk estimator shows better performance than naive PCA and truncation PCA algorithm.
Furthermore, the non-backtracking algorithm achieves performance no worse than self-avoiding walk estimator under many settings.
The results are shown in figure~\ref{fig:Experiment}.  

\begin{figure}
    \centering
    \begin{subfigure}[b]{0.45\linewidth}
        \includegraphics[width=\linewidth]{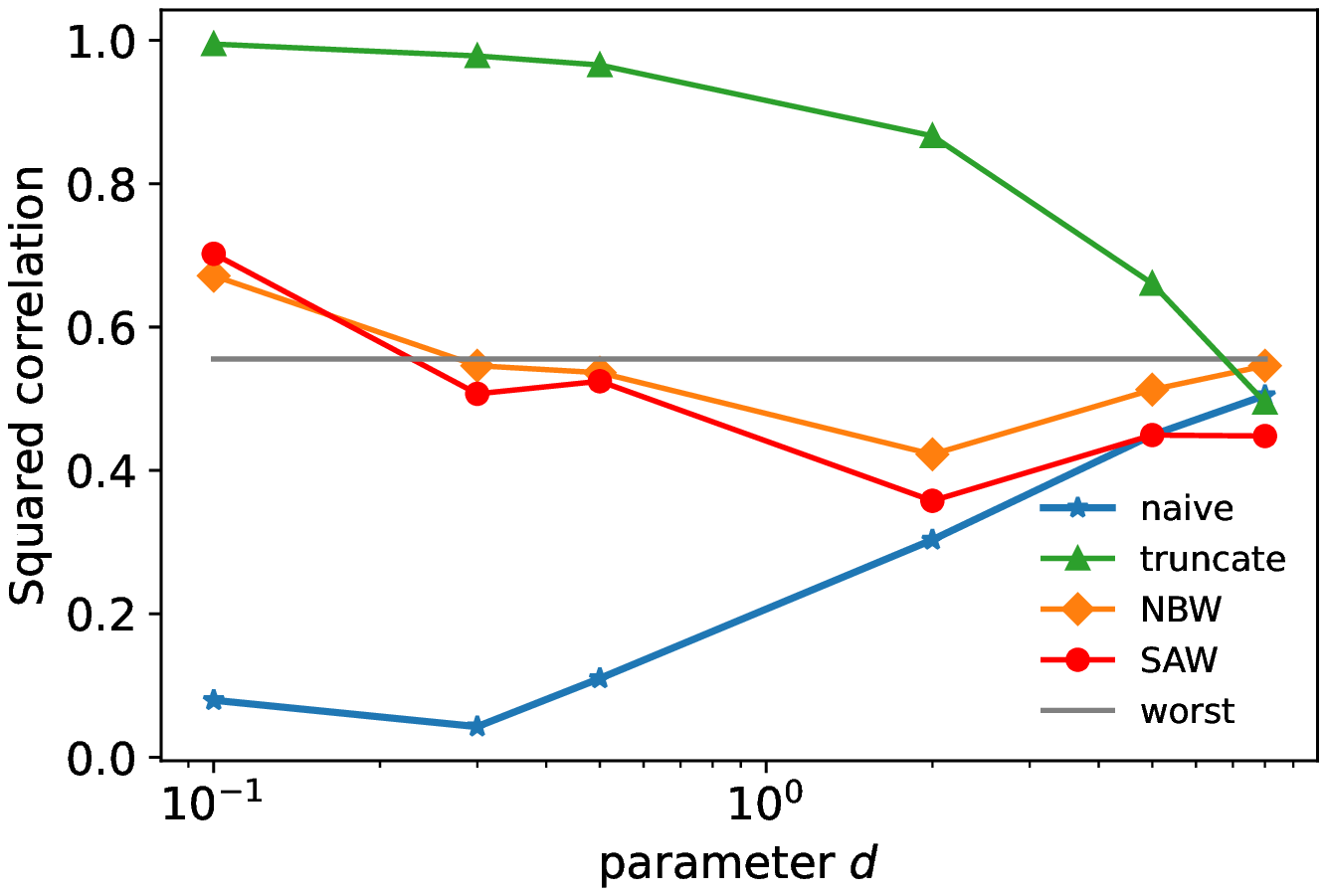}
        \label{fig:a}
        \caption{$n=200$, $\lambda^{\prime}=1.5$, $d\in [0.1,8]$}
    \end{subfigure}
          \begin{subfigure}[b]{0.45\linewidth}
        \includegraphics[width=\linewidth]{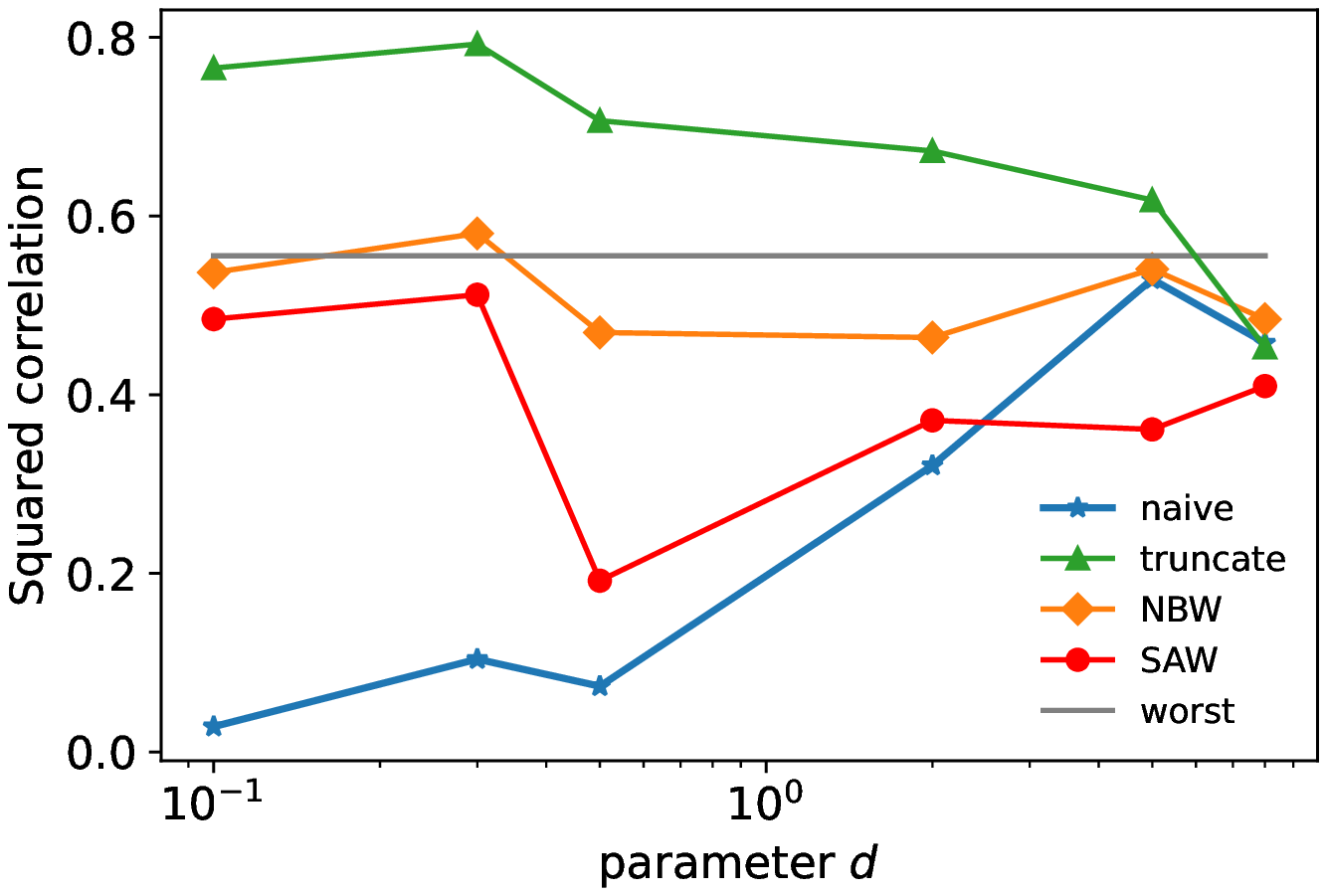}
        \label{fig:b}
        \caption{$n=200$, $\lambda^{\prime}=1.5$, $d\in [0.1,8]$}
    \end{subfigure}
    
    \begin{subfigure}[b]{0.45\linewidth}
        \includegraphics[width=\linewidth]{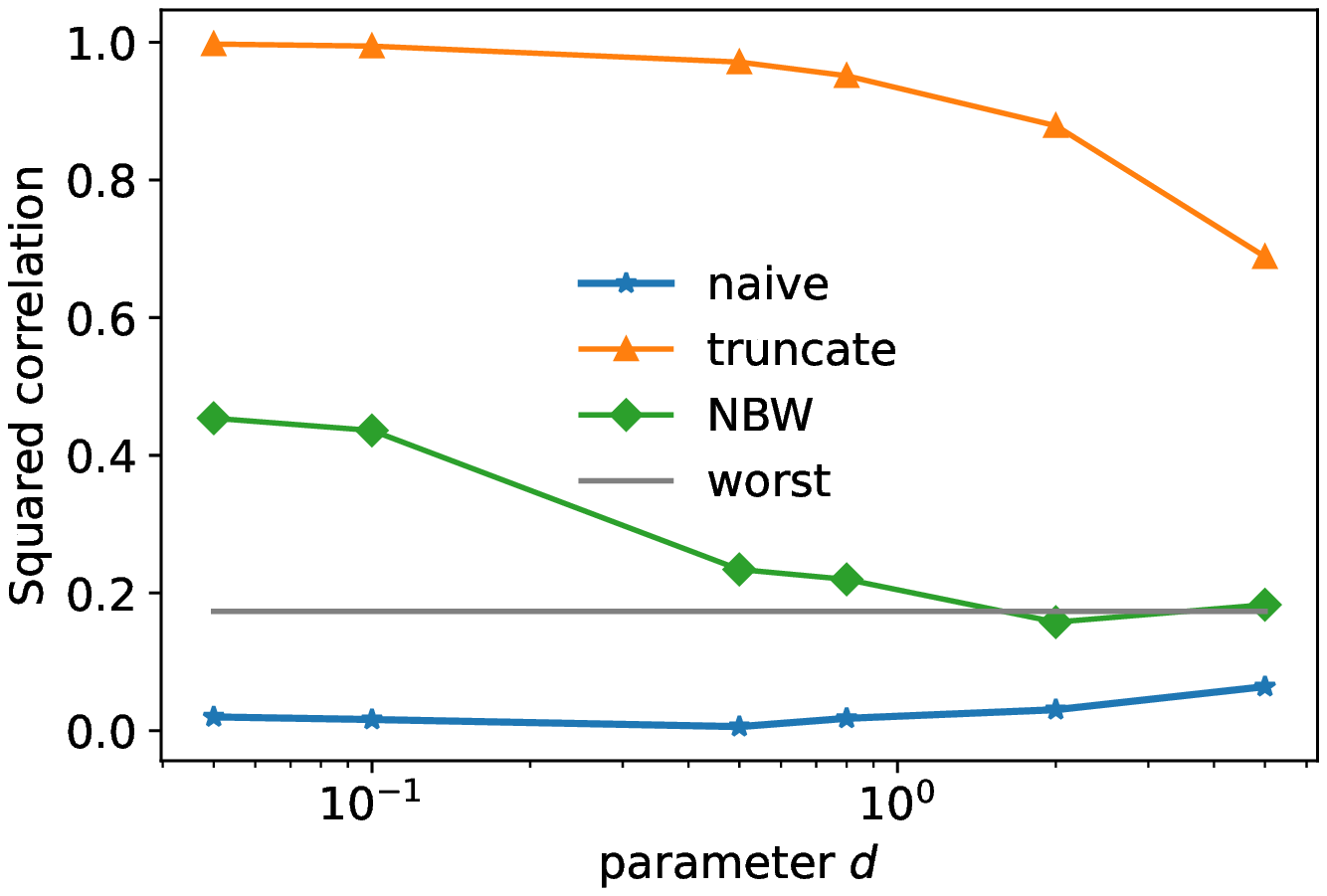}
        \label{fig:c}
        \caption{$n=400$, $\lambda^{\prime}=1.1$, $d\in [0.05,5]$}
    \end{subfigure}
    \begin{subfigure}[b]{0.45\linewidth}
        \includegraphics[width=\linewidth]{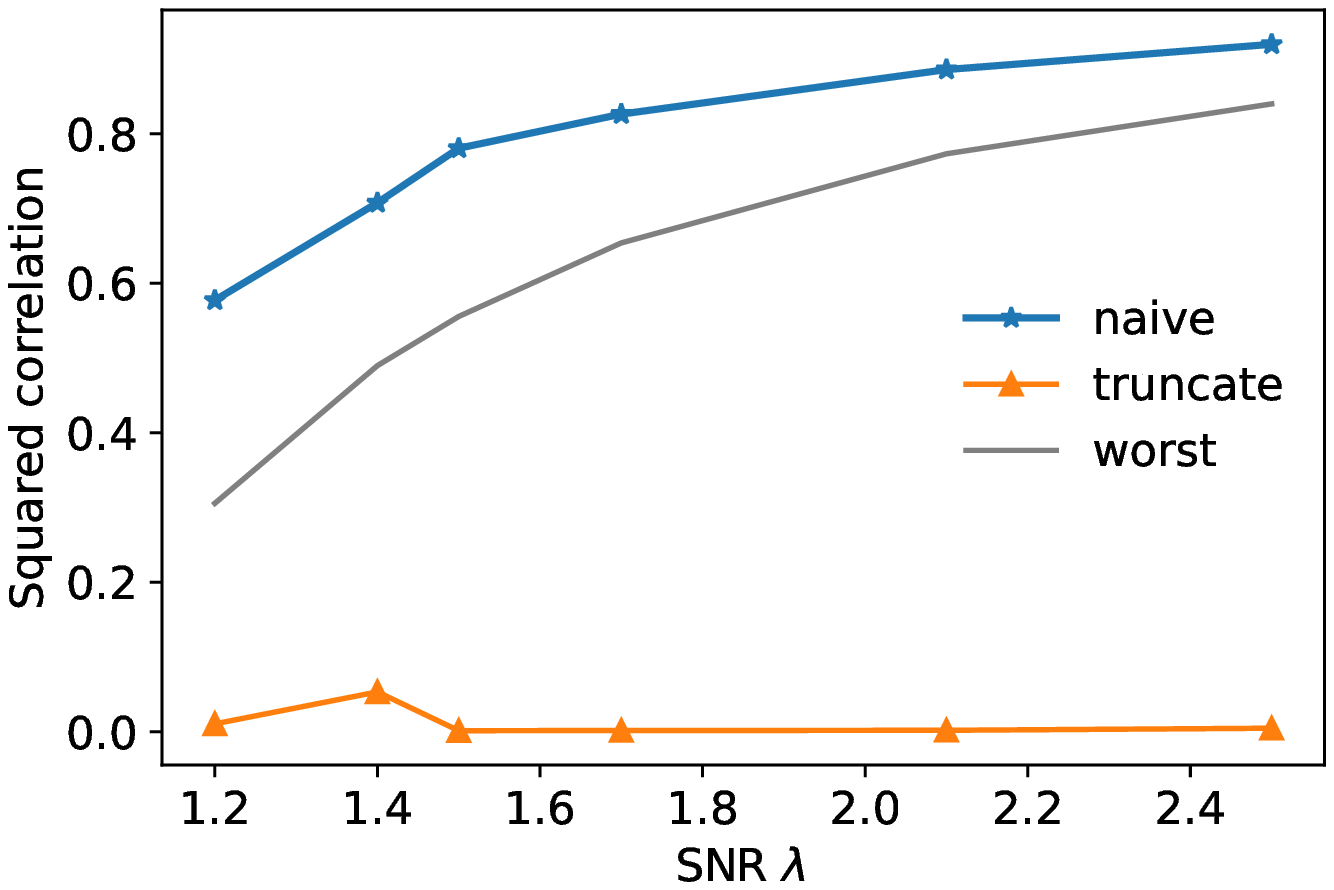}
        \label{fig:d}
        \caption{$n=2000$, $\lambda^{\prime}\in [1.2,1.6]$, $d=30$}
    \end{subfigure}
        
    \caption{
    (a)(b) The performance of  non-backtracking  walk estimator  with $\ell=10$ is no worse than self-avoiding walk estimator with $\ell=7$ under distribution (2),(3). They drastically beat naive PCA algorithm. (c) The performance of  non-backtracking walk estimator with length $\ell=17$ can be much better than PCA under distribution. (1) (d) Truncating at $\tau=5$ can fail drastically under distribution $(2)$. \newline
    Each data point is the result of averaging $20$ trials. For notation, $\lambda^{\prime}=\lambda  n^{1/2}$, the $y$ axis represents mean of squared correlation $\frac{\lprod\hat{x},x\rprod^2}{\lVert\hat{x}\rVert^2\lVert x\rVert^2}$.  The
 line ``worst'' represents the optimal guarantee in case of  Gaussian noise with same $\lambda$, while the line ``NBW'' represents the experiment results from non-backtracking algorithm.
 }\label{fig:Experiment}
\end{figure}

\section{Algorithms for general spiked tensor model}
\label{sec:tensor}

For proving theorem \ref{thm:intro-tensor}, we use the sum of multilinear polynomials corresponding to a variant of self-avoiding walk.  Here we only describe a simple special case of the algorithm, which provides estimation guarantee when $\lambda>n^{-3/4}$.

\begin{definition}[Polynomial time  estimator for spiked tensor recovery]\label{defEstimator}
Given tensor $Y\in\mathbb{R}^{n\times n\times n}$, we have estimator $P(Y)\in\mathbb{R}^n$ where each entry is degree $2\ell-1$ polynomial given by $P_i(Y)=\sum_{\alpha\in S_{\ell,i}} \chi_\alpha(Y)$, where $\chi_\alpha(Y)$ is multilinear polynomial basis $\chi_\alpha(Y)=\prod_{(i,j,k)\in\alpha} Y_{ijk}$ and $S_{\ell,i}$ is the set of directed hypergraph associated with vertex $i$ generated in the following way: 
\begin{itemize}
      \item we construct $2\ell$ levels of distinct vertices. Level $0$ is vertex $i$. For $0<t<\ell$, level $2t$ contains $1$ vertex and level $2t-1$ contains $2$ vertices. Level $2\ell-1$ contains $1$ vertex.
      \item We connect a hyperedge between adjacent levels $t-1,t$ for  $t\in [2\ell-2]$. Each hyperedge directs from level $t-1$ to level $t$.
      \item For vertex $u$ which lies in level $2\ell-2$ and vertices $v,v'$ which lie in level $2\ell-1$, we add the hyperedge $(u,v,v')$.
  \end{itemize}
\end{definition}

An illustration of a self-avoiding walk $\alpha\in S_{\ell,i}$  is given by figure~\ref{fig:sawTensor}.
\begin{figure}
    \centering

\begin{tikzpicture}[
roundnode/.style={circle, draw=green!60, fill=green!5, very thick, minimum size=7mm},
squarednode/.style={rectangle, draw=red!60, fill=red!5, very thick, minimum size=5mm},
scale=0.8,transform shape]
\node[roundnode]        (v1)          {$i$};

\node[roundnode]        (v2)       [below left=of v1] {$v_2$};
\node[roundnode]        (v3)       [below right=of v1] {$v_3$};
\node[roundnode]        (v4)       [below right=of v2] {$v_4$};
\node[roundnode]        (v5)       [below left=of v4] {$v_5$};
\node[roundnode]        (v6)       [below right=of v4] {$v_6$};
\node[roundnode]        (v7)       [below right=of v5] {$v_7$};
\node[roundnode]        (v8)       [below=of v7] {$v_{\ell_v-1}$};
\node[roundnode]        (v9)       [below left=of v8] {$v_{\ell_v}$};
\node[roundnode] (v10)
 [below right=of v8] {$v_{\ell_v}$};
    \begin{scope}[fill opacity=0.2]
    \filldraw[fill=yellow!70] ($(v1)+(0,0.5)$) 
        to[out=180,in=180] ($(v2) + (-0.4,-0.4)$) 
        to[out=0,in=180] ($(v3) + (0.4,-0.4)$)
        to[out=0,in=0] ($(v1)+(0,0.5)$);
        \filldraw[fill=blue!70] ($(v4)+(0,-0.5)$) 
        to[out=180,in=180] ($(v2) + (-0.4,0.4)$) 
        to[out=0,in=180] ($(v3) + (0.4,0.4)$)
        to[out=0,in=0] ($(v4)+(0,-0.5)$); 
        \filldraw[fill=yellow!70] ($(v4)+(0,0.5)$) 
        to[out=180,in=180] ($(v5) + (-0.4,-0.4)$) 
        to[out=0,in=180] ($(v6) + (0.4,-0.4)$)
        to[out=0,in=0] ($(v4)+(0,0.5)$);
                \filldraw[fill=blue!70] ($(v7)+(0,-0.5)$) 
        to[out=180,in=180] ($(v5) + (-0.4,0.4)$) 
        to[out=0,in=180] ($(v6) + (0.4,0.4)$)
        to[out=0,in=0] ($(v7)+(0,-0.5)$); 
   \filldraw[fill=yellow!70] ($(v8)+(0,0.7)$) 
        to[out=180,in=180] ($(v9) + (-0.4,-0.4)$) 
        to[out=0,in=180] ($(v10) + (0.4,-0.4)$)
        to[out=0,in=0] ($(v8)+(0,0.7)$);     
    \end{scope}






\draw[dotted](v7.south) -- (v8.north);







\end{tikzpicture}
    \caption{Illustration for a self-avoiding walk $\alpha\in \textrm{S}_{\ell,i}$ for tensor estimation. Each colored area corresponds to a hyperedge.}
    \label{fig:sawTensor}
\end{figure}
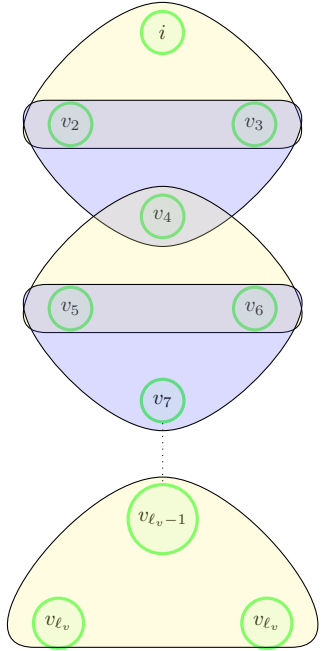

In appendix \ref{proofWeakRecovery} we show that by introducing width to levels, estimation under smaller SNR $\lambda$ is possible by exploiting more computational power. In appendix \ref{colorCodingTensor} we show that $P(Y)$ can be evaluated in $n^{O(v)}$ time using color coding method. These lead to the proofs of theorems \ref{thm:intro-tensor}, \ref{polyPCA}.



\section{Conclusion}
We provide an algorithm which nontrivially estimates rank-one spikes of Wigner matrices for signal-to-noise ratios $\lambda$ approaching the sharp threshold $\lambda \sqrt n \rightarrow 1$, even in the setting of heavy-tailed noise (having only $2$ finite moments) with unknown, adversarially-chosen distribution.
For future work, it would be intriguing to obtain strengthened guarantees along (at least) two axes.
First, \cite{Perry_2018} give an algorithm which recovers rank-one spikes for even smaller values of $\lambda$, when (a) the distribution of the entries of $W$ is known, and (b) a large constant number of moments of the entries of $W$ are $O(1)$.
Relaxing either of the assumptions (a) or (b) while keeping $\lambda \sqrt{n} \ll 1$ would be very interesting.

In a different direction, our experiments suggest that the non-backtracking walk estimator performs as well as the self-avoiding walk estimator which we are able to analyze rigorously.
Rigorously establishing similar guarantees for the non-backtracking walk estimator -- or finding counterexamples -- would be of great interest. 

\section{Acknowledgement}
S.B.H. is supported by a Miller Postdoctoral Fellowship. J.D and D.S are supported   by ERC consolidator grant.
\bibliographystyle{alpha}
\bibliography{refs}
\newpage
\appendix
\section{Spiked matrix model}
\subsection{Proof of Theorem \ref{truncationProof}}\label{truncation-proof-section}

For proof of theorem \ref{truncationProof}, we need a result available in previous literature stating about the universality of spiked matrix model
\begin{theorem}[Theorem 1.1  in \cite{pizzo2013}]\label{thm:universality}
In spiked matrix model $Y=\lambda xx^T+W$, $x\in\mathbb{R}^n$ has norm $\sqrt{n}$, $W\in \mathbb{R}^{n\times n}$ is a symmetric random matrix of i.i.d entries with zero mean and variance bounded by $1$. If the $5$-th moment of entries in $W$ is bounded by $O(1)$, then the following guarantee will hold w.h.p:
\begin{equation*}
    \lambda_{\textrm{max}}(Y)\geq (1-o(1))\left(\lambda n+\frac{1}{\lambda}\right)
\end{equation*}
\end{theorem}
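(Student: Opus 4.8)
The plan is to reduce the eigenvalue lower bound to a scalar \emph{secular equation} for a resolvent quadratic form, and then to solve that equation using the limiting behaviour of the $x$-weighted spectral measure of $W$.

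\textbf{Step 1: the noise edge.}
It is classical that, under the bounded-fifth-moment assumption (a bounded fourth moment already suffices), $\lambda_{\max}(W)=(2+o(1))\sqrt n$ with high probability; one proves this by the moment/trace method after truncating the $O(\mathrm{poly}(n)^{-1})$-probability atypically large entries. In particular $\mu I-W$ is invertible for every $\mu\ge 2\sqrt n+n^{1/2-\Omega(1)}$, and we restrict to such $\mu$ from now on, writing $G(\mu)=(\mu I-W)^{-1}$, $f(\mu)=x^\top G(\mu)x$ and $g(\mu)=x^\top G(\mu)^2 x>0$.

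\textbf{Step 2: variational reduction.}
For the test vector $v_\mu=G(\mu)x\neq 0$, the identity $W=\mu I-G(\mu)^{-1}$ gives $v_\mu^\top W v_\mu=\mu\,g(\mu)-f(\mu)$ and $x^\top v_\mu=f(\mu)$, so
\begin{equation*}
  \frac{v_\mu^\top Y v_\mu}{\lVert v_\mu\rVert^2}
  =\frac{\lambda f(\mu)^2+\mu\,g(\mu)-f(\mu)}{g(\mu)}
  =\mu+\frac{f(\mu)\bigl(\lambda f(\mu)-1\bigr)}{g(\mu)}\,.
\end{equation*}
On $(\lambda_{\max}(W),\infty)$ the function $f$ is continuous, strictly decreasing, tends to $0$ at $+\infty$, and (since $x$ has nonzero overlap with the top eigenvector of $W$ almost surely) blows up at the left endpoint; hence there is a unique $\mu_0>\lambda_{\max}(W)$ with $\lambda f(\mu_0)=1$, and for that value the displayed Rayleigh quotient equals exactly $\mu_0$. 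Thus $\lambda_{\max}(Y)\ge\mu_0$ deterministically, and it remains to lower bound $\mu_0$.

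\textbf{Step 3: solving the secular equation.}
Let $m_{\mathrm{sc}}(\mu)=\tfrac{\mu-\sqrt{\mu^2-4n}}{2n}$, the Stieltjes transform of the semicircle law on $[-2\sqrt n,2\sqrt n]$. The key estimate is that, uniformly for $\mu$ in a fixed window $[2\sqrt n+c\sqrt n,\,C\sqrt n]$,
\begin{equation*}
  f(\mu)=x^\top G(\mu)x=(1+o(1))\,n\,m_{\mathrm{sc}}(\mu)\qquad\text{with high probability.}
\end{equation*}
This is the content of the cited theorem: it follows from the isotropic local semicircle law (controlling $G_{ii}(\mu)=(1+o(1))m_{\mathrm{sc}}(\mu)$ together with cancellation in $\sum_{i\neq j}x_ix_jG_{ij}(\mu)$), or, more self-containedly since $\mu$ is bounded away from the edge, from the moment method: $\mathbb{E}[x^\top W^k x]=(1+o(1))\,n\,C_{k/2}\,n^{k/2}$ for even $k$ (the tree/Catalan walks dominate, and they return to their starting vertex, so the weights $x_i^2$ sum to $\lVert x\rVert^2=n$), which with concentration shows the probability measure $n^{-1}\sum_i x_i^2\,\delta_{\lambda_i(W)}$ converges to the semicircle law, whence $f(\mu)\to n\,m_{\mathrm{sc}}(\mu)$ for $\mu$ off its support. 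Granting this: in the supercritical regime $\lambda\sqrt n\ge 1+\Omega(1)$ the value $\lambda n+1/\lambda=\sqrt n\bigl(\lambda\sqrt n+(\lambda\sqrt n)^{-1}\bigr)$ exceeds $2\sqrt n$ by $\Omega(\sqrt n)$, so it lies in the admissible window. Solving $\lambda\,n\,m_{\mathrm{sc}}(\mu)=1$ exactly gives $\sqrt{\mu^2-4n}=\mu-2/\lambda$, hence $\mu=\lambda n+1/\lambda$; since $\mu\mapsto n\,m_{\mathrm{sc}}(\mu)$ is decreasing there with slope bounded away from $0$, evaluating $f$ at the deterministic point $\mu_\ast=(1-\eta)(\lambda n+1/\lambda)$ for a suitable $\eta\to0$ yields $\lambda f(\mu_\ast)>1=\lambda f(\mu_0)$, so $\mu_0>\mu_\ast=(1-o(1))(\lambda n+1/\lambda)$. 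Combined with Step 2 this gives $\lambda_{\max}(Y)\ge(1-o(1))(\lambda n+1/\lambda)$ with high probability.

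\textbf{Main obstacle.}
The test-vector identity of Step 2, the edge bound of Step 1, and the algebra of Step 3 are all elementary; the single nontrivial ingredient is the quadratic-form estimate $x^\top(\mu I-W)^{-1}x=(1+o(1))\,n\,m_{\mathrm{sc}}(\mu)$ for a Wigner matrix with only a bounded low-order moment and an arbitrary $\sqrt n$-normalised test vector. The naive operator-norm bound $\lVert G(\mu)-m_{\mathrm{sc}}(\mu)I\rVert=O(m_{\mathrm{sc}}(\mu))$ is off by a constant factor and does not suffice --- one genuinely needs the cancellation in the off-diagonal resolvent entries (equivalently, convergence of the $x$-weighted spectral measure to the semicircle), and the heavy-tailed entries must be tamed by truncation; this is precisely why the statement is invoked from the literature rather than proved from scratch.
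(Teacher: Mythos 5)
The paper does not prove this statement at all --- it is imported verbatim as Theorem~1.1 of \cite{pizzo2013} and used as a black box in the proof of Theorem~\ref{truncationProof} --- so your proposal is necessarily a ``different route,'' namely an actual proof sketch. Your outline is the standard and correct one for locating the BBP outlier: the test vector $v_\mu=(\mu I-W)^{-1}x$ and the identity $v_\mu^\top Yv_\mu/\lVert v_\mu\rVert^2=\mu+f(\mu)(\lambda f(\mu)-1)/g(\mu)$ are right, the deterministic conclusion $\lambda_{\max}(Y)\ge\mu_0$ at the root of the secular equation $\lambda f(\mu_0)=1$ is right, and the algebra $\lambda\,n\,m_{\mathrm{sc}}(\mu)=1\iff\mu=\lambda n+1/\lambda$ checks out; monotonicity of $f$ plus the estimate at the deterministic point $\mu_\ast$ also quietly repairs the existence of $\mu_0$ without needing the ``nonzero overlap with the top eigenvector'' hedge. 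You correctly isolate the only genuinely hard ingredient --- the isotropic estimate $x^\top(\mu I-W)^{-1}x=(1+o(1))\,n\,m_{\mathrm{sc}}(\mu)$ for an arbitrary $\sqrt n$-normalised deterministic $x$ under a bounded fifth moment --- and defer it, which is exactly what the paper does by citing the literature; a fully self-contained moment-method treatment would additionally need the truncation step and a variance bound for $x^\top W^kx$ over pairs of walks, both of which you flag. What your route buys over the paper's bare citation is transparency about why only low-order moments are needed. One caveat worth recording: the theorem as restated in the paper assumes only that the entry variance is \emph{bounded by} $1$, but your Step~3 (and indeed the conclusion $\lambda n+1/\lambda$) uses the unit-variance semicircle law; with variance $\sigma^2$ strictly below $1$ the secular equation instead gives $\mu=\lambda n+\sigma^2/\lambda$, which near the threshold $\lambda\sqrt n=1+\Theta(1)$ is smaller than $(1-o(1))(\lambda n+1/\lambda)$. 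So your argument proves the statement only for variance equal to (or tending to) $1$; this is a defect of the paper's paraphrase of \cite{pizzo2013} rather than of your proof, but it does matter for how the theorem is later applied to the truncated matrix $T-\E T$.
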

We also need a simple observation about the deterministic relation between leading eigenvalue and leading eigenvector in spiked matrix model.
\begin{lemma}\label{ObservationA11}
    For matrix $M\in\mathbb{R}^{n\times n}$ and  matrix $N=\gamma xx^T+M$(where $\gamma>0$ and $x\in\mathbb{R}^n$ has $\sqrt{n}$), if the leading eigenvalue $\lambda_{\textrm{max}}(N)$ is larger than $\lambda_{\textrm{max}}(M)$ by $\Omega(n\gamma)$, then the unit norm leading eigenvector of $N$ denoted by $\xi$ will achieve constant correlation with $x$:
    \begin{equation*}
        \lprod \xi, x\rprod^2 \geq \Omega(n)
    \end{equation*}
\end{lemma}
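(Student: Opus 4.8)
The plan is to invoke the variational (Rayleigh quotient) characterization of the largest eigenvalue of a symmetric matrix. Let $\xi$ be the unit-norm leading eigenvector of $N$, so that $\xi^\top N \xi = \lambda_{\textrm{max}}(N)$. I would then expand this quadratic form using the structure $N = \gamma xx^\top + M$, writing
\begin{equation*}
\lambda_{\textrm{max}}(N) = \xi^\top N \xi = \gamma \lprod \xi, x\rprod^2 + \xi^\top M \xi \, .
\end{equation*}
Since $M$ is symmetric, the Courant--Fischer theorem gives $\xi^\top M \xi \leq \lambda_{\textrm{max}}(M)$ for every unit vector $\xi$. Substituting this bound and rearranging yields
\begin{equation*}
\gamma \lprod \xi, x\rprod^2 \geq \lambda_{\textrm{max}}(N) - \lambda_{\textrm{max}}(M) \geq \Omega(n\gamma) \, ,
\end{equation*}
and dividing by $\gamma > 0$ gives $\lprod \xi, x\rprod^2 \geq \Omega(n)$, which is the claimed conclusion (note that by Cauchy--Schwarz $\lprod \xi, x\rprod^2 \leq \|\xi\|^2 \|x\|^2 = n$, so this indeed certifies constant squared correlation $\lprod \xi, x\rprod^2 / (\|\xi\|^2\|x\|^2) = \Omega(1)$).

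This argument is essentially a one-line computation, so there is no substantial obstacle; the only point requiring a word of care is that we are using the eigenvalue inequality $\xi^\top M \xi \leq \lambda_{\textrm{max}}(M)$, which requires $M$ (hence $N = \gamma xx^\top + M$) to be symmetric. In the intended application $M$ is a symmetric (truncated, recentered) noise matrix and $N$ is the correspondingly perturbed matrix, so symmetry holds and the Rayleigh-quotient bound applies verbatim. I would state the lemma and proof for symmetric $M$ and remark that this is all that is needed downstream.
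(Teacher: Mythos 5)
Your proposal is correct and is essentially identical to the paper's own proof: both expand $\lambda_{\textrm{max}}(N)=\xi^\top(\gamma xx^\top+M)\xi \leq \lambda_{\textrm{max}}(M)+\gamma\lprod\xi,x\rprod^2$ and rearrange using the assumed eigenvalue gap. Your added remark that symmetry of $M$ is needed for $\xi^\top M\xi\leq\lambda_{\textrm{max}}(M)$ is a reasonable clarification of an implicit assumption in the paper's statement.
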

\begin{proof}
    We have $\lambda_{\textrm{max}}(N)=\xi^\top (\gamma xx^\top+M)\xi\leq \lambda_{\textrm{max}}(M)+\gamma \lprod\xi,x\rprod^2$. Since $\lambda_{\textrm{max}}(N) -\lambda_{\textrm{max}}(M)=\Omega(n\gamma)$, we have $\lprod\xi,x\rprod^2=\Omega(n)$. 
\end{proof}

\begin{proof}[Proof of Theorem \ref{truncationProof}]
By definition we  have
\begin{equation*}
    Y^{\prime}_{ij}=Y\mathbbm{1}_{\lvert  Y_{ij}\rvert\leq \tau}+ \tau\textrm{sgn}(Y_{ij}) \mathbbm{1}_{\lvert Y_{ij}\rvert\geq \tau}
\end{equation*}

Given the assumption on the $\lvert\lambda x_ix_j\rvert=o(\tau)$, one can observe that this can be decomposed into
\begin{equation*}
    Y^{\prime}=\lambda xx^\top+T+M+\Delta
\end{equation*}
where we have
\begin{eqnarray*}
   T_{ij} &=& W_{ij}\mathbbm{1}_{\lvert W_{ij}\rvert}+\tau \textrm{sgn}(W_{ij}) \mathbbm{1}_{\lvert W_{ij}\rvert>\tau}\\
   M_{ij} &=& -\lambda x_ix_j\mathbbm{1}_{\lvert W_{ij}\rvert\geq \tau}\\
   \Delta_{ij} &=& (\tau\textrm{sgn}(Y_{ij})-Y_{ij})(\mathbbm{1}_{\lvert\lambda x_ix_j+W_{ij}\rvert\geq \tau}-\mathbbm{1}_{\lvert W_{ij}\rvert\geq\tau})
\end{eqnarray*}
Further we denote $Y^{\prime}-Y^{\prime\prime}$ as $H$. Then we have $Y^{\prime\prime}=\lambda xx^\top+(T-\E        
 T)+M+\Delta-(H-\E T)$.

Next we analyze the terms in the decomposition of $Y^{\prime\prime}$. Specifically, we want to show that with constant probability the largest eigenvalue of $Y^{\prime\prime}$ is larger than the  one of $Y^{\prime\prime}-\lambda xx^T$ by $\Omega(\lambda n)$. If this is proved then for the leading unit eigenvector of $Y^{\prime\prime}$ denoted by $\hat{x}$, we must have $\lprod\hat{x},x\rprod^2=\Omega(n)$ with constant probability by lemma \ref{ObservationA11}. 

First for matrix $T-\mathbb{E}T$, the variance of each entry is bounded by $1$. Further each entry is bounded by $2\tau$. According to  theorem \ref{thm:universality}, the largest eigenvalue of matrix $\lambda xx^\top+T-\mathbb{E} T$
 is given by $\lambda n+\frac{1}{\lambda}$ and the largest eigenvalue of matrix $T-\mathbb{E} T$ is given by $2\sqrt{n}$ with high probability. 
 
 For matrix $M$,we have $\mathbb{E} \mathbbm{1}_{\lvert W_{ij}\rvert\geq \tau}\leq  \frac{1}{\tau^2}$ because  the variance of entries in $W$ is bounded by $1$. Therefore the expectation $\E\lVert M\rVert_F$ is bounded by $\frac{\lambda n}{\tau}$. 
For non-zero entries $(i,j)$ in matrix $\Delta$, we must have $\lvert Y_{ij}-\tau\textrm{sgn}(Y_{ij})\rvert \leq \lvert \lambda x_ix_j\rvert$. Therefore these non zero entries $\Delta_{ij}$ are bounded by $\lvert\lambda x_ix_j\rvert$. Further each entry in $\Delta$ is non-zero with probability bounded by $\frac{1}{\tau^2}$. Therefore we have $\E \lVert \Delta\rVert_F$ bounded by $\frac{\lambda n}{\tau}$.
 
Finally we have $H=h\mathbbm{1}\mathbbm{1}^\top$ where $h=\frac{\sum _{i,j} Y^{\prime}_{i,j}}{n^2}$. Since $T_{ij}$ are i.i.d for $i\leq j$, we have
 \begin{equation*}
     \E\left[\left(\frac{\sum_{ij} T_{ij}}{n^2}-\E [T_{ij}]\right)^2\right] \leq \frac{4}{n^2}
 \end{equation*}
 By linearity of expectation $\E\lVert H-\mathbb{E} T\rVert\leq  O(1)\leq \frac{2\lambda n}{\tau}$. 
  
 In all we have $\E\lVert Y^{\prime\prime}-\lambda  xx^\top-T+\E T\rVert \leq \frac{4\lambda n}{\tau}$. By Markov inequality with probability $1/2$, we have $\lVert Y^{\prime\prime}-\lambda  xx^\top-T+\E T\rVert \leq \frac{8\lambda n}{\tau}$. As stated above, the largest eigenvalue of matrix $\lambda xx^\top+T-\mathbb{E} T$
 is given by $\lambda n+\frac{1}{\lambda}$ and the largest eigenvalue of matrix $T-\mathbb{E} T$ is given by $2\sqrt{n}$ with high probability.
 If we take $\tau$ large enough constant(e.g $\frac{100}{\textrm{min}(\epsilon^2,1)}$), then with probability $\Omega(1)$ the spectral norm of $Y^{\prime\prime}$ is larger than $\lambda n+\frac{1}{\lambda}-0.1\textrm{min}(\epsilon^2,1) \lambda n$ and spectral norm of  $Y^{\prime\prime}-\lambda xx^\top$ is smaller than  $2\sqrt{n}+0.1\textrm{min}(\epsilon^2,1) \lambda n$. 
 
 Therefore for $\lambda=1+\epsilon$ with $\epsilon=\Omega(1)$, the spectral norm of $Y^{\prime\prime}$ is larger than the spectral norm of $Y^{\prime\prime}-\lambda x x^\top$ by $\Omega(\lambda n)$ with  constant probability.
 As a result, with constant probability the leading eigenvector of $Y^{\prime\prime}$ must achieve $\Omega(1)$ correlation with hidden vector $x$ by lemma \ref{ObservationA11}.
\end{proof}

\subsection{Example of failure for truncation algorithm}\label{failureExample}
For example \ref{egMixed}, we only explain why truncating to $0$ could fail in main article. Next we show that truncating to $\tau\textrm{sgn}(Y_{ij})$ will fail as well for $d$ between $o(n/\tau^2)$ and $\omega(1)$. 


We still denote $h=\{\pm 1\}^n$ as the Rademacher vector with alternating sign and $x\in \{\pm 1\}^n$ orthogonal to all-$1$ vector. For $d=o(n/\tau^2)$ and $d=\omega(1)$, only entries perturbed by noise $\pm \sqrt{\frac{d}{n-d}}$ are not truncated. Then $Y^{\prime}_{ij}=\lambda x_ix_j+\sqrt{\frac{d}{n-d}}(-1)^{i+j}$ with probability $1-\frac{d}{n}$ and $-\tau$ with probability $d/n$. Therefore $Y^{\prime}$ can be decomposed into
\begin{equation*}
    Y^{\prime }=\lambda xx^\top+\sqrt{\frac{d}{n-d}}hh^\top+\Delta
\end{equation*}
where $\E \lVert\Delta\rVert=o(\sqrt{nd})$. Above computational threshold $\lambda\geq n^{-1/2}$, the spectral norm  of $\lambda xx^\top$ is smaller than $\sqrt{n}$. Further matrix $H=Y^{\prime\prime}-Y^{\prime}$ also has spectral norm $o(\sqrt{nd})$  by central limit theorem. 

For unit norm leading eigenvector $\xi$ of matrix $Y^{\prime\prime}$, we suppose that $\E\lprod\xi,x\rprod^2\geq \Omega(n)$ and prove by contradiction.
Because $\xi$ is leading eigenvector, we have $\E\lprod\xi\xi^\top,Y^{\prime\prime}\rprod\geq \E\lprod hh^\top, Y^{\prime\prime}\rprod\geq (1-o(1))\sqrt{nd}$. However,  we have $\lprod\xi,h\rprod^2\leq 1-\lprod\xi,x\rprod^2/n$. Therefore, we have $\lprod\xi\xi^\top,Y^{\prime\prime}\rprod
\leq (1-\Omega(1)) \sqrt{nd}+o(\sqrt{nd})$. This leads to contradiction.

\subsection{Proof of lemma \ref{criticalQuantity}, lemma  \ref{simpleRelation} and lemma \ref{corollarySegmentExpectation}}\label{proofQuantity}
We first prove lemma \ref{criticalQuantity}
\begin{proof}[Proof of Lemma \ref{criticalQuantity}]
First since the bound on infinity norm of $x$, we have 
\begin{equation*}
    S_{t_1,t_2,V}\leq \lVert x\rVert_\infty^{2t_2}\mathbb{E}_{(v_1,\ldots,v_{t_1+t_2})\subseteq [n]\setminus V} \left[\prod_{i=1}^{t_1+t_2}x_{v_i}^2\right]
\end{equation*}
Denote $t_1+t_2=t$,
now we take marginal on $x_{v_{t}}$. Then the marginal is given by
\begin{align*}
& \mathbb{E}_{(v_1,\ldots,v_{t_1+t_2})\subseteq [n]\setminus V} \left[\prod_{i=1}^{t_1+t_2}x_{v_i}^2\right]\\ =
 & \mathbb{E}_{(v_1,\ldots,v_{t-1})\subseteq [n]\setminus V} \left[\prod_{i=1}^{t-1}x_{v_i}^2 \left(\frac{n-\sum_{i=1}^{t-1}x_{v_i}^2-\sum_{i\in V} x_i^2}{n-\lvert V\rvert-(t-1)}\right)\right] \\ = & \left(1\pm O\left(\frac{(|V|+t)\lVert x\rVert_\infty^2}{n}\right) \right)  \mathbb{E}_{(v_1,\ldots,v_{t-1})\subseteq [n]\setminus V} \left[\prod_{i=1}^{t-1}x_{v_i}^2\right] 
\end{align*}
 By induction this is $(1\pm n^{-\Omega(1)})^t$. In all we have $S_{t_1,t_2,V}$ bounded by $(1+o(1))\lVert x\rVert_{\infty}^{2t_2}$. 
\end{proof}

Next we prove lemma \ref{simpleRelation}.

\begin{proof}[Proof of Lemma \ref{simpleRelation}]
   For self-avoiding walks $\alpha,\beta\in\textrm{SAW}(i,j)$, the connected components of $\alpha\cap\beta$ are all self-avoiding walks. We consider quantity $g=r-k$. Then each of the $p$ connected components in $\alpha\cap\beta$ not containing $i$ or $j$ contributes $1$ to $g$.  Since $\alpha\neq \beta$, other connected components in $\alpha\cap\beta$ can only contain one of $i,j$. Such connected components contribute $0$ to $g$. 
   Further each shared vertex not incident to any shared edge contribute $1$ to $g$ and the total number of such vertices is given by $s$. Therefore we have $r-k=p+s$
\end{proof}

Finally we prove lemma \ref{corollarySegmentExpectation}. 
\begin{proof}[Proof of Lemma \ref{corollarySegmentExpectation}]
   We prove the first bound.  We represent $\xi$ as an ordered set of vertices $(\xi_{0},\xi_{1},\ldots,\xi_{h})$ then we note that the product in the expectation can be expanded to the sum of monomials:
   \begin{equation*}
   \E_{\xi\subseteq [n]\setminus V}  \left[x_{\xi_{\textrm{0}}}^2 x_{\xi_{\textrm{h}}}^2 \prod_{i\in [h]} (1+\lambda^2 x_{\xi_i}^2x_{\xi_{i-1}}^2)\right]
    = \E_{\xi \subseteq [n]\ V}\left[\sum_{S \subseteq [h]} x_{\xi_{\textrm{0}}}^2x_{\xi_{\textrm{h}}}^2 \prod_{i\in S }\lambda^2 x_{\xi_i}^2x_{\xi_{i-1}}^2\right]   
    \end{equation*}
    Since for fixed set $S\subseteq [h]$, the number of variables $x_u$ with degree $4$ in the monomial is bounded by $\lvert S\rvert$+1,  by lemma \ref{criticalQuantity},we have 
    \begin{equation*}
   \E_{\xi_0,\xi_1,\ldots,\xi_h}\left[ x_{\xi_0}^2x_{\xi_h}^2  \prod_{i\in S }(\lambda^2 x_{\xi_i}^2x_{\xi_{i-1}}^2)\right]\leq (1+n^{-\Omega(1)})\lVert x\rVert_{\infty}^2 \left(\lambda \lVert x\rVert_{\infty}\right)^{2\lvert S\rvert} 
   \end{equation*}
   
   On the other hand we have
   \begin{equation*}
     (1+\lambda^2\lVert x\rVert_{\infty}^{2})^{h}=\sum_{S \subseteq [h]}
     \left(\lambda^2\lVert x\rVert_{\infty}^{2}\right)^{\lvert S\rvert}
   \end{equation*}
   Therefore we have 
   \begin{equation*}
    \E_{\xi\subseteq [n]\setminus V}  \left[x_{\xi_{0}}^2 x_{\xi_h}^2 \prod_{i\in [h]} (1+\lambda^2 x_{\xi_i}^2x_{\xi_{i-1}}^2)\right] \leq  (1+n^{-\Omega(1)})\lVert x\rVert_{\infty}^2 (1+\lambda^2 \lVert x\rVert_{\infty}^2)^{h}
    \end{equation*}
    The other three bounds can be proved in very similar ways.
\end{proof}

\subsection{Evaluation of self-avoiding walk estimator}\label{sawEvaluation}
In spiked matrix model $Y=\lambda xx^\top +W$, we denote $\delta=n^{1/2}\lambda-1$.
For evaluation of degree $\ell=O(\log_{1+\delta} n)$ self-avoiding walk polynomial:
\begin{equation*}
    P_{ij}(Y)=\sum_{\alpha\in \textrm{SAW}(i,j)} \prod_{(u,v)\in\alpha} Y_{uv}
\end{equation*}
we use color coding strategy pretty similar to the literature in \cite{8104074}. The algorithm and construction  of matrices has already been described in the main body. We restate the algorithm \ref{algoSelfAvoidingWalk} for readers' convenience. 
\begin{algorithm}
\KwData{Given $Y\in \mathbb{R}^{n\times n}$ s.t $Y=\lambda xx^\top+W$}
\KwResult{$P(Y)\in\mathbb{R}^{n\times n}$ where $P_{ij}(Y)$ is the sum of multilinear monomials corresponding to length   $\ell$ self-avoiding walk between $i,j$(up to accuracy $1+n^{-\Omega(1)}$)}
 \For{$i\gets1$ \KwTo $C$}{
     Sample coloring $c_t:[n]\mapsto [\ell]$ uniformly at random\;
   Construct a $\mathbb{R}^{2^{\ell}n\times 2^{\ell} n}$ matrix $M$, where rows and columns are indexed by $(v,S)$, where $v\in [n]$  and $S$ is a subset of $[\ell]$\;
    a matrix $H\in \mathbb{R}^{n\times  2^{\ell} n}$ where each row is indexed by $[n]$ and each column is indexed by $(v,S)$ where $v\in [n]$  and $S$ is a subset of $[\ell]$\;
    
    a matrix $N\in \mathbb{R}^{2^\ell n\times n}$, where each row is indexed by $(v,S)$ where $S$ is a subset of $[\ell]$ and each column is indexed by $[n]$\;
    
    Record $p_{c_i}= HM^{\ell-2}N$\;}
Return  $\sum _{i=1}^C p_{c_i}/C$
 \caption{Algorithm for evaluating self-avoiding walk matrix }\label{algoSelfAvoidingWalk}
 \end{algorithm}

 On complete graph $K_n$, for a specific coloring $c\in [n]\mapsto [\ell]$, we say that a length-$\ell$ self-avoiding walk $\alpha=(v_0,v_1,\ldots,v_\ell)$ is colorful if the colors of $v_0,v_1,\ldots,v_\ell$ are different.Then a critical observation is that $p_{c,i,j} (Y)=\sum_{\alpha\in \textrm{SAW}_\ell(i,j)}F_{c,\alpha}\chi_{\alpha}(Y)$,
 where $F_{c,\alpha}$ is the $0$-$1$ indicator of  random event that $\alpha$ is colorful. Taking uniform expectation on $c$ over all random colorings, we have the following relation:
 \begin{lemma}\label{varianceReduction}
    In the algorithm \ref{algoSelfAvoidingWalkMain}, for $C\geq \exp(100\ell)$, we have 
    \begin{equation*}   
    \E_{Y,c_1,\ldots,c_C} \left[\left(\frac{1}{C}\sum_{t}p_{c_t,i,j}(Y)-\E_c p_{c,i,j}(Y)\right)^2\right]\leq \exp(-O(\ell))\E_{Y}\left(\E_c P_{c,i,j}(Y) \right)^2
     \end{equation*} 
     where random colorings $c_1,c_2,\ldots,c_C\in [n]\mapsto [\ell]$ are independently sampled uniformly at random and the expectation of random coloring $c\in [n]\mapsto [\ell]$ on right hand side is taken uniformly at random.
 \end{lemma}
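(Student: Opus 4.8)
The plan is to use the independence of the $C$ random colourings to turn the left-hand side into a variance for a fixed $Y$, bound that variance by a second moment, and then — crucially using the nonnegativity $\E_Y[\chi_\alpha(Y)\chi_\beta(Y)]\ge 0$ — compare it term by term with the target quantity. For a fixed matrix $Y$, the variables $p_{c_1,i,j}(Y),\dots,p_{c_C,i,j}(Y)$ are i.i.d.\ with common mean $\mu:=\E_c p_{c,i,j}(Y)$, so
\begin{equation*}
  \E_{c_1,\dots,c_C}\!\left[\Bigl(\tfrac1C\sum_{t=1}^C p_{c_t,i,j}(Y)-\mu\Bigr)^{\!2}\right]
  =\frac1C\,\mathrm{Var}_c\!\bigl(p_{c,i,j}(Y)\bigr)\ \le\ \frac1C\,\E_c\bigl[p_{c,i,j}(Y)^2\bigr].
\end{equation*}
Averaging over $Y$ and recalling $C\ge\exp(100\ell)$, it therefore suffices to prove the ``second-moment domination''
\begin{equation*}
  \E_Y\E_c\bigl[p_{c,i,j}(Y)^2\bigr]\ \le\ \exp\!\bigl(O(\ell)\bigr)\cdot\E_Y\bigl(\E_c p_{c,i,j}(Y)\bigr)^{2},
\end{equation*}
after which the factor $1/C$ supplies the asserted $\exp(-O(\ell))$.

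To establish this domination I would expand in the multilinear basis. Writing $q:=\E_c F_{c,\alpha}=\ell!/\ell^\ell$ — the same positive number for every length-$\ell$ self-avoiding walk, by symmetry of a uniformly random colouring — and using the identity $p_{c,i,j}(Y)=\sum_{\alpha\in\textrm{SAW}_{\ell}(i,j)}F_{c,\alpha}\chi_\alpha(Y)$ recorded above,
\begin{align*}
  \E_c\bigl[p_{c,i,j}(Y)^2\bigr]&=\sum_{\alpha,\beta}\E_c[F_{c,\alpha}F_{c,\beta}]\,\chi_\alpha(Y)\chi_\beta(Y),\\
  \E_c\,p_{c,i,j}(Y)&=q\sum_{\alpha}\chi_\alpha(Y)=q\,P_{ij}(Y).
\end{align*}
The key point is that taking $\E_Y$ produces only \emph{nonnegative} coefficients: since $\alpha,\beta$ are self-avoiding, $\chi_\alpha(Y)\chi_\beta(Y)=\prod_{(u,v)\in\alpha\Delta\beta}Y_{uv}\prod_{(u,v)\in\alpha\cap\beta}Y_{uv}^2$ is multilinear, and by independence of the $Y_{uv}$ together with $\E W_{uv}=0$, $\E W_{uv}^2=1$,
\begin{equation*}
  \E_Y\bigl[\chi_\alpha(Y)\chi_\beta(Y)\bigr]
  =\lambda^{|\alpha\Delta\beta|}\Bigl(\prod_{(u,v)\in\alpha\Delta\beta}x_u x_v\Bigr)\Bigl(\prod_{(u,v)\in\alpha\cap\beta}(1+\lambda^2x_u^2x_v^2)\Bigr)\ \ge\ 0,
\end{equation*}
because $\alpha\Delta\beta$ — the symmetric difference of two $i$--$j$ walks — is a disjoint union of cycles, so every vertex has even degree in it and $\prod_{(u,v)\in\alpha\Delta\beta}x_ux_v=\prod_u x_u^{\deg_{\alpha\Delta\beta}(u)}\ge 0$. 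This is exactly the identity already used in the computation before Lemma~\ref{sawCorrelation}.

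With nonnegativity in hand, I can bound each pairwise colouring probability by the single-walk one, $\E_c[F_{c,\alpha}F_{c,\beta}]\le\E_c[F_{c,\alpha}]=q$ (the event that $\alpha$ and $\beta$ are both colourful is contained in the event that $\alpha$ is), and sum term by term:
\begin{align*}
  \E_Y\E_c\bigl[p_{c,i,j}(Y)^2\bigr]
  &\le q\sum_{\alpha,\beta}\E_Y\bigl[\chi_\alpha(Y)\chi_\beta(Y)\bigr]
  = q\,\E_Y\bigl[P_{ij}(Y)^2\bigr] \\
  &= \frac1q\,\E_Y\bigl(\E_c p_{c,i,j}(Y)\bigr)^2 ,
\end{align*}
using $\sum_{\alpha,\beta}\E_Y[\chi_\alpha\chi_\beta]=\E_Y[P_{ij}(Y)^2]$ and $\E_Y(\E_c p_{c,i,j}(Y))^2=q^2\,\E_Y[P_{ij}(Y)^2]$. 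Finally $q=\ell!/\ell^\ell\ge(\ell/e)^\ell/\ell^\ell=e^{-\ell}$ by Stirling, so $1/q\le e^{\ell}=\exp(O(\ell))$; combining with the variance decomposition and dividing by $C\ge\exp(100\ell)$ gives the stated bound with $\exp(-O(\ell))=\exp(-99\ell)$. The main (and essentially the only) obstacle is the nonnegativity $\E_Y[\chi_\alpha(Y)\chi_\beta(Y)]\ge 0$: it is what lets us replace each $\E_c[F_{c,\alpha}F_{c,\beta}]$ by the common value $q$ with no cancellations to track; everything else is the standard i.i.d.\ variance-reduction bookkeeping together with the crude estimate $1/q\le\exp(O(\ell))$.
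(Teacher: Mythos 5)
Your proof is correct and follows essentially the same route as the paper: reduce to a single-coloring second-moment bound via the i.i.d.\ averaging over the $C$ colorings, then compare $\E_c[p_{c,i,j}^2]$ with $(\E_c p_{c,i,j})^2$ term by term using the nonnegativity of $\E_Y[\chi_\alpha(Y)\chi_\beta(Y)]$, the indicator bound on $F_{c,\alpha}F_{c,\beta}$, and $\E_c F_{c,\alpha}=\ell!/\ell^\ell\geq \exp(-O(\ell))$. Your version is marginally cleaner (bounding $\E_c[F_{c,\alpha}F_{c,\beta}]\leq \ell!/\ell^\ell$ directly and justifying the nonnegativity via even degrees in $\alpha\Delta\beta$, which the paper only asserts), but the argument is the same.
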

  \begin{proof}
  For a fixed path $\alpha$, the probability that $F_{c,\alpha}=1$ is bounded by $\frac{\ell!}{\ell ^{\ell}}\geq \exp(-O(\ell))$. Therefore we have 
  \begin{subequations}
     \begin{align}
         \E \left[p_{c,i,j}^2(Y)\right] & =\E\left[\sum_{\alpha,\beta\in \textrm{SAW}_{\ell}(i,j)} \chi_\alpha(Y)\chi_\beta(Y) F_{c,\alpha}F_{c,\beta}\right]\\
         & \leq  \E_Y\left[\sum_{\alpha,\beta\in \textrm{SAW}_{\ell}(i,j)} \left[ \chi_\alpha(Y)\chi_\beta(Y)\right]\right]\\
         & \leq \exp(O(\ell))\E_Y\left[\left(\sum_{\alpha\in \textrm{SAW}_{\ell}(i,j)} \E_c\left[ \chi_\alpha(Y)F_{c,\alpha}\right]\right)^2\right]\\
         & = \exp(O(\ell)) \E_Y\left[\left(\E_c [p_{c,i,j}(Y)]\right)^2\right]
     \end{align}
     \end{subequations}
        For step (3b) and (3c), we use the fact that $0\leq F_{c,\alpha}\leq 1$ and $\E [\chi_\alpha(Y)\chi_\beta(Y)]\geq 0$ for all $\alpha,\beta\in \textrm{SAW}_{\ell}(i,j)$. For step  (3c), we also use the fact that $\E F_{c,\alpha}\geq \exp(-O(\ell))$. Therefore
     \begin{equation*}
        \E_Y \E_c \left[\left(p_{c,i,j}(Y)-\E_c p_{c,i,j}(Y)\right)^2\right] \leq \exp(O(\ell)) \E_Y \left[\left(\E_c p_{c,i,j}(Y)\right)^2\right] 
     \end{equation*}
     By averaging $p_{c,i,j}(Y)$ for $C$ independent random colorings, the variance is reduced and we have
     \begin{equation*}
         \E_{Y,c_1,\ldots,c_C} \left[\left(\frac{1}{C}\sum_{t,i,j}p_{c_t,i,j}(Y)-\E_c p_{c,i,j}(Y)\right)^2\right]\leq \frac{1}{C}\exp(O(\ell))\E_{Y}\left(\E_c P_{c,i,j}(Y) \right)^2
     \end{equation*}
     Therefore let $C=\exp(100\ell)$, the lemma is proved.
  \end{proof}
  This lemma implies that the average of $p_c(Y)$ for $n^{\delta^{-O(1)}}$ independent random colorings $p(Y)$ gives accurate approximation of $P(Y)$. The following simple corollary implies that the this matrix $p(Y)$ achieves the same correlation with $xx^\top$ as $P(Y)$. 
  
  \begin{lemma}[Formal statement of Lemma \ref{sawEva}]\label{formal28}
  The algorithm \ref{algoSelfAvoidingWalk} runs in $n^{\delta^{-O(1)}}$ time when $\ell=O(\log_{1+\delta} n)$. For matrix returned by algorithm \ref{algoSelfAvoidingWalk}, we have
  $$
  \E_{c_1,\ldots,c_C}\left[\frac{1}{C}\sum_{t}p_{c_t,i,j}(Y)\right]=\frac{\ell !}{\ell^\ell}P_{ij}(Y)
  $$
       \begin{equation*}   \E_{Y,c_1,\ldots,c_C} \left[\left(\frac{1}{C}\sum_{t}p_{c_t,i,j}(Y)\right)^2\right]\leq \left(1+n^{-\Omega(1)}\right)\E_{Y}\left[\left(\frac{\ell !}{\ell^\ell}P_{ij}(Y) \right)^2\right]
     \end{equation*}
  \end{lemma}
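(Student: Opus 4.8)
The plan is to deduce both claims of the lemma from two ingredients already at hand: the combinatorial identity behind the color-coding gadget, and the variance-reduction estimate of Lemma~\ref{varianceReduction}. Fix $i,j\in[n]$, write $\bar p=\tfrac1C\sum_{t=1}^C p_{c_t,i,j}(Y)$ for the relevant entry of the matrix returned by Algorithm~\ref{algoSelfAvoidingWalk}, and set $q=q(Y):=\E_c\,p_{c,i,j}(Y)$, the expectation over a single uniformly random coloring $c\colon[n]\to[\ell]$.

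\emph{Running time and the first display.} The set of pairs $(v,S)$ with $v\in[n]$ and $S\subseteq[\ell]$ has size $2^\ell n$, so $M$ is $2^\ell n\times 2^\ell n$ and $H,N$ have one side of length $n$ and one of length $2^\ell n$; forming $HM^{\ell-2}N$ by $\ell-1$ left-to-right multiplications (keeping an $n\times 2^\ell n$ running factor) costs polynomial in $n$ and $2^\ell$ per loop iteration, and there are $C=\exp(100\ell)$ iterations. Since $\ell=O(\log_{1+\delta}n)$, both $2^\ell$ and $C$ equal $n^{O(1/\log(1+\delta))}=n^{\delta^{-O(1)}}$, giving the running time. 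For the first display I would unwind the product: a nonzero term of $(HM^{\ell-2}N)_{ij}$ corresponds to a walk $i=u_0,u_1,\dots,u_{\ell-1},u_\ell=j$, and the transition rule in $M$ (with $H$ initializing the color set to $\{c(u_0)\}$ and $N$ demanding the accumulated set be all of $[\ell]$ with the current color fresh) forces $c(u_0),\dots,c(u_{\ell-1})$ to be pairwise distinct; distinct colors force distinct vertices, and the term equals $\chi_\alpha(Y)$ for the induced self-avoiding walk. This recovers the main text's observation $p_{c,i,j}(Y)=\sum_{\alpha\in\textrm{SAW}_\ell(i,j)}F_{c,\alpha}\chi_\alpha(Y)$, with $F_{c,\alpha}$ the indicator that the first $\ell$ vertices of $\alpha$ receive all $\ell$ colors. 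As $c$ is uniform, $\E_c F_{c,\alpha}=\ell!/\ell^\ell$ for every $\alpha$, so $q=\tfrac{\ell!}{\ell^\ell}P_{ij}(Y)$, and by linearity over the $C$ i.i.d.\ colorings $\E_{c_1,\dots,c_C}[\bar p]=\tfrac{\ell!}{\ell^\ell}P_{ij}(Y)$.

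\emph{Second display.} For fixed $Y$ the quantity $q$ is a constant and $\E_{c_{1:C}}[\bar p-q]=0$, so the cross term vanishes and $\E_{c_{1:C}}[\bar p^2]=q^2+\E_{c_{1:C}}[(\bar p-q)^2]$. Taking $\E_Y$ and invoking Lemma~\ref{varianceReduction}, whose right-hand side is $\exp(-\Omega(\ell))\,\E_Y q^2$ once $C=\exp(100\ell)$, yields $\E_{Y,c_{1:C}}[\bar p^2]\le(1+\exp(-\Omega(\ell)))\,\E_Y q^2$. Since $\ell=O(\log_{1+\delta}n)$ with $\delta=\Omega(1)$ we have $\exp(-\Omega(\ell))=n^{-\Omega(1)}$, and $q^2=(\tfrac{\ell!}{\ell^\ell}P_{ij}(Y))^2$, which is exactly the claimed second-moment bound.

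\emph{Main obstacle.} The only genuinely delicate step is the identity in the second paragraph: one must check that the state $(v,S)$ tracks precisely the set of colors already used, that every colorful self-avoiding walk is produced exactly once, and that non-self-avoiding monomials do not appear. The one family the gadget does not literally forbid is walks whose endpoint $u_\ell=j$ revisits an earlier vertex; there are only $O(\ell)\,n^{\ell-2}$ of these against $\approx n^{\ell-1}$ genuine self-avoiding walks, and this mismatch is exactly what the ``up to accuracy $1+n^{-\Omega(1)}$'' clause in the algorithm's specification absorbs, so it should be bounded and folded into the $n^{-\Omega(1)}$ error. Everything else — the dimension/running-time count and the bias–variance decomposition feeding off Lemma~\ref{varianceReduction} — is routine bookkeeping.
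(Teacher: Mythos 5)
Your proposal follows the paper's own proof essentially verbatim: unwind $HM^{\ell-2}N$ to recover $p_{c,i,j}(Y)=\sum_{\alpha}F_{c,\alpha}\chi_\alpha(Y)$, use $\E_c F_{c,\alpha}=\ell!/\ell^\ell$ and linearity for the first display, and combine the bias--variance decomposition with Lemma~\ref{varianceReduction} (with $C=\exp(100\ell)$ and $\ell=\Theta(\log n)$, so the variance term is $n^{-\Omega(1)}\E_Y q^2$) for the second, together with the same dimension/iteration count for the running time. The endpoint subtlety you flag -- the gadget never constrains the color of $u_\ell=j$, so walks whose last step lands on a previously visited vertex are not literally excluded -- is a genuine feature of the construction that the paper's proof silently ignores by asserting the identity exactly, so your treatment is, if anything, slightly more careful than the paper's.
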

  
  \begin{proof}[Proof of Lemma \ref{formal28}]
    First we note that for $\ell=O(\log_\delta n)$, algorithm \ref{algoSelfAvoidingWalk} runs in time $n^{\delta^{-O(1)}}$.
  
  For any random coloring $c$ and length-$\ell$ self-avoiding walk $\alpha$, the probability  that $F_{c,\alpha}=1$ is $\ell!/\ell^\ell$. Thus $\E F_{c,\alpha}=  \ell!/\ell^\ell$. Since $p_{c,i,j}=\sum_{\alpha\in\textrm{SAW}_{\ell}(i,j)}F_{c,\alpha}\chi_\alpha(Y)$, by linearity of expectation we get the first equality.
  
    By lemma \ref{varianceReduction}, we have 
    \begin{equation*}   \E_{Y,c_1,\ldots,c_C} \left[\left(\frac{1}{C}\sum_{t}p_{c_t,i,j}(Y)-\E_c p_{c,i,j}(Y)\right)^2\right]\leq \exp(-O(\ell))\E_{Y}\left(\E_c P_{c,i,j}(Y) \right)^2
     \end{equation*}
     Therefore 
        \begin{equation*}   \E_{Y,c_1,\ldots,c_C} \left[\left(\frac{1}{C}\sum_{t}p_{c_t,i,j}(Y)\right)^2\right]\leq \left(1+n^{-\Omega(1)}\right)\E_{Y}\left[\left(\E_c p_{c,i,j}(Y) \right)^2\right]
     \end{equation*}
     Further as stated above we have $\E_c p_{c,i,j}(Y)=\frac{\ell!}{\ell^{\ell}}P_{ij}(Y)$. Thus we get the inequality.
  \end{proof}
  
  Now the proof of theorem  \ref{thm:intro-matrix} is self-evident.
  \begin{proof}[Proof of Theorem \ref{thm:intro-matrix}]
    We denote $\frac{1}{C}\sum_{t=1}^C p_{c_t}(Y)$ as $p(Y)$. Then by lemma \ref{formal28} and lemma \ref{matrixCorrlation},  we have
     \begin{equation*}
       \frac{\E_{c_1,\ldots,c_C,Y} \lprod 
       \hat{p}(Y),xx^\top\rprod}{n\left(\E_{c_1,\ldots,c_C,Y} \lVert \hat{p}(Y) \rVert_F^2\right)^{1/2}}=\delta^{O(1)}=\Omega(1)
     \end{equation*}
     
          By the same rounding procedure as  in \cite{8104074}, we obtain theorem \ref{thm:intro-matrix} by extracting a random vector in the span of top $1/\delta^{O(1)}$ eigenvectors of $Y$.
  \end{proof}

\subsection{Algorithm for evaluating non-backtracking walk estimator}\label{NBWproof}
In experiments, we use estimator closely related to non-backtracking walk and color coding method. 

On complete graph $K_n$, for vertice labels $i,j\in [n]$, we define the set of length-$\ell$ $k$-step non-backtracking walks $(i,v_1,v_2,\ldots,j)$ as $\textrm{NBW}_{\ell}(i,j)$. For non-backtracking walk $\alpha$ and random coloring $c:[n]\mapsto [\ell]$, we denote $F_{c,\alpha}$ as $0$-$1$ indicator of the random event that each length $k$ chunk of walk $\alpha$ is colorful(i.e, not containing repeated colors). For a fixed path $\alpha$, the probability that $F_{c,\alpha}=1$ is bounded by $\left(1-\frac{k}{\ell}\right)^{\ell}\geq \exp(-O(k))$.

Then we use the following non-backtracking walk estimator $P(Y)\in\mathbb{R}^{n\times n}$:
\begin{equation}\label{eq:nbwPoly}
P_{ij}(Y)=\sum_{\alpha\in \textrm{NBW}_{\ell}(i,j)}\chi_\alpha(Y) \E_c F_{c,\alpha}
\end{equation}
where $\chi_\alpha(Y)= \prod_{(u,v)\in \alpha}Y_{uv}$ and the expectation on coloring $c$ is taken uniformly. The algorithm for approximating $P(Y)$ is given as following:
\begin{algorithm}
	\KwData{Given $Y\in \mathbb{R}^{n\times n}$ s.t $Y=\lambda xx^\top+W$}
	\KwResult{Approximation for $P(Y)\in\mathbb{R}^{n\times n}$ where  $P_{ij}(Y)=\sum_{\alpha \in \textrm{NBW}_{\ell}(i,j)} \hat{p}_{\alpha}\chi_\alpha(Y) $ where $\textrm{NBW}_{\ell}(i,j)$ is the set of length  $\ell$ non-backtracking walk between $i,j$ and $\hat{p}_{\alpha}=\E_c F_{c,\alpha}$}
	\For{$t\gets 1$ \KwTo $ C$}{
		Sample a random coloring $c_t:[n]\mapsto[\ell]$ \;
		Construct a $\mathbb{R}^{n \sum_{s=0}^k\ell^s\times n\sum_{s=0}^k\ell^s}$ matrix $M$, with rows and columns are indexed by $(v,S)$, where $v\in [n]$ and $S$ is an ordered subset of $[\ell]$ with size bounded by $k$ \;
		
		A matrix $H\in \mathbb{R}^{n\times  n\sum_{s=0}^k\ell^s}$ with rows indexed by $[n]$ and each column indexed by $(v,S)$ where $v\in [n]$ and $S$ is ordered subset of $[\ell]$ with size bounded by $k$\;
		
		A matrix $N\in \mathbb{R}^{n\sum_{s=0}^k\ell^s\times n}$, with columns indexed by $[n]$ and each row indexed by $(v,S)$ where $v\in [n]$ and $S$ is ordered subset of $[\ell]$ with size bounded by $k$\;
		Record $p_{c_t}=HM^{\ell-2}N$\;
	}
	Return $\sum_{t}p_{c_t}/C$ 
	\caption{Algorithm for evaluating color-coding non-backtracking walk matrix}\label{algoNonBacktracking}
\end{algorithm}
We now describe how to construct matrix $H,M,N$. For matrix $M$, corresponding to index $\left((v_1,S),(v_2,T)\right)$, the entry is given by $Y_{v_1,v_2}$ if 
\begin{itemize}
	\item the color of $v_1$ is not contained in $S$ and the color of $v_2$ is not contained in $T$
	\item ordered set $T$ is the concatenation of  color of $v_1$ and first $\ell-1$ elements of $S$.
\end{itemize}
Otherwise the entry is given by $0$.

For matrix $H$, corresponding to entry $(v_1,(v_2,S))$, the entry is given by $Y_{v_1,v_2}$ if $S$ contains single element: the color of $v_1$ and the color of $v_2$ is different with the color of $v_1$.

For matrix $N$, corresponding to entry $((v_1,S),v_2)$, the entry is given by $Y_{v_1,v_2}$ if 
\begin{itemize}
	\item $S$ contains $k$ colors and the color of $v_1$ is different from the last color of $S$
	\item the color of $v_2$ is different from the color of $v_1$ and first $k-1$ elements of $S$
\end{itemize}
and given by $0$ otherwise.

Conditioning on some assumptions, we can sample  a random vector in top-$\delta^{-O(1)}$ span of such matrix in quasilinear time. 


\begin{theorem}[Evaluation of color-coding Non-backtracking walk estimator]
	
	In spiked matrix model $Y=\lambda xx^T+W$, vector $x\in\mathbb{R}^n$ has norm $\sqrt{n}$ and entries in $W\in\mathbb{R}^{n\times n}$ are independently sampled with zero mean and unit variance. Considering $\ell=O(\log_{1+\delta} n)$ with $\delta=\Omega(1)$, we assume that 
	the distribution satisfies the following:
	$$\sum_{\alpha,\beta\in \textrm{NBW}_{\ell}(i,j)}\E\left[\chi_\alpha(Y)\chi_\beta(Y) F_{c,\alpha}F_{c,\beta}\right]=\exp(O(k))\left(\sum_{\alpha,\beta\in \textrm{NBW}_{\ell}(i,j)}\E\left[\chi_\alpha(Y)\chi_\beta(Y) \E_c F_{c,\alpha}\E_c F_{c,\beta}\right]\right)$$
	Then for $P(Y)$ defined as equation \ref{algoNonBacktracking}, a unit norm random vector in
	the span of top $\delta^{-O(1)}$ eigenvectors of $P(Y)$ can be sampled in time  $O(n^2\log^{k+1} (n) \delta^{-O(1)} \exp(O(k)))$ if $\ell=O(\log n)$. 
	
	If we assume that 
	$k$-step non-backtracking walk matrix achieves correlation $\delta$ with $xx^\top$, then this random vector $\xi$ achieves constant correlation with $x$: $\lprod \xi,x\rprod^2=\delta^{O(1)}$n. 
\end{theorem}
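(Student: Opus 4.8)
The plan is to follow the template of the self-avoiding walk evaluation (Appendix~\ref{sawEvaluation}, Lemmas~\ref{varianceReduction} and~\ref{formal28}), the one new feature being that for $k$-step non-backtracking walks a random coloring keeps every length-$k$ chunk colorful with probability $\ge\exp(-O(k))$ rather than $\exp(-O(\ell))$, so only $C=\exp(O(k))\,\delta^{-O(1)}$ colorings are needed. There are three steps: (i) show the color-coding dynamic program is correct and cheap, i.e.\ for each coloring $c$ the matrix $HM^{\ell-2}N$ of Algorithm~\ref{algoNonBacktracking} equals $p_c(Y)$ with $p_{c,i,j}(Y)=\sum_{\alpha\in\textrm{NBW}_\ell(i,j)}F_{c,\alpha}\chi_\alpha(Y)$, and bound its matrix--vector cost; (ii) bound the variance of $\hat p(Y)=\frac1C\sum_t p_{c_t}(Y)$ around $P(Y)$ using the hypothesis, so that $\hat p(Y)$ inherits the correlation of $P(Y)$ with $xx^\top$; (iii) apply the rounding step of Theorem~\ref{correlationPreservation} to $\hat p(Y)$, implemented through the matrix--vector oracle from (i).

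\textbf{Step (i).} I would unroll the color-coding dynamic program as in \cite{alon1995color,8104074}: a state $(v,S)$ records the current vertex $v$ and the ordered list $S$ of colors of the last $\min(|S|,k)$ vertices; the entry $M_{(v_1,S),(v_2,T)}=Y_{v_1v_2}$ advances one step, sliding the length-$k$ color window and accepting the transition only when no color repeats within the window, and $H$, $N$ encode the first and last steps. Composing $\ell$ transitions, the $(i,j)$ entry of $HM^{\ell-2}N$ sums $\chi_\alpha(Y)$ over exactly the length-$\ell$ non-backtracking walks from $i$ to $j$ all of whose length-$k$ chunks are colorful, which is $p_{c,i,j}(Y)$; taking $\E_c$ and using $\E_cF_{c,\alpha}=\hat p_\alpha$ gives $\E_c p_c(Y)=P(Y)$ with no scalar correction (in contrast to the $\frac{\ell!}{\ell^\ell}$ factor in Lemma~\ref{formal28}). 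For the cost, each row of $H$, $M$, $N$ has at most $n$ nonzeros---the target window is determined by the source state, leaving only the new vertex label to range over $[n]$---so $M$ is $O(n\ell^k)\times O(n\ell^k)$ with $O(n^2\ell^k)$ nonzeros, a matrix--vector product with $HM^{\ell-2}N$ (applying the sparse factors one at a time, never forming $M^{\ell-2}$ explicitly) costs $O(n^2\ell^{k+1})$, and one with $\hat p(Y)$ costs $O(Cn^2\ell^{k+1})$.

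\textbf{Step (ii).} Since $c$ is independent of $Y$ and $F_{c,\alpha}$ depends only on $c$, expanding the square gives $\E_{Y,c}[p_{c,i,j}(Y)^2]=\sum_{\alpha,\beta}\E_Y[\chi_\alpha(Y)\chi_\beta(Y)]\,\E_c[F_{c,\alpha}F_{c,\beta}]$, and the theorem's hypothesis bounds this by $\exp(O(k))\sum_{\alpha,\beta}\E_Y[\chi_\alpha\chi_\beta]\,\E_cF_{c,\alpha}\,\E_cF_{c,\beta}=\exp(O(k))\,\E_Y[P_{ij}(Y)^2]$. Because $\E_c p_{c,i,j}(Y)=P_{ij}(Y)$ for every $Y$, averaging $C$ independent colorings shrinks the conditional variance by $1/C$, so $\E\lVert\hat p(Y)-P(Y)\rVert_F^2\le\frac{\exp(O(k))}{C}\E\lVert P(Y)\rVert_F^2$, which is at most $\delta^{O(1)}\E\lVert P(Y)\rVert_F^2$ once $C=\exp(O(k))\delta^{-O(1)}$. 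Combined with the assumed bound $\E\lprod P(Y),xx^\top\rprod\ge\delta\,(\E\lVert P(Y)\rVert_F^2\cdot\lVert x\rVert^4)^{1/2}$, the triangle inequality gives that $\hat p(Y)$ also has correlation $\delta^{O(1)}$ with $xx^\top$.

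\textbf{Step (iii), and the main obstacle.} Finally, applying Theorem~\ref{correlationPreservation} to $\hat p(Y)$ shows a random unit vector $\xi$ in the span of the top $\delta^{-O(1)}$ eigenvectors of $\hat p(Y)$ satisfies $\lprod\xi,x\rprod^2\ge\delta^{O(1)}\lVert x\rVert^2=\delta^{O(1)}n$ with probability $\delta^{O(1)}$; to produce such a $\xi$ within the claimed time I would run block power iteration with $\delta^{-O(1)}$ directions using only matrix--vector products with $\hat p(Y)$, for $\delta^{-O(1)}\cdot O(\log n)$ steps, for a total of $O(n^2\log^{k+1}(n)\,\delta^{-O(1)}\exp(O(k)))$ when $\ell=O(\log n)$ (absorbing lower-order polylog factors). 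The main obstacle is exactly this extraction step: ``the span of the top eigenvectors'' is not stable without a spectral gap, so the honest argument must replace exact eigenvectors by the image of a random Gaussian under a low-degree polynomial filter of $\hat p(Y)$ (a Chebyshev approximation of the projector onto the large eigenvalues) and then check that the guarantee of Theorem~\ref{correlationPreservation} survives this approximation---routine but where the care is needed---whereas the color-coding bookkeeping of Step (i), though fiddlier than in the self-avoiding case, is otherwise mechanical.
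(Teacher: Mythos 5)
Your proposal is correct and follows essentially the same route as the paper's proof: verify the color-coding dynamic program computes $p_c(Y)$ with $\E_c p_c(Y)=P(Y)$, use the stated assumption to bound $\E_{Y,c}[p_{c,i,j}^2(Y)]\le\exp(O(k))\,\E_Y[P_{ij}^2(Y)]$ and average $\exp(O(k))\delta^{-O(1)}$ colorings to reduce variance, then extract the estimator by power iteration through sparse matrix--vector products with the factored form, giving the claimed $O(n^2\log^{k+1}(n)\,\delta^{-O(1)}\exp(O(k)))$ time. Your extra care in Step (iii) about the instability of the top-eigenspace without a spectral gap (replacing it by an approximate spectral filter) is a refinement the paper does not spell out, but it does not change the argument.
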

\textbf{Remark}: The assumption will be satisfied if for all $\alpha,\beta\in\textrm{NBW}_{\ell}(i,j)$, $\E [\chi_\alpha(Y)\chi_\beta(Y)]\geq 0$.
\begin{proof}
	First given single random coloring $c(v):[n]\to [\ell]$, the algorithm evaluates  matrix polynomial $p_{c}(Y)\in\mathbb{R}^{n\times n}$ with each entry given by
	\begin{equation*}
	p_{c,i,j}(Y)=\sum_{\alpha\in \textrm{NBW}(i,j)} \chi_\alpha(Y) F_{c,\alpha}
	\end{equation*}
	where $F_{c,\alpha}$ is the indicator of random event that  each length $k$ walk as a chunk of $\alpha$
	is colorful.  First we note that $ p_{c,i,j}(Y) $ is unbiased estimator for $P_{ij}(Y)$.
	
	Therefore we have 
	\begin{align*}
	\E \left[p_{c,i,j}^2(Y)\right] & =\E\left[\sum_{\alpha,\beta\in \textrm{NBW}(i,j)} \chi_\alpha(Y)\chi_\beta(Y) F_{c,\alpha}F_{c,\beta}\right]\\
	& \leq \exp(O(k))\E_Y\left[\left(\sum_{\alpha\in \textrm{NBW}(i,j)} \E_c\left[ \chi_\alpha(Y)F_{c,\alpha}\right]\right)^2\right]
	\end{align*}
	where we use the assumption. Therefore
	\begin{equation*}
	\E_{Y,c}\left[\left(p_{c,i,j}(Y)-\E_c p_{c,i,j}(Y)\right)^2\right] \leq \exp(O(k)) \E_Y \left[\left(\E_c p_{c,i,j}(Y)\right)^2\right] 
	\end{equation*}
	By averaging $p_{c,i,j}(Y)$ for $C=\omega(\exp(O(k)))$ random colorings, we have
	\begin{equation*}
	\E_{Y,c_1,\ldots,c_C} \left[\left(\frac{1}{C}\sum_{t,i,j}p_{c_t,i,j}(Y)-\E_c p_{c,i,j}(Y)\right)^2\right]\leq \frac{1}{C}\exp(O(k))\E_{Y}\left(\E_c P_{c,i,j}(Y) \right)^2
	\end{equation*}
	Therefore let $p_{ij}(Y)=\frac{1}{C}\sum_{t,i,j}p_{c_t,i,j}(Y)$, we have $$\E p_{ij}^2(Y)=(1+o(1))( \E_Y\left[(\E_c  p_{c,i,j}(Y))^2 \right])=(1+o(1))\left((\E P_{ij}(Y))^2\right)$$
	As a result, we can use $p_{ij}(Y)$ as substitute for $P_{ij}(Y)$. 
	
	For extracting the span of top $\delta^{-O(1)}$ eigenvectors, we apply power method. Since $p(Y)$ can be represented as a sum of chain product of matrices, we can iteratively apply matrix-vector product rather than obtaining $p(Y)$ explicitly. Since for matrix $H,M,N$, there are at most $n^2\log^{k+1}(n)$ non-zero elements.The resulting complexity is thus given by $O(n^2\log^{k+1} (n) \delta^{-1} \exp(O(k)))$
\end{proof}

\section{Order-3 spiked tensor model}

\subsection{Strong detection algorithm for spiked tensor model}\label{appenDetection}

For spiked tensor model, we also consider strong detection problem, which is closely related to weak recovery problem. Specifically given tensor $Y$ sampled from general spiked tensor model, we want to detect whether it's sampled with $\lambda=0$ or large $\lambda$ with high probability. 
\begin{definition}[Strong detection]
 Given tensor $Y$ sampled from planted distribution $\mathbb{P}$ or null distribution $\mathbb{Q}$ with equal probability, we need to find a function of entries in $Y$: $f(Y)\in\{0,1\}$ such that
\begin{equation*}
    \frac{1}{2}\mathbb{P}[f(Y)=1]+\frac{1}{2}\mathbb{Q}[f(Y)=0]=1-o(1)
\end{equation*}
\end{definition}

It's not explicitly stated  in previous literature how to obtain strong detection algorithm via low degree method. The following self-clear fact provides a systematic way for doing so. 
\begin{theorem}[Low degree polynomial thresholding algorithm]\label{thm:detectionMetaTheorem}
 Given $Y$ sampled from $\mathbb{P}$ and $\mathbb{Q}$ with equal probability, For  polynomial $P(Y)=\sum_{\alpha\in S_\ell}\hat{\mu}_\alpha\chi_\alpha(Y)$ where $\{\chi_\alpha(Y):\alpha \in S_\ell\}$ is a set of  polynomial basis orthonormal under measure $\mathbb{Q}$ and $\hat{\mu}_\alpha$ is the Fourier coefficient of likelihood ratio between planted and null distribution $\mu(Y)=\frac{\mathbb{P}(Y)}{\mathbb{Q}(Y)}$ corresponding to basis $\chi_\alpha(Y)$: $\hat{\mu}_\alpha=\E_{\mathbb{P}} \chi_\alpha(Y)$, if we have diverged low degree likelihood ratio $\sum_{\alpha\in S_\ell} \hat{\mu}_\alpha^2 =\omega(1)$ and concentration property $\E_{\mathbb{P}} P^2(Y)=(1+o(1)) \left(\E_{\mathbb{P}} P(Y)\right)^2$, then this implies strong detection algorithm by thresholding polynomial $P(Y)$. 
\end{theorem}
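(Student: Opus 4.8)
The plan is to threshold $P(Y)$ at a level lying strictly between its typical value under $\mathbb{P}$ and its typical value under $\mathbb{Q}$, and to control both error probabilities by Chebyshev's inequality. Write $L := \sum_{\alpha\in S_\ell}\hat\mu_\alpha^2$, which is $\omega(1)$ by hypothesis. The first step is to compute the relevant first and second moments of $P(Y)$ under the two measures. Using the definition $\hat\mu_\alpha = \E_{\mathbb{P}}\chi_\alpha(Y)$ one gets $\E_{\mathbb{P}}P(Y) = \sum_{\alpha\in S_\ell}\hat\mu_\alpha\,\E_{\mathbb{P}}\chi_\alpha(Y) = \sum_{\alpha\in S_\ell}\hat\mu_\alpha^2 = L$; equivalently, $P$ is the $\mathbb{Q}$-orthogonal projection of the likelihood ratio $\mu$ onto $\mathrm{span}\{\chi_\alpha:\alpha\in S_\ell\}$, so $\E_{\mathbb{P}}P(Y) = \langle\mu,P\rangle_{\mathbb{Q}} = \|P\|_{\mathbb{Q}}^2 = L$. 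By orthonormality of $\{\chi_\alpha\}$ under $\mathbb{Q}$, $\E_{\mathbb{Q}}P^2(Y) = \sum_{\alpha,\beta\in S_\ell}\hat\mu_\alpha\hat\mu_\beta\,\E_{\mathbb{Q}}[\chi_\alpha(Y)\chi_\beta(Y)] = L$, while $\E_{\mathbb{Q}}P(Y)$ is $0$ (or $\hat\mu_\emptyset = O(1)$ if the constant monomial happens to lie in $S_\ell$).

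The second step is the two tail bounds. Under $\mathbb{P}$, the concentration hypothesis $\E_{\mathbb{P}}P^2(Y) = (1+o(1))(\E_{\mathbb{P}}P(Y))^2 = (1+o(1))L^2$ gives $\mathrm{Var}_{\mathbb{P}}P(Y) = o(L^2)$, so Chebyshev yields $\mathbb{P}\big[\,|P(Y)-L|\ge L/2\,\big] \le \mathrm{Var}_{\mathbb{P}}P(Y)/(L/2)^2 = o(1)$. Under $\mathbb{Q}$, since $\E_{\mathbb{Q}}P^2(Y) = L$, Markov's inequality applied to $P^2(Y)$ gives $\mathbb{Q}\big[\,|P(Y)|\ge L/2\,\big] \le L/(L/2)^2 = 4/L = o(1)$, using $L = \omega(1)$.

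The third step is to assemble the test. I would set $f(Y) = \mathbbm{1}\big[\,P(Y) > L/2\,\big]$; the threshold $L/2 = \tfrac12\E_{\mathbb{P}}P(Y)$ is a quantity determined by the model (not by the sample), so $f$ is a legitimate detector. From the two tail bounds, $\mathbb{P}[f(Y)=1] \ge 1 - \mathbb{P}\big[\,|P(Y)-L|\ge L/2\,\big] = 1-o(1)$ and $\mathbb{Q}[f(Y)=0] \ge 1 - \mathbb{Q}\big[\,|P(Y)|\ge L/2\,\big] = 1-o(1)$, hence $\tfrac12\mathbb{P}[f(Y)=1] + \tfrac12\mathbb{Q}[f(Y)=0] = 1-o(1)$, which is precisely strong detection.

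There is no genuinely hard step here: the statement is in essence ``a diverging $\chi^2$-type quantity together with concentration implies that thresholding works,'' and the proof is two moment computations followed by Chebyshev. The only points requiring care are (i) verifying, via $\hat\mu_\alpha = \E_{\mathbb{P}}\chi_\alpha(Y)$ and Parseval, that $\E_{\mathbb{P}}P(Y)$ is exactly $L = \sum_{\alpha\in S_\ell}\hat\mu_\alpha^2$, so that the hypothesis $\E_{\mathbb{P}}P^2 = (1+o(1))(\E_{\mathbb{P}}P)^2$ really is a statement about fluctuations around $L$; and (ii) observing that the $\mathbb{Q}$-scale of $P$ is $\sqrt{L}$ whereas its $\mathbb{P}$-value is $L$, a genuine $\omega(1)$ gap, so that the fixed threshold $L/2$ separates the two regimes.
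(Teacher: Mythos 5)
Your proposal is correct and follows essentially the same route as the paper's proof: compute $\E_{\mathbb{P}}P=\E_{\mathbb{Q}}P^2=\sum_\alpha\hat\mu_\alpha^2=\omega(1)$, use the concentration hypothesis with Chebyshev under $\mathbb{P}$, and use Markov/Chebyshev under $\mathbb{Q}$ to exploit the $L$ versus $\sqrt{L}$ scale gap, then threshold in between. Your write-up is simply a more explicit version (fixed threshold $L/2$, spelled-out tail bounds) of the paper's terse argument.
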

\begin{proof}
    Since $\sum_{\alpha\in S_\ell} \hat{\mu}_\alpha^2 =\omega(1)$, we have $\E_{\mathbb{P}} P(Y)=\E_{\mathbb{Q}} P^2(Y)=\omega(1)$. By Chebyshev's inequality, for $Y\sim \mathbb{P}$ w.h.p we have $P(Y)=(1\pm o(1)) \E_{\mathbb{P}} P(Y)=\omega(\sqrt{\E_{\mathbb{Q}} P^2(Y)})$ while for $Y\sim \mathbb{Q}$ w.h.p we have $P(Y)=O(\sqrt{\E_{\mathbb{Q}} P^2(Y)})$
\end{proof}

Our guarantee for strong detection in spiked tensor model can be stated as following:
\begin{theorem}
\label{thm:detection-tensor}
  Let $x \in \R^n$ be a random vector with independent, mean-zero entries having $\E x_i^2 = 1$ and $\E x_i^4 \leq n^{o(1)}$. Let $\lambda > 0$.
  Let $Y = \lambda \cdot x^{\otimes 3} + W$, where $W \in \R^{n \times n \times n}$ has independent, mean-zero entries with $\E W_{ijk}^2 = 1$.
  Then for $c\geq n^{-1/8+o(1)}$ and $\lambda \geq c n^{-3/4}$, there is $n^{O(1/c^4)}$ time algorithm achieving strong detection.
\end{theorem}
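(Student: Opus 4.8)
\emph{Plan.} I will deduce the theorem from the low-degree thresholding meta-theorem (Theorem~\ref{thm:detectionMetaTheorem}). Let $\mathbb{Q}$ be the null law ($\lambda=0$) and $\mathbb{P}$ the planted law. For a set $\alpha$ of hyperedges write $\chi_\alpha(Y)=\prod_{e\in\alpha}Y_e$; since under $\mathbb{Q}$ the entries $Y_e=W_e$ are independent, mean-zero and unit-variance, $\E_{\mathbb{Q}}[\chi_\alpha\chi_\beta]=\mathbbm{1}[\alpha=\beta]$ for multilinear $\alpha,\beta$, so $\{\chi_\alpha\}$ is orthonormal under $\mathbb{Q}$ \emph{using only the variance hypothesis on $W$}. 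I will pick an index family $S$ of hypergraph ``shapes'' and set $P(Y)=\sum_{\alpha\in S}\hat\mu_\alpha\chi_\alpha(Y)$ with $\hat\mu_\alpha=\E_{\mathbb{P}}\chi_\alpha(Y)$; then $\E_{\mathbb{P}}P=\E_{\mathbb{Q}}P^2=\sum_{\alpha\in S}\hat\mu_\alpha^2=:\Lambda^2$, so by Theorem~\ref{thm:detectionMetaTheorem} it suffices to show (i) $\Lambda^2=\omega(1)$ and (ii) $\E_{\mathbb{P}}P^2=(1+o(1))\Lambda^4$. For $S$ I take the \emph{closed} analogues of the branching self-avoiding-walk shapes $S_{\ell,i}$ of Definition~\ref{defEstimator} (equipped with the ``width'' parameter of appendix~\ref{proofWeakRecovery}), i.e.\ hypergraphs in which the endpoint leaves of the caterpillar are identified so that every vertex has even degree; up to lower-order diagonal terms this $P$ is the multilinear part of $\|P^{\mathrm{rec}}(Y)\|^2$ for the recovery estimator $P^{\mathrm{rec}}$ of Theorem~\ref{polyPCA}, and the width and depth are tuned so that $P$ is evaluated in $n^{O(1/c^4)}$ time by the color-coding procedure of appendix~\ref{colorCodingTensor}.

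\emph{Step 1 (signal lower bound, (i)).} Since $W$ is mean-zero and independent of $x$, $\hat\mu_\alpha=\E_x\prod_{(i,j,k)\in\alpha}\lambda x_ix_jx_k=\lambda^{|\alpha|}\prod_v\E x_v^{d_\alpha(v)}$, where $d_\alpha(v)$ is the degree of $v$ in $\alpha$. Restricting $S$ to shapes with all degrees \emph{even} (exactly why the glued shapes are the right choice) gives $\hat\mu_\alpha=\lambda^{|\alpha|}$ whenever every $d_\alpha(v)=2$. A $3$-uniform hypergraph with $m$ hyperedges and all degrees $2$ has $3m/2$ vertices, so a fixed such shape has $(1-o(1))n^{3m/2}$ labeled copies on $[n]$, each contributing $\lambda^{2m}\ge (cn^{-3/4})^{2m}$; summed over the branching shapes in the family, with width $\asymp 1/c^4$, the total diverges exactly when $\lambda\gtrsim n^{-3/4}$. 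This is the same signal count that underlies Theorem~\ref{polyPCA}, which I would invoke.

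\emph{Step 2 (concentration, (ii)).} Expand $\E_{\mathbb{P}}P^2=\sum_{\alpha,\beta\in S}\hat\mu_\alpha\hat\mu_\beta\,\E_{\mathbb{P}}[\chi_\alpha\chi_\beta]$ and decompose $\E_{\mathbb{P}}[\chi_\alpha\chi_\beta]$ by shared edges and vertices exactly as in the matrix analysis preceding Lemma~\ref{sawCorrelation}: a shared edge $(u,v,w)$ contributes $1+\lambda^2 x_u^2x_v^2x_w^2$ and each edge of $\alpha\triangle\beta$ contributes $\lambda x_ux_vx_w$. Vertex-disjoint pairs contribute exactly $\hat\mu_\alpha^2\hat\mu_\beta^2$ (as $\E x_v^2=1$), and since the shapes have size $n^{o(1)}$ these form a $(1-o(1))$-fraction, giving $(1-o(1))\Lambda^4$. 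It remains to bound pairs sharing $\ge 1$ vertex or edge by $o(\Lambda^4)$: by the tensor analogues of Lemmas~\ref{criticalQuantity} and~\ref{corollarySegmentExpectation} (with the role of $\|x\|_\infty$ played by $\Gamma=\E x_i^4\le n^{o(1)}$), each shared vertex replaces a free summation factor $n$ by $O(\Gamma)$ and each shared edge replaces a factor $\asymp n^{3/2}\lambda$ by $O(\Gamma^{3/2})$, so the overcounting error is a geometric-type sum that is $o(1)$ precisely when $c\ge n^{-1/8}\Gamma^{1/4}\mathrm{polylog}(n)$. This is the same second-moment computation that yields the variance bound behind Theorem~\ref{polyPCA}, which I would cite rather than repeat.

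\emph{Conclusion and main obstacle.} With (i) and (ii), Theorem~\ref{thm:detectionMetaTheorem} shows that thresholding $P(Y)$ at $\Theta(\Lambda^2)$ achieves strong detection, and $P(Y)$ is evaluated in $n^{O(1/c^4)}$ time by color coding (appendix~\ref{colorCodingTensor}). The main obstacle is Step 2: controlling the second moment over pairs of \emph{intersecting} branching $3$-uniform shapes is combinatorially heavier than the matrix case (three indices per hyperedge, the branching spine, and the width parameter all enter), and it is precisely this error analysis that forces both the $\Gamma^{1/4}$ factor and the $n^{-1/8}$ floor in the hypothesis on $c$.
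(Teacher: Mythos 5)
Your proposal takes essentially the same route as the paper's proof: the paper likewise applies the thresholding meta-theorem (Theorem~\ref{thm:detectionMetaTheorem}) to the sum of $\chi_\alpha$ over the closed, $2$-regular layered hypergraph family $S_{\ell,v}$ of Definition~\ref{3DetectionPolynomial} with width $v \asymp 1/c^4$ and $\ell \asymp \log n$, proves divergence of $\sum_\alpha \hat\mu_\alpha^2$ (Lemma~\ref{diverge_density}) and the concentration $\E P^2 = (1+o(1))(\E P)^2$ by classifying intersecting pairs through shared vertices and hyperedges with $\Gamma$ controlling the moment growth (Lemmas~\ref{tensor3PolynomialCorrelation}--\ref{concentration_detection}), and evaluates $P$ by color coding in $n^{O(v)} = n^{O(1/c^4)}$ time (Lemma~\ref{evaDetection}). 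One small caveat: in your Step 1 the divergence does not come from the $n^{3m/2}$ labeled copies of a single fixed shape (that alone gives $(n^{3/2}\lambda^2)^m = c^{2m} \to 0$) but from the multiplicity of matchings at width $v$, which supplies the extra $v^{1/2}$ per hyperedge appearing in the paper's parameter $\gamma = 0.001\, n^{3/2} v^{1/2}\lambda^2$ and is exactly why $v \gtrsim 1/c^4$ is needed.
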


For strong detection algorithm, the thresholding polynomial we use is given by the following. 

\begin{definition}[Thresholding polynomial for strong detection]\label{3DetectionPolynomial}
On directed complete 3-uniform  hypergraph with $n$ vertices, we define $S_{\ell,v}$ as  the set of all copies of  $2$-regular hypergraphs generated in the following way:
\begin{itemize}
    \item we construct $2\ell$ levels of distinct vertices labeled by $0,1,\ldots,2\ell-1$. For levels $t\in [0,2\ell-1]$, it contains $v$ vertices if $t$ is even and $2v$ vertices if $t$ is odd.
    \item Then we construct a perfect matching between levels $t,t-1$ for $t\in [2\ell-1]$ and  between levels $0,2\ell-1$. For each hyperedge, $1$ 
    vertex comes from even level while $2$ vertices come from odd level. The hyperedges are directed from level $0$ to  level $2\ell-1$ and from level $t$ to level $t+1$ for $t\in [0,2\ell-2]$.(An example of such   construction is illustrated in figure \ref{fig:tensorLayerDetection}).
\end{itemize}
\begin{figure}
    \centering
    \includegraphics[width=0.8\textwidth]{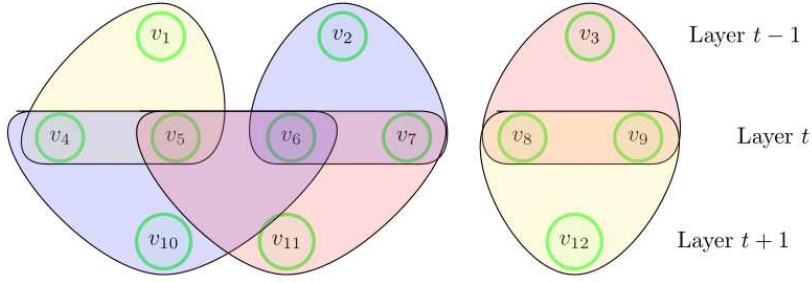}
    \caption{An example of possible directed hyperedge connection  between adjacent layers $t-1,t,t+1$ for an hypergraph $\alpha\in S_{\ell,3}$.
    (1) When $t\in [2\ell-2]$, the direction of hyperedges are given by $(v_1,v_4,v_5),(v_2,v_6,v_7),(v_3,v_8,v_9),(v_4,v_6,v_{10}), (v_5,v_7,v_{11}), (v_8,v_9,v_{12})$. (Note that each hyperedge directs from the layer $t-1$ to the layer $t$ or from the layer $t$ to the layer $t+1$.)
    (2) When $t=0$,the layers are given by $2\ell-2,0,1$ by periodic indexing. The directions of hyperedges are given by $(v_4,v_5,v_1),(v_6,v_7,v_2),(v_8,v_9,v_3),(v_4,v_6,v_{10}), (v_5,v_7,v_{11}), (v_8,v_9,v_{12})$.}
    \label{fig:tensorLayerDetection}
\end{figure}

    Given tensor $Y\in\mathbb{R}^{n\times n\times n}$, the degree $2\ell v$ polynomial $P(Y)$ is given by $P(Y)=\sum_{\alpha\in S_{\ell,v}} \chi_\alpha(Y)$, where $\chi_\alpha(Y)$ is the corresponding 
    multilinear polynomial basis $\chi_\alpha(Y)=\prod_{(i,j,k)\in\alpha} Y_{ijk}$. 
\end{definition}

For simplicity of formulation, we use periodic index below(i.e for  level $t=-1$ we mean level $2\ell-1$).

For proving strong detection guarantee, we first need two hypergraph properties.
\begin{lemma}\label{hypergraph3setSize}
On directed complete hypergraph with $n$ vertices, the number of hypergraphs contained in the set $S_{\ell,v}$ defined in \ref{3DetectionPolynomial} is given by
    $$\lvert S_{\ell,v}\rvert=(1-o(1)) \left({n\choose v}{n\choose 2v}((2v)!)^2\right)^{\ell}$$
\end{lemma}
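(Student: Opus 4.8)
The plan is to set up a near-bijection between $S_{\ell,v}$ and a natural set of ``builds'' of the prescribed shape, count the builds, and then show that distinct builds produce distinct hypergraphs (so the count of builds equals $|S_{\ell,v}|$ up to a $(1-o(1))$ factor that comes solely from insisting all chosen vertices be distinct). Call a \emph{build} the data of: (i) pairwise disjoint vertex sets $V_0,\dots,V_{2\ell-1}\subseteq[n]$ with $|V_{2t}|=v$ and $|V_{2t+1}|=2v$ for $0\le t\le\ell-1$; and (ii) for each of the $2\ell$ consecutive level-pairs, including the wrap-around pair $(2\ell-1,0)$, a perfect matching of directed hyperedges between the two levels, i.e.\ an assignment to each vertex of the smaller (size-$v$) level of an \emph{ordered} pair of vertices of the larger (size-$2v$) level, so that these pairs partition the larger level. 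By Definition~\ref{3DetectionPolynomial} every $\alpha\in S_{\ell,v}$ arises from at least one build.

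Next I would count builds. For the matching on a single level-pair: running through the $v$ vertices of the small side in order and choosing for each an ordered pair of not-yet-used vertices of the $2v$-side gives $(2v)(2v-1)\cdot(2v-2)(2v-3)\cdots 2\cdot 1=(2v)!$ possibilities, hence $((2v)!)^{2\ell}$ over all $2\ell$ level-pairs. For the vertex sets, choosing $V_0,V_1,\dots$ one level at a time and using $\binom{n-k}{m}=(1-o(1))\binom nm$ whenever $km=n^{o(1)}$ — which holds here since the total number of vertices used is $3\ell v=n^{o(1)}$ in the parameter regime of Theorem~\ref{thm:detection-tensor} — gives $(1-o(1))\big(\binom nv\binom n{2v}\big)^{\ell}$, the $\ell$ factors of $\binom nv$ coming from the even levels and the $\ell$ factors of $\binom n{2v}$ from the odd ones. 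Multiplying, the number of builds is $(1-o(1))\big(\binom nv\binom n{2v}((2v)!)^2\big)^{\ell}$.

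The remaining, and I expect most delicate, step is to show that the map $(\text{build})\mapsto(\text{hypergraph }\alpha)$ is injective, which then identifies $|S_{\ell,v}|$ with the count above. The undirected structure of $\alpha$ is a ``necklace'' of $2\ell$ vertex classes with sizes alternating $v,2v$: one recovers the cyclic adjacency of the classes from the co-occurrence graph of $\alpha$ (joining two vertices when they lie in a common hyperedge), since within one level-pair this graph is a disjoint union of $v$ triangles and the triangle-groups of adjacent level-pairs meet exactly along the shared $2v$-class. A $2\ell$-cycle with this size pattern has a $2\ell$-fold dihedral symmetry, so the undirected data alone does not recover the build; the orientations do, because they are recorded in the order of each hyperedge-triple — the source-level vertices always listed first — and the orientation pattern makes level $0$ the \emph{unique} level all of whose incident hyperedges point away from it (a ``source level'') and level $2\ell-1$ the unique ``sink level''. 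This pins down the base point and the direction of traversal, after which reading off $V_0,V_1,\dots$ in order, and then for each level-pair reading off which ordered pair of the larger level is attached to each vertex of the smaller level, recovers the build uniquely. Hence $|S_{\ell,v}|$ equals the number of builds, which is the claimed expression; the main work is making the rigidity argument of this paragraph fully precise (in particular verifying the co-occurrence/triangle description and that no two builds collide), while the counting in the previous paragraph is routine.
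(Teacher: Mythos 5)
Your proof follows the same route as the paper's: the paper's entire argument is the observation that, once the vertex sets of the $2\ell$ levels are fixed, each of the $2\ell$ level-pairs admits $(2v)!$ perfect matchings of directed hyperedges, giving $\lvert S_{\ell,v}\rvert=(1-o(1))\bigl(\binom{n}{v}\binom{n}{2v}((2v)!)^2\bigr)^{\ell}$, and your build-count reproduces exactly this. What you add, and the paper leaves implicit, is the injectivity of the map from builds to hypergraphs; your rigidity argument is sound and is in fact cleanest in the form you sketch at the end: level $0$ is the unique level all of whose incident hyperedges are outgoing and level $2\ell-1$ the unique all-incoming level, after which the remaining level labels and the ordered target pairs are read off from the directed triples, so distinct builds give distinct hypergraphs. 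One small correction: your justification of the $(1-o(1))$ factor asserts $3\ell v=n^{o(1)}$ in the regime of Theorem~\ref{thm:detection-tensor}, but in the subexponential-time range ($c$ close to $n^{-1/8}$) one has $v$ as large as roughly $n^{1/2-\Omega(1)}$, so this claim fails; what is actually needed, and what the paper's applications assume via $n=\omega(v^2\,\mathrm{poly}(\ell\Gamma))$, is only $\ell^2v^2=o(n)$, under which the product over all levels of the ratios $\binom{n-k}{m}/\binom{n}{m}\geq 1-O(km/n)$ is still $1-o(1)$, so your conclusion stands after replacing that sufficient condition.
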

\begin{proof}
For fixed vertices, between level $t$ and level $t+1$ there are $\frac{(2v)!v!}{v!}$ ways of connecting hyperedges, therefore we have $\lvert S_{\ell,v}\rvert=(1-o(1)) \left({n\choose v}{n\choose 2v}\left(\frac{v!(2v)!}{v!}\right)^2\right)^{\ell}$.
\end{proof}

\begin{lemma}\label{hypergraphSimpleRelation}
    For a pair of hypergraphs $\alpha,\beta\in S_{\ell,v}$, we denote the number of shared vertices as $r$, the number of shared hyperedges as $k$, and the number of shared vertices with degree $0$ or $1$ in $\alpha\cap\beta$ as $s$. Then we have relation
    $2r-s\geq 3k$. 
    
    Further if $2r=3k$, then 
    \begin{itemize}
        \item 
     either $\alpha,\beta$ are disjoint $k=r=0$
      \item or for all levels $t\in [0, 2\ell-1]$, $k_t\geq 1$ and are equal to the same value. 
    \end{itemize}
\end{lemma}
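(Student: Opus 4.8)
The plan is to reduce both parts to a single incidence count in the hypergraph $H := \alpha \cap \beta$, whose vertex set is the set of $r$ shared vertices and whose edge set is the set of $k$ shared hyperedges (every vertex of a shared hyperedge is automatically shared, so this is well defined). Since $\alpha$ and $\beta$ are $2$-regular and $E(H) \subseteq E(\alpha)$, every vertex of $H$ has degree $0$, $1$, or $2$; let $s_0, s_1, s_2$ count the vertices of each degree, so $r = s_0 + s_1 + s_2$, $s = s_0 + s_1$, and, counting vertex–hyperedge incidences (each hyperedge being a triple), $3k = s_1 + 2 s_2$. These identities give $2r - s - 3k = s_0 \ge 0$, which is the asserted inequality, and also $2r - 3k = 2 s_0 + s_1$, so the hypothesis $2r = 3k$ forces $s_0 = s_1 = 0$: in the equality case every shared vertex has degree exactly $2$ in $H$.

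The second and main step would be to propagate this $2$-regularity of $H$ around the cyclic level structure. Write $k_t$ for the number of shared hyperedges in the matching between level $t$ and level $t+1$ (indices mod $2\ell$), and $a_t$ for the number of shared vertices lying in level $t$. Each vertex of level $t$ lies in exactly two hyperedges of $\alpha$ — one in the matching $(t-1,t)$ and one in the matching $(t,t+1)$ — and if that vertex is shared then, by the degree-$2$ fact above, both of those hyperedges are shared. Moreover every hyperedge of either matching meets level $t$ in a fixed number $c_t$ of vertices, with $c_t = 1$ when $t$ is even and $c_t = 2$ when $t$ is odd, because each hyperedge has one vertex in its incident even level and two in its incident odd level. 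Double-counting incidences between the shared vertices of level $t$ and the shared hyperedges of the matching $(t-1,t)$ yields $c_t\, k_{t-1} = a_t$, and the identical count for the matching $(t,t+1)$ yields $c_t\, k_t = a_t$; hence $k_{t-1} = k_t$ for every $t$, so all the $k_t$ equal a common value $m$. If $m = 0$ then every $a_t = 0$, hence $r = k = 0$ and $\alpha,\beta$ are disjoint; otherwise $m \ge 1$ and $k_t = m \ge 1$ for all $t \in [0, 2\ell-1]$, which is exactly the claim.

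I expect the only delicate point to be the level-propagation step: one has to keep track of the two distinct hyperedges incident to each vertex, the even/odd asymmetry encoded by $c_t \in \{1,2\}$, and — crucially — the role of the hypothesis $2r = 3k$, which is used precisely to ensure that whenever a level-$t$ vertex is shared both its ``left'' and its ``right'' hyperedge are shared; this is what allows $a_t$ to be read off simultaneously from $k_{t-1}$ and from $k_t$. Everything else is routine degree counting and does not depend on the specifics of the noise model or on $\ell, v$ beyond the stated combinatorial structure.
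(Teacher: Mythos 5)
Your proof is correct and takes essentially the same route as the paper's: the degree/incidence count ($3k=s_1+2s_2$, hence $3k\le 2r-s$) for the inequality, and the observation that $2r=3k$ forces every shared vertex to have degree $2$ in $\alpha\cap\beta$, which then propagates around the cyclic level structure to force $k_{t-1}=k_t$ for all $t$. The only difference is presentational: you make the level-by-level double count $c_t\,k_{t-1}=a_t=c_t\,k_t$ explicit, whereas the paper argues the same propagation step by contraposition (if $k_{t-1}\neq k_t$ some shared vertex would have degree $0$ or $1$).
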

\begin{proof}
    In $3$-uniform sub-hypergraph $\alpha\cap\beta$, there are $k$ hyperedges, $r-s$ vertices with degree $2$ and at most $s$ vertices with degree $1$. By degree constraint, we have $3k\leq 2(r-s)+s=2r-s$. 
    
    When $2r=3k$, each shared vertice between $\alpha,\beta$ has degree $2$ in $\alpha\cap\beta$.  However if there exists $t\in [0,2\ell-1]$ such that $k_{t-1}\neq k_{t}$, then there are shared vertices at level $t-1$ with degree $1$ or $0$ in subgraph $\alpha\cap\beta$. Thus either $\alpha,\beta$ are disjoint($k_t=0$), or for all levels $t$,$k_t\geq 1$ and are all equal.
\end{proof}

We consider the set of hypergraph pairs $\alpha,\beta\in S_{\ell,v}$  such that in $\alpha$
\begin{itemize}
    \item at level $t$   there are $r_t$  vertices shared with $\beta$
    \item between level $t+1$ and level $t$, there are $k_t$  hyperedges  shared with $\beta$.
\end{itemize}
We denote such set of hypergraph pairs  as $S_{\ell,v,k,r}$, where $k=\sum k_t,r=\sum r_t$.
Then $r$ is just the number of shared vertices between $\alpha,\beta$ as $r$ and $k$ is just the number of shared hyperedges between $\alpha,\beta$. Although we abuse the notations(since the set $S_{\ell,v,k,r}$ is related to $k_t,r_t$),
 by the following lemma we can bound the size of such set only using $k,r,\ell,v$. 
\begin{lemma}\label{tensor3PolynomialCorrelation}
    On directed complete hypergraph with $n$ vertices,  for any set $S_{\ell,v,k,r}$ with $\ell=O(\log n)$, $v=o(n)$, $k,r\leq \ell v$, the number of hypergraph pairs contained in the set $S_{\ell,v,k,r}$ is bounded by
    \begin{equation*} \lvert S_{\ell,v,k,r}\rvert\leq \lvert S_{\ell,v}\rvert^2 n^{-r}v^{2r-4k}\ell^{2r-3k} v^{k/2}\exp (O(r))  
    \end{equation*}
\end{lemma}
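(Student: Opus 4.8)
The plan is a direct counting argument: fix $\alpha$, bound the number of compatible $\beta$'s, and simplify using the closed form for $|S_{\ell,v}|$ from Lemma~\ref{hypergraph3setSize} together with the structural inequalities of Lemma~\ref{hypergraphSimpleRelation}. All $(1\pm o(1))$ factors are controlled by the hypotheses $\ell=O(\log n)$, $v=o(n)$, $k,r\le\ell v$, which in particular force $3\ell v=o(n)$. There are $|S_{\ell,v}|=(1-o(1))\,n^{3\ell v}\big(\tfrac{(2v)!}{v!}\big)^{\ell}$ choices for $\alpha$; fixing $\alpha$, write $N(\alpha)$ for the number of $\beta\in S_{\ell,v}$ with $(\alpha,\beta)\in S_{\ell,v,k,r}$. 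The final bound will follow by multiplying by $N(\alpha)$ and using the identity $|S_{\ell,v}|\cdot n^{3\ell v-r}\big(\tfrac{(2v)!}{v!}\big)^{\ell}=(1-o(1))\,|S_{\ell,v}|^2 n^{-r}$.

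To bound $N(\alpha)$ I would first choose the shared subhypergraph $H:=\alpha\cap\beta$ as a subhypergraph of $\alpha$ with the prescribed profile ($k_t$ shared hyperedges at gap $t$, $r_t$ shared vertices at level $t$). Since $\alpha$ is $2$-regular and layered, each connected component of $H$ is a ``segment'' of consecutive shared hyperedges, and the crucial feature is that two consecutive shared hyperedges meeting at an even level of $\alpha$ share their even-level vertex: so a segment of $m$ hyperedges is pinned down by choosing only about $m/2$ even-level vertices plus a single positional parameter. Writing $p$ for the number of connected components of $H$ and $s$ for the number of shared vertices of degree $\le 1$ in $H$, a case analysis (segments together with the at most $s$ dangling or isolated shared vertices) bounds the number of admissible $H\subseteq\alpha$ by $\ell^{O(p)}\,v^{O(s)+\Theta(k)}\,\exp(O(r))$.

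Given $H$ abstractly, I would then place it inside $\beta$'s layered structure — a starting level and within-level positions for each of the $p$ components, another $\ell^{O(p)}v^{O(p)}$ factor — and complete $\beta$ by choosing its $3\ell v-r$ vertices not in $H$ (at most $n^{3\ell v-r}(2v)^{r}\big/\big((v!)^{\ell}((2v)!)^{\ell}\big)\cdot(1+o(1))$ ways, since these labels fill the unordered level-sets around the $r$ forced positions) and the matchings on the remaining slots (at most $((2v)!)^{2\ell}$ ways, an overcount that ignores the $k$ fixed hyperedges). Multiplying everything out and using the identity from the first paragraph gives $|S_{\ell,v,k,r}|\le(1-o(1))\,|S_{\ell,v}|^2\,n^{-r}(2v)^{r}\,\ell^{O(p)}\,v^{O(s)+\Theta(k)+O(p)}\,\exp(O(r))$; then Lemma~\ref{hypergraphSimpleRelation} — which yields $2r-s\ge 3k$ (hence $s\le 2r-3k$) together with the relation among the $k_t$ that controls $p$ — lets one replace $p$ and $s$ by the right functions of $r,k$ and collapse this to the claimed $n^{-r}v^{2r-4k}\ell^{2r-3k}v^{k/2}\exp(O(r))$.

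\textbf{The main obstacle.} The architecture is routine; the delicate part is getting the exponents of $v$ and $\ell$ to match exactly. First, the $v^{k/2}$ factor must be extracted from the observation that a shared segment of $m$ hyperedges is determined by $\approx m/2$ even-level vertices (summing to $\approx k/2$ over all segments) rather than by $m$; being wasteful here produces $v^{\Theta(k)}$ instead. Second, the intermediate bounds are naturally phrased in terms of $p$ and $s$, and converting them into the exact exponents in $r$ and $k$ requires Lemma~\ref{hypergraphSimpleRelation} plus careful tracking of which shared vertices are forced by shared hyperedges versus genuinely isolated — any slack there costs a power of $v$ or $\ell$ that the stated bound does not tolerate. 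One should also double-check that the overcount $((2v)!)^{2\ell}$ for $\beta$'s matchings, which cancels the $(v!)^{\ell}((2v)!)^{\ell}$ denominator to reproduce the $\big(\tfrac{(2v)!}{v!}\big)^{2\ell}$ structure factor of $|S_{\ell,v}|^2$, does not secretly lose a relevant factor when $k$ is large.
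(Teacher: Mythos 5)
Your overall architecture is the same as the paper's: decompose a pair by its shared structure, complete $\alpha$ and $\beta$ around it, pay $\ell$-factors for where the shared components sit inside $\beta$, and convert $(p,s)$ into $(r,k)$ via Lemma~\ref{hypergraphSimpleRelation}; the $\ell^{2r-3k}$ factor comes out the same way in both. The problem is quantitative, and it sits exactly where the lemma must be sharp, namely the powers of $v$ when $2r$ is close to $3k$ (recall $v$ may grow polynomially in $n$ here --- this is what Theorem~\ref{thm:detection-tensor} and Theorem~\ref{polyPCA} need --- so the $\exp(O(r))=\exp(O(k))$ slack cannot absorb any $v^{\Theta(k)}$ loss). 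First, completing $\beta$ with ``at most $((2v)!)^{2\ell}$ matchings, an overcount that ignores the $k$ fixed hyperedges'' is not a detail to double-check later: the correct count uses $(2(v-k_t))!$ in gap $t$, and replacing it by $(2v)!$ loses roughly $(2v)^{2k_t}$ per gap, i.e.\ $v^{2k}$ overall; this saving is precisely the source of the $v^{r-2k}$ (hence $v^{2r-4k}$) in the statement. A sanity check with $\alpha=\beta$ (so $r=3\ell v$, $k=2\ell v$, $2r=3k$): the true number of pairs is $\lvert S_{\ell,v}\rvert$ and the lemma's right-hand side equals $\lvert S_{\ell,v}\rvert\exp(O(r))$, while your count is at least $\lvert S_{\ell,v}\rvert\, v^{2k}=\lvert S_{\ell,v}\rvert\, v^{4\ell v}$ from the matching overcount alone.

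Second, the mechanism you propose for the $v^{k/2}$ is not the right one. Inside a \emph{fixed} $2$-regular $\alpha$, a connected shared segment of $m$ hyperedges is determined by a single hyperedge of $\alpha$ together with $\exp(O(m))$ local choices (the continuation through an even meeting level is unique, through an odd meeting level there are at most two), not by $\sim m/2$ freely chosen even-level vertices; your $v^{m/2}$ per segment is therefore a genuine overcount, and carrying the resulting $v^{k/2}$ into the final bound is an extra cost the statement does not tolerate: after a corrected $\beta$-completion, the per-$\alpha$ budget for choosing the shared structure is only $v^{r-3k/2}=v^{s_0+s_1/2}$ (powers of $v$ only for shared vertices of degree $\le 1$), up to $\ell$-factors and $\exp(O(r))$. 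In the paper the $v^{k/2}$ arises elsewhere: it is the surviving cost of the \emph{labeled} intersection pattern $N_{\alpha\cap\beta}\le n^r v^{k/2}\exp(O(r))$, where the hyperedge-matching factors $\prod_t(2k_t)!$ are cancelled against the $1/r_u!$ coming from the unordered level sets via $r_u\ge\delta(u)\max(k_{u-1},k_u)$ at both even and odd levels, leaving $\prod_t\sqrt{(2k_t)!/k_t!}\le(2v)^{k/2}$. As written, your sketch establishes the lemma only up to $v^{\Theta(k)}$ factors --- acceptable for constant $v$, but not for the stated range $v=o(n)$; repairing it requires both the $(2(v-k_t))!$ accounting for $\beta$'s matchings and a count of $H\subseteq\alpha$ in which shared hyperedges are essentially free given one representative per component.
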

\begin{proof}
We generate pairs of hypergraph $\alpha,\beta\in S_{\ell,v}$ in the following way: we first choose $\alpha\cap \beta$ and shared vertices as a subgraph of hypergraph in
$S_{\ell,v}$, and then choose remaining graph respectively for $\alpha,\beta$. In hypergraph $\alpha$, suppose there are $r_t$ shared vertices in level  $t$ and $k_t$ shared hyperedges between level $t+1$ and  level $t$.

 We define parity function $\delta(t)=2$ if $t$ is odd and $1$ if $t$ is even. Then we have relation $r_t \geq \delta(t)\textrm{max}(k_{t-1},k_t)$. For choice of $\alpha\cap\beta$,there are
 \begin{align*}
     N_{\alpha\cap\beta}&=\prod_{t=0}^{\ell}{n\choose r_{2t}}{n\choose r_{2t+1}} {r_{2t}\choose k_{2t}} {r_{2t}\choose k_{2t-1}} {r_{2t+1}\choose 2k_{2t}}{r_{2t+1}\choose 2k_{2t+1}} (2k_{2t})!
     (2k_{2t+1})!\\
     &\leq n^{r} v^{k/2} \exp (O(r))
 \end{align*}
 such subgraphs. On the other hand, the number of choices for the remaining hypergraph of $\alpha$ is bounded by
 \begin{align*}
     N_{\alpha-\beta} &=\prod_{t=0}^{\ell-1} {n\choose  v-r_{2t}}{n\choose 2v-r_{2t+1}}(2(v-k_{2t}))! (2(v-k_{2t+1}))!\\
      &= \lvert S_{\ell,v}\rvert n^{-r}v^{r-2k}\exp(O(r))
 \end{align*}
 
 Then we consider the number of $\beta$. Denote the  number of degree-$1$ vertices in $\alpha\cap\beta$ as $s_1$ and the number of shared vertices not contained in $\alpha\cap\beta$ as $s_0$. Let $s=s_0+s_1$, then there are at most $\ell^s$ ways of embedding $\alpha\cap \beta$ in $\beta$. Choosing the remaining hypergraph of $\beta$ is also bounded by $\lvert S_{\ell,v}\rvert n^{-r}v^{r-2k}\exp(-O(k))$. Therefore, with respect to fixed number of vertices and hyperedges $r_t,k_t$ in each level of $\alpha\cap\beta$, the total number of such hypergraph pairs $S_{r,k,\ell,v}$ is bounded by
 \begin{equation*}
     \sum_{s=0}^{2r-3k}\left[\lvert S_{\ell,v}\rvert^2 n^{-r}v^{2r-4k}\ell^s \prod_{t=0}^{2\ell-1} \frac{r_{t}!}{k_t !}\exp (O(r))\right]\leq   \lvert S_{\ell,v}\rvert^2 n^{-r}v^{2r-4k}\ell^{2r-3k} v^{k/2}\exp (O(r)) 
 \end{equation*}
\end{proof}

Next we prove the strong detection through lemma \ref{diverge_density} and \ref{concentration_detection}.

\begin{lemma}\label{diverge_density}
When $\gamma=0.001 n^{3/2}v^{1/2}\lambda^2=1+\Omega(1)$, $\ell=\Omega(\log_{\gamma}n)$ and $n=\omega(v^2\text{poly}(\ell\Gamma))$,
    the projection of likelihood ratio $\mu(Y)$ with respect to $S_{\ell,v}$ diverges, i.e $\sum _{\alpha\in S_{\ell,v}} \hat{\mu}_\alpha^2=\omega(1)$. 
\end{lemma}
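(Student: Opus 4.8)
The plan is to compute $\sum_{\alpha \in S_{\ell,v}}\hat\mu_\alpha^2$ almost exactly. The key observation is that every shape $\alpha\in S_{\ell,v}$ (Definition~\ref{3DetectionPolynomial}) is a $2$-regular $3$-uniform hypergraph on $3\ell v$ pairwise distinct vertices with exactly $2\ell v$ hyperedges, and this forces its Fourier coefficient to be a single power of $\lambda$, namely $\hat\mu_\alpha=\lambda^{2\ell v}$. Consequently the whole sum collapses to $|S_{\ell,v}|\,\lambda^{4\ell v}$, and it remains only to substitute the count from Lemma~\ref{hypergraph3setSize} and the definition of $\gamma$, and check that the resulting base of an $\ell$-th power exceeds $1$.

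\textbf{Step 1: evaluating $\hat\mu_\alpha$.} Fix $\alpha\in S_{\ell,v}$. By definition $\hat\mu_\alpha=\E_{\mathbb{P}}\chi_\alpha(Y)$ with $Y=\lambda x^{\otimes 3}+W$ and $\chi_\alpha(Y)=\prod_{e\in\alpha}Y_e$, the product running over the $2\ell v$ hyperedges. Expand each factor as $Y_e=\lambda\prod_{u\in e}x_u+W_e$. Because the hyperedges of $\alpha$ are pairwise distinct, $\chi_\alpha$ is multilinear in the entries of $W$; since the $W_e$ are independent and mean-zero, any term of the expansion retaining a bare factor $W_e$ has expectation $0$, so only the fully-spiked term survives. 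Using that every vertex of $\alpha$ has degree exactly $2$,
\begin{equation*}
\hat\mu_\alpha=\lambda^{2\ell v}\,\E_x\prod_{u\in V(\alpha)} x_u^{\deg_\alpha(u)}=\lambda^{2\ell v}\,\E_x\prod_{u\in V(\alpha)} x_u^{2}=\lambda^{2\ell v},
\end{equation*}
the last equality holding because the $3\ell v$ vertices of $\alpha$ are distinct and the coordinates of $x$ are independent with $\E x_u^2=1$. (The family $\{\chi_\alpha\}_{\alpha\in S_{\ell,v}}$ is orthonormal under $\mathbb{Q}$, as distinct shapes give products of distinct, independent, unit-variance entries of $W$; the rare coincidence of two shapes producing the same monomial happens with multiplicity $\exp(O(v))=n^{o(1)}$, negligible against the exponential-in-$\ell$ growth below, so I would treat them as genuinely orthonormal.)

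\textbf{Step 2: counting and substituting.} By Step 1, $\sum_{\alpha\in S_{\ell,v}}\hat\mu_\alpha^2=|S_{\ell,v}|\,\lambda^{4\ell v}$. The hypothesis $n=\omega(v^2\,\text{poly}(\ell\Gamma))$ legitimizes all the $(1-o(1))$ factors here (it gives, e.g., $\ell^2 v^2=o(n)$ and $v^2=o(n)$), so $\binom{n}{v}\binom{n}{2v}=(1-o(1))\,n^{3v}/(v!\,(2v)!)$ and Lemma~\ref{hypergraph3setSize} yields $|S_{\ell,v}|=(1-o(1))\bigl(n^{3v}\,(2v)!/v!\bigr)^{\ell}$. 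Hence
\begin{equation*}
\sum_{\alpha\in S_{\ell,v}}\hat\mu_\alpha^2=(1-o(1))\left(\frac{(2v)!}{v!}\,n^{3v}\lambda^{4v}\right)^{\ell}.
\end{equation*}
Now use $\gamma=0.001\,n^{3/2}v^{1/2}\lambda^2$, i.e.\ $n^{3/2}\lambda^2=1000\,\gamma\,v^{-1/2}$, so that $n^{3v}\lambda^{4v}=(n^{3/2}\lambda^2)^{2v}=(1000\gamma)^{2v}v^{-v}$ and the base of the $\ell$-th power becomes $\tfrac{(2v)!}{v!\,v^{v}}(1000\gamma)^{2v}$. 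Since $(2v)!/v!=\prod_{j=1}^{v}(v+j)>v^{v}$, this base is at least $(1000\gamma)^{2v}\ge(1000\gamma)^2$, a constant strictly greater than $1$ because $\gamma=1+\Omega(1)$. As $\ell=\Omega(\log_\gamma n)=\omega(1)$, we conclude $\sum_{\alpha\in S_{\ell,v}}\hat\mu_\alpha^2\ge(1-o(1))(1000\gamma)^{2\ell}=\omega(1)$, which is exactly the low-degree-likelihood-ratio divergence required to apply Theorem~\ref{thm:detectionMetaTheorem}.

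\textbf{Where the care goes.} The arithmetic is routine; essentially all the content sits in Step 1, and it hinges on two structural features of the shapes in Definition~\ref{3DetectionPolynomial} that I would verify carefully first: no hyperedge is repeated (so that the cross terms with bare $W$'s vanish and $\chi_\alpha$ is a legitimate multilinear basis element), and the hypergraph is $2$-regular on pairwise distinct labels (so that the $x$-expectation collapses to $1$ and never involves $\Gamma=\E x_i^4$). The only other bookkeeping point is the harmless bounded-multiplicity remark noted in Step 1; everything else is plugging the definition of $\gamma$ into the count of Lemma~\ref{hypergraph3setSize}.
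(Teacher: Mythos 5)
Your proposal is correct and takes essentially the same route as the paper's proof: observe that every $\alpha\in S_{\ell,v}$ is a $2$-regular multilinear shape so $\hat\mu_\alpha=\lambda^{2\ell v}$, invoke Lemma~\ref{hypergraph3setSize} for $\lvert S_{\ell,v}\rvert$, and substitute the definition of $\gamma$ to see the sum is at least roughly $(1000\gamma)^{2\ell v}$; your Step 1 just spells out in detail what the paper asserts in one line, and your orthonormality hedge is unnecessary (distinct directed hyperedge sets give distinct monomials) but harmless. One cosmetic point: since $\gamma=1+\Omega(1)$ may grow with $n$, $\ell=\Omega(\log_\gamma n)$ need not be $\omega(1)$, so it is cleaner to conclude as the paper does via $(1000\gamma)^{2\ell v}\geq\gamma^{2\ell}=n^{\Omega(1)}$, which covers all cases.
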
\label{Tensor3LowDegreeRatio}
\begin{proof}
    Because for any hypergraph $\alpha\in S_{\ell,v}$, we have $\lvert \alpha\rvert=2\ell v$ and $\hat{\mu}_\alpha=\E \chi_\alpha(Y)=\lambda^{2\ell v}$. By lemma \ref{hypergraph3setSize}, 
 we have \[\sum_{\alpha\in S_{\ell,v}} \hat{\mu}_\alpha^2\geq (1-o(1)) \left({n\choose v}{n\choose 2v}((2v)!)^2\right)^{\ell}\lambda^{4\ell v}\geq  n^{3\ell v}v^{\ell v} \lambda^{4\ell v}\].    
    
    For $\gamma=1+\Omega(1)$ and $\ell=\Omega(\log_{\gamma} n)$, we have $\sum_{\alpha\in S_{\ell,v}} \hat{\mu}_\alpha^2=n^{\Omega(1)}$. 
\end{proof}

For proving $\left(\E P(Y)\right)^2=(1-o(1))\E P^2(Y)$, we first prove several preliminary lemmas. First we bound the expectation  of $P(Y)$
\begin{lemma}\label{tensor3PolynomialExpectation}
    In spiked tensor model $Y=\lambda x^{\otimes 3}+W$ where entries in $x\in\mathbb{R}^n$ are sampled independently with zero mean and unit variance, entries in $W\in\mathbb{R}^{n\times n\times n}$ are sampled independently with zero mean and unit variance. Then for $P(Y)$ defined in \ref{3DetectionPolynomial}, we have $\E P(Y)=(1-o(1))\lambda^{2\ell v}({n\choose v}{n\choose 2v}((2v)!)^2)^{\ell}$. 
\end{lemma}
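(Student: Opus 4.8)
The plan is to compute $\E\,\chi_\alpha(Y)$ for a single hypergraph $\alpha \in S_{\ell,v}$ and then sum over $S_{\ell,v}$, using the enumeration already carried out in Lemma~\ref{hypergraph3setSize}. The structural facts about $\alpha \in S_{\ell,v}$ that drive the computation are: by construction $\alpha$ is a $2$-regular $3$-uniform hypergraph; its hyperedges come from the $2\ell$ cyclically-arranged level matchings, each of which (joining an even level of $v$ vertices to an odd level of $2v$ vertices, each hyperedge taking one even-level and two odd-level vertices) contributes exactly $v$ hyperedges, so $\alpha$ has $\lvert\alpha\rvert = 2\ell v$ hyperedges on $\lvert V(\alpha)\rvert = \ell v + 2\ell v = 3\ell v$ distinct vertices; and no hyperedge of $\alpha$ repeats, so $\chi_\alpha(Y) = \prod_{(i,j,k)\in\alpha} Y_{ijk}$ is \emph{multilinear} in the entries of $Y$.

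Granting multilinearity, write $Y_{ijk} = \lambda x_i x_j x_k + W_{ijk}$ and expand the product. Since the $W_{ijk}$ are independent and mean-zero and each appears at most once, every term of the expansion containing a $W$-factor vanishes in conditional expectation over $W$, leaving
\[
  \E_W\,\chi_\alpha(Y) \;=\; \prod_{(i,j,k)\in\alpha} \lambda\, x_i x_j x_k \;=\; \lambda^{2\ell v} \prod_{u \in V(\alpha)} x_u^{\deg_\alpha(u)} \;=\; \lambda^{2\ell v} \prod_{u \in V(\alpha)} x_u^{2},
\]
where the last step uses $2$-regularity. Taking the expectation over $x$ and using that the $3\ell v$ vertices of $\alpha$ carry distinct labels and that the coordinates of $x$ are independent with $\E x_u^2 = 1$ gives $\E\,\chi_\alpha(Y) = \lambda^{2\ell v}$.

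By linearity of expectation, $\E P(Y) = \sum_{\alpha\in S_{\ell,v}} \E\,\chi_\alpha(Y) = \lvert S_{\ell,v}\rvert \cdot \lambda^{2\ell v}$, and substituting $\lvert S_{\ell,v}\rvert = (1-o(1))\bigl({n\choose v}{n\choose 2v}((2v)!)^2\bigr)^{\ell}$ from Lemma~\ref{hypergraph3setSize} yields the stated identity.

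There is no genuine analytic obstacle here; the only point that requires care is verifying the claimed hypergraph parameters and, in particular, that each triple $(i,j,k)$ occurs with multiplicity one in $\alpha$ — this holds because consecutive level matchings use disjoint vertex triples and the hyperedges within a single matching are vertex-disjoint, which is what makes the monomial genuinely multilinear so that $\E_W$ annihilates all cross-terms. It is worth emphasizing that the computation uses only $\E x_i^2 = 1$ together with mean-zero-ness of $W$ — no higher moments of either $x$ or $W$ — which is exactly what restricting $P(Y)$ to multilinear monomials indexed by the $2$-regular hypergraphs $S_{\ell,v}$ buys us, and what will let the companion second-moment/concentration bound survive heavy-tailed noise.
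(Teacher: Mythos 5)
Your proposal is correct and follows essentially the same route as the paper's proof: compute $\E \chi_\alpha(Y)=\lambda^{2\ell v}$ for each $\alpha\in S_{\ell,v}$ (which the paper states without detail and you justify via multilinearity, mean-zero $W$, $2$-regularity, and $\E x_u^2=1$), then multiply by the count $\lvert S_{\ell,v}\rvert$ from Lemma~\ref{hypergraph3setSize}. No gaps; your write-up simply makes the single-hypergraph expectation step explicit.
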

\begin{proof}
 First for each $\alpha\in S_{\ell,v}$, we have $\E[\chi_\alpha(Y)]=\lambda^{2\ell v}$. 
 By lemma \ref{hypergraph3setSize}, we have $\lvert S_{\ell,v}\rvert=(1-o(1)) \left({n\choose v}{n\choose 2v}((2v)!)^2\right)^{\ell}$. Since $\sum_{\alpha\in S_{\ell,v}} \E[\chi_\alpha(Y)]=\lambda^{2\ell v}\lvert S_{\ell,v}\rvert$, we get the lemma. 
\end{proof}

Finally we need a lemma for bounding the summation over all possible $k_t,r_t$
\begin{lemma}\label{summationBounding1}
For $t\in \{0,1,\ldots,2\ell-1\}$, we define $r_t,k_t\in \{0,1,\ldots,2v\}$ satisfying that $r_t\geq \delta(t)\max (k_{t-1},k_t)$, where parity function $\delta(t)=2$ if $t$ is odd and $1$ if $t$ is even. We denote $k=\sum_{t=0}^{2\ell-1} k_t$ and $r=\sum_{t=0}^{2\ell-1} r_t$. We take scalars $\eta=\omega(\ell^2)$ and constant $\psi>1$. Then for $2r\geq 3k+1$,  we have:
\begin{equation*}
   \sum_{k_t\geq 0}\sum_{\substack{r_{t}\geq \delta(t)\max (k_{t-1},k_t) \\ 2r\geq 3k+1}} \eta^{-r+3k/2}\psi^{k}\leq o(1)
\end{equation*}
\end{lemma}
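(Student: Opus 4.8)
The plan is to perform the double sum in two stages --- first the inner sum over the $r_t$ with the $k_t$ fixed, then the outer sum over the profile $(k_t)$ --- and the crux will be controlling the factor $\psi^{\sum_t k_t}$ against the savings coming from $\eta^{-r}$.

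\textbf{Stage 1 (sum over $r_t$).} Write $L_t:=\delta(t)\max(k_{t-1},k_t)$ and $L:=\sum_t L_t$, so the constraint on the $r_t$ is just $r_t\ge L_t$. Since $\eta^{-r}=\prod_t\eta^{-r_t}$, ignoring the global constraint the inner sum is $\eta^{3k/2}\prod_t\sum_{r_t\ge L_t}\eta^{-r_t}=\eta^{3k/2-L}(1-\eta^{-1})^{-2\ell}$. I would then record the elementary identity
\[
L-\tfrac{3k}{2}=\tfrac12\sum_t\delta(t)\,\lvert k_{t-1}-k_t\rvert\ \ge\ 0,
\]
which follows from $\max(a,b)=\tfrac{a+b}{2}+\tfrac{\lvert a-b\rvert}{2}$, from $\sum_t\delta(t)=3\ell$, and from the fact that each $k_s$ enters $L$ with coefficient $\delta(s)+\delta(s+1)=3$; equality holds exactly when all the $k_t$ are equal (consistent with Lemma~\ref{hypergraphSimpleRelation}). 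So when the $k_t$ are not all equal, $2L\ge 3k+1$ already holds and the global constraint $2r\ge 3k+1$ is vacuous, whereas when they are all equal it deletes precisely the single minimal term $r_t\equiv L_t$. Using $\eta=\omega(\ell^2)$ we have $(1-\eta^{-1})^{-2\ell}=1+O(\ell/\eta)=1+o(1)$, and the deleted term contributes a factor $(1-\eta^{-1})^{-2\ell}-1=O(\ell/\eta)$.

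\textbf{Stage 2 (sum over $(k_t)$).} Setting $D(k):=\sum_t\delta(t)\lvert k_{t-1}-k_t\rvert$, after Stage~1 the quantity to bound is at most
\[
O(\ell/\eta)\sum_{a=0}^{v}\psi^{2\ell a}\;+\;(1+o(1))\!\!\sum_{(k_t)\,:\ \text{not all equal}}\!\!\eta^{-D(k)/2}\,\psi^{\sum_t k_t}.
\]
The first term --- the constant profiles $k_t\equiv a$, with $a\le v$ forced by $r_t\ge\delta(t)a$ --- is a geometric series equal to $O(\ell\,\eta^{-1}\psi^{2\ell v})$. For the second term I would sum over the \emph{number of jumps}: a non-constant profile on the cycle has an even number $2m\ge 2$ of positions $t$ with $k_{t-1}\neq k_t$, for which $D(k)\ge 2m$, and there are at most $\binom{2\ell}{2m}(2v+1)^{2m}$ such profiles (place the jumps, then choose the at most $2m$ run-values), each contributing at most $\eta^{-m}\psi^{\sum_t k_t}\le \eta^{-m}\psi^{4\ell v}$. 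Summing the geometric series $\sum_{m\ge1}\bigl((2\ell)^2(2v+1)^2/\eta\bigr)^m$ (which converges once $\eta$ exceeds $(2\ell)^2(2v+1)^2$, as is guaranteed here) bounds the second term by $O(\ell^2v^2\,\eta^{-1}\psi^{4\ell v})$.

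Assembling the two stages, the full sum is $O\bigl(\ell^2 v^2\,\eta^{-1}\,\psi^{O(\ell v)}\bigr)$, which is $o(1)$ in the regime at hand (where $\eta$ is polynomially large in $n$, $\psi$ is a fixed constant, and $\ell$, $v$ are small enough that $\psi^{O(\ell v)}=o(\eta/(\ell^2v^2))$ --- exactly what the hypotheses $\eta=\omega(\ell^2)$ together with the ambient assumption $n=\omega(v^2\,\text{poly}(\ell\Gamma))$ are meant to supply through $\eta$). \textbf{The main obstacle is controlling $\psi^{\sum_t k_t}$}: there are $(2v+1)^{2\ell}$ profiles and $\sum_t k_t$ can be as large as $\Theta(\ell v)$, so a union bound is hopeless; the point of Stage~2 is that the identity in Stage~1 converts ``each jump of the profile $(k_t)$ costs a factor $\le\eta^{-1/2}$'' into an honest convergent geometric series in the number of jumps, which caps the effective degrees of freedom and lets the $\eta$-powers dominate the $\psi$-growth. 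A technical point to watch is that the naive bookkeeping needs $\eta\gg\ell^2v^2$ rather than just $\eta\gg\ell^2$, so one should either track the $v$-dependence carefully or, as the surrounding results do, use that $\eta$ is in fact polynomially large.
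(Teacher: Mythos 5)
Your overall strategy matches the paper's proof: your identity $\sum_t\delta(t)\max(k_{t-1},k_t)-\tfrac{3k}{2}=\tfrac12\sum_t\delta(t)\lvert k_{t-1}-k_t\rvert$ is exactly the paper's inequality $r-3k/2\ge k_\Delta/2$, and your geometric series in the number of jumps plays the role of the paper's geometric series in $k_\Delta$. Your Stage 1 is in fact cleaner than the paper's: you sum the $r_t$ exactly, and you explicitly handle the constant profiles $k_t\equiv a$ (which do satisfy $2r\ge 3k+1$ once one $r_t$ sits above its minimum, and which the paper's displayed bound, starting at $k_\Delta\ge 1$, silently drops).

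The genuine gap is in Stage 2, where you decouple the weight from the count: you bound $\psi^{\sum_t k_t}\le\psi^{4\ell v}$ globally and count run values as $(2v+1)$ raw choices per jump, ending with $O\bigl(\ell^2v^2\,\eta^{-1}\psi^{O(\ell v)}\bigr)$. That is not $o(1)$ under the lemma's hypotheses ($\eta=\omega(\ell^2)$ only), and your fallback that ``$\eta$ is polynomially large'' does not hold where the lemma is actually invoked: in the subexponential-time tensor results $v$ may grow polynomially in $n$ (up to about $n^{1/2-\Omega(1)}$), while $\eta\approx n/(cv^2\Gamma^2\ell^2)$, so $\ell^2v^2$ can exceed $\eta$ and your geometric series in $m$ does not even converge; and if one takes $\psi>1$ literally, $\psi^{\Theta(\ell v)}=n^{\Theta(v)}$ swamps $\eta$ outright. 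The paper avoids any $v$-dependence by keeping the weight attached to each run value: each jump costs $(\ell/\sqrt{\eta})\cdot\sum_{k\ge 0}\psi^{k}$, which is $O(\ell/\sqrt{\eta})$ provided $\psi$ is bounded away from $1$ from below --- and that is what holds in the application, where $\psi=c^{3/2}n^{-3/2}\lambda^{-2}v^{-1/2}<1$ once $\gamma>1$. (The stated hypothesis ``constant $\psi>1$'' is itself a slip: for $\psi>1$ the claim is false, since the constant profiles alone contribute $\Omega(\eta^{-1}\psi^{2\ell a})\to\infty$; under the intended reading $\psi\le 1-\Omega(1)$ the paper's weighted count needs only $\eta=\omega(\ell^2)$, whereas your unweighted count still needs the unavailable condition $\eta\gg\ell^2v^2$. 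Two smaller points: for $\psi<1$ your bound $\psi^{\sum_t k_t}\le\psi^{4\ell v}$ reverses direction and should be replaced by $\le 1$, and the number of jump positions on a cycle need not be even --- three runs are possible --- though only the lower bound $D(k)\ge\#\text{jumps}$ is needed.)
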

\begin{proof}
We note that given $k_t$ for $t\in [\ell]$, we have  $\ell^{r-3k/2}$ choices for $r_t$. Further we denote $k_\Delta=\sum_{t}\lvert k_{t+1}-k_t\rvert$. Then given $k_\Delta$,  all $k_t$ can take at most $k_\Delta$ different values. As a result, fixing these  $k_\Delta$ different values, there are $\ell^{k_\Delta}$ choices for $k_t$ for $t\in [\ell]$. Further we have $r-3k/2\geq k_\Delta/2$. Therefore the summation is bounded by
     \begin{equation*}
          \sum_{k_\Delta\geq  1}\left[ \left(\frac{\eta}{\ell^2}\right)^{- k_\Delta/2}\prod_{t=1}^{k_\Delta}\left(\sum_{k_t\geq 0}\psi^{k_t}\right)\right]=o(1)
     \end{equation*}
\end{proof}

\begin{lemma}\label{concentration_detection}
Denote $\gamma=0.001 n^{3/2}v^{1/2}\lambda^2$ and take $\ell=\Omega(\log_{\gamma}n)$ in the above estimator $P(Y)$, if we have $\gamma=1+\Omega(1)$ and $n=\omega(v^2\text{poly}(\ell\Gamma))$, then $\E P^2(Y)=(1+o(1)) (\E P(Y))^2$. 
\end{lemma}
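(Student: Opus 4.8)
The plan is to expand $\E P^2(Y) = \sum_{\alpha,\beta \in S_{\ell,v}} \E[\chi_\alpha(Y)\chi_\beta(Y)]$ and argue that the dominant contribution comes from \emph{disjoint} pairs $(\alpha,\beta)$, which already amounts to $(1+o(1))(\E P(Y))^2$, whereas every other class of pairs contributes only a lower-order term. Since $\E P^2(Y) \ge (\E P(Y))^2$ holds automatically, this upper bound is all that is needed. Following the indexing of Lemma~\ref{tensor3PolynomialCorrelation}, I would sort pairs by the number $k$ of shared hyperedges, the number $r$ of shared vertices, and the level-wise counts $(k_t)_t,(r_t)_t$. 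For disjoint $\alpha,\beta$ one has $\E[\chi_\alpha\chi_\beta] = \E[\chi_\alpha]\,\E[\chi_\beta] = \lambda^{4\ell v}$, and there are at most $\lvert S_{\ell,v}\rvert^2$ of them, so by Lemma~\ref{tensor3PolynomialExpectation} their total contribution is at most $\lambda^{4\ell v}\lvert S_{\ell,v}\rvert^2 = (1+o(1))(\E P(Y))^2$.

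The technical core is a bound on $\E[\chi_\alpha\chi_\beta]$ for $\alpha \neq \beta$. Integrating over $W$ first gives $\E_W[\chi_\alpha(Y)\chi_\beta(Y)] = \lambda^{4\ell v - 2k}\left(\prod_{e \in \alpha \triangle \beta} x_e\right)\prod_{e \in \alpha \cap \beta}(1+\lambda^2 x_e^2)$, writing $x_e := x_i x_j x_k$ for $e=(i,j,k)$. I would then expand $\prod_{e \in \alpha\cap\beta}(1+\lambda^2 x_e^2) = \sum_{T \subseteq \alpha\cap\beta} \lambda^{2|T|}\prod_{e\in T}x_e^2$ and use two observations: (i) since $\alpha,\beta$ are both $2$-regular, every vertex has even degree in $\alpha\triangle\beta$, and in fact in every monomial of the expansion every vertex has degree \emph{at most} $4$ --- a shared vertex incident to $c$ shared edges has degree $4-2c$ in $\alpha\triangle\beta$ and gains at most $2c$ from $T$; hence $\E_x$ of each monomial is a product of moments each at most $\Gamma=\E x_i^4$, so no moment beyond $\Gamma$ is used, and the monomial indexed by $T$ contributes at most $\Gamma^{s_0+3|T|}$, where $s_0\le s$ is the number of shared vertices lying on no shared edge ($s$ as in Lemma~\ref{hypergraphSimpleRelation}); (ii) summing over $T$ costs only a factor $(1+\lambda^2\Gamma^3)^{k}=1+o(1)$, because $\lambda^2\Gamma^3 k = o(1)$ under the hypotheses $\gamma=1+\Omega(1)$, $k\le\ell v$, and $n=\omega(v^2\,\text{poly}(\ell\Gamma))$. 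This yields $\E[\chi_\alpha\chi_\beta] \le (1+o(1))\lambda^{4\ell v-2k}\Gamma^{s_0}$, and the combinatorial inequality $s_0\le s\le 2r-3k$ from Lemma~\ref{hypergraphSimpleRelation} upgrades it to $\E[\chi_\alpha(Y)\chi_\beta(Y)] \le (1+o(1))\,\lambda^{4\ell v-2k}\,\Gamma^{2r-3k}$.

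To conclude I would sum over the overlapping classes. Substituting this bound and Lemma~\ref{tensor3PolynomialCorrelation} for $\lvert S_{\ell,v,k,r}\rvert$, and using $\lambda^{-2}\le 1000\,n^{3/2}v^{1/2}$ (from $\gamma=1+\Omega(1)$), the contribution of the class $S_{\ell,v,k,r}$ is at most $(1+o(1))(\E P(Y))^2\cdot\psi^{k}\big(\tfrac{(v\Gamma\ell)^2\exp(O(1))}{n}\big)^{r-3k/2}$ for a constant $\psi<1$; here the $\exp(O(r))$ of Lemma~\ref{tensor3PolynomialCorrelation} is split as $\exp(O(r-3k/2))\exp(O(k))$ with the second piece absorbed into $\psi$, which is exactly why the constant $0.001$ in the definition of $\gamma$ is chosen as it is. Two cases remain, matching the dichotomy of Lemma~\ref{hypergraphSimpleRelation}. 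If $2r=3k$ and $\alpha,\beta$ are not disjoint, then $k_t\ge1$ is constant across all $2\ell$ levels, so $k=2\ell k_0$ with $k_0\ge1$ and $r-3k/2=0$, and these pairs contribute at most $(1+o(1))(\E P(Y))^2\sum_{k_0\ge1}\psi^{2\ell k_0}=o(1)\cdot(\E P(Y))^2$. Otherwise $2r\ge 3k+1$, and since our normalized per-class factor is then at most $\eta^{-(r-3k/2)}$ with $\eta:=n/((v\Gamma\ell)^2\exp(O(1)))=\omega(\ell^2)$ (guaranteed by $n=\omega(v^2\,\text{poly}(\ell\Gamma))$), summing over $(k_t)_t,(r_t)_t$ via Lemma~\ref{summationBounding1} shows these pairs also contribute $o(1)\cdot(\E P(Y))^2$. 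Adding the three pieces gives $\E P^2(Y)\le(1+o(1))(\E P(Y))^2$.

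The step I expect to be the main obstacle is the moment computation in the second paragraph: establishing the degree-at-most-$4$ property (so that the entire argument goes through assuming only $\Gamma=\E x_i^4\le n^{o(1)}$ and no control over higher moments of $x$), bounding the $T\neq\emptyset$ terms of the expansion tightly enough that they disappear into the $(1+o(1))$, and matching the exponent $s_0$ of $\Gamma$ against the graph-theoretic bound $s\le 2r-3k$ of Lemma~\ref{hypergraphSimpleRelation} --- it is precisely this matching that forces the final double sum to converge. A minor but easily overlooked point is that Lemma~\ref{summationBounding1} is deliberately stated only for $2r\ge 3k+1$, so the borderline pairs with $2r=3k$ (the ``scaled copies'' of the layered structure sitting inside $\alpha\cap\beta$) must be handled by hand, as above.
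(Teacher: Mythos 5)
Your proposal is correct and follows essentially the same route as the paper's proof: expand $\E P^2(Y)$ over pairs classified by level-wise shared vertices/edges, bound $\E[\chi_\alpha\chi_\beta]\le(1+o(1))\lambda^{4\ell v-2k}\Gamma^{2r-3k}$, combine with the pair count of Lemma~\ref{tensor3PolynomialCorrelation} and the expectation of Lemma~\ref{tensor3PolynomialExpectation}, and split via Lemma~\ref{hypergraphSimpleRelation} into the disjoint case, the $2r=3k$ geometric-series case, and the $2r\ge3k+1$ case handled by Lemma~\ref{summationBounding1}. Your treatment of the moment step (the $T$-expansion showing only fourth moments of $x$ arise) is in fact more explicit than the paper's, which absorbs that expansion into a $(1+n^{-\Omega(1)})^k$ factor.
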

\begin{proof}

We need to show that 
\begin{equation*}
    \left(\sum_{\alpha\in S_{\ell,v}} \E[\chi_\alpha(Y)]\right)^2= (1-o(1))\left(\sum_{\alpha,\beta\in S_{\ell,v}} \E[\chi_\alpha(Y) \chi_\beta(Y)]\right)
\end{equation*}

 For left hand side, we already have
 lemma \ref{tensor3PolynomialExpectation}. Thus we only need to bound $\left(\sum_{\alpha,\beta\in S_{\ell,v}} \E[\chi_\alpha(Y) \chi_\beta(Y)]\right)$. First by direct computation we have 
 \begin{equation*}
     \E[\chi_\alpha(Y) \chi_\beta(Y)]=(1+n^{-\Omega(1)})^{k}\lambda^{4\ell v-2k}\E\left[\prod_{i\in \alpha\Delta\beta} x_i^{\text{deg}(i,\alpha\Delta\beta)}\right]\leq \lambda^{4\ell v-2k}\Gamma^{2r-3k}
 \end{equation*}
 where $\text{deg}(i,\alpha)$ is the degree of vertex $i$ in hypergraph $\alpha$, $r$ is the number of shared vertices and $k$ is the number of shared hyperedges, $\Gamma=\E[x_i^4]=n^{o(1)}$ according to assumptions. 
 Using lemma \ref{tensor3PolynomialCorrelation}, lemma \ref{tensor3PolynomialExpectation}, for any set $S_{r,k,\ell,v}$, we have
 \begin{equation*}
     \frac{\sum_{\substack{\alpha,\beta\in S_{r,k,\ell,v}}}\E \chi_\alpha(Y)\chi_\beta(Y) }{\left(\E P(Y)\right)^2} \leq n^{-r}v^{2r-4k}\ell^{2r-3k}\lambda^{-2k}\Gamma^{2r-3k} v^{\frac{k}{2}} \exp(cr)
 \end{equation*}

 where $c$ is large enough constant.  Summing up for different $r_t,k_t$ and combining the fact that if $r=\frac{3}{2}k$ then $k\geq 2\ell,k_t\geq 1$, then we have
 \begin{align*}
    & \frac{\sum_{\alpha,\beta\in S_{\ell,v}}\E\chi_\alpha(Y)\chi_\beta(Y)}{(\E P(Y))^2}
    \\
      = & \sum_{r_t,k_t}c^{3k/2}n^{-3k/2} v^{-k/2}\lambda^{-2k}  \left(\frac{n}{cv^2\Gamma^2}\right)^{-r+3k/2} \ell^{2r-3k}\\ =& \sum_{k_t\geq 0}\sum_{\substack{r_{t}\geq \delta(t)\max (k_{t-1},k_t) \\ 2r\geq 3k+1}} \left(\frac{n}{cv^2\Gamma^2}\right)^{-r+3k/2}\left(c^{3/2}n^{-3/2} \lambda^{-2}
      v^{-1/2}\right)^{k}+ \\
      & \sum_{k_t\geq 1} \left(c^{3/2}n^{-3/2} \lambda^{-2} v^{-1/2}\right)^{\sum_t k_t}+1
      \end{align*}
      where parity function $\delta(t)=2$ if $t$ is odd and $1$ if $t$ is even.
     The term $1$ comes from the case $r=k=0$. When $\gamma=0.001 n^{3/2} \lambda^{2} v^{1/2}>1$ and $\ell=C\log_{\gamma} n$ with constant $C$ large enough, the second term is bounded by $n^{-\Omega(1)}$. Since $n=\omega(cv^2\Gamma^2 \textrm{poly}(\ell))$, by lemma \ref{summationBounding1} the first term is bounded by $o(1)$. Thus we get the theorem.
     
\end{proof}

\begin{proof}[Proof of theorem \ref{thm:detection-tensor}]
Combining lemma \ref{Tensor3LowDegreeRatio}, \ref{concentration_detection}, and theorem \ref{thm:detectionMetaTheorem}, thresholding $P(Y)$ will lead to strong detection algorithm. 
\end{proof}

\subsection{Proof of weak recovery in spiked tensor model}\label{proofWeakRecovery}
We define the notion of weak recovery and strong recovery in spiked tensor model.
\begin{definition}
  In spiked tensor model $Y=\lambda x^{\otimes p}+W$ for random vector $x\in\mathbb{R}^n$ and random tensor $W\in (\mathbb{R}^n)^{\otimes p}$. We define that estimator $\hat{x}(Y)\in \mathbb{R}^n$ achieves weak recovery if $\E \lprod\hat{x}(Y),x\rprod\geq \Omega\left( \left(\E\lVert\hat{x}(Y)\rVert^2\right)^{1/2} \left(\E\lVert x\rVert^2\right)^{1/2}\right)$. Further we define that $\hat{x}(Y)\in \mathbb{R}^n$ achieves strong recovery if $\E \lprod\hat{x}(Y),x\rprod\geq (
 1-o(1))\left(\E\lVert\hat{x}(Y)\rVert^2\right)^{1/2} \left(\E\lVert x\rVert^2\right)^{1/2}$
\end{definition}

The estimator $P(Y)\in\mathbb{R}^n$ we take is defined as following. 
\begin{definition}[Polynomial estimator for weak recovery]\label{defTensorNetworkEstimator}
  On directed complete $3$-uniform hypergraph with $n$ vertices and for $i\in [n]$, we define $S_{\ell,v,i}$ as the set of all copies of hypergraphs generated in the following way:
\begin{itemize}
      \item we construct $2\ell$ levels of distinct vertex. Level $0$ contains vertice $i$ and $(v-1)/2$ vertex in addition. For $0<t<\ell$, level $2t$ contains $v$ vertex and level $2t-1$ contains $2v$ vertex. Level $2\ell-1$ contains $v$ vertex.
      \item We construct a perfect matching between level $t,t+1$ for $t\in [2\ell-2]$. For each hyperedge, $1$ vertice comes from even level while $2$ vertex come from odd level. Each hyperedge directs from level $t$ to level $t+1$.(They are connected in the same way as $S_{\ell,v}$ of the  strong detection case, which is demonstrated in figure \ref{fig:tensorLayerDetection}.)
      \item Level $0$ and $1$ are bipartitely connected s.t each vertice in level $0$ excluding $i$ has degree $2$ while vertice $i$ and vertex in level $1$ has degree $1$. Level $2\ell-2$ and level $2\ell-1$ are bipartitely connected s.t vertex in level $2\ell-1$ have degree $2$ while vertex in level $2\ell-2$ have degree $1$.
  \end{itemize}

  Then given tensor $Y\in\mathbb{R}^{n\times n\times n}$, we have estimator $P(Y)\in\mathbb{R}^n$ where each entry is degree $(2\ell-1) v$ polynomial of entries in $Y$. For $i\in [n]$, the $i$-th entry is given by $P_i(Y)=\sum_{\alpha\in S_{\ell,v,i}} \chi_\alpha(Y)$, where $\chi_\alpha(Y)$ is multilinear polynomial basis $\chi_\alpha(Y)=\prod_{(i,j,k)\in\alpha} Y_{ijk}$ and $S_{\ell,v,i}$ is the set of  hypergraphs defined above.
\end{definition}

We prove that estimator $P(Y)$ defined in \ref{defTensorNetworkEstimator} achieves constant correlation with the second moment $xx^\top$. The proof is very similar to the proof of strong detection algorithm. 
\begin{lemma}\label{expectationRecovery3Tensor}
 In spiked tensor model, $Y=\lambda x^{\otimes 3}+W$, where the entries in  $x\in\mathbb{R}^n$, $W\in\mathbb{R}^{n\times n}$ are independently sampled with zero mean and unit variance, we consider estimator $P(Y)\in\mathbb{R}^n$ defined above with $\gamma=0.001 n^{3/2}v^{1/2}\lambda^{2}=1+\Omega(1)$ and $\ell=O(\log_{\gamma} n)$. Then we have
 \begin{equation*}
    \E[P_i(Y)x_i]=(1-o(1)) \lambda^{2(\ell-1) v}\left({n\choose v}{n\choose 2v}\right)^{\ell-1}{n\choose (v-1)/2}{n\choose v}\frac{((2v)!)^{2\ell-1}}{2^{(3v-1)/2}} 
 \end{equation*}
\end{lemma}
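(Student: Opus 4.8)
The plan is to evaluate $\E[P_i(Y)x_i]=\sum_{\alpha\in S_{\ell,v,i}}\E[\chi_\alpha(Y)x_i]$ summand by summand and then to count $\lvert S_{\ell,v,i}\rvert$; this parallels the proof of Lemma~\ref{tensor3PolynomialExpectation}. Fix $\alpha\in S_{\ell,v,i}$. Since $\chi_\alpha(Y)=\prod_{e\in\alpha}Y_e$ and, by the construction, all the vertices occurring in the $2\ell$ levels are distinct, distinct hyperedges of $\alpha$ involve distinct entries of $W$, which are therefore independent and mean zero; so $\chi_\alpha(Y)$ is multilinear in $Y$ and taking the $W$-expectation first gives $\E_W\chi_\alpha(Y)=\prod_{e=(a,b,c)\in\alpha}\lambda x_ax_bx_c=\lambda^{\lvert\alpha\rvert}\prod_u x_u^{\deg_\alpha(u)}$, where $\lvert\alpha\rvert$ is the number of hyperedges of $\alpha$ and $\deg_\alpha(u)$ the number of hyperedges through $u$. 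The defining construction of $S_{\ell,v,i}$ makes $\deg_\alpha(u)=2$ for every vertex $u\neq i$ and $\deg_\alpha(i)=1$, so $x_i\cdot\prod_u x_u^{\deg_\alpha(u)}=x_i^2\prod_{u\neq i}x_u^2$ has only even exponents; since the coordinates of $x$ are independent with $\E x_u^2=1$, we get $\E_x\bigl[x_i\prod_u x_u^{\deg_\alpha(u)}\bigr]=1$ exactly, and in particular no fourth moment (hence no factor $\Gamma$) enters. Therefore $\E[\chi_\alpha(Y)x_i]=\lambda^{\lvert\alpha\rvert}$ for every $\alpha\in S_{\ell,v,i}$, so $\E[P_i(Y)x_i]=\lambda^{\lvert\alpha\rvert}\lvert S_{\ell,v,i}\rvert$, and it remains to compute $\lvert S_{\ell,v,i}\rvert$ and to identify $\lvert\alpha\rvert$ with the exponent $2(\ell-1)v$ of the claim (one hyperedge per slot, $v$ slots per gap between consecutive levels).

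The substance then lies in the count $\lvert S_{\ell,v,i}\rvert=(1-o(1))\binom{n}{(v-1)/2}\binom{n}{v}\bigl(\binom{n}{v}\binom{n}{2v}\bigr)^{\ell-1}((2v)!)^{2\ell-1}\,2^{-(3v-1)/2}$, which I would obtain by building a generic $\alpha\in S_{\ell,v,i}$ level by level in the manner of the proof of Lemma~\ref{hypergraph3setSize}. Choosing the vertex labels contributes $\binom{n}{(v-1)/2}$ for the extra vertices of level $0$ (besides $i$), a factor $\binom{n}{2v}$ for each of the $\ell-1$ interior odd levels, a factor $\binom{n}{v}$ for each of the $\ell-1$ interior even levels, and $\binom{n}{v}$ for level $2\ell-1$; strictly one must write $\binom{n-O(v\ell)}{\cdot}$ since the new vertices avoid the $O(v\ell)$ already chosen, but under the quantitative hypotheses of the section (keeping $v$ and $\ell$ suitably small relative to $n$) each such factor equals $(1\pm o(1))\binom{n}{\cdot}$, and the $O(\ell)$ of them together contribute only a $1-o(1)$ factor. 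For a fixed vertex assignment I would then count the directed wirings between consecutive levels: each interior gap, pairing a level of $v$ ``single'' slots with a level of $2v$ ``pair'' slots, admits $(2v)!$ genuinely distinct directed perfect matchings, which accounts for $((2v)!)^{2\ell-1}$; the two special gaps --- level $0$--$1$, where $i$ has degree $1$ while the other $(v-1)/2$ level-$0$ vertices have degree $2$, and level $2\ell-2$--$2\ell-1$, where the incidence pattern is reversed --- deviate from this generic count by exactly the factor $2^{-(3v-1)/2}=2^{-(v-1)/2}\cdot2^{-v}$, coming from the unordered pairs of hyperedge-slots attached to those $(v-1)/2$ and $v$ degree-$2$ endpoints. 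Multiplying all contributions and collecting the $1-o(1)$ corrections gives the displayed formula, and combining it with the first paragraph completes the proof.

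The $W$- and $x$-expectations are essentially forced once one has checked the degree sequence of the hypergraphs in $S_{\ell,v,i}$, so I expect the main obstacle to be the exact count of $\lvert S_{\ell,v,i}\rvert$ --- in particular pinning down the constant $((2v)!)^{2\ell-1}2^{-(3v-1)/2}$. This calls for care about (i) the labeling conventions implicit in ``directed hypergraph'' and how many genuinely distinct copies each abstract wiring produces, and (ii) the asymmetric structure at the two ends --- the single degree-$1$ vertex $i$ at level $0$ versus the degree-$2$ behaviour at level $2\ell-1$ --- which is precisely what produces the odd power of $2$ in the denominator. The remaining, routine, point is to verify that forcing all $O(v\ell)$ vertices distinct costs only a $1-o(1)$ factor, which is where the quantitative relation between $n$, $v$ and $\ell$ is used; everything else runs in close parallel to Lemmas~\ref{hypergraph3setSize} and~\ref{tensor3PolynomialExpectation}.
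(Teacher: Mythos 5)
Your proposal is correct and follows essentially the same route as the paper's (very terse) proof: the expectation $\E[\chi_\alpha(Y)x_i]=\lambda^{\lvert\alpha\rvert}$ is exact because every vertex of $\alpha$ other than $i$ has degree $2$ and $i$ has degree $1$, so the whole content is the count of $\lvert S_{\ell,v,i}\rvert$ via the level-by-level generating process, just as in Lemmas~\ref{hypergraph3setSize} and~\ref{tensor3PolynomialExpectation}. The only point to adjust is the exponent: with $2\ell$ levels there are $2\ell-1$ gaps carrying $v$ hyperedges each, so $\lvert\alpha\rvert=(2\ell-1)v$ (which is what the paper's own proof uses), and the $2(\ell-1)v$ in the lemma statement that you try to match is a typo in the paper rather than something your count should reproduce.
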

\begin{proof}
Since $\sum_{\alpha\in S_{\ell,v,i}} \E[\chi_\alpha(Y) x_i]=\lambda^{(2\ell-1) v}\lvert S_{\ell,v,i}\rvert$, we only need to bound  the size of $S_{\ell,v,i}$. Applying combinatorial  arguments to the generating process of $S_{\ell,v,i}$, we have \[\lvert S_{\ell,v,i}\rvert= (1-o(1))\left({n\choose v}{n\choose 2v}\right)^{\ell-1}{n\choose (v-1)/2}{n\choose v}\frac{((2v)!)^{2\ell-1}}{2^{(3v-1)/2}}\].  
\end{proof}

\begin{lemma}\label{Tensor3GraphPropertyRecovery}
On the directed complete $3$-uniform hypergraph,  
 for $i\in [n]$ and a set of simple hypergraph $S_{\ell,v,i}$, we consider any hypergraph  $\alpha,\beta\in S_{\ell,v,i}$. Between $\alpha,\beta$, we denote the number of shared vertices(excluding vertice $i$) in level $t$ of $\alpha$ as $r_t$, the number of shared hyperedges between level $t$ and level $t+1$ of $\alpha$ as $k_t$. Further we denote $r=\sum_t r_t$ as the total number of shared vertices excluding $i$ and $k=\sum k_t$ as the total number of shared hyperedges. 
 
 Then one of the following relations must hold:
 \begin{itemize}
     \item $2r>3k$  
     \item $2r=3k$ and $\alpha\cap\beta$ is a hyper-path starting from vertice $i$ or empty.
     \item  $3k \geq 2r\geq 3k-1$  and $k_t\geq 1$ for all $t$
 \end{itemize}
\end{lemma}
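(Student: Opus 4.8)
The plan is to adapt the degree-counting argument behind Lemma \ref{hypergraphSimpleRelation} to the ``path-shaped'' hypergraph family $S_{\ell,v,i}$ of Definition \ref{defTensorNetworkEstimator}, and then to read off the three alternatives from a layer-by-layer analysis of the shared sub-hypergraph. Throughout, write $G=\alpha\cap\beta$, a $3$-uniform sub-hypergraph with exactly $k$ hyperedges. The one structural input I will use repeatedly is that in every copy of $S_{\ell,v,i}$ each vertex has degree exactly $2$ except the root $i$, which has degree $1$: one checks this level by level from Definition \ref{defTensorNetworkEstimator}, since the extra level-$0$ vertices and all interior vertices lie in one ``down'' and one ``up'' hyperedge, while the two boundary connections at level $0$ and level $2\ell-1$ are arranged so that only $i$ is left with degree $1$. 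Consequently every vertex of $G$ has degree at most $2$, and $\deg_G(i)\in\{0,1\}$.

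First I would do the handshake count. Let $\epsilon=\deg_G(i)\in\{0,1\}$ and let $s$ be the number of shared vertices other than $i$ whose degree in $G$ is at most $1$; the remaining $r-s$ shared vertices other than $i$ then have degree exactly $2$. Summing degrees,
\begin{equation*}
 3k \;=\; \sum_{u\in V(G)}\deg_G(u) \;\le\; 2(r-s)+s+\epsilon \;=\; 2r-s+\epsilon \;\le\; 2r+1 ,
\end{equation*}
so $2r\ge 3k-1$ in all cases. If moreover $s-\epsilon\ge 1$ the chain already yields $2r\ge 3k+1$, the first bullet; so from here on assume $s\le\epsilon$, which (together with the equality $\sum_u\deg_G(u)=3k$) forces $(s,\epsilon)$ to be $(0,0)$ or $(1,1)$, and $2r\in\{3k-1,3k\}$.

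Next I would pin down the degree sequence in the two remaining regimes and invoke the path structure. When $2r=3k$, the display forces either $\epsilon=0,\ s=0$ (so $i\notin V(G)$ and $G$ is $2$-regular) or $\epsilon=1,\ s=1$ with the one exceptional vertex $w\neq i$ of degree exactly $1$ (so the only odd-degree vertices of $G$ are $i,w$, all others of degree $2$). When $2r=3k-1$, the display forces $\epsilon=1,\ s=0$: every shared vertex has degree $2$ except $i$, of degree $1$. In each case I would then argue, exactly in the spirit of the equality analysis in Lemma \ref{hypergraphSimpleRelation}, by bookkeeping the numbers $k_t$ of shared hyperedges between consecutive levels: the requirement that a shared \emph{interior} vertex have \emph{both} of its incident hyperedges in $G$ links $k_{t-1}$ to $k_t$, and the genuine endpoint behaviour at level $0$ (a lone degree-$1$ root $i$) and at level $2\ell-1$ means that a $2$-regular component of $G$ would have to run from level $0$ all the way up to level $2\ell-1$, which the boundary conditions rule out. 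Hence a $2$-regular $G$ must be empty, while in the $\epsilon=1$ cases $G$ is the connected hyper-path emanating from $i$ --- terminating at the interior vertex $w$ when $2r=3k$, and, since no non-$i$ vertex can terminate it when $2r=3k-1$, running all the way to level $2\ell-1$, so that $k_t\ge1$ for every $t$. Together with $2r\le 3k$ this gives the third bullet, and the $2r=3k$ analysis gives the second; the two may overlap when $2r=3k$ and all $k_t\ge 1$, which is harmless.

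The handshake step is routine. The substantive step --- and what I expect to be the main obstacle --- is making the layer-by-layer argument fully rigorous: showing that in the two (near-)tight regimes the shared sub-hypergraph is forced to be empty, a hyper-path from $i$ terminating at an interior vertex, or a hyper-path from $i$ that meets every level. This is where the precise contract/expand matching pattern between consecutive levels and the asymmetric boundary conditions at levels $0$ and $2\ell-1$ must be used carefully --- in particular, ruling out stray $2$-regular ``hyper-cycles'' that would span several levels without reaching a boundary, which is the analogue here of the $k_t$-balancing argument in Lemma \ref{hypergraphSimpleRelation}.
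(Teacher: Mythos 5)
Your degree-count framework is the same as the paper's, and your handling of the $2r=3k-1$ regime is essentially right, but the tight case $2r=3k$ contains a genuine error. You assert that the boundary behaviour at levels $0$ and $2\ell-1$ rules out a nonempty $2$-regular shared sub-hypergraph, so that a $2$-regular $G=\alpha\cap\beta$ must be empty (and, in the $\epsilon=1$ subcase, that $G$ is exactly the hyper-path from $i$ to $w$). This is false, because hypergraphs in $S_{\ell,v,i}$ need not be connected when $v>1$: the level-by-level matchings in Definition~\ref{defTensorNetworkEstimator} can split $\alpha$ into the component containing $i$ (where $i$ has degree $1$) together with further components in which every vertex has degree exactly $2$; such a component starts at a non-$i$ level-$0$ vertex (both of whose hyperedges go up) and ends at level $2\ell-1$ (whose vertices' two hyperedges both go down), and nothing at the boundaries obstructs it. Choosing $\beta$ to agree with $\alpha$ on such a component and to differ on the $i$-component gives $\alpha\neq\beta$ with $\alpha\cap\beta$ a nonempty $2$-regular hypergraph avoiding $i$, which is neither empty nor a hyper-path from $i$; likewise in your $\epsilon=1$ subcase $G$ may be a path from $i$ plus extra $2$-regular components. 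So the assignment ``$2r=3k\Rightarrow$ second bullet'' is wrong; the lemma survives in these configurations only because any nonempty $2$-regular piece is a union of full connected components of $\alpha$, each of which must meet every pair of consecutive levels, forcing $k_t\ge 1$ for all $t$ — i.e.\ the \emph{third} bullet, which your argument forecloses by insisting $G$ is empty.

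This is precisely how the paper's proof is structured: it never tries to exclude $2$-regular components outright, but draws the hyper-path conclusion only under the additional hypothesis that some $k_t=0$ (which is what genuinely excludes stray $2$-regular components, since each of them would contribute to every $k_t$), and otherwise settles for the conclusion $k_t\ge 1$ for all $t$. To repair your write-up, restructure the $2r=3k$ case accordingly: if every $k_t\ge1$ you are already in the third bullet regardless of the structure of $G$; if some $k_t=0$, then $G$ contains no full component of $\alpha$ avoiding $i$, hence no $2$-regular component, and the degree sequence (only $i$ and possibly one other vertex of degree $1$, the rest of degree $2$) then forces $G$ to be empty or a single path-like component starting at $i$. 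A separate, minor slip: your claim that $s\le\epsilon$ forces $(s,\epsilon)\in\{(0,0),(1,1)\}$ omits $(0,1)$, which is exactly the $2r=3k-1$ case you treat next; this is harmless but should be corrected for consistency.
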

\begin{proof}
 Suppose we have $2r\leq 3k-1$. Then by degree constraint, in hypergraph $\alpha\cap\beta$, excluding vertice $i$, all other vertices have degree $2$. This is only possible if for all levels $t$, vertices contained in $\alpha\cap\beta$ are connected to two hyperedges, implying that $k_t,k_{t-1}\geq 1$. Further we have $2r=3k-1$ in the case.
 
 Suppose we have $2r=3k\neq 0$ and there is $t\in [2\ell]$ such that $k_t=0$. Then in the $\alpha\cap\beta$, exactly one vertice(excluding vertice $i$) has degree $1$ and all the other vertices have degree $2$. Thus there is only one level $t^{\prime}$ such that $k_t=0$. Thus all shared vertices in level $t$ have degree at most $1$. This implies that there is exactly one shared vertice in level $t$. This is only possible if $\alpha\cap\beta$ is a hyperpath starting from vertice $i$.  
 \end{proof}

\begin{lemma}
  For any $\alpha,\beta\in S_{\ell,v,i}$ sharing $k$ hyperedges, we have
 \begin{align*}
     \E[\chi_\alpha(Y) \chi_\beta(Y)]& =(1+n^{-\Omega(1)})\lambda^{2(2\ell-1)v-2k}\E\left[\prod_{j\in \alpha\Delta\beta} x_j^{\text{deg}(j,\alpha\Delta\beta)}\right]\\ &\leq \lambda^{-2k}\Gamma^{O(2r-3k)} \E[\chi_\alpha(Y)x_i]\E[\chi_\beta(Y)x_i]
 \end{align*}
 where $\text{deg}(j,\alpha\Delta\beta)$ represents the degree  of vertex $j$ in hypergraph $\alpha\Delta\beta$.
\end{lemma}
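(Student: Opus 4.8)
The plan is to evaluate $\E[\chi_\alpha(Y)\chi_\beta(Y)]$ by conditioning first on $x$ and integrating out $W$, then integrating out $x$, and finally converting the result into the stated bound via a degree count in $\alpha\cap\beta$. First I would use that $\chi_\alpha(Y)\chi_\beta(Y)=\prod_{e\in\alpha\Delta\beta}Y_e\cdot\prod_{e\in\alpha\cap\beta}Y_e^2$ together with the independence, zero mean and unit variance of the entries of $W$: writing $Y_e=\lambda\prod_{j\in e}x_j+W_e$ gives $\E_W[Y_e\mid x]=\lambda\prod_{j\in e}x_j$ for $e\in\alpha\Delta\beta$ and $\E_W[Y_e^2\mid x]=1+\lambda^2\prod_{j\in e}x_j^2$ for $e\in\alpha\cap\beta$. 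Since $\alpha$ and $\beta$ each have $(2\ell-1)v$ hyperedges and share exactly $k$ of them, $|\alpha\Delta\beta|=2(2\ell-1)v-2k$, and collecting factors yields
\[
\E_W\!\big[\chi_\alpha(Y)\chi_\beta(Y)\mid x\big]=\lambda^{2(2\ell-1)v-2k}\prod_{j}x_j^{\operatorname{deg}(j,\alpha\Delta\beta)}\prod_{e\in\alpha\cap\beta}\Big(1+\lambda^2\textstyle\prod_{j\in e}x_j^2\Big).
\]

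Next I would integrate out $x$. Expanding the last product as $\sum_{E\subseteq\alpha\cap\beta}\lambda^{2|E|}\prod_{e\in E}\prod_{j\in e}x_j^2$, the term $E=\varnothing$ contributes exactly $\lambda^{2(2\ell-1)v-2k}\,\E\big[\prod_j x_j^{\operatorname{deg}(j,\alpha\Delta\beta)}\big]$. The crucial structural observation is that every vertex other than the root $i$ has degree $2$ in each of $\alpha,\beta$, so $\operatorname{deg}(j,\alpha\Delta\beta)=4-2\operatorname{deg}(j,\alpha\cap\beta)$ for $j\neq i$ (and $0$ or $2$ for $j=i$); consequently every exponent that appears in the expansion, namely $\operatorname{deg}(j,\alpha\Delta\beta)+2|\{e\in E:j\in e\}|$, is at most $4-2\operatorname{deg}(j,\alpha\cap\beta)+2\operatorname{deg}(j,\alpha\cap\beta)=4$. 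Hence only the second and fourth moments of the $x_j$ ever enter, each $E\neq\varnothing$ term differs from the $E=\varnothing$ term by a factor at most $\Gamma^{3|E|}$, and by independence of the coordinates of $x$ one gets $\E[\,\cdot\,]\le\big(\prod_j\E[x_j^{\operatorname{deg}(j,\alpha\Delta\beta)}]\big)(1+\lambda^2\Gamma^3)^k$, while the lower bound $\prod_j\E[x_j^{\operatorname{deg}(j,\alpha\Delta\beta)}]$ is immediate since all these degrees are even (so every term is nonnegative). Using $\lambda^2\Gamma^3=n^{-\Omega(1)}$ (the SNR regime $\gamma=\Theta(1)$ forces $\lambda^2=O(n^{-3/2})$, and $\Gamma=n^{o(1)}$) together with $k\le\ell v=n^{o(1)}$, the factor $(1+\lambda^2\Gamma^3)^k$ is $1+n^{-\Omega(1)}$, which yields the first displayed equality of the lemma.

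For the second inequality I would first record that, since all degrees in $\alpha\Delta\beta$ are even, $\E[\prod_j x_j^{\operatorname{deg}(j,\alpha\Delta\beta)}]=\prod_j\E[x_j^{\operatorname{deg}(j,\alpha\Delta\beta)}]=\Gamma^{s_0}$, where $s_0$ is the number of shared vertices of degree $4$ in $\alpha\Delta\beta$ — equivalently, the shared vertices not incident to any shared hyperedge. A degree count in the $3$-uniform hypergraph $\alpha\cap\beta$ (the argument of Lemma~\ref{hypergraphSimpleRelation}, adapted via Lemma~\ref{Tensor3GraphPropertyRecovery} to allow for the fact that $i$ has degree $\le 1$ there) gives $3k\le 1+2(r-s_0)$, hence $s_0\le 2r-3k$ up to an additive constant; since $\Gamma\ge(\E x_i^2)^2=1$ this yields $\Gamma^{s_0}\le\Gamma^{O(2r-3k)}$. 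Finally, computing $\E[\chi_\alpha(Y)x_i]$ in the same two-stage way and using $\operatorname{deg}(i,\alpha)=1$ and the distinctness of the vertices of $\alpha$ gives $\E[\chi_\alpha(Y)x_i]=\lambda^{(2\ell-1)v}\,\E\big[x_i^2\prod_{j\neq i}x_j^2\big]=\lambda^{(2\ell-1)v}$, and likewise $\E[\chi_\beta(Y)x_i]=\lambda^{(2\ell-1)v}$, so $\lambda^{2(2\ell-1)v-2k}=\lambda^{-2k}\,\E[\chi_\alpha(Y)x_i]\,\E[\chi_\beta(Y)x_i]$. Combining these three observations gives $\E[\chi_\alpha(Y)\chi_\beta(Y)]\le\lambda^{-2k}\Gamma^{O(2r-3k)}\,\E[\chi_\alpha(Y)x_i]\,\E[\chi_\beta(Y)x_i]$.

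The main obstacle I anticipate is controlling the error from the $E\neq\varnothing$ terms in the second stage: this goes through only because of the exponent-at-most-$4$ observation above, without which the uncontrolled sixth and higher moments of $x$ (we assume nothing beyond $\Gamma=\E x_i^4=n^{o(1)}$) would spoil the estimate. The only other delicate point is the graph-theoretic inequality $s_0\le 2r-3k$, which requires treating the root vertex $i$ — whose local degree pattern differs from that of all other vertices — separately; this is routine but must be done carefully, and it is precisely where one uses that $\alpha,\beta$ are honest copies in $S_{\ell,v,i}$ (cf.\ Definition~\ref{defTensorNetworkEstimator}) rather than arbitrary $3$-uniform hypergraphs.
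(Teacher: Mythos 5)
Your proposal is correct and follows essentially the same route as the paper, whose own proof is just a pointer to the analogous computation in the detection analysis (the proof of Lemma~\ref{concentration_detection}): integrate out $W$ edge by edge, so shared hyperedges give $(1+\lambda^2\prod_{j\in e}x_j^2)$ factors absorbed into the $(1+n^{-\Omega(1)})$ term, then bound the $x$-moment by $\Gamma^{O(2r-3k)}$ via the degree count of Lemmas~\ref{hypergraphSimpleRelation}/\ref{Tensor3GraphPropertyRecovery} and identify $\E[\chi_\alpha(Y)x_i]=\lambda^{(2\ell-1)v}$ as in Lemma~\ref{expectationRecovery3Tensor}. You in fact spell out details the paper leaves implicit (the exponent-at-most-$4$ observation and the regime $\lambda^2\Gamma^{O(1)}k=n^{-\Omega(1)}$ needed for the $(1+n^{-\Omega(1)})$ factor), which is consistent with how the lemma is used.
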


\begin{proof}
    This follows from the same computation as \ref{tensor3PolynomialCorrelation}.
\end{proof}

We consider the set of hypergraph pairs $\alpha,\beta\in S_{\ell,v,i}$  such that in $\alpha$
\begin{itemize}
    \item at level $t$   there are $r_t$  vertices(excluding vertice $i$) shared with $\beta$
    \item between level $t+1$ and level $t$, there are $k_t$  hyperedges  shared with $\beta$.
\end{itemize}
We denote such set of hypergraph pairs  as $S_{\ell,v,i,k,r}$, where $k=\sum k_t,r=\sum r_t$.
Then $r$ is just the number of shared vertices between $\alpha,\beta$ as $r$ and $k$ is just the number of shared hyperedges between $\alpha,\beta$. Although we abuse the notations(since the set $S_{\ell,v,k,r}$ is related to $k_t,r_t$),
 by the following lemma we can bound the size of such set only using $k,r,\ell,v$. 
\begin{lemma}\label{Tensor3RecoveryPairCount}
    On directed complete hypergraph with $n$ vertices,  for any set $S_{\ell,v,k,r}$ with $\ell=O(\log n)$, $v=o(n)$, $k,r\leq \ell v$, the number of hypergraph pairs contained in the set $S_{\ell,v,i,k,r}$ is bounded by
    \begin{equation*} \lvert S_{\ell,v,i,k,r}\rvert\leq \lvert S_{\ell,v,i}\rvert^2 n^{-r}v^{2r-7k/2}\ell^{2r-3k} \exp (O(r))  
    \end{equation*}
\end{lemma}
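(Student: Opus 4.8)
The proof follows the same encoding strategy as Lemma~\ref{tensor3PolynomialCorrelation}, adapted to the path-shaped template defining $S_{\ell,v,i}$. The plan is to specify a pair $(\alpha,\beta)\in S_{\ell,v,i,k,r}$ by four successive choices: (i) the shared sub-hypergraph $\alpha\cap\beta$ together with its placement inside the template as seen from $\alpha$; (ii) the rest of $\alpha$; (iii) a placement of $\alpha\cap\beta$ inside the template as seen from $\beta$; (iv) the rest of $\beta$. Writing $r_t$ for the number of shared vertices at level $t$ (excluding $i$) and $k_t$ for the number of shared hyperedges between levels $t$ and $t+1$, $3$-uniformity forces $r_t\ge\delta(t)\max(k_{t-1},k_t)$ with $\delta(t)=2$ for $t$ odd and $1$ for $t$ even, and Lemma~\ref{Tensor3GraphPropertyRecovery} supplies the structural dichotomy I will use to bound the number of placements in step (iii).

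For step (i), choosing which $r_t$ vertices and $k_t$ hyperedges are shared at each level costs at most $n^{r}\cdot v^{k/2}\cdot\exp(O(r))$: the $n^{r}$ is $\prod_t\binom{n}{r_t}$, the $\exp(O(r))$ absorbs the factors $1/\prod_t r_t!$ and the crude bounds on $\binom{r_t}{k_t}$ and $\binom{r_{t+1}}{2k_t}$, and the $v^{k/2}$ comes from matching the shared odd-level vertices into the $k_t$ hyperedges, exactly as in the detection count. For steps (ii) and (iv), completing $\alpha$ (resp.\ $\beta$) given the embedded intersection reduces the count by the factor $n^{-r}v^{r-2k}\exp(O(r))$ relative to $|S_{\ell,v,i}|$, using $\binom{n}{v-r'}\big/\binom{n}{v}\approx(v/n)^{r'}$ and $\bigl(2(v-k')\bigr)!\big/(2v)!\approx(2v)^{-2k'}$ at each level, as in Lemma~\ref{tensor3PolynomialCorrelation}. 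For step (iii), the number of placements of $\alpha\cap\beta$ inside $\beta$'s template is at most $\ell^{s}$, where $s$ counts shared vertices of degree at most $1$ in $\alpha\cap\beta$ (together with the endpoint $i$ and any degree-$0$ shared vertices); by Lemma~\ref{Tensor3GraphPropertyRecovery}, $s\le 2r-3k+O(1)$, so summing over $s$ contributes at most $\ell^{2r-3k}$ up to a constant.

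Finally I would sum the per-profile bound over all admissible $(r_t,k_t)$ with $\sum_t r_t=r$ and $\sum_t k_t=k$; the number of such profiles is $\exp(O(r))$ (they are $O(\ell)$ integers, each at most $2v$, constrained by $r_t\ge\delta(t)\max(k_{t-1},k_t)$, so the count is controlled as in Lemma~\ref{summationBounding1}), which the $\exp(O(r))$ slack absorbs. Multiplying the contributions of the four steps gives $|S_{\ell,v,i}|^2\,n^{-r}\,v^{(r-2k)+(r-2k)+k/2}\,\ell^{2r-3k}\,\exp(O(r))=|S_{\ell,v,i}|^2\,n^{-r}\,v^{2r-7k/2}\,\ell^{2r-3k}\,\exp(O(r))$, which is the claim.

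I expect the main obstacle to be the irregular endpoint levels $0$ and $2\ell-1$, whose connection pattern differs from the bulk: vertex $i$ and the level-$1$ vertices have degree $1$, level $0$ carries only $(v-1)/2$ extra vertices, and symmetrically at the far end. Checking that these levels contribute only the stated $\exp(O(r))$ and exactly the bookkept powers of $v$ and $\ell$ — in particular that the three cases of Lemma~\ref{Tensor3GraphPropertyRecovery} together with the boundary degree constraints never let $s$ exceed $2r-3k+O(1)$, and that the completion counts at the endpoint levels still factor as $n^{-r}v^{r-2k}\exp(O(r))$ — is the delicate bookkeeping; the rest is the bulk computation of Lemma~\ref{tensor3PolynomialCorrelation} repeated essentially verbatim.
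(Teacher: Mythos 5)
Your proposal matches the paper's proof essentially step for step: choose the intersection and shared vertices inside $\alpha$'s template ($n^{r}v^{k/2}\exp(O(r))$), complete $\alpha$ and $\beta$ at cost $\lvert S_{\ell,v,i}\rvert n^{-r}v^{r-2k}\exp(O(r))$ each, and pay $\ell^{s}\leq \ell^{2r-3k}$ for embedding the intersection into $\beta$, then multiply. The only (harmless) deviation is your final summation over profiles $(r_t,k_t)$: the paper defines $S_{\ell,v,i,k,r}$ for a fixed profile (abusing notation), so that summation is not needed for the stated bound — and your side claim that the number of profiles is $\exp(O(r))$ would not hold when $r\ll\ell$ — but the per-profile product you compute already gives exactly the claimed inequality.
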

\begin{proof}
We first choose $\alpha\cap \beta$ and shared vertices as subgraph of hypergraph $\alpha\in S_{\ell,v,i}$, and then completing the remaining hypergraphs $\alpha\setminus \beta$ and $\beta\setminus\alpha$.  If in the $\alpha$ there are $r_t$ shared vertices(excluding $i$) in level $t$, and $k_t$ shared hyperedges between level $t$ and level $t+1$ for $t=0,1,\ldots,2\ell-1$, then the number of choices for shared vertices and $\alpha\cap\beta$ is bounded by 
\begin{align*}
    N_{\alpha\cap\beta} \leq &{n\choose r_{0}}{n\choose r_{2\ell-1}}{2r_{0}+1\choose k_0}{2r_{2\ell-1}\choose 2k_{2\ell-2}}\prod_{t=1}^{\ell-1}\left[ {n\choose r_{2t-1}}{n\choose r_{2t}} {r_{2t-1}\choose k_{2t-1}}{r_{2t-1}\choose k_{2t-2}} \right.\\ & {r_{2t}\choose 2k_{2t}}  \left.
    {r_{2t}\choose 2k_{2t-1}}\right] \prod_{t=0}^{2\ell-2} (2k_t)!
\end{align*}
This is upper bounded by $\prod_{t=0}^{2\ell-2} (2k_t)!\prod_{t=0}^{2\ell-1} {n\choose r_t}\exp (O(r))\leq n^{r} v^{k/2} \exp(O(r))$. 
Next we choose the remaining hypergraph $\alpha \setminus \beta$ and  $\beta\setminus \alpha$ respectively. For $\alpha \setminus \beta$, we have 
\begin{align*}
    N_{\alpha\setminus\beta} &={n\choose \frac{v-1}{2}-r_0}{n\choose  v-r_{2\ell-1}}\prod_{t=1}^{\ell-1} {n\choose 2v-r_{2t-1}}{n\choose v-r_{2t}} \prod_{t=0}^{2\ell-2} (2(v-k_t))! \\ 
    & \leq \lvert S_{\ell,v,i}\rvert n^{-r}v^rv^{-2k}\exp(O(r))
\end{align*}
Suppose there are $s_1$ degree $1$ vertices in $\alpha\cap \beta$ and $s_0$ vertices shared between $\alpha,\beta$ but not contained in $\alpha\cap\beta$, denoting $s=s_0+s_1$, then there are $\ell^{s}$ ways of placing $\alpha\cap\beta$ and shared vertices in hypergraph $\beta$ and the count of remaining hypergraph is also bounded by $\lvert S_{\ell,v,i}\rvert n^{-r}v^rv^{-2k}\exp(O(r))$. Multiplying together we will get the claim.
\end{proof}

\begin{lemma}[Recovery for general spiked model]\label{recEstimator}
In spiked tensor model $Y=\lambda x^{\otimes 3}+W$ with the same setting as theorem \ref{thm:intro-tensor}, 
     taking $\gamma=0.001 n^{3/2}v^{1/2}\lambda^{2}=1+\Omega(1)$ and $\ell=O(\log_{\gamma} n)$ in the estimator above, then if $n=\omega(v^2 \Gamma^2\text{polylog}(n))$, we have $\frac{\E\lprod P(Y),x\rprod}{\left(\E\lVert P(Y)\rVert^2\E\lVert x\rVert^2\right)^{1/2}}=\Omega(1)$. 
\end{lemma}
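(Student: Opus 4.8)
The plan is to follow the template of the strong-detection concentration estimate (Lemma~\ref{concentration_detection}), now applied coordinate-wise to the vector-valued estimator $P(Y)$ and paired with $x$.

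\textbf{Step 1 (reduction to a second moment).} Expanding $\chi_\alpha(Y)$ in powers of $\lambda$ and using that $W$ has independent mean-zero entries, only the fully-planted term survives in $\E[\chi_\alpha(Y)x_i]$; since every vertex of $\alpha\in S_{\ell,v,i}$ has degree $2$ except $i$, which has degree $1$, one gets $\E[\chi_\alpha(Y)x_i]=\lambda^{(2\ell-1)v}$, hence $\E[P_i(Y)x_i]=\lambda^{(2\ell-1)v}\lvert S_{\ell,v,i}\rvert$ — this is exactly Lemma~\ref{expectationRecovery3Tensor}, and it is independent of $i$ by symmetry. Therefore $\E\lprod P(Y),x\rprod=n\lambda^{(2\ell-1)v}\lvert S_{\ell,v,i}\rvert$ and $\E\lVert x\rVert^2=n$, so
\[
\frac{\E\lprod P(Y),x\rprod}{\left(\E\lVert P(Y)\rVert^2\,\E\lVert x\rVert^2\right)^{1/2}}=\frac{n\lambda^{(2\ell-1)v}\lvert S_{\ell,v,i}\rvert}{\left(n\sum_i\E[P_i(Y)^2]\right)^{1/2}},
\]
and it suffices to prove $\E[P_i(Y)^2]=O\bigl((\E[P_i(Y)x_i])^2\bigr)$ for each fixed $i$.

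\textbf{Step 2 (expand and partition).} Write $\E[P_i(Y)^2]=\sum_{\alpha,\beta\in S_{\ell,v,i}}\E[\chi_\alpha(Y)\chi_\beta(Y)]$ and partition the ordered pairs $(\alpha,\beta)$ by their intersection profile $(r_t,k_t)_t$ ($r_t$ shared non-$i$ vertices at level $t$, $k_t$ shared hyperedges between levels $t,t+1$), with $r=\sum_t r_t$, $k=\sum_t k_t$; call the classes $S_{\ell,v,i,k,r}$. For each class combine the count $\lvert S_{\ell,v,i,k,r}\rvert\le\lvert S_{\ell,v,i}\rvert^2 n^{-r}v^{2r-7k/2}\ell^{2r-3k}\exp(O(r))$ of Lemma~\ref{Tensor3RecoveryPairCount} with the correlation bound $\E[\chi_\alpha(Y)\chi_\beta(Y)]\le\lambda^{-2k}\Gamma^{O(2r-3k)}\E[\chi_\alpha(Y)x_i]\E[\chi_\beta(Y)x_i]$, so that the normalized contribution of the class is at most $n^{-r}v^{2r-7k/2}\ell^{2r-3k}\lambda^{-2k}\Gamma^{O(2r-3k)}\exp(O(r))$; substituting $\lambda^{-2}=0.001\,n^{3/2}v^{1/2}/\gamma$ turns this into the same shape as in the proof of Lemma~\ref{concentration_detection}, i.e.\ a product of $\bigl(n/(c\,v^2\Gamma^{O(1)})\bigr)^{-(r-3k/2)}$, a power $(c'\gamma^{-1})^{k}$, and $\ell^{2r-3k}$ for suitable constants $c,c'$.

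\textbf{Step 3 (sum over profiles via the trichotomy).} By Lemma~\ref{Tensor3GraphPropertyRecovery}, every pair that shares more than the forced vertex $i$ falls into regime (i)~$2r>3k$, regime (ii)~$2r=3k$ with $\alpha\cap\beta$ a hyperpath rooted at $i$, or regime (iii)~$3k-1\le 2r\le 3k$ with $k_t\ge1$ for all $t$; the remaining pairs have $r=k=0$ (they share only $i$). The $r=k=0$ class contains $(1-o(1))\lvert S_{\ell,v,i}\rvert^2$ pairs, each contributing $\lambda^{2(2\ell-1)v}$, giving $(1-o(1))(\E[P_i(Y)x_i])^2$ total. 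In regime (i), summing over all admissible $(r_t,k_t)$ is precisely Lemma~\ref{summationBounding1} together with $n=\omega(v^2\Gamma^2\mathrm{polylog}(n))$, so its contribution is $o\bigl((\E[P_i(Y)x_i])^2\bigr)$; in regime (iii), $k_t\ge1$ everywhere forces $k\ge2\ell$, so the factor $(c'\gamma^{-1})^k\le(c'\gamma^{-1})^{2\ell}=n^{-\Omega(1)}$ once $\ell=\Omega(\log_\gamma n)$ and $\gamma=1+\Omega(1)$, which again gives $o\bigl((\E[P_i(Y)x_i])^2\bigr)$. For regime (ii) one argues separately: sharing a length-$k$ hyperpath from $i$ is a rigid event — $\alpha$ and $\beta$ must agree on their entire initial segment, pinning down $\Theta(k)$ vertices and their matchings inside the layered structure — so the number of such pairs is much smaller than the generic bound (in particular the ``embedding'' factor is only $O(\ell)$, not $\exp(O(r))$), and combined with the $\lambda^{-2k}$ factor this sums to at most $O\bigl((\E[P_i(Y)x_i])^2\bigr)$. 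Collecting the cases gives $\E[P_i(Y)^2]=O\bigl((\E[P_i(Y)x_i])^2\bigr)$, and Step~1 yields correlation $\Omega(1)$.

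\textbf{Main obstacle.} Everything except regime (ii) is a routine transcription of the detection-case computation. The genuinely new point is the hyperpath-rooted-at-$i$ case: unlike the detection setting (where $2r=3k$ forces $k\ge2\ell$ via $2$-regularity), here the root vertex $i$ breaks regularity, so $2r=3k$ can occur with small $k$; this is the only intersection pattern other than the trivial leading term with no help from the $n^{-(r-3k/2)}$ factor, so controlling it demands a careful, dedicated count of pairs sharing an initial hyperpath from $i$ and a verification that the slack from $\gamma>1$ (and $\ell=\Omega(\log_\gamma n)$ for the longer such paths) absorbs the residual $v$- and $\Gamma$-dependence.
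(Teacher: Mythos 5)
Your proposal is correct and follows essentially the same route as the paper's proof: per-coordinate reduction, the expectation count of Lemma~\ref{expectationRecovery3Tensor}, the correlation bound $\E[\chi_\alpha(Y)\chi_\beta(Y)]\le\lambda^{-2k}\Gamma^{O(2r-3k)}\E[\chi_\alpha(Y)x_i]\E[\chi_\beta(Y)x_i]$ combined with Lemma~\ref{Tensor3RecoveryPairCount}, and the case analysis from Lemma~\ref{Tensor3GraphPropertyRecovery}, with the hyperpath-rooted-at-$i$ case bounded by a geometric series $\sum_k(c\,n^{-3/2}v^{-1/2}\lambda^{-2})^k$ using the slack $\gamma>1$. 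The only cosmetic difference is that the paper does not need your ``dedicated count'' in regime~(ii): the generic pair-count bound already suffices there, since the $\exp(O(r))=c^{O(k)}$ factor is absorbed by the small constant $0.001$ in the definition of $\gamma$.
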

\begin{proof}
We need to show the estimator $P(Y)\in\mathbb{R}^n$ above achieves constant correlation with the hidden vector $x$. Equivalently, we want to show that for each $i\in[n]$
\begin{equation*}
    \left(\sum_{\alpha\in S_{\ell,v,i}} \E[\chi_\alpha(Y) x_i]\right)^2= \Omega\left(\sum_{\alpha,\beta\in S_{\ell,v,i}} \E[\chi_\alpha(Y) \chi_\beta(Y)]\right)
\end{equation*}

For left hand side, we can simply apply \ref{expectationRecovery3Tensor}. For the right hand side, we have
 \begin{equation*}
     \E[\chi_\alpha(Y) \chi_\beta(Y)] \leq \lambda^{-2k}\Gamma^{O(2r-3k)} \E[\chi_\alpha(Y)x_i]\E[\chi_\beta(Y)x_i]
 \end{equation*}
 By lemma, \ref{Tensor3RecoveryPairCount} the contribution to $\frac{\E [P^2_i(Y)]}{\left(\E [P_i(Y)x_i]\right)^2}$ with respect to specific $r_t,k_t$ is bounded by
\begin{equation*}
   (n^{-r}v^rv^{-2k})^2n^rv^{k/2}\ell^s\Gamma^{O(2r-3k)}\lambda^{-2k} \exp(O(r))\leq \left(\frac{n}{cv^2\ell^2\Gamma^{O(1)}}\right)^{-r+3k/2} \left(cn^{-3/2}v^{-1/2}\lambda^{-2}\right)^k  
\end{equation*}
where $c$ is constant. For $nv^{-2}=\omega(\text{poly}(\Gamma\ell))$, using  argument very similar to lemma \ref{summationBounding1}, the dominating term is given by $r\leq \frac{3}{2}k$. For $2r=3k+1$, by lemma \ref{Tensor3GraphPropertyRecovery} we must have $k\geq \ell$, therefore for $cn^{-3/2}v^{-1/2}\lambda^{-2}<1$ and $\ell=C\log n$ with constant $C$ large enough, the contribution is $n^{-\Omega(1)}$. For $2r=3k$, 
by lemma \ref{Tensor3GraphPropertyRecovery}
 either $k\geq 2\ell$ or $\alpha\cap\beta$ exists as a hyperpath starting from 
vertex $i$. The first case can be treated in the same way as  $2r=3k-1$. For the second case, the contribution is bounded by $\sum_{k=0}^{2\ell-1} \left(cn^{-3/2}v^{-1/2}\lambda^{-2}\right)^{k} \leq \frac{1}{1-cn^{-3/2}v^{-1/2}\lambda^{-2}}$

Therefore in all, we have $\frac{\left(\E P_i(Y)x_i\right)^2}{\E P_i^2(Y)}\geq 1-cn^{-3/2}v^{-1/2}\lambda^{-2}$. This is $\Omega(1)$ when we have relation $cn^{-3/2}v^{-1/2}\lambda^{-2}=1-\Omega(1)$ and $n=\omega(v^2 \text{poly}(\Gamma\ell))$. Therefore, such polynomial estimator $P(Y)\in \mathbb{R}^n$ achieving weak recovery under the given condition. 
\end{proof}

\subsection{Color coding method for polynomial evaluation in order-3 spiked tensor model}\label{colorCodingTensor}
\subsubsection{Strong detection polynomial}
For constant $v$, although the thresholding polynomial and polynomial estimator has degree $O(\log n)$, these polynomials can actually be evaluated in polynomial time via color coding method as a generalization of result in  \cite{8104074}. In the same way color coding method also improves the running time of sub-exponential time algorithms.

We first describe the evaluation algorithm for scalar polynomial, as shown in algorithm \ref{algoCircleTensor3}.

\begin{algorithm}
\KwData{Given $Y\in \mathbb{R}^{n\times n\times n}$ s.t $Y=\lambda x^{\otimes 3}+W$}
\KwResult{$P(Y)\in\mathbb{R}$ which is the sum of multilinear monomials corresponding to hypergraphs  in $S_{\ell,v}$(up to accuracy $1+n^{-\Omega(1)}$)}
 \For{$i\gets1$ \KwTo $C$}{
     Sample coloring $c_i:[n]\mapsto [\ell]$ uniformly at random\;
   Construct a matrix $M\in \mathbb{R}^{(2^{3\ell v}-1)n^v\times (2^{3\ell v}-1)n^{2v}}$, rows and columns of $M$ are indexed by $(V_1, S)$ and $(V_2,T)$ where $V_1\in [n]^{v}$ and $V_2\in [n]^{2v}$ are set of vertices while $S,T\subsetneq [3\ell v]$ are subset of colors.\;
    Matrices $Q,N\in \mathbb{R}^{\mathbb{R}^{(2^{3\ell v}-1)n^{2v}\times (2^{3\ell v}-1)n^{v}}}$, the rows and columns of which are indexed by $(V_2, S)$ and $(V_1,T)$ where $V_1\in [n]^{v}$ and $V_2\in [n]^{2v}$ are set of vertices while $S,T\subseteq [3\ell v]$ are non-empty subset of colors.
    
    Record $p_{c_i}= \frac{(3\ell v)^{3\ell_v}}{(3\ell v)!}\text{trace}((MN)^{\ell-1}MQ)$\;}
    Return  $\frac{1}{C}\sum_{i=1}^C p_{c_i}$
 \caption{Algorithm for evaluating the thresholding polynomial }\label{algoCircleTensor3}
 \end{algorithm}
 
  Next we describe the construction of matrices $M,N,Q$.
  
  We have $M_{(V_1,S),(V_2,T)}= 0$ if $S\cup \{c(v):v\in V_1\}\neq T$ or $\{c(v):v\in V_1\}$ and  $S$ are not disjoint. Otherwise $M_{(V_1,S),(V_2,T)}$ is given by $\sum_{\gamma \in S_{V_1,V_2}}\chi_{\gamma}(Y)$ where $S_{V_1,V_2}$ is the set of perfect matching induced by $V_1$ and $V_2$(each hyperedge in $S_{V_1,V_2}$ direct from $1$ vertice from $V_1$ to $2$ vertice from $V_2$). 
  
  In the same way,  We have $N_{(V_2,S),(V_1,T)}= 0$ if $S\cup \{c(v):v\in V_2\}\neq T$ or $\{c(v):v\in V_2\}$ and  $S$ are not disjoint. Otherwise $N_{(V_2,S),(V_1,T)}$ is given by $\sum_{\gamma \in S_{V_1,V_2}}\chi_{\gamma}(Y)$ where $S_{V_1,V_2}$ is the set of perfect matching induced by $V_1$ and $V_2$(each hyperedge in $S_{V_1,V_2}$ direct from $2$ vertice in $V_2$ to $1$ vertice in $V_1$). 
  
  For matrix $Q$, the indexing and non-zero  entry locations are the same as $N$. However the non-zero elements are given by $\sum_{\gamma\in S_{V_2,V_1}}\chi_\gamma(Y)$ where $S_{V_2,V_1}$ is the set of perfect matching induced by $V_1$ and $V_2$(each hyperedge in $S_{V_2,V_1}$ direct from $1$ vertice in $V_1$ to $2$ vertice in $V_2$).
 
\begin{lemma}[Evaluation of thresholding polynomial]\label{evaDetection}
       There exists a $n^{4v}\exp(O(\ell v))$-time algorithm that given a coloring $c$:$[n]\to [3\ell v]$(where $3\ell v$ is the number of vertices in hypergraph $\alpha\in S_{\ell,v}$) and a tensor $Y\in \mathbb{R}^{n\times n\times n}$ evaluates  degree $2\ell v$ polynomial in polynomial time
       \begin{equation}
           p_c(Y)=\mathop{\sum}_{\alpha\in S_{\ell,v}} \chi_\alpha(Y)F_{c,\alpha}
       \end{equation}
       \begin{equation}
    F_{c,\alpha}=\frac{(3\ell v)^{3\ell v}}{(3\ell v)!} \cdot \mathbf{1}_{c(\alpha)=[3\ell v]}
\end{equation}

    when thresholding polynomial $P(Y)$ defined in \ref{3DetectionPolynomial} satisfies $(\E P(Y))^2=(1-o(1))\E P^2(Y)$, we can take $\exp(O(\ell v))$ random colorings and give an accurate estimation of the thresholding polynomial by averaging $p_c(Y)$
    \end{lemma}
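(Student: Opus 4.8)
The plan is to prove Lemma~\ref{evaDetection} in two parts. \emph{Part (A): single-coloring evaluation.} I will show that for a fixed coloring $c:[n]\to[3\ell v]$ Algorithm~\ref{algoCircleTensor3} outputs $p_c(Y)=\sum_{\alpha\in S_{\ell,v}}\chi_\alpha(Y)F_{c,\alpha}$ in time $n^{4v}\exp(O(\ell v))$, where $F_{c,\alpha}=\tfrac{(3\ell v)^{3\ell v}}{(3\ell v)!}\cdot\mathbf{1}[\,c\text{ is injective on }V(\alpha)\,]$. \emph{Part (B): averaging.} I will show that $\bar p(Y):=\tfrac1C\sum_{t\le C}p_{c_t}(Y)$ for $C=\exp(\Theta(\ell v))$ independent uniform colorings satisfies $\E_{Y,c}[\bar p]=\E_Y[P(Y)]$ and $\E_{Y,c}[\bar p^2]=(1+o(1))\E_Y[P(Y)^2]$, so under the hypothesis $(\E_Y P)^2=(1-o(1))\E_Y[P^2]$ the polynomial $\bar p$ concentrates exactly as $P$ does and Theorem~\ref{thm:detectionMetaTheorem} applies verbatim with $\bar p$ in place of $P$.

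For Part (A) I would run the standard color-coding dynamic program along the cyclic $2\ell$-level structure of Definition~\ref{3DetectionPolynomial}. A partial copy of a member of $S_{\ell,v}$ built through level $t$ is determined, for the purpose of extending it, by the tuple of vertices occupying level $t$ (which dictates how the level $t$–$(t{+}1)$ hyperedges may attach) together with the set of colors already consumed; since a member of $S_{\ell,v}$ has exactly $3\ell v$ vertices, forcing all $3\ell v$ colors to be used makes the copy vertex-injective, i.e. an honest element of $S_{\ell,v}$ counted once. The matrices $M,N,Q$ are precisely the transfer matrices of this DP: their nonzero entries are the layer-interface polynomials $\sum_{\gamma\in S_{V_1,V_2}}\chi_\gamma(Y)$ (a sum over the $\exp(O(v\log v))$ admissible perfect matchings between adjacent levels), gated by the color-disjointness conditions, with $M$ advancing an even level to the next odd level, $N$ an odd level to the next even one, and $Q$ closing the cycle from level $2\ell-1$ back to level $0$; hence $\text{trace}((MN)^{\ell-1}MQ)$ sums $\chi_\alpha(Y)$ over exactly the $c$-rainbow $\alpha\in S_{\ell,v}$, and multiplying by $\tfrac{(3\ell v)^{3\ell v}}{(3\ell v)!}=1/\Pr_c[\alpha\text{ rainbow}]$ yields $F_{c,\alpha}$. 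Correctness is an induction on the level index, exactly as in \cite{8104074}. For the running time I would never form a matrix power explicitly: each of $M,N,Q$ is sparse with at most $2^{O(\ell v)}n^{3v}$ nonzero entries, a single evaluation $(MN)^{\ell-1}MQ$ applied to one basis vector costs $O(\ell)$ sparse matrix–vector products of this size, and recovering the $2^{O(\ell v)}n^{v}$ diagonal entries needed for the trace multiplies this by $2^{O(\ell v)}n^{v}$, for a total of $n^{4v}\exp(O(\ell v))$.

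For Part (B), the normalization is chosen so that $\E_c[F_{c,\alpha}]=1$ for every fixed $\alpha$, hence $\E_c[p_c(Y)]=P(Y)$ pointwise in $Y$ and $\E_{Y,c_1,\dots,c_C}[\bar p]=\E_Y[P]$. For the second moment, expand $p_c(Y)^2=\sum_{\alpha,\beta}\chi_\alpha(Y)\chi_\beta(Y)F_{c,\alpha}F_{c,\beta}$ and bound $\E_c[F_{c,\alpha}F_{c,\beta}]\le\tfrac{(3\ell v)^{3\ell v}}{(3\ell v)!}=\exp(O(\ell v))$, using $\Pr_c[\alpha,\beta\text{ both rainbow}]\le\Pr_c[\alpha\text{ rainbow}]$. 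Crucially $\E_{x,W}[\chi_\alpha(Y)\chi_\beta(Y)]\ge0$ for every pair: every vertex of $\alpha\triangle\beta$ has even degree because $\alpha,\beta$ are $2$-regular, so $\E_W[\chi_\alpha\chi_\beta]=\prod_{e\in\alpha\cap\beta}\bigl(1+\lambda^2\!\prod_{i\in e}x_i^2\bigr)\prod_{e\in\alpha\triangle\beta}\lambda\!\prod_{i\in e}x_i$ expands into a nonnegative combination of even $x$-monomials, whose $x$-expectations are nonnegative. Therefore $\E_{Y,c}[p_c^2]=\sum_{\alpha,\beta}\E_Y[\chi_\alpha\chi_\beta]\,\E_c[F_{c,\alpha}F_{c,\beta}]\le\exp(O(\ell v))\sum_{\alpha,\beta}\E_Y[\chi_\alpha\chi_\beta]=\exp(O(\ell v))\,\E_Y[P^2]$. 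Since the $c_t$ are independent given $Y$, $\E_{Y,c}[\bar p^2]\le\E_Y[P^2]+\tfrac1C\E_{Y,c}[p_c^2]\le\bigl(1+\tfrac1C\exp(O(\ell v))\bigr)\E_Y[P^2]$, so choosing $C=\exp(\Theta(\ell v))$ with a large enough constant gives $\E_{Y,c}[\bar p^2]=(1+o(1))\E_Y[P^2]=(1+o(1))(\E_YP)^2=(1+o(1))(\E_Y\bar p)^2$; Chebyshev's inequality (i.e. Theorem~\ref{thm:detectionMetaTheorem} applied to $\bar p$) then yields the strong detection guarantee.

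I expect the main obstacle to be the bookkeeping in Part (A): verifying that the state $(V,S)$, the color-disjointness gates in $M,N,Q$, the hyperedge directions, and the wrap-around interface carried by $Q$ together enumerate each rainbow $\alpha\in S_{\ell,v}$ exactly once and with the correct monomial $\chi_\alpha(Y)$ — in particular matching all level parities and directions to Definition~\ref{3DetectionPolynomial} and Figure~\ref{fig:tensorLayerDetection}. Part (B) is a routine second-moment computation paralleling Lemma~\ref{varianceReduction} once the nonnegativity $\E_Y[\chi_\alpha\chi_\beta]\ge0$ has been recorded.
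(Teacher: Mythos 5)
Your proposal is correct and follows essentially the same route as the paper's proof: observe that the transfer-matrix/color-coding algorithm computes $p_c(Y)$ for a fixed coloring, use $\E_c F_{c,\alpha}=1$ so that $\E_c p_c(Y)=P(Y)$, bound $\E_{Y,c}[p_c^2]\le\exp(O(\ell v))\,\E_Y[P^2]$, and average $\exp(O(\ell v))$ independent colorings to reduce the variance, invoking the assumed concentration $(\E P)^2=(1-o(1))\E P^2$. You in fact make explicit two points the paper leaves implicit (deferring to "the same reasoning as in the matrix case"): the termwise nonnegativity $\E_Y[\chi_\alpha\chi_\beta]\ge 0$, justified via the even degrees in $\alpha\triangle\beta$, which is exactly what licenses dropping the color indicators, and the sparse matrix--vector accounting behind the $n^{4v}\exp(O(\ell v))$ running time.
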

    \begin{proof}
    A critical observation is that for each given random coloring $c$, the algorithm above evaluates $p_c(Y)$. 
   we prove that averaging random coloring for $p_c(Y)$ will give  accurate estimate for $P(Y)$. This  follows from the same reasoning as in matrix case. First we note that $\E_c p_c(Y)=P(Y)$. Next, for single coloring we have
    \begin{equation*}
        \E p^2_c(Y)= \sum_{\alpha,\beta\in  S_{\ell,v}}\E [F_{c,\alpha} F_{c,\beta} \chi_\alpha(Y)\chi_\beta(Y)] \leq \exp(O(\ell v)) \E P^2(Y)\leq \exp(O(\ell v)) (\E p_c(Y))^2
    \end{equation*}
    where we use the result that $\E P^2(Y)=(1+o(1)) (\E P(Y))^2$. Therefore, by averaging $L=\exp(O(\ell v))$ random colorings, the variance can be reduced such that $\frac{\sum_{t=1}^L p_{c_t}(Y)}{L}=(1\pm o(1))P(Y)$ w.h.p.  
    \end{proof}
    
  \subsubsection{Evaluation of estimator for weak recovery}  
    Next we discuss about the evaluation of polynomial estimator for weak recovery. Except for matrices $M,N$ as defined above, we need to construct two additional matrices $A\in\mathbb{R}^{n^{(v+1)/2}\times (2^{\ell_v}-1)n^{2v}}$ and $B\in\mathbb{R}^{(2^{\ell_v}-1)n^{v}\times n^{v}}$, where $\ell_v =\frac{3(2\ell-1) v+1}{2}$ is the number of vertices in each hypergraph contained in the set $S_{\ell,v,i}$.
\begin{algorithm}
\KwData{Given $Y\in \mathbb{R}^{n\times n\times n}$ s.t $Y=\lambda x^{\otimes 3}+W$}
\KwResult{$P(Y)\in\mathbb{R}^n$, the $i$-th entry of which is the sum of multilinear monomials corresponding to hypergraphs  in $S_{\ell,v,i}$(up to accuracy $1+n^{-\Omega(1)}$)}
 \For{$i\gets1$ \KwTo $C$}{
     Sample coloring $c_i:[n]\mapsto [\ell]$ uniformly at random\;
   Construct matrices $M,N$ as in the algorithm of strong detection\;
    Construct matrix $A\in \mathbb{R}^{n^{(v+1)/2}\times (2^{\ell_v}-1)n^{2v}}$, where the rows are indexed by $(i,V_1)$($i\in [n]$ and $V_1\in [n]^{(v-1)/2}$), and columns are indexed by $(V_2,S)$($V_2\in [n]^{2v}$ and $S\subseteq [\ell_v]$)\;
    Construct matrix $B\in\mathbb{R}^{(2^{\ell_v}-1)n^{v}\times n^{v}}$, where the rows are indexed by $(V_1,S)$, and columns are indexed by $V_2$\;
    Record $p_{c_i}= A(NM)^{\ell-1}NB\mathbf{1}$\;}
    Return  $\frac{1}{C}\sum_{i=1}^C p_{c_i}$
 \caption{Algorithm for evaluating estimation polynomial vector}\label{algoEstimationTensor3}
 \end{algorithm}
   Then we describe how to construct matrices $A,B$. For $i\in [n]$,set of vertices $V_1$ in level $0$,set of vertices $V_2$ in level $1$ and set of colors $T$, denoting $S_{i,V_1,V_2}$ as the set of all possible connections between level $0$ and $1$, entry $A_{(i,V_1),(V_2,T)}$ is given by $\sum_{\alpha\in S_{i,V_1,V_2}}\chi_\alpha(Y)$ if $T=\{c(v):v\in V_1\cup v\},v\not\in V_1$ and $0$ otherwise. In the same way, denoting $\mathcal{L}_{V_1,V_2}$ as all possible connections between level $2\ell-2$ and $2\ell-1$, we have entry $B_{(V_1,S),V_2}$ given by   $\sum_{\alpha\in \mathcal{L}_{V_1,V_2}}\chi_\alpha(Y)$ if $S\cup \{c(v):v\in V_1\cup V_2\}=[\ell_v], S\cap \{c(v):v\in V_1\cup V_2\}=\emptyset$ and zero otherwise.

    \begin{lemma}[Evaluation of polynomial estimator]\label{evaRec}
        Denote the number of vertices in any hypergraph contained in $S_{\ell,v,i}$ as $\ell_v$, then $\ell_v =\frac{3(2\ell-1) v+1}{2}$. Then
              there exists a $n^{3v}\exp(O(\ell_v))$-time algorithm that given a coloring $c$:$[n]\to [\ell_v]$ and a tensor $Y\in \mathbb{R}^{n\times n\times n}$ evaluates vector $p_c(Y)\in \mathbb{R}^{n}$ with each entry $p_{c,i}(Y)$ a polynomial of entries in $Y$
       \begin{equation*}
           p_{c,i}(Y)=\mathop{\sum}_{\alpha\in S_{\ell,v,i}} \chi_\alpha(Y)F_{c,\alpha}
       \end{equation*}
       \begin{equation*}
    F_{c,\alpha}=\frac{\ell_v^{\ell_v}}{\ell_v!} \cdot \mathbf{1}_{c(\alpha)=[\ell_v]}
    \end{equation*}
    
    For polynomial estimator defined in \ref{defEstimator}, if $0.001\lambda^2 n^{3/2}v^{1/2}>1$, we can take $\exp(O(\ell v))$ random colorings and give an accurate estimation of the estimation polynomial $P_i(Y)$ by averaging $p_{c,i}(Y)$. When $\lambda^2 n^{3/2}=\omega(1)$, we have $n^{3+o(1)}$ time algorithm algorithm for evaluation. 
    \end{lemma}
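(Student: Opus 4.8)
The plan is to reuse, almost verbatim, the three-part structure behind the matrix evaluation lemma~\ref{formal28} and the detection evaluation lemma~\ref{evaDetection}. First I would pin down $\ell_v$ by directly counting the vertices of a hypergraph $\alpha\in S_{\ell,v,i}$ (Definition~\ref{defTensorNetworkEstimator}): level $0$ contributes $(v+1)/2$ vertices, the even levels $2,4,\dots,2\ell-2$ contribute $v$ each, the odd levels $1,3,\dots,2\ell-3$ contribute $2v$ each, and level $2\ell-1$ contributes $v$, for a total of $\frac{v+1}{2}+(\ell-1)v+(\ell-1)(2v)+v=\frac{3(2\ell-1)v+1}{2}$. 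Since all these vertices are pairwise distinct, $\alpha$ is determined by its layered data, and ``$\alpha$ is colorful under $c$'' is exactly the event that $c$ restricted to $V(\alpha)$ is a bijection onto $[\ell_v]$; a uniformly random $c$ has this property with probability $\ell_v!/\ell_v^{\ell_v}$, so $\E_c F_{c,\alpha}=1$ and hence $\E_c p_{c,i}(Y)=P_i(Y)$.

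Second, I would verify that for a fixed coloring $c$ Algorithm~\ref{algoEstimationTensor3} outputs exactly $p_{c,i}(Y)$. The matrices $M,N$ are the ones already used in the detection case: a row or column is indexed by a tuple of vertices in one level together with the set $S\subseteq[\ell_v]$ of colors used so far, the entry being the product of the relevant $Y$-values over the perfect matching of hyperedges joining two adjacent levels, and is zeroed out unless the colors of the incoming vertices are disjoint from $S$ and get appended to form the new color set. The boundary matrices $A$ (carrying the distinguished vertex $i$, the extra $(v-1)/2$ vertices at level $0$, and the degree-$1$/degree-$2$ connection to level $1$) and $B$ (the symmetric connection between levels $2\ell-2$ and $2\ell-1$, ending with the full color set $[\ell_v]$) handle the asymmetric ends. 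Expanding $A(NM)^{\ell-1}NB\mathbf 1$, a nonzero term corresponds to a choice of pairwise disjoint vertex sets on all $2\ell$ levels whose colors are all distinct and exhaust $[\ell_v]$ --- that is, precisely a colorful $\alpha\in S_{\ell,v,i}$, counted once by vertex-distinctness --- and multiplying by $\ell_v^{\ell_v}/\ell_v!$ gives $p_{c,i}(Y)$. For the running time, $M,N$ have $n^{O(v)}2^{\ell_v}$ rows and columns and at most $n^{3v}2^{\ell_v}$ nonzero entries (a layer transition is specified by $\le 3v$ vertices, and assembling the matchings costs only $\exp(O(v))$), so the $O(\ell)$ matrix--vector multiplications total $n^{3v}\exp(O(\ell_v))$; the matrices $A,B$ are no larger.

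Third, I would average over $C=\exp(\Theta(\ell_v))$ independent colorings. Because $0\le F_{c,\alpha}\le \ell_v^{\ell_v}/\ell_v!=\exp(O(\ell_v))$, $\E_c F_{c,\alpha}=1$, and $\E[\chi_\alpha(Y)\chi_\beta(Y)]\ge 0$ for all $\alpha,\beta\in S_{\ell,v,i}$ (every vertex of $\alpha\Delta\beta$ has even degree, exactly as in the detection polynomial), one gets
\[
\E_{Y,c}\,p_{c,i}^2(Y)=\sum_{\alpha,\beta}\E_c[F_{c,\alpha}F_{c,\beta}]\,\E_Y[\chi_\alpha(Y)\chi_\beta(Y)]\le \exp(O(\ell_v))\,\E_Y P_i^2(Y),
\]
so, arguing as in Lemma~\ref{varianceReduction}, $\E_{Y,c_1,\dots,c_C}\bigl(\tfrac1C\sum_t p_{c_t,i}(Y)-P_i(Y)\bigr)^2\le \exp(-\Omega(\ell_v))\,\E_Y P_i^2(Y)$. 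Invoking Lemma~\ref{recEstimator}, under $\gamma=0.001\,n^{3/2}v^{1/2}\lambda^2=1+\Omega(1)$ and $n=\omega(v^2\Gamma^2\,\mathrm{polylog}\,n)$ we have $\E_Y P_i^2(Y)=(1+o(1))(\E_Y P_i(Y))^2$, so the averaged vector $p(Y)=\tfrac1C\sum_t p_{c_t}(Y)$ satisfies $\E\lVert p(Y)-P(Y)\rVert^2=o\bigl(\E\lVert P(Y)\rVert^2\bigr)$, and since $\E_c p(Y)=P(Y)$ it has the same $\E\langle\cdot,x\rangle$ and (up to $1+o(1)$) the same $\E\lVert\cdot\rVert^2$; hence it inherits the correlation guarantee of Lemma~\ref{recEstimator} and feeding it into the rounding step yields weak recovery. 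Specializing to $v=1$: if $\lambda^2 n^{3/2}=\omega(1)$ then $\gamma=\omega(1)$, so choosing $\ell=O(\log_\gamma n)$ gives $\ell_v=3\ell-1=o(\log n)$, whence $\exp(O(\ell_v))=n^{o(1)}$ while $n^{3v}=n^3$, for a total running time of $n^{3+o(1)}$.

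The step I expect to be the main obstacle is the correctness check in the second part. The interior layer-by-layer color-coding is a routine generalization of \cite{8104074} and of Lemma~\ref{evaDetection}, but the recovery hypergraph $S_{\ell,v,i}$ has asymmetric, lower-degree ends --- the free vertex $i$, the half-width level $0$, and the degree-$1$/degree-$2$ bipartite connections at both boundaries --- so one must set up $A$ and $B$ carefully enough that every colorful $\alpha\in S_{\ell,v,i}$ is generated exactly once and no non-simple or improperly connected hypergraph contributes a spurious term. A secondary technical point, handled just as in Lemma~\ref{formal28}, is upgrading the variance bound from an in-expectation statement to a high-probability, pointwise-in-$Y$ approximation of $P(Y)$.
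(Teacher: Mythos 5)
Your proposal follows essentially the same route as the paper's (very terse) proof: identify $\ell_v$ by counting the layered vertices, observe that for a fixed coloring the matrix product $A(NM)^{\ell-1}NB\mathbf 1$ computes exactly the colorful-walk polynomial $p_{c,i}(Y)$ (the paper additionally notes one sums the rows indexed by $(i,\cdot)$ over the extra level-$0$ vertices), reduce the variance by averaging $\exp(O(\ell_v))$ colorings exactly as in Lemma~\ref{varianceReduction}/Lemma~\ref{evaDetection}, and specialize to $v=1$, $\ell=o(\log n)$ with matrix--vector multiplications for the $n^{3+o(1)}$ claim. One intermediate assertion is wrong, though inessential: you invoke Lemma~\ref{recEstimator} to claim $\E_Y P_i^2(Y)=(1+o(1))(\E_Y P_i(Y))^2$, but since $x$ has independent mean-zero entries and vertex $i$ has odd degree in every $\alpha\in S_{\ell,v,i}$, in fact $\E_Y P_i(Y)=0$; what Lemma~\ref{recEstimator} actually yields is $\E_Y P_i^2(Y)\leq O(1)\cdot\left(\E_Y[P_i(Y)x_i]\right)^2$. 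Fortunately your conclusion $\E\lVert p(Y)-P(Y)\rVert^2=o\left(\E\lVert P(Y)\rVert^2\right)$ already follows from the variance-reduction bound alone with $C=\exp(\Theta(\ell_v))$, so you should simply delete that sentence. A last cosmetic point: the number of matchings assembled per matrix entry is $v^{O(v)}$ rather than $\exp(O(v))$, which is harmlessly absorbed into the stated $n^{O(v)}\exp(O(\ell_v))$ running time.
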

    \begin{proof}
     The critical observation is that $p_{c,i}(Y)$ can be obtained from vector $\xi=A(NM)^{\ell-1}NB\mathbf{1}$($\mathbf{1}\in \mathbb{R}^{n^v}$ is all-$1$ vector), by summing up all rows in $\xi$ indexed by $(i,\cdot)$. 
    
    By the same argument as in the strong detection algorithm, we can obtain accurate estimate of $P(Y)$ by averaging $\exp(O(\ell))$ random colorings when weak recovery is achieved. Therefore, the estimator can be evaluated in time $n^{O(v)}$ when $0.001\lambda^2 n^{3/2}v^{1/2}>1$. 
    
    Moreover, when $\lambda^2 n^{3/2}=\omega(1)$, it's enough to take $v=1$ and $\ell=o(\log n)$. Thus $\xi=A(NM)^{\ell-1}NB\mathbf{1}$ can be evaluated in $n^{3}\exp(O(\ell))=n^{3+o(1)}$ time by recursively executing matrix-vector multiplication. Therefore the polynomial estimator achieving strong recovery can be evaluated in time $n^{3+o(1)}$.
    \end{proof}
    
    \begin{proof}[Proof of Theorem \ref{thm:intro-tensor}]
When $cn^{-3/2}v^{-1/2}\lambda^{-2}=1-\Omega(1)$ and $n=\omega(v^2 \text{poly}(\Gamma\ell))$, by lemma \ref{recEstimator} $\frac{P(Y)}{\lVert P(Y)\rVert}$ achieves $\Omega(1)$ correlation with $\frac{x}{\lVert x\rVert}$. Further by the color-coding method, according to lemma $P(Y)$ can be evaluated in time $n^{O(v)}$. These proves the claim in theorem \ref{thm:intro-tensor}.
\end{proof}

\subsection{Equivalence between strong  and weak recovery}\label{equivalStrongWeak}
Under some mild conditions, combining concentration argument and 'all or nothing' amplification, we can  actually obtain strong recovery algorithm when $n^{-3/2}v^{-1/2}\lambda^{-2}=\Theta(1)$. 

For this we need an assumption on the tensor injective norm. The injective norm of an order-$p$ tensor $W\in (\mathbb{R}^n)^{\otimes p}$ is defined as
\begin{equation*}
    \lVert W\rVert_{\textrm{inj}}=\max _{\left\|u^{(1)}\right\|=\cdots=\left\|u^{(p)}\right\|=1} \sum_{i_{1}, \ldots, i_{p}} W_{i_{1}, \ldots, i_{p}} u_{i_{1}}^{(1)} \cdots u_{i_{p}}^{(p)}
\end{equation*}

\begin{theorem}[Strong recovery]\label{StrongRecoveryTensor}
 In general spiked tensor model $Y=\lambda x^{\otimes 3}+W$,  we take estimation vector $y\in\mathbb{R}^n$ by setting $\gamma= 0.001 n^{3/2}v^{1/2}\lambda^2=1+\Omega(1)$ and $\ell=O(\log_\gamma n)$ in estimator \ref{defEstimator}. If injective norm of $W$ is  $o(n^{3/4})$ w.h.p, then for constant $v$ if $n=\omega(v^2\text{poly}(\ell\Gamma))$, we have $\hat{x}\in\mathbb{R}^n$ s.t $\hat{x}_i=\sum_{j_1,j_2} Y_{i,j_1,j_2} y_{j_1}y_{j_2}$ achieves strong recovery, i.e we  have $\lprod \frac{\hat{x}}{\lVert \hat{x}\rVert}, \frac{x}{\lVert x\rVert}\rprod=1-o(1)$ with high probability
\end{theorem}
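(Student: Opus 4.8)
The plan is to run one step of tensor power iteration, $y \mapsto \hat{x}$ with $\hat{x}_i = \sum_{j_1,j_2} Y_{i,j_1,j_2} y_{j_1} y_{j_2}$, starting from the weak-recovery vector $y = P(Y)$ of Lemma~\ref{recEstimator}, and to show that the injective-norm hypothesis forces the signal term to overwhelm the noise term (the ``all-or-nothing'' amplification). Writing $Y = \lambda x^{\otimes 3} + W$ and letting $W(y,y,\cdot)\in\R^n$ denote the vector with $i$-th entry $\sum_{j_1,j_2} W_{i,j_1,j_2} y_{j_1} y_{j_2}$, the identity
\[
  \hat{x} \;=\; \lambda\, \lprod x,y \rprod^2\, x \;+\; W(y,y,\cdot)
\]
holds for every realization. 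Two features are important: the coefficient $\lambda\lprod x,y\rprod^2\ge 0$ is automatically nonnegative, so the one step resolves the sign ambiguity in $\pm x$; and, taking a supremum over unit vectors in each of the three modes, $\lVert W(y,y,\cdot)\rVert \le \lVert W\rVert_{\textrm{inj}}\,\lVert y\rVert^2$ \emph{pointwise}, with no conditioning needed. Setting $e = W(y,y,\cdot)$, $a = \lambda\lprod x,y\rprod^2$ and $\rho = \lVert e\rVert/(a\lVert x\rVert)$, the elementary estimates $\lprod e,x\rprod \ge -\lVert e\rVert\lVert x\rVert$ and $\lVert\hat{x}\rVert \le a\lVert x\rVert + \lVert e\rVert$ give $\lprod \hat{x},x\rprod/(\lVert\hat{x}\rVert\,\lVert x\rVert) \ge (1-\rho)/(1+\rho)$, so it suffices to show $\rho = o(1)$ with high probability.

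\textbf{Reduction to high-probability weak recovery.}
Two inputs control $\rho$. First, $\lVert x\rVert^2 = (1\pm o(1))n$ with high probability, by Chebyshev applied to $\sum_i x_i^2$ using $\E x_i^2 = 1$ and $\E x_i^4 = \Gamma = n^{o(1)}$. Second, and this is the crux, I claim a \emph{high-probability} form of Lemma~\ref{recEstimator}: there is a constant $c>0$ with $\lprod x, P(Y)\rprod^2 \ge c\,\lVert x\rVert^2\,\lVert P(Y)\rVert^2$ with high probability. Granting these two facts, and using that $\gamma = 0.001\, n^{3/2}v^{1/2}\lambda^2 = 1+\Omega(1)$ with $v = O(1)$ forces $\lambda = \Theta(n^{-3/4})$ and hence $\lVert Y\rVert_{\textrm{inj}} \le \lambda\lVert x\rVert^3 + \lVert W\rVert_{\textrm{inj}} = \Theta(n^{3/4})$, we obtain
\[
  \rho \;\le\; \frac{\lVert W\rVert_{\textrm{inj}}\,\lVert y\rVert^2}{\lambda\, c\,\lVert x\rVert^2\,\lVert y\rVert^2\cdot\lVert x\rVert} \;=\; \frac{\lVert W\rVert_{\textrm{inj}}}{\lambda\, c\,\lVert x\rVert^3} \;=\; \frac{o(n^{3/4})}{\Theta(n^{-3/4})\cdot c\cdot\Theta(n^{3/2})} \;=\; o(1),
\]
which combined with the first paragraph proves the theorem.

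\textbf{Proving the high-probability weak recovery.}
We upgrade Lemma~\ref{recEstimator} from an in-expectation to a high-probability statement by the second-moment method. The numerator $\lprod P(Y),x\rprod = \sum_i P_i(Y) x_i$ has $\lprod P(Y),x\rprod^2 = \sum_{i,j} P_i(Y)P_j(Y)x_ix_j$, in which every tensor entry $Y_e$ appears to power at most $2$; hence this expectation is well defined under the unit-variance assumption on $W$ alone. Re-running the pair-of-hypergraphs counting of Lemmas~\ref{tensor3PolynomialCorrelation} and~\ref{Tensor3RecoveryPairCount}, now carrying the extra $x_ix_j$ factors and allowing $i\ne j$, and invoking Lemma~\ref{summationBounding1} together with $\gamma = 1+\Omega(1)$ and $\ell = \Theta(\log_\gamma n)$, gives $\E\lprod P(Y),x\rprod^2 = (1+n^{-\Omega(1)})\big(\E\lprod P(Y),x\rprod\big)^2$, so Chebyshev yields $\lprod P(Y),x\rprod = (1\pm o(1))\E\lprod P(Y),x\rprod$ with high probability. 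Since $\E\lprod P(Y),x\rprod = \Omega\big((\E\lVert P(Y)\rVert^2\cdot\E\lVert x\rVert^2)^{1/2}\big)$ by Lemma~\ref{recEstimator}, it then remains only to establish a matching high-probability \emph{upper} bound $\lVert P(Y)\rVert^2 \le O(1)\cdot\E\lVert P(Y)\rVert^2$; any such bound implies $\lprod x,P(Y)\rprod^2 \ge \Omega(1)\cdot\lVert x\rVert^2\,\lVert P(Y)\rVert^2$ and hence the claim.

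\textbf{Main obstacle.}
The technical heart is precisely this high-probability upper bound on $\lVert P(Y)\rVert^2$. A naive second-moment argument for $\lVert P(Y)\rVert^2$ would require $\E\lVert P(Y)\rVert^4$, whose expansion contains tensor entries $Y_e$ raised to the fourth power, and heavy-tailed $W$ need not possess a finite fourth moment; plain Markov on $\lVert P(Y)\rVert^2$ only gives constant-probability control, which is not enough because $\lVert W\rVert_{\textrm{inj}}$ beats $\lambda\lVert x\rVert^3$ by only a $o(1)$ factor, leaving no room to lose even a slowly growing factor in the correlation. This is where the hypothesis $\lVert W\rVert_{\textrm{inj}} = o(n^{3/4})$ is used a second time: one bounds $\lVert P(Y)\rVert$ directly from the tensor-network (matrix-product) evaluation $P(Y) = A(NM)^{\ell-1}NB\mathbf{1}$ of Algorithm~\ref{algoEstimationTensor3}, exploiting that its intermediate single-mode contractions of $Y$ are controlled by $\lVert Y\rVert_{\textrm{inj}}\le \Theta(n^{3/4})$, and checks that the resulting estimate matches $\E\lVert P(Y)\rVert^2$ up to a constant. (Alternatively, under the bounded-$12$th-moment assumption of Theorem~\ref{thm:intro-tensor} one controls $\lVert P(Y)\rVert^2$ by a direct fourth-moment computation.) Everything else---the Chebyshev concentration of $\lprod P(Y),x\rprod$, the norm estimate for $x$, and the final $(1-\rho)/(1+\rho)$ bound---is routine.
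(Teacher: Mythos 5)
Your overall architecture is the right one and matches the paper's: concentration of $\lprod P(Y),x\rprod$ via a second-moment/hypergraph-pair computation (this is exactly Lemma~\ref{concentration_lemma}), plus the one-step power-iteration amplification $\hat{x}=\lambda\lprod x,y\rprod^2 x + W(y,y,\cdot)$ with the pointwise bound $\lVert W(y,y,\cdot)\rVert\leq \lVert W\rVert_{\textrm{inj}}\lVert y\rVert^2$ (this is Lemma~\ref{allNothing}, which you re-derive inline). The problem is the step you yourself flag as the ``main obstacle'': a \emph{high-probability} bound $\lVert P(Y)\rVert^2\leq O(1)\cdot\E\lVert P(Y)\rVert^2$. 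You do not prove it; the suggestion that the color-coding matrix-product evaluation of $P(Y)$ can be controlled through $\lVert Y\rVert_{\textrm{inj}}$ and ``checked to match $\E\lVert P(Y)\rVert^2$ up to a constant'' is an unsubstantiated claim (the injective norm controls rank-one contractions, not the operator norms of the structured, multilinearity-constrained matrices $M,N,A,B$, and any per-level constant loss compounds over $\Theta(\log n)$ levels), and the fallback via bounded $12$th moments contradicts the heavy-tailed hypotheses of this theorem. So as written the proof has an unfilled gap at its crux.

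Moreover, the premise that forces you into this difficulty is mistaken. You argue that Markov on $\lVert P(Y)\rVert^2$ is ``not enough'' because there is ``no room to lose even a slowly growing factor.'' But in the amplification bound the correlation enters only through $\rho = \lVert W\rVert_{\textrm{inj}}/\bigl(\lambda\,\mathrm{corr}^2\,\lVert x\rVert^3\bigr) = o(1)/\mathrm{corr}^2$, so any correlation whose square decays more slowly than the $o(1)$ rate implicit in $\lVert W\rVert_{\textrm{inj}}=o(n^{3/4})$ suffices. This is exactly how the paper closes the argument: Markov with failure probability $\epsilon$ gives $\lVert P(Y)\rVert^2\leq \epsilon^{-1}\E\lVert P(Y)\rVert^2$ with probability $1-\epsilon$, hence (with the Chebyshev concentration of $\lprod P(Y),x\rprod$) correlation at least $\epsilon^{O(1)}\cdot\Omega(1)$ with probability $1-\epsilon-o(1)$; choosing $\epsilon=o(1)$ decaying more slowly than the injective-norm $o(1)$ makes $\rho=o(1)$ and yields $1-o(1)$ correlation after amplification. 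If you adopt this choice-of-$\epsilon$ device, your obstacle disappears and the rest of your write-up goes through; without it, the high-probability norm bound you rely on remains unproved.
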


We use assumption that the injective norm of $W$ is  $o(n^{3/4})$ w.h.p. Now we interpret this assumption. For Gaussian tensor the injective norm of $W$ is $o(n^{3/4})$ with high probability. For general tensor, this assumption is weaker than finite bounded moments.
\begin{lemma}
 For a tensor $W\in (\mathbb{R}^n)^{\otimes 3}$, the injective norm is $O(\sqrt{n})$ with high probability if the absolute value of entries in $W$ are all bounded by $B=o(n^{1/4})$ with high probability.
\end{lemma}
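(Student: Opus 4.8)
The plan is to control $\lVert W\rVert_{\textrm{inj}}=\sup_{u,v,w\in S^{n-1}}\langle W,u\otimes v\otimes w\rangle$ by a covering-net argument combined with Bernstein's inequality, working throughout on the high-probability event that $|W_{ijk}|\le B$ for all $i,j,k$; conditioning on this event perturbs each entry's mean and second moment only negligibly (the induced deterministic correction is a tensor with $o(1)$ injective norm), so we may treat the $W_{ijk}$ as independent, mean-zero, with second moment $\le 1$ and $|W_{ijk}|\le B$. Fix a $\tfrac14$-net $\mathcal N$ of $S^{n-1}$ with $|\mathcal N|\le 9^n$; by the standard rounding lemma it suffices to bound $\sup_{u,v,w\in\mathcal N}\langle W,u\otimes v\otimes w\rangle$. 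For a fixed triple, $\langle W,u\otimes v\otimes w\rangle=\sum_{ijk}W_{ijk}u_iv_jw_k$ is a sum of independent mean-zero terms with total variance $\|u\|^2\|v\|^2\|w\|^2=1$ and summands of magnitude at most $B\|u\|_\infty\|v\|_\infty\|w\|_\infty$, so Bernstein's inequality gives $\Pr[\,|\langle W,u\otimes v\otimes w\rangle|>t\,]\le 2\exp\!\big(-\tfrac{t^2/2}{1+B\|u\|_\infty\|v\|_\infty\|w\|_\infty\,t/3}\big)$.

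The argument then splits according to how localized $u,v,w$ are. When all three are \emph{flat}, say $\|u\|_\infty,\|v\|_\infty,\|w\|_\infty\le n^{-1/4}$, the Bernstein denominator at $t=C\sqrt n$ is $1+O(Bn^{-3/4}\cdot\sqrt n)=1+O(Bn^{-1/4})=1+o(1)$, so the tail is $e^{-\Omega(C^2n)}$, which for a large enough constant $C$ beats $|\mathcal N|^3=e^{O(n)}$; hence the flat part of the supremum is $O(\sqrt n)$ w.h.p. The main work — and the main obstacle — is the contribution of localized (\emph{spiky}) directions, for which Bernstein's sub-exponential tail rather than the sub-Gaussian one is binding and the crude union bound fails. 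To handle these, write each unit vector as $u=u^\flat+u^\sharp$, where $u^\sharp$ collects the coordinates with $|u_i|>n^{-1/4}$ (there are fewer than $\sqrt n$ of them since $\|u\|=1$) and $u^\flat$ is the rest, so $\|u^\flat\|_\infty\le n^{-1/4}$; expanding $\langle W,u\otimes v\otimes w\rangle$ over the $2^3$ choices of $\flat/\sharp$ on the three slots, the $\flat\flat\flat$ term is the case just treated. For a term with two flat slots, say $\langle W,u^\sharp\otimes v^\flat\otimes w^\flat\rangle$, apply Bernstein directly against the $e^{O(n)}$-size net over the flat factors: the summand bound is now $B\|u^\sharp\|_\infty\,n^{-1/4}\,n^{-1/4}\le Bn^{-1/2}$, which at $t\asymp B\sqrt n$ makes the tail beat $e^{O(n)}$. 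For a term with exactly one flat slot, say $\langle W,u^\sharp\otimes v^\sharp\otimes w^\flat\rangle$, eliminate the flat slot by Cauchy–Schwarz, $\langle W,u^\sharp\otimes v^\sharp\otimes w^\flat\rangle\le\lVert T_{u^\sharp,v^\sharp}\rVert_2$ with $(T_{u^\sharp,v^\sharp})_k=\sum_{ij}W_{ijk}u^\sharp_iv^\sharp_j$, and invoke Talagrand's concentration inequality for convex $1$-Lipschitz functions of the (at most $n^2$) relevant entries of $W$ — each bounded by $B$ — against the $e^{O(\sqrt n\log n)}$-size net over the two spiky factors (whose supports have size $\le\sqrt n$); here the Talagrand fluctuation scale $O(B\sqrt{\sqrt n\log n})$ must beat $\sqrt n$ after the union bound, which is where the threshold $n^{1/4}$ enters. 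Finally, the all-spiky term $\langle W,u^\sharp\otimes v^\sharp\otimes w^\sharp\rangle$ admits no Cauchy–Schwarz saving; decompose the spiky vectors further into dyadic $\ell_\infty$-scales and apply Bernstein scale-by-scale against the corresponding sub-exponential-size nets, summing the resulting bounds — the dyadic refinement removes the logarithmic losses, and once more $B=o(n^{1/4})$ is exactly what makes each scale's contribution $o(\sqrt n)$.

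Summing the eight contributions yields $\lVert W\rVert_{\textrm{inj}}=O(\sqrt n)$ w.h.p.\ (and in any case comfortably $o(n^{3/4})$, which is all that Theorem~\ref{StrongRecoveryTensor} requires). The genuinely hard part is the all-spiky term: a spiky factor can be as concentrated as a single coordinate, so Bernstein's heavy tail governs, while the union runs over supports of size up to $\sqrt n$, i.e.\ over $e^{\Theta(\sqrt n\log n)}$ choices — and it is precisely the balance of these two quantities that pins down the exponent $1/4$ in the hypothesis $B=o(n^{1/4})$.
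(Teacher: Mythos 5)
Your proposal aims at the literal $O(\sqrt n)$ bound, and that is more than the paper itself establishes: the paper's proof simply conditions on the event that every entry is bounded by $B$, applies Hoeffding with the crude magnitude bound $B$ (not the unit variance) to $\langle W, x\otimes y\otimes z\rangle$ for each triple in an $\epsilon$-net of size $e^{O(n)}$, union-bounds, and concludes $\lVert W\rVert_{\textrm{inj}} \le O(B\sqrt n) = o(n^{3/4})$, which is all that Theorem~\ref{StrongRecoveryTensor} uses. The sound core of your argument (Bernstein against the full net, conceding the sub-exponential branch at $t \asymp B\sqrt n$) reproduces exactly this bound, so up to that point you match the paper. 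Your opening reduction --- conditioning on boundedness and discarding the induced mean shift --- is asserted rather than proved, but it can be repaired in context: independence plus the w.h.p.\ hypothesis give $\sum_{ijk}\Pr[\lvert W_{ijk}\rvert>B]=o(1)$, so by Cauchy--Schwarz the correction tensor has Frobenius, hence injective, norm $o(1)$; note this uses the unit variance and independence of the model, which the lemma statement leaves implicit.

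As a proof of the stated $O(\sqrt n)$ bound, however, the proposal has genuine gaps. First, it is internally inconsistent: for the $\sharp\flat\flat$ pieces your own Bernstein computation only beats the $e^{O(n)}$ net at $t\asymp B\sqrt n$, i.e.\ a contribution of order $B\sqrt n$, so ``summing the eight contributions yields $O(\sqrt n)$'' does not follow from the steps written (to do better there one would need, e.g., slice-wise matrix operator-norm bounds rather than a scalar Bernstein bound against a full sphere net). Second, the two genuinely hard pieces are only sketched: in the Talagrand step the requirement that $B\sqrt{\sqrt n\log n}$ beat $\sqrt n$ is $B = o(n^{1/4}/\sqrt{\log n})$, strictly stronger than the hypothesis, and the mean of $\lVert T_{u^\sharp,v^\sharp}\rVert_2$ also has to be controlled; in the all-spiky term a direct scale-by-scale Bernstein/union-bound accounting leaves terms of order $Bn^{1/4}\log n = o(\sqrt n\log n)$, and the claim that the dyadic refinement ``removes the logarithmic losses'' is precisely the delicate point and is not substantiated. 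So the complete part of your argument delivers the same $O(B\sqrt n)=o(n^{3/4})$ as the paper's much shorter Hoeffding-plus-net proof (and that suffices for the application), while the additional machinery intended to reach the literal $O(\sqrt n)$ statement does not yet close.
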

\textbf{Remark}: If entries $W_{ijk}$ have bounded $12$-th moment, then by Markov inequality and union bound, the entries in $W$ are all bounded by $B=o(n^{1/4})$ with high probability. 
\begin{proof}
   For fixed unit vectors $x,y,z$ and $T=\sum_{ijk} W_{ijk}x_iy_jz_k$, by Hoeffding bound we have
\begin{equation*}
    \textrm{Pr}\left[T\geq tB^{\prime}\right]\leq \exp(-ct^2)
\end{equation*}
where $c$ is constant and $B^{\prime}$ is the maximum absolute value of entries in $W$. We denote the event $B^{\prime}\leq B$ as $\mathcal{A}$. By assumption $\mathcal{A}$ happens with high probability. The $\epsilon$-net of unit sphere $\mathcal{S}_{\epsilon,n}$ has size at most $\left(O(1)/\epsilon\right)^n$(see e.g Tao's random matrix book lemma 2.3.4). Thus the size of set $\mathcal{B}=\{(x,y,z)| x,y,z\in \mathcal{S}_{\epsilon,n}\}$  is bounded by $\left(O(1)/\epsilon\right)^{3n}$. Taking union bound on this set we have
\begin{equation*}
    \textrm{Pr}[\max_{(x,y,z)\in \mathcal{B}} \lprod W,x\otimes y\otimes z\rprod\geq Bt]\leq \exp(c_2 n) \exp (-ct^2)+o(1)
\end{equation*}
where $c_2$ is constant. Taking $t=C\sqrt{n}$ with constant $C$ large enough, it follows that $\max_{(x,y,z)\in \mathcal{B}} \lprod W,x\otimes y\otimes z\rprod=o(n^{3/4})$ with high probability. 

Finally we have 
$$\lVert W\rVert_{\textrm{inj}}=\max_{\lVert x\rVert=1,\lVert y\rVert=1,\lVert z\rVert=1}\lprod W,x\otimes y\otimes z\rprod=\lprod W,x^{*}\otimes y^{*}\otimes z^{*}\rprod $$
, where $x^{*}, y^{*}, z^{*}$ is the maximizer. By definition of $\epsilon$-net, we can find $(\tilde{x},\tilde{y},\tilde{z})\in \mathcal{B}$ such that $\lVert \tilde{x}-x^{*}\rVert,\lVert \tilde{y}-y^{*}\rVert,\lVert \tilde{z}-z^{*}\rVert\leq  \epsilon$. Thus $\lprod W,x^{*}\otimes y^{*}\otimes z^{*}-\tilde{x}\otimes \tilde{y}\otimes \tilde{z}\rprod\leq 4\epsilon \lprod W,x^{*}\otimes y^{*}\otimes z^{*}\rprod$. For small $\epsilon$, we have $\lVert W\rVert_{\textrm{inj}}\leq 2\max_{(x,y,z)\in \mathcal{B}} \lprod W,x\otimes y\otimes z\rprod =o(n^{3/4})$ with high probability. This proves the claim. 
\end{proof}

This proof of theorem naturally follows from the weak recovery result above and the following two lemmas.
\begin{lemma}[All or nothing phenomenon]\label{allNothing}
 In general spiked tensor model $Y=\lambda x^{\otimes 3}+W$, if the injective norm of tensor $W$ is $o(n^{3/4})$, then if we have unit norm estimator $y\in\mathbb{R}^n$ satisfying $\lprod y, x/\lVert x\rVert\rprod= \Omega(1)$ w.h.p, then let $\hat{x}\in\mathbb{R}^n$ with $\hat{x}_i=\sum_{j_1,j_2} Y_{i,j_1,j_2} y_{j_1}y_{j_2}$, we have w.h.p $\left\lprod \frac{\hat{x}}{\lVert\hat{x}\rVert},\frac{x}{\lVert x\rVert}\right\rprod^2=1-o(1)$
\end{lemma}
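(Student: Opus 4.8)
The plan is to use the fact that contracting the tensor $Y$ along its last two modes with the vector $y$ almost exactly recovers a positive scalar multiple of $x$, the error being controlled by the injective norm of $W$. First I would expand the contraction by multilinearity to write
\[
  \hat{x} \;=\; a\, x \;+\; z, \qquad a := \lambda \lprod x, y\rprod^2 \geq 0, \qquad z_i := \sum_{j_1,j_2} W_{i,j_1,j_2}\, y_{j_1} y_{j_2}.
\]
It then suffices to show that w.h.p.\ $a > 0$ and $\|z\| = o(a\|x\|)$: granting this, Cauchy--Schwarz gives $\lprod \hat{x}, x\rprod \geq a\|x\|^2 - \|z\|\,\|x\| = (1-o(1))\,a\|x\|^2$, while the triangle inequality gives $(1-o(1))\,a\|x\| \leq \|\hat{x}\| \leq (1+o(1))\,a\|x\|$, and dividing yields $\lprod \hat{x}/\|\hat{x}\|,\; x/\|x\|\rprod^2 \geq 1 - o(1)$.

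To bound the noise term, observe that $z$ is the partial contraction $W(\cdot, y, y)$, so for every unit vector $u$ we have $\lprod u, z\rprod = \sum_{i,j_1,j_2} W_{i,j_1,j_2}\, u_i y_{j_1} y_{j_2} \leq \|W\|_{\mathrm{inj}}$ because $u$ and $y$ are both unit vectors; taking the supremum over $u$ gives $\|z\| \leq \|W\|_{\mathrm{inj}} = o(n^{3/4})$ w.h.p.\ by hypothesis. This bound is deterministic in $y$, so there is no circularity issue with $y$ depending on $W$.

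For the signal term, I would first argue that $\|x\|^2$ concentrates: since the $x_i$ are independent with $\E x_i^2 = 1$ and $\E x_i^4 = \Gamma \leq n^{o(1)}$, Chebyshev's inequality gives $\|x\|^2 = (1 \pm o(1))\,n$ w.h.p. Combined with the hypothesis $\lprod y, x/\|x\|\rprod = \Omega(1)$ this yields $\lprod x, y\rprod = \Omega(\sqrt{n})$ w.h.p., so $a > 0$ and $a\|x\| = \lambda \lprod x, y\rprod^2 \|x\| = \Omega(\lambda\, n^{3/2})$. In the regime where this lemma is applied, $\lambda = \Omega(n^{-3/4})$ (equivalently $\gamma = \Theta(1)$ in Theorem~\ref{StrongRecoveryTensor}), so $a\|x\| = \Omega(n^{3/4})$, which strictly dominates $\|z\| = o(n^{3/4})$. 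A union bound over these three good events completes the proof.

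The calculation is essentially routine; the only point that needs care is that the threshold is tight -- the signal vector $a x$ has norm $\Theta(n^{3/4})$ while the injective-norm bound on $z$ is only $o(n^{3/4})$ -- so the argument hinges on $o(\cdot)$ beating $\Theta(\cdot)$, and one must be explicit about the range of $\lambda$ for which $a\|x\| = \Omega(n^{3/4})$ holds. I do not foresee any deeper obstacle.
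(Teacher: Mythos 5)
Your proposal is correct and is essentially the argument the paper relies on: the paper gives no proof of its own for Lemma~\ref{allNothing}, deferring to appendix D of \cite{Kikuchy}, which uses exactly your decomposition $\hat{x}=\lambda\lprod x,y\rprod^2 x + W(\cdot,y,y)$ together with the uniform injective-norm bound $\lVert W(\cdot,y,y)\rVert\leq \lVert W\rVert_{\textrm{inj}}$ to make the signal term dominate. Your explicit caveat that the conclusion needs $\lambda\lprod x,y\rprod^2\lVert x\rVert=\Omega(\lambda n^{3/2})$ to exceed the $o(n^{3/4})$ noise bound --- which holds in the regime $\lambda=\Omega(n^{-3/4})$ (constant $v$) where Theorem~\ref{StrongRecoveryTensor} invokes the lemma --- is the right reading of the statement, which leaves this assumption implicit.
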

This lemma follows the same proof as appendix D in  \cite{Kikuchy}
\begin{lemma}[Concentration property]\label{concentration_lemma}
 For the above estimator $P(Y)\in\mathbb{R}^n$ in definition \ref{defEstimator},when we take $\ell,v$ as described in theorem \ref{recEstimator},we have
 \begin{equation*}
     \E \lprod P(Y),x\rprod^2=(1+o(1)) \left(\E \lprod P(Y),x\rprod\right)^2
 \end{equation*}
\end{lemma}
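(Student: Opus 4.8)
The plan is to treat the scalar random variable $\langle P(Y),x\rangle = \sum_{i\in[n]}\sum_{\alpha\in S_{\ell,v,i}} x_i\,\chi_\alpha(Y)$ as a single polynomial, compute its first and second moments, and show the variance is negligible. For the first moment, each $\alpha\in S_{\ell,v,i}$ has $(2\ell-1)v$ hyperedges (Definition~\ref{defTensorNetworkEstimator}), so $\E_W[\chi_\alpha(Y)] = \lambda^{(2\ell-1)v}x_i\prod_{j\neq i}x_j^2$ (the root $i$ has degree $1$, every other vertex degree $2$), whence $\E[x_i\chi_\alpha(Y)] = \lambda^{(2\ell-1)v}$. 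Writing $N:=|S_{\ell,v,i}|$ (which does not depend on $i$) we get $\mu := \E\langle P(Y),x\rangle = \lambda^{(2\ell-1)v}\,nN$ (cf.\ Lemma~\ref{expectationRecovery3Tensor}), so it suffices to prove $\E\langle P(Y),x\rangle^2 = (1+o(1))\mu^2$. Expanding,
\[
\E\langle P(Y),x\rangle^2 = \sum_{i,i'\in[n]}\ \sum_{\alpha\in S_{\ell,v,i},\,\beta\in S_{\ell,v,i'}}\E\bigl[x_i x_{i'}\,\chi_\alpha(Y)\chi_\beta(Y)\bigr].
\]

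First I would isolate the dominant contribution, coming from pairs with $i\neq i'$ whose hypergraphs $\alpha,\beta$ are vertex-disjoint. For such a pair, $\chi_\alpha(Y)$ and $\chi_\beta(Y)$ involve disjoint entries of $W$ and $x_i,x_{i'}$ are distinct coordinates, so the expectation factors into $\E[x_i\chi_\alpha(Y)]\cdot\E[x_{i'}\chi_\beta(Y)] = \lambda^{2(2\ell-1)v}$. Since each hypergraph has $\ell_v=O(v\log n)$ vertices and $n=\omega(v^2\,\text{polylog}(n))$, the fraction of all $(nN)^2$ ordered pairs that either have $i=i'$ or share at least one vertex is $O(\ell_v^2/n)=o(1)$. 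Hence this part of the sum equals $(1-o(1))\,\lambda^{2(2\ell-1)v}(nN)^2 = (1-o(1))\mu^2$, and it remains to bound the contribution of all other pairs by $o(\mu^2)$.

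Those pairs fall into two families. For $i=i'$ the contribution is $\sum_i\E[x_i^2\,P_i(Y)^2]$; here $\alpha$ and $\beta$ are both rooted at $i$ and therefore always share that vertex, so the leading ($r=k=0$) term already carries a factor $\Gamma=\E x_i^4$ from $x_i^2\cdot x_i^2$, and re-running the pair count of Lemma~\ref{Tensor3RecoveryPairCount} together with the summation bound of Lemma~\ref{summationBounding1} shows the remaining terms contribute only a $(1+o(1))$ factor; hence $\E[x_i^2 P_i(Y)^2]\le(1+o(1))\,\Gamma\,(\E[x_i P_i(Y)])^2$ and summing over $i$ gives $O(\Gamma/n)\,\mu^2 = o(\mu^2)$ since $\Gamma=n^{o(1)}$. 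For $i\neq i'$ with $\alpha,\beta$ sharing $r\ge 1$ vertices and $k$ edges, the point is that once the factors $x_i,x_{i'}$ are attached every vertex of $\alpha\Delta\beta$ carries an even $x$-exponent (each root may appear as an interior degree-$2$ vertex of the opposite hypergraph, which turns its odd $\alpha\Delta\beta$-degree into an even total exponent), so $\E[x_i x_{i'}\chi_\alpha(Y)\chi_\beta(Y)]\le(1+n^{-\Omega(1)})\,\lambda^{2(2\ell-1)v-2k}\Gamma^{O(r)}$, while the number of such pairs is at most $(nN)^2\,n^{-r}\,v^{O(r)}\ell^{O(r)}$; the factor $n^{-r}$ with $r\ge1$ beats the $\Gamma^{O(r)}v^{O(r)}\ell^{O(r)}$ blow-up, and for $k\ge1$ the extra $\lambda^{-2k}\le n^{3k/2}v^{k/2}$ is absorbed by the degree constraint forcing $2r\ge 3k$, exactly as in the proof of Lemma~\ref{recEstimator}. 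Summing the resulting geometric series over $r\ge1$ (and $k$) bounds this family by $O\bigl(n^{-1}\,v^{O(1)}\Gamma^{O(1)}\text{polylog}(n)\bigr)\cdot\mu^2 = o(\mu^2)$, using $n=\omega(v^2\Gamma^2\text{polylog}(n))$. Combining the three pieces yields $\E\langle P(Y),x\rangle^2 = (1+o(1))\mu^2 = (1+o(1))(\E\langle P(Y),x\rangle)^2$.

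The step I expect to be the main obstacle is the last one: pairs with $i\neq i'$ and $\alpha\cap\beta\neq\emptyset$ involve two hypergraphs rooted at \emph{different} vertices, a situation not covered by Lemmas~\ref{Tensor3GraphPropertyRecovery} and~\ref{Tensor3RecoveryPairCount}, so one has to redo the layer-by-layer degree bookkeeping — establishing analogues of the degree inequality $2r-s\ge3k$ and of the pair count $n^{-r}v^{O(r)}\ell^{O(r)}$ while carefully accounting for how each of the two roots contributes to (or relaxes) the degree constraints in the hypergraph where it sits in the interior. Everything else is a routine repetition of computations already carried out in the proofs of Lemma~\ref{recEstimator} and Lemma~\ref{concentration_detection}.
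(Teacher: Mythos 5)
Your overall route (expand $\E\langle P(Y),x\rangle^2$ over pairs of rooted hypergraphs, take the main term from vertex-disjoint pairs with distinct roots, and show the diagonal and the overlapping off-diagonal strata are $o(\mu^2)$) is the same as the paper's, which lumps all roots into $S_{\ell,v}=\cup_i S_{\ell,v,i}$ and treats the disjoint stratum $r=k=0$ as the main term. The diagonal piece is fine in substance (though your intermediate claim of a $(1+o(1))$ factor should be $O(1)\cdot\Gamma^{O(1)}$, because the hyperpath-from-the-root terms contribute a constant factor as in Lemma~\ref{recEstimator}; dividing by $n$ still gives $o(\mu^2)$). The genuine gap is in the overlapping pairs with $i\neq i'$, exactly the step you flag as "the main obstacle" but then resolve with bounds that do not suffice. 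Your count "$(nN)^2 n^{-r}v^{O(r)}\ell^{O(r)}$" together with "$\E[x_ix_{i'}\chi_\alpha\chi_\beta]\le\lambda^{2(2\ell-1)v-2k}\Gamma^{O(r)}$" and the constraint $2r\ge 3k$ gives, in the near-critical strata $r\approx \tfrac32 k$, a contribution of order $\bigl(n^{-3/2}\lambda^{-2}\bigr)^k(v\ell\Gamma)^{O(k)}=\bigl(0.001\gamma^{-1}v^{1/2}\bigr)^k(v\ell\Gamma)^{O(k)}$, which diverges in $k$ (recall $v$ may be polynomial in $n$, $\ell=\Theta(\log n)$, $k$ can be as large as $(2\ell-1)v$). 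So "$n^{-r}$ beats the blow-up" is false once $\lambda^{-2k}$ is in play and $r$ scales with $k$: one needs the exact exponents of the paper's pair count (net $v^{2r-4k+k/2}$, $\ell^{s}$ with $s\le 2r-3k$, $\Gamma^{O(2r-3k)}$ rather than $\Gamma^{O(r)}$, and the explicit constant $0.001$ in $\gamma$) so that at $2r=3k$ the per-edge base becomes $c\cdot 0.001/\gamma<1$.

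Even with those sharp exponents, the stratum $2r=3k$ summed over $k\ge 1$ would only give a convergent geometric series, i.e.\ a $\Theta(1)$ excess — enough for Lemma~\ref{recEstimator} but not for the $(1+o(1))$ statement here. The missing ingredient, which is the heart of the paper's proof (and of Lemma~\ref{concentration_detection}), is structural: if $2r=3k$ (with roots counted among the shared vertices) then every shared vertex has degree exactly $2$ in $\alpha\cap\beta$, hence all of its edges in both hypergraphs are shared; since roots have degree $1$ they cannot be shared in this case, and the saturated intersection propagates through every one of the $2\ell-1$ level matchings, forcing $k\ge 2\ell-1=\Omega(\ell)$. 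Only then is this stratum bounded by $(c\cdot0.001/\gamma)^{\Omega(\ell)}=n^{-\Omega(1)}$, while the strata with $2r\ge 3k+1$ pick up at least one factor $\bigl(n/(cv^2\ell^2\Gamma^{O(1)})\bigr)^{-1/2}=o(1)$. Without this saturation/propagation argument your sketch can at best deliver $\E\langle P(Y),x\rangle^2=O(1)\cdot(\E\langle P(Y),x\rangle)^2$, not the claimed $(1+o(1))$; and your asserted bound $O(n^{-1}v^{O(1)}\Gamma^{O(1)}\mathrm{polylog}(n))\cdot\mu^2$ for this family is not what the correct analysis yields either.
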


We now prove lemma \ref{concentration_lemma}. The proof is very similar to the proof of strong detection in appendix   \ref{appenDetection}.
\begin{proof}[Proof of Lemma \ref{concentration_lemma}]
    We denote $S_{\ell,v}=\cup_{i\in [n]} S_{\ell,v,i}$. For $\alpha\in S_{\ell,v}$, we denote $x_\alpha=x_i$ if $\alpha\in S_{\ell,v,i}$. Then equivalently we want to show.
    \begin{equation*}
        \sum_{\alpha,\beta\in S_{\ell,v}}\E[\chi_\alpha(Y)\chi_\beta(Y)x_\alpha x_\beta]\leq (1-o(1))\sum_{\alpha,\beta\in S_{\ell,v}} \E[\chi_\alpha(Y)x_\alpha][\chi_\beta(Y)x_\beta] 
    \end{equation*}
     Since $\left(\sum_{\alpha\in S_{\ell,v}} \E[\chi_\alpha(Y) x_\alpha]\right)=\lambda^{(2\ell-1) v}\lvert S_{\ell,v}\rvert$, we only need to bound the size of $S_{\ell,v}$. Applying combinatorial  arguments to the generating process of $S_{\ell,v}$, we have \[\lvert S_{\ell,v}\rvert= (1-o(1))\left({n\choose v}{n\choose 2v}\right)^{\ell-1}{n\choose (v+1)/2}{n\choose v}\frac{((2v)!)^{2\ell-1}}{2^{(3v-1)/2}}
   \].  

Now we bound the right hand side. First for the case that $\alpha,\beta$ are disjoint($r=0$), we have $\E[\chi_\alpha(Y)\chi_\beta(Y)]=\E[\chi_\alpha(Y)]\E[\chi_\beta(Y)]$. Thus we have 
\begin{equation*}
        \sum_{\alpha,\beta\in S_{\ell,v}}\E[\chi_\alpha(Y)\chi_\beta(Y)x_\alpha x_\beta]\leq (1-o(1))\sum_{\alpha,\beta\in S_{\ell,v}} \E[\chi_\alpha(Y)x_\alpha][\chi_\beta(Y)x_\beta] 
\end{equation*}

For each pair of $\alpha,\beta\in S_{\ell,v}$ sharing $k$ hyperedges and $r$ vertices, we have
 \begin{align*}
     \E[\chi_\alpha(Y) \chi_\beta(Y)]& =(1+n^{-\Omega(1)})\lambda^{2(2\ell-1)v-2k}\E\left[\prod_{j\in \alpha\Delta\beta} x_j^{\text{deg}(j,\alpha\Delta\beta)}\right]\\ &\leq \lambda^{-2k}\Gamma^{O(2r-3k)} \E[\chi_\alpha(Y)]\E[\chi_\beta(Y)]
 \end{align*}
 where $\text{deg}(j,\alpha\Delta\beta)$ represents the degree of vertex $j$ in hypergraph $\alpha\Delta\beta$.

Next we bound the number of hypergraph pairs $(\alpha,\beta)$ sharing specified vertices and hyperedges. For this, we first choose $\alpha\cap \beta$ and shared vertices as a subgraph of a hypergraph $\alpha$ contained in $S_{\ell,v}$. We consider the following case: 
\begin{itemize}
    \item in level $t\in [2\ell-1]$ of $\alpha$ there are $r_t$  shared vertices
    \item between the levels $t$ and $t+1$ of $\alpha$ there are $k_t$ shared hyperedges.
\end{itemize}
Then the number of choices for these shared  hyperedges and vertices is bounded by $N_{\alpha\cap\beta}$
\begin{equation*}
    {n\choose r_{0}}{n\choose r_{2\ell-1}}{2r_{0}\choose k_0}{2r_{2\ell-1}\choose 2k_{2\ell-2}}\prod_{t=1}^{\ell-1} {n\choose r_{2t-1}}{n\choose r_{2t}} {r_{2t-1}\choose k_{2t-1}}{r_{2t-1}\choose k_{2t-2}}{r_{2t}\choose 2k_{2t}}{r_{2t}\choose 2k_{2t-1}} \prod_{t=0}^{2\ell-2} (2k_t)!
\end{equation*}
By Strling's approximation, this is upper bounded by $\prod_{t=0}^{2\ell-2} (2k_t)!\prod_{t=0}^{2\ell-1} {n\choose r_t}\exp (O(r))$. 
Next we choose the remaining hypergraph $\alpha \setminus \beta$ and  $\beta\setminus \alpha$ respectively. For $\alpha \setminus \beta$, we have 
\begin{align*}
    N_{\alpha\setminus\beta} &={n\choose \frac{v+1}{2}-r_0}{n\choose  v-r_{2\ell-1}}\prod_{t=1}^{\ell-1} {n\choose 2v-2r_{2t-1}}{n\choose v-r_{2t}} \prod_{t=0}^{2\ell-2} (2(v-k_t))! \\ 
    & \leq \lvert S_{\ell,v,i}\rvert n^{-r}v^rv^{-2k}\exp(O(r))
\end{align*}
For bounding the number of choices for $\beta$, suppose $\alpha,\beta$ share $s$ vertices with degree $0$ or $1$ in the subgraph $\alpha\cap\beta$. Then there are $\ell^{s}$ ways of embedding $\alpha\cap\beta$ and shared vertices in hypergraph $\beta$. Further the number of choices for the remaining hypergraph is also bounded by $\lvert S_{\ell,v}\rvert n^{-r}v^rv^{-2k}\exp(O(r))$.  Therefore the contribution to $\frac{\E [\lprod P(Y),x\rprod^2]}{\left(\E [\lprod P(Y),x\rprod]\right)^2}$ with respect to specific $r_t,k_t$ is bounded by
\begin{equation*}
   (n^{-r}v^rv^{-2k})^2n^rv^{k/2}\ell^s\Gamma^{O(2r-3k)}\lambda^{-2k} \exp(O(r))\leq \left(\frac{n}{cv^2\ell^2\Gamma^{O(1)}}\right)^{-r+3k/2} \left(cn^{-3/2}v^{-1/2}\lambda^{-2}\right)^k  
\end{equation*}
where $c$ is constant. As in the proof of \ref{thm:detection-tensor}, when $nv^{-2}=\omega(\text{poly}(\Gamma\ell))$ the dominating term is given by $r=\frac{3}{2}k$. In this case, we must have $k\geq \ell$.  For $cn^{-3/2}v^{-1/2}\lambda^{-2}<1$ and $\ell=C\log n$ with constant $C$ large enough, the contribution is $n^{-\Omega(1)}$.

Therefore in all, we have $\frac{\left(\E \lprod P(Y),x\rprod\right)^2}{\E \lprod P(Y),x\rprod^2}=1-o(1)$ when we have relation $cn^{-3/2}v^{-1/2}\lambda^{-2}=1-\Omega(1)$ and $n=\omega(v^2 \text{poly}(\Gamma\ell))$. 
\end{proof}

\begin{proof}[Proof of Theorem \ref{StrongRecoveryTensor}]
 As a result of lemma \ref{concentration_lemma}, by Chebyshev's inequality we have $\lprod P(Y),x\rprod=(1\pm o(1))\E \lprod P(Y),x\rprod$ w.h.p. Combined with correlation in expectation and Markov inequality, we have $\frac{\lprod P(Y),x\rprod}{\lVert x\rVert\lVert P(Y)\rVert}\geq \epsilon\cdot\Omega(1)$ with probability $1-\epsilon$. We set $\epsilon=o(1)$ and take estimator $z\in\mathbb{R}^n$ with $z_k=\sum_{ijk}Y_{ijk}P_i(Y)P_{j}(Y)$. Then according to  the theorem \ref{allNothing}, we get the strong recovery guarantee.
\end{proof}

\section{Higher order general spiked tensor model}\label{HigherOrder}
For clarity, we discuss the algorithms  for order-3 spiked tensor model and spiked Wigner model above. Such claims can be generalized to higher order tensor without difficulties. 

\begin{theorem}
\label{thm:detection-tensor-Higher-Order}
  Let $x \in \R^n$ be a random vector with independent, mean-zero entries having $\E x_i^2 = 1$ and $\Gamma=\E x_i^4 \leq n^{o(1)}$. Let $\lambda > 0$ and $Y = \lambda \cdot x^{\otimes p} + W$, where $W \in (\R^{n})^{\otimes p}$ has independent, mean-zero entries with $\E W_{i_1,i_2,\ldots,i_p}^2 = 1$.
  Then for $v= o(\frac{n^{1/2}}{\text{poly}(\Gamma\log n)})$ and $\lambda \geq c_p n^{-p/4}v^{-(p-2)/4}$, there is $n^{O(pv)}$ time algorithm achieving strong detection.
\end{theorem}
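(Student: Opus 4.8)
The plan is to repeat the order-$3$ argument of Appendix~\ref{appenDetection} almost verbatim, replacing $3$-uniform hypergraphs by $p$-uniform ones. First I would define the thresholding polynomial $P(Y)=\sum_{\alpha\in S_{\ell,v}}\chi_\alpha(Y)$, where $S_{\ell,v}$ is the family of $p$-uniform hypergraphs built from $2\ell$ cyclically ordered levels: even levels carry $v$ vertices, odd levels carry $(p-1)v$ vertices, and between each pair of consecutive levels we place a perfect matching of hyperedges, each hyperedge taking one vertex from the even side and $p-1$ vertices from the odd side, all directed from level $t$ to level $t+1$ (indices mod $2\ell$). Every $\alpha\in S_{\ell,v}$ then has exactly $p\ell v$ vertices and $2\ell v$ hyperedges, so $\hat\mu_\alpha=\E\chi_\alpha(Y)=\lambda^{2\ell v}$, and as in Lemma~\ref{hypergraph3setSize} a direct count gives $\lvert S_{\ell,v}\rvert=(1-o(1))\bigl(n^{pv}v^{(p-2)v}C_p^{v}\bigr)^{\ell}$ for a constant $C_p$ depending only on $p$ (the $v^{(p-2)v}$ factor is the contribution of the $\bigl((p-1)v\bigr)!$ matchings between each adjacent pair of levels). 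I would set $\gamma:=c_p'\,n^{p/2}v^{(p-2)/2}\lambda^2$ and $\ell=\Theta(\log_\gamma n)$; the hypothesis $\lambda\ge c_p n^{-p/4}v^{-(p-2)/4}$ with $c_p$ a large enough constant makes $\gamma=1+\Omega(1)$.

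Next I would establish the three combinatorial inputs, which generalize with no new ideas. The \emph{degree relation} (cf.\ Lemma~\ref{hypergraphSimpleRelation}): if $\alpha,\beta\in S_{\ell,v}$ share $r$ vertices, $k$ hyperedges and $s$ shared vertices of degree at most $1$ in $\alpha\cap\beta$, then counting incidences in the $p$-uniform graph $\alpha\cap\beta$ gives $pk\le 2r-s$, and $2r=pk$ forces either $\alpha,\beta$ disjoint or $k_t\ge1$ at every level $t$. The \emph{pair count} (cf.\ Lemma~\ref{tensor3PolynomialCorrelation}): fixing the per-level intersection pattern $(r_t,k_t)$, one first embeds $\alpha\cap\beta$ together with the shared vertices ($\le n^{r}v^{k/2}\exp(O(r))$ ways), then completes each of $\alpha,\beta$ ($\le \lvert S_{\ell,v}\rvert\,n^{-r}v^{r-2k}\exp(O(r))$ ways each, with an extra $\ell^{s}$ factor for placing the degree-$\le1$ shared pieces), giving $\lvert S_{\ell,v,k,r}\rvert\le \lvert S_{\ell,v}\rvert^2\,n^{-r}v^{2r-4k}\ell^{2r-pk}v^{k/2}\exp(O(r))$. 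And the \emph{summation bound} (cf.\ Lemma~\ref{summationBounding1}): for $\eta=\omega(\ell^2)$ and any fixed $\psi>1$, $\sum_{(r_t,k_t)\,:\,2r\ge pk+1}\eta^{-r+pk/2}\psi^{k}=o(1)$, via the same bookkeeping with $k_\Delta=\sum_t\lvert k_{t+1}-k_t\rvert$.

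With these in hand the two main lemmas follow. Divergence of the low-degree likelihood ratio is immediate: $\sum_{\alpha\in S_{\ell,v}}\hat\mu_\alpha^2=\lvert S_{\ell,v}\rvert\lambda^{4\ell v}=\bigl((n^{p}v^{p-2}C_p\lambda^4)^{v}\bigr)^{\ell}\ge\gamma^{\Omega(\ell v)}=n^{\Omega(1)}$ once $\gamma=1+\Omega(1)$ and $\ell=\Theta(\log_\gamma n)$ (cf.\ Lemma~\ref{diverge_density}). For concentration I would use $\E[\chi_\alpha(Y)\chi_\beta(Y)]\le\lambda^{4\ell v-2k}\Gamma^{2r-pk}$ and $\E P(Y)=\lambda^{2\ell v}\lvert S_{\ell,v}\rvert$, so that by the pair count the ratio $\E P^2(Y)/(\E P(Y))^2$ is a sum over $(r_t,k_t)$ of terms of the form $\bigl(n/(c\,v^{O(1)}\Gamma^{2}\ell^{2})\bigr)^{-r+pk/2}\bigl(c\,n^{-p/2}v^{-(p-2)/2}\lambda^{-2}\bigr)^{k}$; the $r=k=0$ term equals $1$, the terms with $k_t\ge1$ at every level contribute $n^{-\Omega(1)}$ because then $k\ge2\ell$ and $\gamma=1+\Omega(1)$ with $\ell$ a large enough multiple of $\log_\gamma n$, and all other terms are $o(1)$ by the summation bound, using $v=o(n^{1/2}/\mathrm{poly}(\Gamma\log n))$. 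Hence $\E P^2(Y)=(1+o(1))(\E P(Y))^2$ (cf.\ Lemmas~\ref{tensor3PolynomialExpectation} and~\ref{concentration_detection}), and Theorem~\ref{thm:detectionMetaTheorem} turns thresholding $P(Y)$ into a strong-detection test. Finally the color-coding evaluation of Appendix~\ref{colorCodingTensor} generalizes directly: for a random coloring $c:[n]\to[p\ell v]$ the colorful-copy polynomial $p_c(Y)$ is a chain product of matrices indexed by a vertex tuple of size $\le(p-1)v$ and a color subset of $[p\ell v]$, of dimension $n^{O(pv)}2^{O(p\ell v)}=n^{O(pv)}$; averaging $\exp(O(p\ell v))=n^{O(pv)}$ colorings and invoking the concentration just established gives an accurate estimate of $P(Y)$ in time $n^{O(pv)}$ (cf.\ Lemma~\ref{evaDetection}). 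The hard part, exactly as in the order-$3$ case, is the bookkeeping in the pair count and summation bound: one must check that the degree constraint $pk\le2r-s$ is tight enough that every non-disjoint configuration is suppressed by a genuine power of $n$ once the $v$, $\Gamma$ and $\ell$ factors are absorbed --- this is precisely where the bound $v=o(n^{1/2}/\mathrm{poly}(\Gamma\log n))$ and the exponent $-p/4$ on $n$ in the signal-to-noise threshold originate.
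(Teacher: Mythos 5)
Your proposal is correct and follows the paper's strategy exactly: a low-degree thresholding polynomial over a family of $2$-regular layered $p$-uniform hypergraphs, divergence of the projected likelihood ratio via a count of $\lvert S_{\ell,v}\rvert$, a second-moment concentration bound obtained by classifying pairs $(\alpha,\beta)$ by per-level intersection data $(r_t,k_t)$ and using the degree constraint $pk\le 2r-s$ together with the ``all levels intersect'' dichotomy at $2r=pk$, then the meta-theorem (Theorem~\ref{thm:detectionMetaTheorem}) and color coding for $n^{O(pv)}$-time evaluation. The one substantive difference is the gadget: for even $p$ the paper (Definition~\ref{pDetectionPolynomial}) uses $\ell$ equal levels of $pv/2$ vertices with hyperedges split $p/2+p/2$ across consecutive levels, and treats odd $p$ separately ``as in $p=3$,'' whereas you use a single asymmetric construction for all $p$ (even levels of size $v$, odd levels of size $(p-1)v$, hyperedges $1+(p-1)$), i.e.\ the direct generalization of Definition~\ref{3DetectionPolynomial}. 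This buys you a parity-free argument with the same vertex/edge counts ($p\ell v$ vertices, $2\ell v$ hyperedges) and the same threshold $\gamma=c_p\,n^{p/2}v^{(p-2)/2}\lambda^2>1$, at the cost of slightly messier matching counts; the paper's symmetric even-$p$ layers make the per-level factorials cleaner. One small bookkeeping slip: your intermediate pair-count bound carries the $p=3$ exponents ($v^{r-2k}$ for each completion and $v^{2r-4k}v^{k/2}$ overall), whereas for general $p$ the completion factor is $v^{r-(p-1)k}$ and the intersection factor $v^{(p-2)k/2}$, as in Lemma~\ref{Higher-Order-Concentration}; this does not affect your conclusion, since the per-shape bound you actually sum, $\bigl(n/(c\,v^{O(1)}\Gamma^2\ell^2)\bigr)^{-r+pk/2}\bigl(c\,n^{-p/2}v^{-(p-2)/2}\lambda^{-2}\bigr)^{k}$, is already in the correct general-$p$ form, and the rest of the argument (suppression of $2r>pk$ shapes by $v=o(n^{1/2}/\mathrm{poly}(\Gamma\log n))$, suppression of $2r=pk$ shapes by $k\ge 2\ell$ and $\gamma=1+\Omega(1)$) matches the paper's proof.
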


\begin{theorem}\label{thm:recovery-tensor-Higher-Order}
  Let $x \in \R^n$ be a random vector with independent, mean-zero entries having $\E x_i^2 = 1$ and $\Gamma=\E x_i^4 \leq n^{o(1)}$. Let $\lambda > 0$ and $Y = \lambda \cdot x^{\otimes p} + W$, where $W \in (\R^{n})^{\otimes p}$ has independent, mean-zero entries with $\E W_{i_1,i_2,\ldots,i_p}^2 = 1$.
  Then for $v= o(\frac{n^{1/2}}{\text{poly}(\Gamma\log n)})$ and $\lambda \geq c_p n^{-p/4}v^{-(p-2)/4}$, there is $n^{O(pv)}$ time algorithm giving unit norm estimator  $\hat{x}$ s.t $\lprod \hat{x},\frac{x}{\lVert x\rVert}\rprod^2 \geq \Omega(1)$.
\end{theorem}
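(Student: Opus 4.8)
The plan is to carry out the order-$3$ argument of Appendix~\ref{proofWeakRecovery} with every ``$3$'' replaced by ``$p$'', tracking carefully how the width parameter $v$ enters. First I would define the order-$p$ analogue of Definition~\ref{defTensorNetworkEstimator}: a layered directed $p$-uniform hypergraph family $S_{\ell,v,i}$ with $\Theta(\ell)$ levels, even levels of width $v$ and odd levels of width $(p-1)v$, a perfect matching of hyperedges between consecutive levels in which each hyperedge uses one vertex from an even level and $p-1$ vertices from the adjacent odd level, together with appropriately modified boundary levels so that the marked vertex $i$ occurs with degree one (making $P_i(Y)=\sum_{\alpha\in S_{\ell,v,i}}\chi_\alpha(Y)$ a genuine estimator of $x_i$, up to scaling) and the far end wired so as to produce the $i$-th output coordinate. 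The polynomial $P_i(Y)$ has degree $\Theta(\ell v)$, and exactly as in Lemma~\ref{expectationRecovery3Tensor} one gets $\E[P_i(Y)x_i]=\lambda^{\Theta(\ell v)}\,|S_{\ell,v,i}|$, with $|S_{\ell,v,i}|$ given by an elementary product of binomials and factorials.

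Next I would establish the two combinatorial lemmas. The degree inequality: for $\alpha\neq\beta$ in $S_{\ell,v,i}$ sharing $k$ hyperedges, $r$ vertices other than $i$, and $s$ of those with degree $\le 1$ in $\alpha\cap\beta$, $p$-uniformity together with the degree-$\le 2$ constraint gives $pk\le 2(r-s)+s=2r-s$; in the equality case $2r=pk$ one shows (as in Lemma~\ref{Tensor3GraphPropertyRecovery}) that either $\alpha,\beta$ are disjoint, or $\alpha\cap\beta$ is a single hyperpath anchored at $i$, or every level carries at least one shared hyperedge, forcing $k=\Omega(\ell)$. The pair count: the number of $(\alpha,\beta)$ with a prescribed per-level overlap profile $\{r_t,k_t\}$ is at most $|S_{\ell,v,i}|^2\, n^{-r}\, v^{2r-\Theta(pk)}\, \ell^{2r-pk}\,\exp(O(r))$, obtained by the same three-step scheme as in Lemma~\ref{Tensor3RecoveryPairCount}: choose $\alpha\cap\beta$ and the shared vertices inside one copy, complete $\alpha\setminus\beta$ and $\beta\setminus\alpha$, and embed $\alpha\cap\beta$ into $\beta$ in $\ell^{s}$ ways, estimating all binomials by Stirling.

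With these in hand, the second-moment bound follows the order-$3$ template. Using $\E[\chi_\alpha(Y)\chi_\beta(Y)]\le\lambda^{-2k}\Gamma^{O(2r-pk)}\E[\chi_\alpha(Y)x_i]\E[\chi_\beta(Y)x_i]$ and the pair count, the contribution of a fixed profile to $\E[P_i(Y)^2]/(\E[P_i(Y)x_i])^2$ is at most $\bigl(n/(c\,v^2\ell^2\Gamma^{O(1)})\bigr)^{-r+pk/2}\bigl(c\,n^{-p/2}v^{-(p-2)/2}\lambda^{-2}\bigr)^{k}$. Setting $\gamma=c'\,n^{p/2}v^{(p-2)/2}\lambda^2=1+\Omega(1)$ and $\ell=O(\log_\gamma n)$, and using $v=o\!\bigl(n^{1/2}/\mathrm{poly}(\Gamma\log n)\bigr)$ so that $n/(v^2\ell^2\Gamma^{O(1)})=n^{\Omega(1)}$, a summation argument identical to Lemma~\ref{summationBounding1} shows that all profiles with $2r>pk$ contribute $n^{-\Omega(1)}$; among $2r=pk$ profiles the ``long overlap'' ones ($k=\Omega(\ell)$) are again $n^{-\Omega(1)}$, and only the hyperpath family survives, summing to $\sum_{k\ge 0}\gamma^{-k}=O(1)$. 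Hence $\E[P_i(Y)^2]\le(1-\Omega(1))^{-1}(\E[P_i(Y)x_i])^2$, giving $\Omega(1)$ correlation in expectation; the identical computation over $S_{\ell,v}=\bigcup_i S_{\ell,v,i}$ yields $\E\langle P(Y),x\rangle^2=(1+o(1))(\E\langle P(Y),x\rangle)^2$, the analogue of Lemma~\ref{concentration_lemma}. Then $P(Y)$ is evaluated in $n^{O(pv)}$ time by the color-coding and dynamic-programming scheme of Appendix~\ref{colorCodingTensor} (each per-level transfer matrix has $n^{O(pv)}$ nonzero entries, and $\exp(O(\ell v))=n^{O(v)}$ colorings are averaged), and combining the expectation bound, Chebyshev applied to the concentration bound, Markov applied to $\|P(Y)\|^2$, and concentration of $\|x\|^2$, the unit vector $\hat x=P(Y)/\|P(Y)\|$ achieves $\langle\hat x,x/\|x\|\rangle^2\ge\Omega(1)$ with constant probability.

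The main obstacle is the $p$-dependent bookkeeping in the pair-count lemma and the ensuing summation: one must check that the exponent of the width $v$ that appears, which is what forces the $v^{(p-2)/2}$ factor inside $\gamma$, is exactly the one for which the $2r>pk$ terms decay and the $2r=pk$ hyperpath terms form a convergent geometric series — and, in particular, that the boundary-case analysis generalizing Lemma~\ref{Tensor3GraphPropertyRecovery} still forces every other $2r=pk$ profile to have $k=\Omega(\ell)$. Getting this exponent right for general $p$ is where the real work lies; the color-coding evaluation, the concentration estimate, and the rounding step are then routine translations of the order-$3$ arguments.
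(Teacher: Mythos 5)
Your outline works for odd $p$, where it coincides with the paper's remark that the order-$3$ argument extends routinely, but it breaks down for even $p$ --- which is precisely the case the paper's appendix is forced to treat separately. The obstruction is a parity one: your construction insists that the marked vertex $i$ have degree $1$ while every other vertex has degree $2$, so the degree sum is odd; but in a $p$-uniform hypergraph the degree sum equals $p\,|E(\alpha)|$, which is even whenever $p$ is even. Hence for even $p$ the family $S_{\ell,v,i}$ you describe is empty, and no ``appropriately modified boundary levels'' can repair this: any simple $p$-uniform design must have an even number of odd-degree vertices, so a single degree-$1$ marked vertex is impossible, and letting some other vertex carry odd degree $\geq 3$ would make $\E[\chi_\alpha(Y)x_i]$ proportional to an odd moment of $x_j$, which is not assumed nonzero (e.g.\ it vanishes for a Rademacher prior). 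More fundamentally, for even $p$ the observation $Y=\lambda x^{\otimes p}+W$ is literally unchanged under $x\mapsto -x$, so for any sign-symmetric prior (allowed by the hypotheses) every estimator satisfies $\E\langle P(Y),x\rangle=0$; the first-moment target your argument is built around is identically zero, so no variance bound can rescue it.

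The paper circumvents this by changing the estimand for even $p$: it writes $p=p_1+p_2$ with $p_1,p_2$ odd and builds a \emph{matrix-valued} estimator $P(Y)\in\R^{n\times n}$ (Definition~\ref{pEstimatorPolynomial}) whose entry $P_{ij}(Y)$ sums over layered hypergraphs with two marked degree-$1$ vertices $i$ and $j$ (even number of odd-degree vertices, consistent with parity), so that $P(Y)$ correlates with $xx^\top$ rather than with $x$; the vector $\hat x$ with $\langle\hat x,x/\lVert x\rVert\rangle^2=\Omega(1)$ is then extracted by the rounding procedure of Theorem~\ref{correlationPreservation}. Your combinatorial machinery (the degree inequality $2r-s\geq pk$, the per-profile pair count, the geometric-series summation with $\gamma=c_p n^{p/2}v^{(p-2)/2}\lambda^2$, and the color-coding evaluation) is the right generalization and matches the paper's Lemmas for the even-$p$ matrix estimator as well, but to close the gap you must redo the boundary/equality-case analysis (the analogue of Lemma~\ref{Tensor3GraphPropertyRecovery}) for intersections anchored at \emph{two} marked vertices, where the surviving $2r-s=pk$ profiles are pairs of hyperpaths rooted at $i$ and at $j$, and then invoke the spectral rounding step instead of normalizing $P(Y)$ directly.
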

 Specifically this leads to polynomial time algorithm when $\lambda=\Omega(n^{-p/4})$. 

When the order-$p$ is odd, the analysis is very similar to the one for the case $p=3$. Therefore we mainly talk about case where order $p$ is even and prove the guarantee of the theorem. 

\subsection{Strong detection algorithm for even p}
For strong detection we propose the following thresholding polynomial
\begin{definition}[thresholding polynomial for even p]\label{pDetectionPolynomial}
   On directed complete $p$-uniform hypergraph with $n$ vertices, we define $S_{\ell,v}$ as the set of all copies of $2$-regular hypergraphs generated in the following way
   \begin{itemize}
       \item We construct $\ell$ levels of vertices, with each level containing $pv/2$ vertices. 
       \item For $t\in [\ell]$,  we connect a perfect matching with $v$ hyperedges between level $t$ and level $t+1$. Each hyperedge directs from $p/2$ vertices in level $t$ to $p/2$ vertices in level $t+1$.
       \item Finally we similarly connect a perfect matching with $v$ hyperedges between level $\ell-1$ and level $0$. Each hyperedge directs from $p/2$ vertices in level $0$ to $p/2$ vertices in level $\ell-1$
   \end{itemize}

    The thresholding polynomial is given by $P(Y)=\sum_{\alpha\in S_{\ell,v}} \chi_\alpha(Y)$ where $\chi_\alpha(Y)$ is the Fourier basis associated with hypergraph $\alpha$: $\chi_\alpha(Y)=\sum_{(i_1,i_2,\ldots,i_p)\in \alpha} Y_{i_1,i_2,\ldots,i_p}$.  
\end{definition}
\begin{lemma}
    Suppose we have $\gamma=c_p v^{(p-2)/2}\lambda^2 n^{p/2}=1+\Omega(1)$ with $c_p$ small enough constant related to $p$, then taking $\ell=O(\log_{\gamma} n)$,
    the projection of likelihood ratio $\hat{\mu}(Y)$ with respect to $S_{\ell,v}$ is $\omega(1)$ when $n/v^2=\omega(\text{poly}(\ell))$.
\end{lemma}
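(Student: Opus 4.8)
The plan is to follow the template of the order-$3$ case (Lemma~\ref{diverge_density}): the quantity to be lower-bounded is the squared norm $\sum_{\alpha\in S_{\ell,v}}\hat\mu_\alpha^2$ of the projection of the likelihood ratio onto $\mathrm{span}\{\chi_\alpha : \alpha\in S_{\ell,v}\}$, and this factors as $\lvert S_{\ell,v}\rvert$ times a power of $\lambda$ once each Fourier coefficient has been evaluated. So the proof needs two independent pieces: an exact value for $\hat\mu_\alpha$, and a lower bound on $\lvert S_{\ell,v}\rvert$.

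First I would compute $\hat\mu_\alpha$. Because the entries of $W$ are independent, mean-zero and unit-variance, the monomials $\{\chi_\alpha(Y)=\prod_{e\in\alpha}Y_e\}_{\alpha\in S_{\ell,v}}$ are orthonormal under the null measure $\mathbb{Q}$: each $\alpha\in S_{\ell,v}$ is a \emph{simple} hypergraph, so $\chi_\alpha$ is a product of distinct entries, and for $\alpha\ne\beta$ some hyperedge lies in exactly one of them and contributes a vanishing factor $\E W_e=0$. Hence (as in Theorem~\ref{thm:detectionMetaTheorem}) $\hat\mu_\alpha=\E_{\mathbb{P}}\chi_\alpha(Y)$. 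Expanding $Y=\lambda x^{\otimes p}+W$ and using independence, mean-zeroness and the distinctness of the hyperedges once more, every term keeping at least one factor of $W$ vanishes, so $\hat\mu_\alpha=\lambda^{\lvert\alpha\rvert}\,\E_x\prod_u x_u^{\textrm{deg}_\alpha(u)}$; since $S_{\ell,v}$ consists of $2$-regular hypergraphs this is $\lambda^{\lvert\alpha\rvert}\,\E_x\prod_u x_u^2=\lambda^{\lvert\alpha\rvert}$ by independence and $\E x_u^2=1$. With $\lvert\alpha\rvert=\ell v$ (one hyperedge per edge of the $\ell$ perfect matchings), this gives $\hat\mu_\alpha=\lambda^{\ell v}$, hence $\sum_{\alpha\in S_{\ell,v}}\hat\mu_\alpha^2=\lvert S_{\ell,v}\rvert\,\lambda^{2\ell v}$.

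Next I would establish the even-$p$ analog of Lemma~\ref{hypergraph3setSize}. Building an $\alpha\in S_{\ell,v}$ amounts to choosing, for each of the $\ell$ levels, a $(pv/2)$-subset of $[n]$, and, for each of the $\ell$ matchings, one of $M_p(v)$ ways to split the two incident levels into $v$ matched blocks of size $p/2$; here $M_p(v)\ge\frac{((pv/2)!)^2}{((p/2)!)^{2v}\,v!}$ (partition each of the two levels into $v$ unordered blocks of size $p/2$, then match the blocks by a bijection; any ordering inside a hyperedge only helps). Forcing the $\Theta(\ell v)$ chosen vertices to be distinct across levels costs only a factor $1-o(1)$: a union bound over pairs of vertex-slots bounds the collision probability by $O((\ell v)^2/n)=o(1)$ under the hypotheses $n/v^2=\omega(\text{poly}(\ell))$ and $\ell=O(\log n)$. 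Thus $\lvert S_{\ell,v}\rvert\ge(1-o(1))\big(\binom{n}{pv/2}M_p(v)\big)^{\ell}$, and since $\binom{n}{pv/2}=(1-o(1))\frac{n^{pv/2}}{(pv/2)!}$ we get $\binom{n}{pv/2}M_p(v)=(1-o(1))\frac{n^{pv/2}\,(pv/2)!}{((p/2)!)^{2v}\,v!}$. A Stirling estimate — with the crucial input $\frac{(pv/2)!}{v!}\ge\big((p/2)^{p/2}e^{1-p/2}\big)^{v}v^{v(p-2)/2}/\text{poly}(v)$ — then yields $\binom{n}{pv/2}M_p(v)\ge\big(c_p\,n^{p/2}v^{(p-2)/2}\big)^{v}$ for a constant $c_p>0$ depending only on $p$, so $\lvert S_{\ell,v}\rvert\ge(1-o(1))\big(c_p\,n^{p/2}v^{(p-2)/2}\big)^{\ell v}$.

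Combining the two pieces, $\sum_{\alpha\in S_{\ell,v}}\hat\mu_\alpha^2\ge(1-o(1))\big(c_p\,n^{p/2}v^{(p-2)/2}\lambda^2\big)^{\ell v}$. Taking the constant in $\gamma=c_p v^{(p-2)/2}\lambda^2 n^{p/2}$ to be at most the combinatorial constant above, the base is $\ge\gamma=1+\Omega(1)$, so the sum is $\ge(1-o(1))\gamma^{\ell v}\ge(1-o(1))\gamma^{\ell}$, which for $\ell=C\log_\gamma n$ with $C$ a large enough constant is $n^{\Omega(1)}=\omega(1)$. The main obstacle is the Stirling estimate in the count of $\lvert S_{\ell,v}\rvert$: one must verify that the super-polynomial-in-$v$ factor $(pv/2)!$ coming from the matchings more than cancels the $(pv/2)!$ in the denominator of $\binom{n}{pv/2}$ and leaves a clean surplus of the stated form, and in particular that the residual power of $v$ is precisely $(p-2)/2$ — so that it pairs with the threshold $\lambda$ near $n^{-p/4}v^{-(p-2)/4}$ of Theorem~\ref{thm:detection-tensor-Higher-Order} — a feature tied to $p$ being even, so that every hyperedge is split evenly between adjacent levels, $p/2$ vertices on each side.
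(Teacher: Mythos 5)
Your proposal is correct and follows essentially the same route as the paper: you evaluate each Fourier coefficient as $\hat\mu_\alpha=\lambda^{\ell v}$ using $2$-regularity and independence, lower-bound $\lvert S_{\ell,v}\rvert$ level-by-level via the matching count and Stirling to extract the factor $\bigl(c_p\,n^{p/2}v^{(p-2)/2}\bigr)^{\ell v}$, and conclude $\sum_\alpha\hat\mu_\alpha^2\ge(1-o(1))\gamma^{\ell v}=n^{\Omega(1)}$. The only differences are cosmetic — you count unordered blocks (smaller by $((p/2)!)^{2v}$, absorbed into $c_p$) and spell out the orthonormality, vertex-distinctness, and Stirling steps that the paper leaves implicit.
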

\begin{proof}
    Given fixed vertices in level $t$ and level $t+1$, we have $((pv/2)!)^2/v!$ choices for the hyperedges between level $t$ and level $t+1$. Therefore we have $$S_{\ell,v}=(1-o(1))\left(\frac{(pv/2)!)}{v!}\right)^\ell n^{\ell pv/2}$$. Therefore by Strling's approximation this implies $S_{\ell,v}\geq v^{(p-2)v/2} \lambda^{2v} n^{\ell pv/2}\exp(-O(v))$. Therefore we have
    \begin{equation*}
        \sum_{\alpha\in S_\ell}\mu_{\alpha}^2\geq \lambda^{2\ell v} n^{\ell pv/2} v^{(p-2)v/2}\exp(-O(v))=\omega(1) 
    \end{equation*}
    when we have $\ell,v$ as described. 
\end{proof}

\begin{lemma}\label{Higher-Order-Concentration}
    Suppose we have $\gamma=c_p v^{(p-2)/2}\lambda^2 n^{p/2}=1+\Omega(1)$, and $\ell=O(\log_{\gamma} n)$ with $c_p$ small enough constant related to $p$. If $n/v^2=\omega(\text{poly}(\Gamma\ell))$ then we have the following concentration property:
    \begin{equation*}
        \E P^2(Y)=(1+o(1))(\E P(Y))^2
    \end{equation*}
    
\end{lemma}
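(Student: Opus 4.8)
The plan is to mirror the order-$3$ argument from Lemmas~\ref{hypergraphSimpleRelation}, \ref{tensor3PolynomialCorrelation} and~\ref{concentration_detection}, adapting every combinatorial estimate to the cyclic $\ell$-layer $p$-uniform hypergraphs of Definition~\ref{pDetectionPolynomial}. Since $x$ is $2$-regular-covered by each $\alpha\in S_{\ell,v}$ we have $\E P(Y)=\lambda^{\ell v}\lvert S_{\ell,v}\rvert$, so it suffices to show $\sum_{\alpha,\beta\in S_{\ell,v}}\E[\chi_\alpha(Y)\chi_\beta(Y)]=(1+o(1))\lambda^{2\ell v}\lvert S_{\ell,v}\rvert^{2}$. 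First I would record the moment computation: because each $\chi_\alpha$ is a multilinear monomial in the entries of $Y$, the entries are independent with $\E Y_{i_1\ldots i_p}=\lambda\prod_j x_{i_j}$ and $\E Y_{i_1\ldots i_p}^{2}=1+\lambda^2\prod_j x_{i_j}^2$, so a pair $(\alpha,\beta)$ sharing $k$ hyperedges and $r$ vertices satisfies
\[
\E[\chi_\alpha(Y)\chi_\beta(Y)]=(1+n^{-\Omega(1)})^{k}\,\lambda^{2\ell v-2k}\,\E\Big[\prod_{i\in\alpha\Delta\beta}x_i^{\deg(i,\alpha\Delta\beta)}\Big]\le\lambda^{2\ell v-2k}\,\Gamma^{O(2r-pk)},
\]
where the bound on the $x$-moment uses $\E x_i^{2m}\le\Gamma^{m-1}$ together with the fact that in $\alpha\Delta\beta$ only shared vertices can have degree exceeding $2$, and there are $O(2r-pk)$ of those by the degree count below.

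Second, I would prove the $p$-uniform analogue of Lemma~\ref{hypergraphSimpleRelation}: writing $s$ for the number of shared vertices of degree at most $1$ in $\alpha\cap\beta$, the $p$-uniformity of the $k$ shared hyperedges forces $pk\le 2(r-s)+s=2r-s$, hence $2r\ge pk$; moreover if $2r=pk\neq 0$ then every shared vertex has degree exactly $2$ in $\alpha\cap\beta$, which, tracking degrees across consecutive layers $t,t+1\in\Z/\ell\Z$, forces the per-layer shared-edge counts $k_t$ to be all positive and all equal, so $k\ge\ell$. Third, I would count the pairs $(\alpha,\beta)$ with a prescribed per-layer profile $(r_t,k_t)$, $\sum_t r_t=r$, $\sum_t k_t=k$: choosing $\alpha\cap\beta$ together with the shared vertices costs at most $n^{r}v^{O(k)}\exp(O(r))$, completing $\alpha$ and completing $\beta$ each cost at most $\lvert S_{\ell,v}\rvert\,n^{-r}v^{O(r)-\Theta(k)}\exp(O(r))$, and embedding $\alpha\cap\beta$ into $\beta$ contributes at most $\ell^{s}\le\ell^{2r-pk}$; combining,
\[
\lvert S_{\ell,v,k,r}\rvert\le\lvert S_{\ell,v}\rvert^{2}\,n^{-r}\,v^{O(r)-\Theta(k)}\,\ell^{2r-pk}\,\exp(O(r)).
\]

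Fourth, I would assemble the ratio. Combining the last two displays with $\Gamma=n^{o(1)}$ and the lower bound on $\lvert S_{\ell,v}\rvert$ from the preceding lemma, the contribution of $S_{\ell,v,k,r}$ to $\E P^2(Y)/(\E P(Y))^2$ is at most
\[
\Big(\tfrac{n}{c\,v^{2}\,\ell^{2}\,\Gamma^{O(1)}}\Big)^{-r+pk/2}\Big(c\,n^{-p/2}\,v^{-(p-2)/2}\,\lambda^{-2}\Big)^{k}
\]
for a constant $c=c(p)$. Since $n/v^{2}=\omega(\mathrm{poly}(\Gamma\ell))$, the first factor is $\omega(1)$ raised to $-r+pk/2\ge 0$, so by a summation-over-profiles argument identical in spirit to Lemma~\ref{summationBounding1} (splitting on $k_\Delta=\sum_t\lvert k_{t+1}-k_t\rvert$ to absorb the $\ell^{2r-pk}$ and $v^{O(r)}$ factors) all profiles with $2r>pk$ contribute $o(1)$ in total. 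Profiles with $2r=pk$ are either $(r,k)=(0,0)$, contributing the main term $1-o(1)$, or have $k\ge\ell$ by the dichotomy of the second step, in which case the second factor equals $\gamma^{-k}\le\gamma^{-\ell}=n^{-\Omega(1)}$ once $\ell=C\log_\gamma n$ with $C$ large, using $\gamma=c_p v^{(p-2)/2}\lambda^2 n^{p/2}=1+\Omega(1)$. Summing gives $\E P^2(Y)=(1+o(1))(\E P(Y))^2$. The main obstacle is the graph-theoretic dichotomy of the second step: for general even $p$ one must argue that a degree-$2$ shared vertex at layer $t$ pins down both incident hyperedges to be shared and then propagate this around the cycle to conclude $k\ge\ell$; a secondary difficulty is making the sum over layer-profiles converge despite the $\ell^{2r-pk}$ and $v^{O(r)}$ growth, which the $k_\Delta$-splitting of Lemma~\ref{summationBounding1} handles exactly when $n/v^2$ dominates a sufficiently large power of $\Gamma\ell$.
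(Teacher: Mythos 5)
Your proposal is correct and follows essentially the same route as the paper's proof: the same correlation bound $\E[\chi_\alpha\chi_\beta]\le\lambda^{2\ell v-2k}\Gamma^{O(2r-pk)}$, the same degree-count dichotomy ($2r\ge pk$, with equality forcing equal positive per-layer counts and hence $k\ge\ell$ around the cycle), the same per-layer profile counting of pairs, and the same $k_\Delta$-splitting summation as in Lemma~\ref{summationBounding1}, with the $2r=pk$, $k\ge\ell$ profiles killed by $\gamma^{-\ell}=n^{-\Omega(1)}$. One small remark: the auxiliary inequality $\E x_i^{2m}\le\Gamma^{m-1}$ you invoke is false for general distributions, but it is also unnecessary, since every vertex has degree exactly $2$ in each $2$-regular hypergraph of $S_{\ell,v}$, so degrees in $\alpha\Delta\beta$ never exceed $4$ and only $\E x_i^2=1$ and $\E x_i^4=\Gamma$ are ever used.
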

\begin{proof}
    We consider $\alpha,\beta\in S_{\ell,v}$. We first choose $\alpha\cap \beta$ and shared vertices as subgraph of $\alpha$ and then select the remaining hypergraph of $\alpha,\beta$. As before, we have $\E[\chi_\alpha(Y)\chi_\beta(Y)]\leq \lambda^{4\ell v-2k} \Gamma^{2r-pk}$, where $r$ is the number of shared vertices,$k$ is the number of shared hyperedges, and $\Gamma=\E [x_i^4]=n^{o(1)}$. Considering shared vertices and hyperedges in $\alpha$, if there are $r_t$ shared vertices in level $t$ and $k_t$ shared hyperedges between level $t$ and level $t+1$, then there are
\begin{align*}
     N_{\alpha\cap\beta}&\leq\prod_{t=0}^{\ell}{n\choose r_{t}} {r_{t}\choose \frac{pk_{t}}{2}} {r_{t}\choose \frac{pk_{t-1}}{2}}  \frac{(((pk_{t}/2)!)^2
     }{k_t!}\\
     &\leq n^{r} v^{(p-2)k/2} \exp (O(r))
 \end{align*}
    such subgraphs. On the other hand, the number of choices for the remaining hypergraph of $\alpha$ is bounded  by
    \begin{align*}
        N_{\alpha-\beta} &= \prod_{t=0}^{\ell-1} {n\choose \frac{pv}{2}-r_{t}} \frac{((p(v-k_t)/2)!)^2}{(v-k_t)!}
        \\
        &= \lvert S_{\ell,v}\rvert
        n^{-r} v^{r-pk+k}\exp(O(r))
    \end{align*}
    Then we consider choices for $\beta$. Denote the number of degree-$1$ vertices in $\alpha\cap \beta$ as $s_1$ and the number of shared vertices not contained in $\alpha\cap\beta$ as $s_0$. Let $s=s_0+s_1$, then there are at most $\ell^s$ ways of putting $\alpha\cap\beta$ in $\beta$. For the same reasoning, number of ways for choosing the remaining hypergraph of $\beta$ is also bounded by $\lvert S_{\ell,v}\rvert
        n^{-r} v^{r-pk+k}\exp(O(r))$. Therefore, with respect to fixed number of vertices and hyperedges $r_t,k_t$ in each level of $\alpha$, the total number of such hypergraph pairs $S_{r,k,\ell,v}$ is bounded by
        \begin{equation*}
            \lvert S_{\ell,v}\rvert^2 n^{-r} v^{2r-2pk+2k}\ell^s v^{(p-2)k/2}\exp(O(r))
        \end{equation*}
        Therefore the corresponding contribution is given by
        \begin{equation*}
            \sum_{\alpha,\beta\in S_{r,k,\ell,v}} \frac{\E \chi_\alpha(Y)\chi_\beta(Y)}{(\E P(Y))^2}\leq n^{-r} v^{2r-2pk+2k}\ell^s v^{(p-2)k/2}\lambda^{-2k}\Gamma^{2r-pk}\exp(O(r))
        \end{equation*}
        Since we have $2r-s\geq pk$ by degree constraints, this is bounded by
        \begin{equation*}
            c^{pk/2}n^{-pk/2} v^{-(p-2)k/2}\lambda^{-2k} \ell \left(\frac{n}{cv^2\Gamma^2}\right)^{-r+pk/2}
        \end{equation*}
        where $c$ is constant.  Summing up for $\alpha,\beta$ with respect to different $r_t,k_t$ and combining the fact that if $r=pk/2$ then $k\geq 2\ell$, we have
        \begin{align*}
            &\sum_{r_t,k_t} c^{pk/2}n^{-pk/2} v^{-(p-2)k/2}\lambda^{-2k} \ell \left(\frac{n}{cv^2\Gamma^2}\right)^{-r+pk/2}\\
            =& \sum_{k_t\geq 0} \sum_{\substack{r_t\geq \delta(t) \text{max}(k_t,k_{t-1})\\ 2r\geq pk+1}} c^{pk/2}n^{-pk/2} v^{-(p-2)k/2}\lambda^{-2k} \ell \left(\frac{n}{cv^2\Gamma^2}\right)^{-r+pk/2} + \\
            & \sum_{k_t\geq 1} \left(c^{p/2} n^{-p/2}\lambda^{-2}v^{-(p-2)/2}\right)^{\sum_{k_t}} 
        \end{align*}
        When $\gamma=c_p n^{-p/2}\lambda^{-2}v^{-1/2}>1$ and $\ell=C\log_\gamma n$ with constant $C$ large enough, the second term is bounded by $n^{-\Omega(1)}$. For the first term we note that given $k_t$ for $t\in [\ell]$, we have $\ell^{r-pk/2}$ choices for $r_t$. We denote $k_\Delta=\lvert k_{t+1}-k_t\rvert$. Then we have $k_\Delta=O(r-3k/2)$.  Then given $k_\Delta$ we have at most $k_\Delta$ different values for $k_t$. As a result fixing these $k_\Delta$ different values, there are $\ell^{k_\Delta}$ choices for $k_t$ for $t\in [\ell]$. Therefore the first term is bounded by
        \begin{equation*}
       \sum_{k_\Delta\geq 0} \left[\left(\frac{n}{cv^2\text{poly}(\Gamma\ell)}\right)^{-\text{max}(1/2,k_\Delta)}\prod_{t=1}^{k_\Delta} \left(\sum_{k_t\geq 0}(c^{p/2} n^{-p/2} \lambda^{-2} v^{-(p-2)/2} )^{k_t}\right)\right]=o(1)
        \end{equation*}
         In all, we have
 $\sum_{\alpha,\beta\in S_{\ell,v}} \E[\chi_\alpha(Y)\chi_\beta(Y)]\leq (1+o(1)) (\E P(Y))^2$
\end{proof}

Next we show that the running time can be improved using color-coding method. We describe the evaluation algorithm \ref{algoCircleTensorHigherOrder}.
\begin{algorithm}
\KwData{Given $Y\in (\mathbb{R}^{n})^{\otimes p}$ s.t $Y=\lambda x^{\otimes p}+W$}
\KwResult{$P(Y)\in\mathbb{R}$ which is the sum of multilinear monomials corresponding to hypergraphs  in $S_{\ell,v}$(up to accuracy $1+n^{-\Omega(1)}$)}
 \For{$i\gets1$ \KwTo $C$}{
     Sample coloring $c_i:[n]\mapsto [\ell]$ uniformly at random\;
   Construct a matrix $M,N\in \mathbb{R}^{(2^{p\ell v}-1)n^{pv/2}\times (2^{p\ell v}-1)n^{pv/2}}$\;
    
    Record $p_{c_i}= \frac{(p\ell v)^{p\ell_v}}{(p\ell v)!}\text{Tr}(M^{\ell-1}N)$\;}
    Return  $\frac{1}{C}\sum_{i=1}^C p_{c_i}$
 \caption{Algorithm for evaluating the thresholding polynomial }\label{algoCircleTensorHigherOrder}
 \end{algorithm}
We describe the matrices $M,N$ used in the algorithm. The rows and columns of $M$ are indexed by $(V_1,S)$ and $(V_2,T)$ where $V_1\in [n]^{pv/2}$ and $V_2\in [n]^{pv/2}$ correspond to  sets of labels of vertices while $S,T\subsetneq [\ell v]$ correspond to subsets of colors. We have $M_{(V_1,S),(V_2,T)}= 0$ if $S\cup \{c(v):v\in V_1\}\neq T$ or $\{c(v):v\in V_1\}$ and $S$ are not disjoint. Otherwise $M_{(V_1,S),(V_2,T)}$ is given by $\sum_{\gamma \in S_{V_1,V_2}}\chi_{\gamma}(Y)$ where $S_{V_1,V_2}$ is the set of perfect matching induced by $V_1$ and $V_2$(each hyperedge in $S_{V_1,V_2}$ direct from $p/2$ vertices from $V_1$ to $p/2$ 
    vertices from $V_2$).
    
    For matrix $N$, the indexing is the same as $M$. The entry $N_{(V_1,S),(V_2,T)}= 0$ if $S\cup \{c(v):v\in V_1\}\neq T$ or $\{c(v):v\in V_1\}$ and $S$ are not disjoint. Otherwise $N_{(V_1,S),(V_2,T)}$ is given by $\sum_{\gamma \in S_{V_2,V_1}}\chi_{\gamma}(Y)$ where $S_{V_2,V_1}$ is the set of perfect matching induced by $V_2$ and $V_1$(each hyperedge in hypergraph  $\alpha\in S_{V_2,V_1}$ directs from $p/2$ 
    vertices from $V_2$ to $p/2$ vertices from $V_1$).

\begin{lemma}[Evaluation of thresholding polynomial]\label{pevaDetection}
       There exists a $n^{O(v)}$-time algorithm that given a coloring $c$:$[n]\to [p\ell v]$(where $p\ell v$ is the number of vertices in hypergraph $\alpha\in S_{\ell,v}$) and a tensor $Y\in (\mathbb{R}^{n})^{\otimes p}$ evaluates degree $2\ell v$ polynomial below in polynomial time
       \begin{equation}
       p_c(Y)=\mathop{\sum}_{\alpha\in S_{\ell,v}} \chi_\alpha(Y)F_{c,\alpha}
       \end{equation}
       \begin{equation}
    F_{c,\alpha}=\frac{(p\ell v)^{p\ell v}}{(p\ell v)!} \cdot \mathbf{1}_{c(\alpha)=[p\ell v]}
\end{equation}
    when thresholding polynomial $P(Y)$ defined in \ref{pDetectionPolynomial} satisfies $(\E P(Y))^2=(1-o(1))\E P^2(Y)$, we can take $\exp(O(\ell v))$ random colorings and give an accurate estimation of the thresholding polynomial by averaging $p_c(Y)$. 
    \end{lemma}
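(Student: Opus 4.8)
The plan is to follow the template of the order-$3$ evaluation lemma (Lemma~\ref{evaDetection}): the statement splits cleanly into (i) showing that for a \emph{fixed} coloring $c\colon[n]\to[m]$ (write $m=p\ell v$ for the number of vertices of a hypergraph in $S_{\ell,v}$) Algorithm~\ref{algoCircleTensorHigherOrder} evaluates $p_c(Y)$ exactly in $n^{O(v)}$ time, and (ii) showing that, under the hypothesis $(\E P(Y))^2=(1-o(1))\E P^2(Y)$ supplied by Lemma~\ref{Higher-Order-Concentration}, averaging $p_c(Y)$ over $\exp(O(\ell v))$ independent uniform colorings concentrates around $P(Y)$. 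Throughout write $M_0=m^m/m!$ for the normalizing constant appearing in $F_{c,\alpha}$; by Stirling, $M_0=\exp(O(m))=\exp(O(\ell v))$.

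For part (i), I would argue that the matrix chain $M^{\ell-1}N$ is precisely the natural level-by-level dynamic program over the construction in Definition~\ref{pDetectionPolynomial}. An index $(V_1,S)$ with $V_1\in[n]^{pv/2}$ and $S\subsetneq[m]$ records that a partial colorful copy currently occupies the vertex tuple $V_1$ on some level and has so far consumed exactly the color set $S$; the nonzero pattern $S\cap c(V_1)=\emptyset$, $T=S\cup c(V_1)$ forces each successive level to contribute a fresh block of $pv/2$ colors, and the value $\sum_{\gamma\in S_{V_1,V_2}}\chi_\gamma(Y)$ is the generating function of the $v$-hyperedge perfect matchings directed from $V_1$ to $V_2$. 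Expanding $\Tr(M^{\ell-1}N)$ and using that $N$ supplies the cycle-closing matching between level $0$ and level $\ell-1$ (with the correct orientation, which is why $N\ne M$), the only surviving monomials are those whose $\ell$ color blocks exhaust $[m]$ without repetition — exactly the colorful $\alpha\in S_{\ell,v}$ — and each is produced once; multiplying by $M_0$ yields $p_c(Y)=\sum_{\alpha\in S_{\ell,v}}\chi_\alpha(Y)F_{c,\alpha}$. For the running time: since $\ell=O(\log_\gamma n)=O(\log n)$, the matrices have $(2^m-1)n^{pv/2}=n^{O(v)}$ rows and columns; each entry is a sum over at most $\exp(O(v\log v))=n^{O(v)}$ perfect matchings, computable in $n^{O(v)}$ time; and one performs $O(\ell)$ matrix multiplications, each of cost $n^{O(v)}$, for a total of $n^{O(v)}$.

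For part (ii), note first that for a uniform $c\colon[n]\to[m]$ the $m$ vertices of a fixed $\alpha$ receive a uniform length-$m$ word, so $\Pr[c(\alpha)=[m]]=m!/m^m$, hence $\E_c F_{c,\alpha}=1$ and $\E_c p_c(Y)=P(Y)$ by linearity. For the variance I would use, exactly as in Lemma~\ref{evaDetection}, that $0\le F_{c,\alpha}\le M_0$ (so $\E_c[F_{c,\alpha}F_{c,\beta}]\le M_0^2\Pr[\alpha\text{ colorful}]=M_0$) and that $\E_W[\chi_\alpha(Y)\chi_\beta(Y)]\ge 0$ for all $\alpha,\beta\in S_{\ell,v}$ — the latter because $\alpha,\beta$ are $2$-regular, so $\alpha\Delta\beta$ has all even degrees and the expectation factors into a product of even moments of the independent entries of $x$ (this is the identity underlying the proof of Lemma~\ref{Higher-Order-Concentration}). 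Since $c$ and $W$ are independent, this gives
\begin{equation*}
  \E_W\E_c\bigl[p_c(Y)^2\bigr]=\sum_{\alpha,\beta\in S_{\ell,v}}\E_c\bigl[F_{c,\alpha}F_{c,\beta}\bigr]\,\E_W\bigl[\chi_\alpha(Y)\chi_\beta(Y)\bigr]\le M_0\sum_{\alpha,\beta\in S_{\ell,v}}\E_W\bigl[\chi_\alpha(Y)\chi_\beta(Y)\bigr]=\exp(O(\ell v))\,\E_W\bigl[P(Y)^2\bigr],
\end{equation*}
and therefore $\E_W\E_c[(p_c(Y)-P(Y))^2]=\E_W\E_c[p_c(Y)^2]-\E_W[P(Y)^2]\le\exp(O(\ell v))\,\E_W P(Y)^2$. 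Averaging $L=\exp(C\ell v)$ independent colorings divides this by $L$; taking $C$ large enough and invoking $\E_W P(Y)^2=(1+o(1))(\E_W P(Y))^2$ bounds the mean-squared error by $n^{-\Omega(1)}(\E_W P(Y))^2$, so Chebyshev yields $\tfrac1L\sum_t p_{c_t}(Y)=(1\pm o(1))P(Y)$ with high probability over $W$ and the colorings. Since $L=\exp(O(\ell v))=n^{O(v)}$, the whole procedure still runs in $n^{O(v)}$ time.

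The main obstacle I expect is purely the bookkeeping in part (i): matching the block structure of Definition~\ref{pDetectionPolynomial} (the vertex count per level, the orientation of each hyperedge, and the fact that the cycle-closing matching between level $\ell-1$ and level $0$ is handled by $N$ rather than $M$) to the index sets $(V,S)$ of the matrices, and checking that every colorful copy of a hypergraph in $S_{\ell,v}$ is generated with multiplicity exactly one — there is a mild risk of off-by-$p/2$ errors in the color-block sizes and of an overcount by the automorphisms of the level-cycle that must be absorbed by the trace and the prefactor $M_0$. Part (ii) is routine given Lemma~\ref{Higher-Order-Concentration} and the nonnegativity of $\E_W[\chi_\alpha(Y)\chi_\beta(Y)]$.
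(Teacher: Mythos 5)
Your proposal is correct and follows essentially the same route as the paper: the paper likewise observes that $p_c(Y)$ equals the normalized trace $\frac{(p\ell v)^{p\ell v}}{(p\ell v)!}\Tr(M^{\ell-1}N)$ computed in $n^{O(v)}$ time, and then bounds $\E_c\,p_c^2(Y)\le \exp(O(\ell v))\,\E P^2(Y)\le \exp(O(\ell v))(1+o(1))(\E P(Y))^2$ (using $\E_c p_c(Y)=P(Y)$, the boundedness of $F_{c,\alpha}$, and nonnegativity of $\E_W[\chi_\alpha(Y)\chi_\beta(Y)]$) before averaging over $\exp(O(\ell v))$ colorings. Your write-up only makes explicit what the paper leaves implicit (the dynamic-programming correctness of the matrix chain and the even-degree argument for nonnegativity), so no substantive difference remains.
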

    \begin{proof}

     The observation is that $p_c(Y)$ is just given by $(p\ell v)^{p\ell_v}/(p\ell v)!$ times the trace of $(M)^{\ell-1}N$. This can be done in time $n^{3pv/2}\exp(O(\ell v))$. 
    
    Next we prove that averaging random coloring for $p_c(Y)$ will give accurate estimation for $P(Y)$ in the detection algorithm. First we note that $\E_c p_c(Y)=P(Y)$. Next, for single coloring we have
    \begin{equation*}
        \E p^2_c(Y)= \sum_{\alpha,\beta\in  S_{\ell,v}}\E [F_{c,\alpha} F_{c,\beta} \chi_\alpha(Y)\chi_\beta(Y)] \leq \exp(O(\ell v)) \E P^2(Y)\leq \exp(O(\ell v)) (\E p_c(Y))^2
    \end{equation*}
    where we use the result that $\E P^2(Y)=(1+o(1)) (\E P(Y))^2$. Therefore, by averaging $L=\exp(O(\ell v))$ random colorings, the variance can be reduced such that $\frac{\sum_{t=1}^L p_{c_t}(Y)}{L}=(1\pm o(1))P(Y)$ w.h.p. 
     \end{proof}
     
\textbf{Remark}: When $\lambda=\Omega(n^{-p/4})$, we can take $c_p \lambda n^{p/4}v>1$ with $c_p$ being small enough constant dependent on order $p$. This leads to $O(n^{2pv})$ time algorithm with constant $v$. When $\lambda=\omega(n^{-p/4})$, we can simply take $P(Y)=\sum_{i_1<i_2<\ldots<i_p}Y_{i_1,i_2,\ldots,i_p} Y_{i_p,i_{p-1},\ldots,i_1}$. Obviously such polynomial can be evaluated in linear time.

\begin{proof}[Proof of Theorem \ref{thm:detection-tensor-Higher-Order}]
    Combining the concentration property proved in lemma \ref{Higher-Order-Concentration} and  the running time proved in lemma \ref{pevaDetection}, we get the claim.
\end{proof}
\subsection{Weak recovery algorithm for even p}
For weak recovery we want to propose estimator $P(Y)\in \mathbb{R}^{n\times n}$ such that
\begin{equation*}
    \frac{\left(\E\left\lprod P(Y),xx^\top\right\rprod\right)^2
    }{(\E P(Y)\rVert_F^2\lVert \E \lVert xx^\top\rVert_F^2)}
\end{equation*}

For even $p$, it can always be decomposed into two odd numbers $p_1$ and $p_2$ s.t $p=p_1+p_2$. Let $p_1,p_2$ be such a pair of odd numbers that minimizes  $|p_2-p_1|$. Then we define the estimation vector for weak recovery as following:  
\begin{definition}[Estimator for even-$p$ weak recovery]\label{pEstimatorPolynomial}
    On $p$-uniform directed complete hypergraph on $n$ vertices, we define the following set of subgraphs

    Given tensor $Y\in (\mathbb{R}^{n})^{\otimes p}$, we have estimator $P(Y)\in \mathbb{R}^{n\times n}$ where each entry $P_{ij}(Y)$ is a degree $(2\ell-1) v$ polynomial given by $\sum_{\alpha\in S_{\ell,v,i,j}} \chi_\alpha(Y)$ where $S_{\ell,v,i,j}$ is the set of hypergraph generated in the following way:
    \begin{itemize}
        \item we construct $2\ell$ levels of vertices. Level $0$ contains vertex $i$ and $(p_1v-1)/2$ vertices in addition. For $0<t<\ell$, level $2t$ contains $p_1 v$ vertices while for $0<t<\ell-1$ level $2t+1$ contains $p_2 v$ vertices. Level $2\ell-1$ contains vertex $j$ and $(p_2v-1)/2$ vertices in addition. All vertices are distinct. 
    \item We construct a perfect matching between level $t,t+1$ for $t\in[1,2\ell-2]$,each hyperedge directs from $p_1$ vertices in even level to $p_2$ vertices in odd level.
    \item Level $0$ and $1$ are connected as bipartite hypergraph s.t each vertex in level $0$ excluding $i$ has degree $2$ while vertex $i$ and vertices in level $1$ has degree $1$. Level $2\ell-2$ and level $2\ell-1$ are connected as bipartite hypergraph s.t vertices in level $2\ell-1$ excluding $j$ has degree $2$ while vertices in level $2\ell-2$ and vertex $j$ has degree $1$
    \end{itemize}
\end{definition}
\begin{lemma}
    Taking $\gamma=c_p n^{p/2} v^{(p-2)/2} \lambda^2=1+\Omega(1)$(where $c_p$ is small enough constant related to $p$) and $\ell=O(\log_\gamma n)$ in the estimator above, then if $n=\omega(v^2\text{poly}(\ell\Gamma))$, we have 
    \begin{equation*}
        \frac{\E\lprod P(Y),xx^\top\rprod}{\left(\E \lVert P(Y)\rVert_F^2\E \lVert x\rVert^4\right)^{1/2}}=\Omega(1)
    \end{equation*}
\end{lemma}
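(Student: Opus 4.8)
The plan is to follow the template of the order-$3$ weak recovery argument (Lemma~\ref{recEstimator}) together with the even-$p$ detection analysis (Lemma~\ref{Higher-Order-Concentration}), adapting the combinatorics to the split $p = p_1 + p_2$ into two odd parts. First I would handle the numerator. For every hypergraph $\alpha \in S_{\ell,v,i,j}$ one has $\E[\chi_\alpha(Y)\, x_i x_j] = \lambda^{(2\ell-1)v}$: each of the $(2\ell-1)v$ hyperedges of $\alpha$ contributes a factor $\lambda$, the two degree-$1$ endpoints $i,j$ pair against $x_i,x_j$, and every other vertex of $\alpha$ has even degree with $\E x_u^2 = 1$ controlling the leading term. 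Hence $\E\lprod P(Y),xx^\top\rprod = \lambda^{(2\ell-1)v}\sum_{i,j}\lvert S_{\ell,v,i,j}\rvert$, and a direct count of the generating process of Definition~\ref{pEstimatorPolynomial} (choosing the vertices of each level, the internal perfect matchings, and the two bipartite caps) gives $\lvert S_{\ell,v,i,j}\rvert = (1-o(1))\cdot(\text{explicit product of binomials and factorials})$, exactly as in Lemma~\ref{expectationRecovery3Tensor}. This pins down $(\E\lprod P(Y),xx^\top\rprod)^2$ up to a $(1\pm o(1))$ factor.

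Next I would expand the second moment $\E\lVert P(Y)\rVert_F^2 = \sum_{i,j}\sum_{\alpha,\beta \in S_{\ell,v,i,j}}\E[\chi_\alpha(Y)\chi_\beta(Y)]$ and classify the pairs $(\alpha,\beta)$ by the number $k$ of shared hyperedges, the number $r$ of shared vertices other than $i,j$, and the layer profile $\{k_t, r_t\}$. As in the $p=3$ case, $\E[\chi_\alpha(Y)\chi_\beta(Y)] = (1+n^{-\Omega(1)})\lambda^{2(2\ell-1)v-2k}\,\E\left[\prod_{u\in\alpha\Delta\beta}x_u^{\text{deg}(u,\alpha\Delta\beta)}\right] \leq \lambda^{-2k}\Gamma^{O(2r-pk)}\,\E[\chi_\alpha(Y)x_ix_j]\,\E[\chi_\beta(Y)x_ix_j]$, using the degree inequality $2r-s \geq pk$ (with $s$ the number of shared vertices of degree $\leq 1$ in $\alpha\cap\beta$), the analogue of Lemma~\ref{hypergraphSimpleRelation}. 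I would then prove a counting lemma in the spirit of Lemma~\ref{Tensor3RecoveryPairCount}: pick $\alpha\cap\beta$ and the shared vertices inside a copy of $\alpha$ in at most $n^r v^{(p-2)k/2}\exp(O(r))$ ways; complete $\alpha$ in at most $\lvert S_{\ell,v,i,j}\rvert\, n^{-r}v^{r-pk+k}\exp(O(r))$ ways; embed the overlap into $\beta$ in at most $\ell^{s}$ ways; and complete $\beta$ similarly. Dividing through, the contribution of a fixed layer profile to $\E\lVert P(Y)\rVert_F^2 / (\E\lprod P(Y),xx^\top\rprod)^2$ is at most $\left(\frac{n}{c v^2\ell^2\Gamma^{O(1)}}\right)^{-r+pk/2}\left(c\,n^{-p/2}v^{-(p-2)/2}\lambda^{-2}\right)^{k}$.

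Finally I would sum over all layer profiles. Since $-r + pk/2 \leq 0$ always, the dominant terms have $r = pk/2$. By the even-$p$ analogue of the graph-theoretic classification (Lemma~\ref{Tensor3GraphPropertyRecovery}), $2r = pk$ forces either $k \geq 2\ell$ — handled like the $2r = pk-1$ case by taking $\ell = C\log_\gamma n$ with $C$ large, making the contribution $n^{-\Omega(1)}$ — or $\alpha\cap\beta$ is a single hyper-path from $i$ to $j$, whose contribution is the geometric series $\sum_{k\geq 0}(c\,n^{-p/2}v^{-(p-2)/2}\lambda^{-2})^{k} = (1 - c\,n^{-p/2}v^{-(p-2)/2}\lambda^{-2})^{-1} = O(1)$ by the hypothesis $\gamma = c_p n^{p/2}v^{(p-2)/2}\lambda^2 = 1+\Omega(1)$; the remaining profiles with $2r \geq pk+1$ are controlled by a summation lemma like Lemma~\ref{summationBounding1}, using $n = \omega(v^2\,\text{poly}(\ell\Gamma))$. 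Combining, $\E\lVert P(Y)\rVert_F^2 = O(1)\cdot(\E\lprod P(Y),xx^\top\rprod)^2$, and since $\E\lVert x\rVert^4 = (1+o(1))(\E\lVert x\rVert^2)^2$ by independence of the coordinates together with $\Gamma = n^{o(1)}$, the claimed ratio is $\Omega(1)$.

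The part I expect to be most delicate is the \emph{boundary analysis}: the caps at levels $\{0,1\}$ and $\{2\ell-2,2\ell-1\}$ break the clean layer-periodicity, since one endpoint ($i$, resp.\ $j$) has degree $1$ while the other cap vertices have degree $2$, and because $p_1\neq p_2$ the even and odd internal levels have unequal widths, so the parity-dependent degree bookkeeping $2r_t - s_t \geq \delta(t)\max(k_{t-1},k_t)$ has to be carried out level-by-level. Establishing that $2r = pk$ together with some $k_t = 0$ really does pin $\alpha\cap\beta$ down to a hyper-path anchored at $i$ and $j$ — the step responsible for a convergent rather than divergent series — is where the even-$p$ case requires the most care, but it is structurally identical to the order-$3$ argument of Lemma~\ref{Tensor3GraphPropertyRecovery} and should go through.
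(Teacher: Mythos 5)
Your outline matches the paper's proof almost step for step: the numerator via $\E[\chi_\alpha(Y)x_ix_j]=\lambda^{(2\ell-1)v}$ and an explicit count of $\lvert S_{\ell,v,i,j}\rvert$, the second-moment bound $\E[\chi_\alpha(Y)\chi_\beta(Y)]\leq \lambda^{-2k}\Gamma^{O(2r-pk)}\E[\chi_\alpha(Y)x_ix_j]\E[\chi_\beta(Y)x_ix_j]$, the three-stage counting of pairs with a fixed layer profile (overlap, completion of $\alpha$, $\ell^{s}$ embeddings into $\beta$, completion of $\beta$), the per-profile bound $\left(\frac{n}{cv^2\ell^2\Gamma^{O(1)}}\right)^{-r+pk/2}\left(c\,n^{-p/2}v^{-(p-2)/2}\lambda^{-2}\right)^{k}$, and the summation over profiles. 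Up to that point there is nothing to object to.

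The one genuine flaw is your classification of the extremal case. You import Lemma~\ref{Tensor3GraphPropertyRecovery} verbatim and assert that in the saturated regime the only small-$k$ possibility is that $\alpha\cap\beta$ is \emph{a single hyper-path from $i$ to $j$}. But the estimator here is the matrix $P_{ij}(Y)$, not the vector estimator of the order-$3$ case: $i$ sits at level $0$ and $j$ at level $2\ell-1$, and \emph{both} have degree $1$. A connected hyper-path joining $i$ to $j$ must cross all $2\ell-1$ level gaps, hence has $k\geq 2\ell-1$ hyperedges and lives entirely in the regime you already suppress by taking $\ell=C\log_\gamma n$; so the case you name is vacuous for small $k$, while the configurations that actually dominate — and which your classification omits — are intersections consisting of \emph{two} disjoint hyper-paths, one anchored at $i$ and one anchored at $j$. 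This is also why the degree bookkeeping differs from the detection analysis you quote: with two degree-$1$ anchors the constraint is $2r-s\geq pk-2$ rather than $2r-s\geq pk$, and that slack of $2$ is exactly what admits the two anchored paths. The paper handles this by summing over both path lengths $\ell_1,\ell_2$, obtaining the \emph{square} of your geometric series, $\bigl(1-c\,n^{-p/2}v^{-(p-2)/2}\lambda^{-2}\bigr)^{-2}$, which is still $O(1)$ under $\gamma=1+\Omega(1)$. So the final conclusion survives, but as written your case analysis is incorrect and misses the dominant boundary terms; you need to restate the structural lemma for the two-anchor setting and replace the single geometric series by its square.
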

\begin{proof}
    We need to show the estimator $P(Y)\in\mathbb{R}^n$ above achieves constant correlation with the hidden vector $x$. Equivalently we want to show that for each $i,j\in [n]$, we have
    \begin{equation*}
        \left(\sum_{\alpha\in S_{\ell,v,i,j}} \E [\chi_\alpha(Y)x_ix_j]\right)^2=\Omega\left(\sum_{\alpha,\beta\in S_{\ell,v,i,j}}\E[\chi_\alpha(Y)\chi_\beta(Y)]\right)
    \end{equation*}
    Since we have $\left(\sum_{\alpha,\beta\in S_{\ell,v,i,j}}\E[\chi_\alpha(Y)x_ix_j]\right)=\lambda^{(2\ell-1)v}\lvert S_{\ell,v,i,j}\rvert$, we only need to bound the size of $S_{\ell,v,i,j}$. Applying combinatorial arguments to the generating process of $S_{\ell,v,i,j}$, we have
    \begin{equation*}
        \lvert S_{\ell,v,i,j}\rvert=(1-o(1)) \left({n\choose p_1v}{n\choose p_2v}\right)^{\ell-1}{n\choose (p_1v-1)/2}{n\choose (p_2v-1)/2} \frac{((p_1v)!(p_2v)!)^{2\ell-1}}{v^{2\ell-1}2^{(pv-2)/2}}
    \end{equation*}
    On the other hand,  first choose $\alpha\cap\beta$ and shared vertices(excluding $i$ and $j$) as subgraph of hypergraph $\alpha\in S_{\ell,v,i,j}$. For the shared vertices and hyperedges consisting in $\alpha$, if there are $r_t$ vertices in level $t\in [2\ell-1]$ and $k_t$ hyperedges between level $r_t$ and level $r_{t+1}$, then the number of such intersection is bounded by $N_{\alpha\cap\beta}$:
    \begin{align*}
        & {n\choose r_0}{n\choose r_{2\ell-1}} {2r_0+1\choose p_1k_0} {2r_{2\ell-1}+1\choose p_2 k_{2\ell-2}} \prod_{t=1}^{\ell-1} {n\choose r_{2t-1}}{n\choose r_{2t}}{r_{2t-1}\choose p_1k_{2t-1}}{r_{2t-1}\choose p_1k_{2t-1}}{r_{2t}\choose p_2k_{2t}} \\
        & {r_{2t}\choose p_2k_{2t-1}}\prod_{t=0}^{2\ell-2} \frac{(p_1k_t)!(p_2k_t)!}{k_t!} 
    \end{align*}
    This is upper bounded by $$\prod_{t=0}^{2\ell-2} \frac{(p_1k_t)!(p_2k_t)!}{k_t!}\prod_{t=0}^{2\ell-1} {n\choose r_t}\exp(O(r))$$. Next we choose the remaining hypergraph $\alpha\setminus \beta$ and $\beta\setminus \alpha$ respectively. For $\alpha\setminus\beta$, we have
    \begin{align*}
    N_{\alpha\setminus\beta}&={n\choose \frac{p_1v-1}{2}-r_0} {n\choose \frac{p_1v-1}{2}-r_{2\ell-1}}\prod_{t=1}^{\ell-1} {n\choose p_2v-r_{2t-1}}{n\choose p_1v-r_{2t}}\\ &\prod_{t=0}^{2\ell-2} \frac{(p_1(v-k_t))!(p_2(v-k_t))!}{k_t!}\\
    &\leq \lvert S_{\ell,v,i,j}\rvert n^{-r} v^r v^{-(p-1)k} \exp(O(r))
    \end{align*}
    Suppose there are $s_1$ degree $1$ vertices in $\alpha\cap\beta$ and $s_0$ vertices shared between $\alpha,\beta $ but not contained in $\alpha\cap \beta$, denoting $s=s_0+s_1$, then there are $\ell^s$ ways of placing $\alpha\cap \beta$ and shared vertices in hypergraph $\beta$ and the count of remaining hypergraph is also bounded by $\lvert S_{\ell,v,i,j}\rvert n^{-r} v^{r}v^{-(p-1)k}\exp(O(r))$. Moreover we have
    \begin{align*}
       \E[\chi_\alpha(Y)\chi_\beta(Y)]&=(1+n^{-\Omega(1)}) \lambda^{2(2\ell-1)v-2k}\E\left[\prod_{j\in\alpha\Delta\beta} x_j^{\text{deg}(j,\alpha\Delta\beta)}\right]\\
       &\leq \lambda^{-2k} \Gamma^{O(2r-pk)}\E[\chi_\alpha(Y)]\E[\chi_\beta(Y)]
    \end{align*}
    where $\text{deg}(j,\alpha,\Delta\beta)$ represents the degree of vertex $j$ in hypergraph $\alpha\Delta\beta$. Therefore the contribution to $\frac{\E[P_{ij}^2(Y)]}{(\E[P_{ij}(Y)x_ix_j])^2}$ with respect to specific $r_t,k_t$ is bounded by
    \begin{equation*}
        (n^{-r} v^r v^{(1-p)k})^2 n^rv^{(p-2)k/2}\ell^s \Gamma^{O(2r-pk)}\lambda^{-2k}\exp(O(r))
    \end{equation*}
    Because we have $2r-s-2\geq pk$ by degree constraints, we study the terms in cases of $2r-s>pk$ and  $2r-s<=pk$. For  the case $2r-s>pk$, the sum of contribution is $o(1)$ by the same reasoning as in the proof of detection. For the case $2r-s=pk$,  $\alpha\cap\beta$ consists of $2$ hyperpaths   respectively starting from $i$ and $j$. Further each shared vertex is contained in $\alpha\cap\beta$.  For such case the contribution is bounded by
    \begin{align*}
     \sum_{\ell_1,\ell_2}
     c^{p(\ell_1+\ell_2)}
     n^{-p(\ell_1+\ell_2)/2}v^{-(p-2)(\ell_1+\ell_2)/2}\lambda^{-2(\ell_1+\ell_2)}
     & \leq \left(\sum_{\ell_1}c^{\ell_1}n^{-p\ell_1/2}v^{-(p-2)\ell_1/2}\lambda^{-2\ell_1}\right)^2\\
     & \leq \left(\frac{1}{1-v^{-(p-2)/2}n^{-p/2}\lambda^{-2}}\right)^2
    \end{align*}
    For the case $2r-s<pk$, $\alpha\cap\beta$ contains more than $2\ell-1$ hyperedges. For $\ell=O(\log_\gamma n)$ with hidden constant large enough, the contribution is also $o(1)$. Therefore in all we have
    \begin{equation*}
        \frac{\left(\E \lprod P(Y), xx^\top\rprod\right)^2}{\E \lVert P(Y)\rVert_F^2 \E\lVert xx^\top\rVert_F^2}=\frac{1}{1-v^{-(p-2)/2}n^{-p/2}\lambda^{-2}}+o(1)
    \end{equation*}
    Therefore when $n=\omega(cv^2 \Gamma^2\text{polylog}(n))$ and $\gamma=1+\Omega(1)$, we have weak recovery algorithm by taking random eigenvector in the top $(1-\gamma^{-1})^{-O(1)}$ span of matrix $P(Y)$(as shown in  \cite{8104074}). When $\gamma=\omega(1)$, taking leading eigenvector of $P(Y)$ gives strong recovery guarantee.
\end{proof}
    
    Next we evaluate the  polynomial estimator for weak recovery using color-coding method, as shown in algorithm \ref{algoEstimationTensorHigherOrder}. We denote $\ell_v =\frac{p(2\ell-1) v+1}{2}$
    
\begin{algorithm}
\KwData{Given $Y\in (\mathbb{R}^{n})^{\otimes p}$  s.t $Y=\lambda x^{\otimes p}+W$}
\KwResult{$P(Y)\in\mathbb{R}^{n\times n}$, with $P_{ij}(Y)=\sum_{\alpha\in S_{\ell,v,i,j}}\chi_\alpha(Y)$(up to accuracy $1+n^{-\Omega(1)}$)}
 $C\gets \exp(100\ell v)$\;
 \For{$i\gets1$ \KwTo $C$}{
     Sample coloring $c_i:[n]\mapsto [\ell_v]$ uniformly at random\;
   Construct matrices $M,N\in \mathbb{R}^{(2^{\ell_v}-1)n^{p_1v}\times (2^{\ell_v}-1)n^{p_2v}}$\;
    Construct  matrices: $A\in\mathbb{R}^{n^{(p_1v+1)/2}\times (2^{\ell_v}-1)n^{p_2v}}$ ,$B\in\mathbb{R}^{(2^{\ell_v}-1)n^{p_1v}\times n^{(p_2v+1)/2}}$\;
    Construct matrix $L^{(1)}\in \mathbb{R}^{n\times n^{(p_1v+1)/2}}, L^{(2)}\in \mathbb{R}^{ n^{(p_2v+1)/2}\times n}$ \;
    Record matrix $p_{c_i}=L^{(1)} A(NM)^{\ell-2}L^{(2)}$\;}
    Return  $\frac{1}{C}\sum_{i=1}^C p_{c_i}$
 \caption{Algorithm for evaluating estimation matrix }\label{algoEstimationTensorHigherOrder}
 \end{algorithm}
 
 Next we describe how to construct matrices $M,N,A,B$. The rows and columns of $M$ are indexed by $(V_1,S)$ and $(V_2,T)$ where $V_1\in [n]^{p_1v}$ and $V_2\in [n]^{p_2v}$ are set of vertices while $S,T\subsetneq [\ell v]$ are subset of colors. We have $M_{(V_1,S),(V_2,T)}= 0$ if $S\cup \{c(v):v\in V_1\}\neq T$ or $\{c(v):v\in V_1\}$ and $S$ are not disjoint. Otherwise $M_{(V_1,S),(V_2,T)}$ is given by $\sum_{\gamma \in S_{V_1,V_2}}\chi_{\gamma}(Y)$ where $S_{V_1,V_2}$ is the set of perfect matching induced by $V_1$ and $V_2$(each hyperedge in $S_{V_1,V_2}$ direct from $p_1$ 
    vertices from $V_1$ to $p_2$ vertices from $V_2$).
    
    For matrix $N$, the indexing are the same as $M$. We have $N_{(V_1,S),(V_2,T)}= 0$ if $S\cup \{c(v):v\in V_1\}\neq T$ or $\{c(v):v\in V_1\}$ and $S$ are not disjoint. Otherwise $N_{(V_1,S),(V_2,T)}$ is given by $\sum_{\gamma \in S_{V_2,V_1}}\chi_{\gamma}(Y)$ where $S_{V_2,V_1}$ is the set of perfect matching induced by $V_2$ and $V_1$(each hyperedge in hypergraph  $\alpha\in  S_{V_2,V_1}$ directs from $p_2$ 
    vertices from $V_2$ to $p_1$ vertices from $V_1$).

We consider a subset of hypergraphs contained in $S_{\ell,v,i,j}$ with $i,j\in [n]$, denoted by $\mathcal{H}_{i,j,V_1,V_2}$. The set of vertices in  level $0$ of these hypergraphs is fixed to be $\{i\}\cup V_1$, where $V_1\subseteq [n]$. The set of vertices in the level $1$ of these hypergraphs is fixed to be $V_2\subseteq [n]$.   We denote $S_{i,V_1,V_2}$ as the following set of spanning subgraphs: a hypergraph $\alpha\in S_{i,V_1,V_2}$ if and only if there exists a hypergraph $\beta\in \mathcal{H}_{i,j,V_1,V_2}$ such that the hyperedge set of $\alpha$ is the same as the set of hyperedges between level $0$ and $1$ of $\beta$. 

In the same way, we consider a subset of hypergraphs contained in $S_{\ell,v,i,j}$ with $i,j\in [n]$, denoted by $\mathcal{L}_{i,j,V_1,V_2}$, with vertices in levels $2\ell-2$ and $2\ell-1$ fixed. We denote $S_{V_1,V_2,j}$ as the following set of spanning subgraphs: a hypergraph $\alpha\in S_{V_1,V_2,j}$ if and only if there exists a hypergraph $\beta\in \mathcal{L}_{i,j,V_1,V_2}$ such that the hyperedge set of $\alpha$ is the same as the set of hyperedges between level $2\ell-2$ and $2\ell-1$ of $\beta$. 

   By these definitions, the entry $A_{(i,V_1),(V_2,T)}$ is given by $\sum_{\alpha\in S_{i,V_1,V_2}}\chi_\alpha(Y)$ if $T=\{c(v):v\in V_1\cup v\},v\not\in V_1$ and $0$ otherwise. The entry $B_{(V_1,S),V_2}$ is given by   $\sum_{\alpha\in S_{V_1,V_2}}\chi_\alpha(Y)$ if $S\cup \{c(v):v\in V_1\cup V_2\}=[\ell_v], S\cap \{c(v):v\in V_1\cup V_2\}=\emptyset$ and zero otherwise.
  
   Finally, we construct the deterministic matrices $L^{(1)},L^{(2)}$. The columns of matrix $L^{(1)}$ are indexed by $(j,V_1)$, where $j\in [n]$ and $V_1$ is a size-$n^{(p_1v+1)/2}$ subset of $[n]$. The rows are indexed by $i\in [n]$. The entry $L^{(1)}_{i,(j,V_1)}=1$ if $j=i$ and $0$ otherwise. The transpose of $L^{(2)}$ is indexed in the same way.  The entry $L^{(2)}_{(j,V_1),i}=1$ if $i=j$ and $0$ otherwise.
   
    \begin{lemma}[Evaluation of polynomial estimator]\label{pevaRec}
    Denote the number of vertices in any hypergraph contained in $S_{\ell,v,i}$ as $\ell_v$, then $\ell_v =\frac{p(2\ell-1) v+1}{2}$.
    Given sampled  colorings $c_i$:$[n]\to [\ell_v]$ and a tensor $Y\in (\mathbb{R}^{n})^{\otimes p}$,
the algorithm \ref{algoEstimationTensorHigherOrder} return a matrix $p(Y,c_1,\ldots,c_C)\in\mathbb{R}^{n\times n}$ in time   $n^{O(pv)}\exp(O(\ell_v))$. When $\frac{\E \lprod P(Y), xx^\top\rprod}{n \left(\E \lVert P(Y)\rVert_F^2 \right)^{1/2}}=\delta=\Omega(1)$, we have \[\frac{\E \lprod p(Y,c_1,\ldots,c_C), xx^\top\rprod}{n \left(\E \lVert p(Y,c_1,\ldots,c_C)\rVert_F^2 \right)^{1/2}}\geq (1-o(1))\delta\geq \Omega(1).\]
       
    \end{lemma}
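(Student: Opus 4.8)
The plan is to mirror, at the level of hypergraphs, the color-coding analysis carried out for the spiked matrix model in Lemmas~\ref{varianceReduction} and~\ref{formal28} and for the simpler tensor estimators in Lemmas~\ref{evaDetection} and~\ref{evaRec}. The first step is to record the key structural identity: for a fixed coloring $c\colon[n]\to[\ell_v]$, the matrix $p_c(Y):=L^{(1)}A(NM)^{\ell-2}L^{(2)}$ produced by Algorithm~\ref{algoEstimationTensorHigherOrder} has $(i,j)$ entry
\begin{equation*}
  p_{c,ij}(Y)=\sum_{\alpha\in S_{\ell,v,i,j}}F_{c,\alpha}\,\chi_\alpha(Y),\qquad F_{c,\alpha}=\frac{\ell_v^{\ell_v}}{\ell_v!}\cdot\mathbf{1}\{c\text{ is injective on }V(\alpha)\}.
\end{equation*}
Granting this, since a uniformly random coloring of $[n]$ into $\ell_v$ colors is injective on any fixed set of $\ell_v$ vertices with probability exactly $\ell_v!/\ell_v^{\ell_v}$, we get $\E_c F_{c,\alpha}=1$, hence $\E_c p_{c,ij}(Y)=\sum_{\alpha\in S_{\ell,v,i,j}}\chi_\alpha(Y)=P_{ij}(Y)$: each $p_c(Y)$ is an unbiased estimator of $P(Y)$.

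The main obstacle is verifying this identity, i.e.\ that the matrix product performs the intended dynamic program. One reads the chain $L^{(1)}A(NM)^{\ell-2}L^{(2)}$ left to right as building a hypergraph of $S_{\ell,v,i,j}$ one level at a time: the vertex-tuple coordinate of an index $(V,S)$ records the vertices just placed in the current level and $S$ records the set of colors used so far; the matrices $M,N$ contribute $\sum_{\gamma\in S_{V_1,V_2}}\chi_\gamma(Y)$ only when $\{c(v):v\in V_1\}$ is disjoint from $S$ and $T=S\cup\{c(v):v\in V_1\}$, which is precisely the condition that no color repeats; and the boundary matrices $A,B$, together with $L^{(1)},L^{(2)}$ (which pin the endpoints $i,j$), enforce the degree-one/degree-two wiring prescribed at levels $0,1$ and $2\ell-2,2\ell-1$ in Definition~\ref{pEstimatorPolynomial}. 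Tracking these conditions shows that $\chi_\alpha(Y)$ survives with coefficient $\ell_v^{\ell_v}/\ell_v!$ exactly when $\alpha\in S_{\ell,v,i,j}$ is colorful, giving the displayed formula. This is purely combinatorial bookkeeping, entirely analogous to the matrix case, but the several vertex tuples per level and the asymmetric boundary layers make it the most delicate point. For the running time, $M,N,A,B,L^{(1)},L^{(2)}$ have dimension at most $2^{\ell_v}n^{O(pv)}$ and, crucially, $n^{O(pv)}$ nonzero entries (the color-set coordinate evolves deterministically once the vertex tuples are fixed), so each of the $\ell$ sparse matrix--matrix products costs $n^{O(pv)}\exp(O(\ell_v))$, and repeating over the $C=\exp(100\ell_v)$ colorings is absorbed into the same bound; since $\gamma=1+\Omega(1)$ forces $\ell=O(\log_\gamma n)=O(\log n)$ and $p$ is constant, $\ell_v=O(v\log n)$ and $\exp(O(\ell_v))=n^{O(v)}$, for total time $n^{O(pv)}$.

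It remains to control the variance introduced by color coding and conclude that the correlation with $xx^\top$ survives. For a single coloring, expanding $p_{c,ij}^2$ and taking expectations gives $\E_Y p_{c,ij}^2(Y)=\sum_{\alpha,\beta}\E_Y[\chi_\alpha(Y)\chi_\beta(Y)]\,\E_c[F_{c,\alpha}F_{c,\beta}]$; using $0\le F_{c,\alpha}\le\ell_v^{\ell_v}/\ell_v!=\exp(O(\ell_v))$ together with the nonnegativity $\E_Y[\chi_\alpha(Y)\chi_\beta(Y)]\ge0$ --- which holds here because, after integrating out $W$, every surviving term is a product of even moments $\E x_t^{2m}\ge0$ times a positive power of $\lambda$ --- one gets $\E_Y p_{c,ij}^2(Y)\le\exp(O(\ell_v))\,\E_Y P_{ij}^2(Y)$, exactly as in Lemma~\ref{varianceReduction}. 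Averaging $p(Y):=\tfrac1C\sum_t p_{c_t}(Y)$ over $C=\exp(100\ell_v)$ independent colorings shrinks the variance around $P(Y)$ by a factor $1/C$, so $\E_{Y,c}\|p(Y)-P(Y)\|_F^2\le\exp(-\Omega(\ell_v))\sum_{i,j}\E_Y P_{ij}^2(Y)=n^{-\Omega(1)}\E_Y\|P(Y)\|_F^2$, the $n^2$ loss from the double sum being dwarfed since $\ell_v\ge\ell\gg\log n$. As $\E_c p(Y)=P(Y)$, we have $\E_{Y,c}\lprod p(Y),xx^\top\rprod=\E_Y\lprod P(Y),xx^\top\rprod$ and $\E_{Y,c}\|p(Y)\|_F^2=\E_Y\|P(Y)\|_F^2+\E_{Y,c}\|p(Y)-P(Y)\|_F^2\le(1+n^{-\Omega(1)})\E_Y\|P(Y)\|_F^2$, whence
\begin{equation*}
  \frac{\E_{Y,c}\lprod p(Y),xx^\top\rprod}{n\bigl(\E_{Y,c}\|p(Y)\|_F^2\bigr)^{1/2}}\ \ge\ (1-o(1))\cdot\frac{\E_Y\lprod P(Y),xx^\top\rprod}{n\bigl(\E_Y\|P(Y)\|_F^2\bigr)^{1/2}}\ =\ (1-o(1))\delta\ =\ \Omega(1),
\end{equation*}
which is the claim.
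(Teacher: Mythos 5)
Your proposal is correct and follows essentially the same route as the paper: the same key identity that the colored matrix chain evaluates $\sum_{\alpha\in S_{\ell,v,i,j}}F_{c,\alpha}\chi_\alpha(Y)$, the same unbiasedness $\E_c F_{c,\alpha}=1$, and the same variance-reduction-by-averaging-over-$\exp(O(\ell_v))$-colorings step, which the paper handles by reference to its detection/matrix-case lemmas and which you spell out via boundedness of $F_{c,\alpha}$ and nonnegativity of $\E[\chi_\alpha(Y)\chi_\beta(Y)]$ (valid here since every vertex has even degree in $\alpha\Delta\beta$, so only even moments of $x$ appear). The only deviations are cosmetic and inherited from the paper's own inconsistencies (e.g.\ the algorithm box's chain omitting the $NB$ factor that the paper's proof includes), and your remark about an $n^2$ loss is unnecessary since the entrywise variance bounds sum directly against $\E\lVert P(Y)\rVert_F^2$.
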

    \textbf{Remark}:  When $\lambda=\omega(n^{-p/4})$, using power method for extracting leading eigenvector, we have $n^{p+o(1)}$ time algorithm for evaluating the leading  eigenvector of the matrix returned by the algorithm.
    \begin{proof}
    The critical observation is that  $\mathop{\sum}_{\alpha\in S_{\ell,v,i,j}} \chi_\alpha(Y)F_{c,\alpha}$ can be obtained from the matrix $A(NM)^{\ell-2}NB$ in the algorithm by summing up all entries in $H$ indexed by row $(i,\cdot)$ and column $(j,\cdot)$. 
    Thus given random coloring $c$ the algorithm evaluates matrix $p_c(Y)$ satisfying the following:
     \begin{equation*}   p_{c,i,j}(Y)=\mathop{\sum}_{\alpha\in S_{\ell,v,i,j}} \chi_\alpha(Y)F_{c,\alpha}
       \end{equation*}
       \begin{equation*}
    F_{c,\alpha}=\frac{\ell_v^{\ell_v}}{\ell_v!} \cdot \mathbf{1}_{c(\alpha)=[\ell_v]}
    \end{equation*}
     Thus \[p_c(Y)=\mathop{\sum}_{\alpha\in S_{\ell,v,i,j}} \chi_\alpha(Y)F_{c,\alpha}=L^{(1)}A(NM)^{\ell-2}NBL^{(2)}\]
    
    
    By the same argument in the strong detection algorithm, we can obtain accurate estimation of $P(Y)$ by averaging $\exp(O(\ell v))$ random colorings when $\gamma=c_p\lambda^2 n^{p/2}v^{(p-2)/2}>1$ where $c_p$ is small enough constant related to $p$. Therefore taking a random vector in the span of leading $\delta^{-O(1)}$ eigenvectors of $D(Y)=\frac{1}{L}\sum_{t=1}^L p_{c_t}(Y)$ generates an estimator achieving weak recovery. Since $\ell=O(\log_\gamma n)$, the polynomial can be evaluated in time $O(n^{2pv})$ when we have $c_p\lambda^2 n^{p/2}v^{(p-2)/2}>1$ and $n/v^2=\omega(\Gamma\ell)$. This leads to polynomial time algorithm when $v$ is constant. 
    
    \end{proof}

\end{document}